\newtheorem{Lemma}{Lemma}
\newtheorem{corol}{Corollary}
\newtheorem{definition}{Definition}
\newtheorem{example}{Example}
\newtheorem{nonexample}{Non-example}
\def\be{\begin{equation}}
\def\ee{\end{equation}}
\def\ba{\begin{eqnarray}}
\def\ea{\end{eqnarray}}
\newcommand\q{\quad}
\def\Nl{{\mathchoice
{\setbox0=\hbox{$\displaystyle\rm N$}\hbox{\hbox to0pt
{\kern0.4\wd0\vrule height0.9\ht0\hss}\box0}}
{\setbox0=\hbox{$\textstyle\rm N$}\hbox{\hbox to0pt
{\kern0.4\wd0\vrule height0.9\ht0\hss}\box0}}
{\setbox0=\hbox{$\scriptstyle\rm N$}\hbox{\hbox to0pt
{\kern0.4\wd0\vrule height0.9\ht0\hss}\box0}}
{\setbox0=\hbox{$\scriptscriptstyle\rm N$}\hbox{\hbox to0pt
{\kern0.4\wd0\vrule height0.9\ht0\hss}\box0}}}}
\def\Zl{{\mathchoice
{\setbox0=\hbox{$\displaystyle\rm Z$}\hbox{\hbox to0pt
{\kern0.4\wd0\vrule height0.9\ht0\hss}\box0}}
{\setbox0=\hbox{$\textstyle\rm Z$}\hbox{\hbox to0pt
{\kern0.4\wd0\vrule height0.9\ht0\hss}\box0}}
{\setbox0=\hbox{$\scriptstyle\rm Z$}\hbox{\hbox to0pt
{\kern0.4\wd0\vrule height0.9\ht0\hss}\box0}}
{\setbox0=\hbox{$\scriptscriptstyle\rm Z$}\hbox{\hbox to0pt
{\kern0.4\wd0\vrule height0.9\ht0\hss}\box0}}}}
\def\Ql{{\mathchoice
{\setbox0=\hbox{$\displaystyle\rm Q$}\hbox{\hbox to0pt
{\kern0.4\wd0\vrule height0.9\ht0\hss}\box0}}
{\setbox0=\hbox{$\textstyle\rm Q$}\hbox{\hbox to0pt
{\kern0.4\wd0\vrule height0.9\ht0\hss}\box0}}
{\setbox0=\hbox{$\scriptstyle\rm Q$}\hbox{\hbox to0pt
{\kern0.4\wd0\vrule height0.9\ht0\hss}\box0}}
{\setbox0=\hbox{$\scriptscriptstyle\rm Q$}\hbox{\hbox to0pt
{\kern0.4\wd0\vrule height0.9\ht0\hss}\box0}}}}
\def\Rl{{\mathchoice
{\setbox0=\hbox{$\displaystyle\rm R$}\hbox{\hbox to0pt
{\kern0.4\wd0\vrule height0.9\ht0\hss}\box0}}
{\setbox0=\hbox{$\textstyle\rm R$}\hbox{\hbox to0pt
{\kern0.4\wd0\vrule height0.9\ht0\hss}\box0}}
{\setbox0=\hbox{$\scriptstyle\rm R$}\hbox{\hbox to0pt
{\kern0.4\wd0\vrule height0.9\ht0\hss}\box0}}
{\setbox0=\hbox{$\scriptscriptstyle\rm R$}\hbox{\hbox to0pt
{\kern0.4\wd0\vrule height0.9\ht0\hss}\box0}}}}
\def\Cl{{\mathchoice
{\setbox0=\hbox{$\displaystyle\rm C$}\hbox{\hbox to0pt
{\kern0.4\wd0\vrule height0.9\ht0\hss}\box0}}
{\setbox0=\hbox{$\textstyle\rm C$}\hbox{\hbox to0pt
{\kern0.4\wd0\vrule height0.9\ht0\hss}\box0}}
{\setbox0=\hbox{$\scriptstyle\rm C$}\hbox{\hbox to0pt
{\kern0.4\wd0\vrule height0.9\ht0\hss}\box0}}
{\setbox0=\hbox{$\scriptscriptstyle\rm C$}\hbox{\hbox to0pt
{\kern0.4\wd0\vrule height0.9\ht0\hss}\box0}}}}
\def\Hl{{\mathchoice
{\setbox0=\hbox{$\displaystyle\rm H$}\hbox{\hbox to0pt
{\kern0.4\wd0\vrule height0.9\ht0\hss}\box0}}
{\setbox0=\hbox{$\textstyle\rm H$}\hbox{\hbox to0pt
{\kern0.4\wd0\vrule height0.9\ht0\hss}\box0}}
{\setbox0=\hbox{$\scriptstyle\rm H$}\hbox{\hbox to0pt
{\kern0.4\wd0\vrule height0.9\ht0\hss}\box0}}
{\setbox0=\hbox{$\scriptscriptstyle\rm H$}\hbox{\hbox to0pt
{\kern0.4\wd0\vrule height0.9\ht0\hss}\box0}}}}
\def\Ol{{\mathchoice
{\setbox0=\hbox{$\displaystyle\rm O$}\hbox{\hbox to0pt
{\kern0.4\wd0\vrule height0.9\ht0\hss}\box0}}
{\setbox0=\hbox{$\textstyle\rm O$}\hbox{\hbox to0pt
{\kern0.4\wd0\vrule height0.9\ht0\hss}\box0}}
{\setbox0=\hbox{$\scriptstyle\rm O$}\hbox{\hbox to0pt
{\kern0.4\wd0\vrule height0.9\ht0\hss}\box0}}
{\setbox0=\hbox{$\scriptscriptstyle\rm O$}\hbox{\hbox to0pt
{\kern0.4\wd0\vrule height0.9\ht0\hss}\box0}}}}
\newcommand{\ca}{\mathcal A}
\newcommand{\cc}{\mathcal C}
\newcommand{\cd}{\mathcal D}
\newcommand{\cg}{\mathcal G}
\newcommand{\ch}{\mathcal H}
\newcommand{\cl}{\mathcal L}
\newcommand{\cm}{\mathcal M}
\newcommand{\cp}{\mathcal P}
\newcommand{\cq}{\mathcal Q}
\newcommand{\calr}{\mathcal R}
\newcommand{\ct}{\mathcal T}
  \newcommand{\Fs}{\mathfrak{S}}
\newcommand{\intsum}{\mathclap{\displaystyle\int}\mathclap{\textstyle\sum}}
\def\nn{\nonumber}
\newcommand{\eqa}{\begin{eqnarray}}
\newcommand{\neqa}{\end{eqnarray}}
\def\la{\langle}
\def\f{\frac}
\def\q{{\quad}}
\definecolor{green(munsell)}{rgb}{0.0, 0.66, 0.47}
\definecolor{Blu}{rgb}{0.0, 0.4, 0.7}
\newcommand\proofapp{The proof is given in Appendix~\ref{app:proof}.}
\begin{document}

\title{Relational  Dynamics with Periodic Clocks}

\author{Leonardo Chataignier}
\email{lchataig@cbpf.br}
\altaffiliation{Current address: Centro Brasileiro de Pesquisas F\'{i}sicas, Rua Dr. Xavier Sigaud 150, CEP: 22290-180, Rio de Janeiro, RJ, Brazil}
\affiliation{Department of Physics and EHU Quantum Center, University of the Basque Country UPV/EHU, Barrio Sarriena s/n, 48940 Leioa, Spain}

\author{Philipp A. H\"{o}hn}
\email{philipp.hoehn@oist.jp}
\affiliation{Okinawa Institute of Science and Technology Graduate University, Onna, Okinawa 904 0495, Japan}

\author{Maximilian P. E. Lock}
\email{maximilian.paul.lock@tuwien.ac.at}
\affiliation{Atominstitut, TU Wien, 1020 Vienna, Austria}
\affiliation{Institute for Quantum Optics and Quantum Information (IQOQI),
\\Austrian Academy of Sciences, 1090 Vienna, Austria}

\author{Fabio M. Mele}
\email{fmele1@lsu.edu}
\altaffiliation{Current address: Department of Physics and Astronomy, Louisiana State University, Baton Rouge, LA 70803, USA}
\affiliation{Department of Physics \& Astronomy, Western University, N6A3K7, London ON, Canada}
\affiliation{Okinawa Institute of Science and Technology Graduate University, Onna, Okinawa 904 0495, Japan}

\date{\today}

\begin{abstract}
We discuss a systematic way in which a relational dynamics can be established relative to periodic clocks both in the classical and quantum theories, emphasising the parallels between them.~We show that:~(1) classical and quantum relational observables that encode the value of a quantity relative to a periodic clock are only invariant along the gauge orbits generated by the Hamiltonian constraint if the quantity itself is periodic, and otherwise the observables are only transiently invariant per clock cycle (this implies, in particular, that counting winding numbers does not lead to invariant observables relative to the periodic clock);~(2) the quantum relational observables can be obtained from a \emph{partial} group averaging procedure over a single clock cycle;~(3) there is an equivalence (`trinity') between the quantum theories based on the quantum relational observables of the clock-neutral picture of Dirac quantisation, the relational Schr\"odinger picture of the Page-Wootters formalism, and the relational Heisenberg picture that follows from quantum deparametrisation, all three taken relative to periodic clocks (implying that the dynamics in all three is necessarily periodic);~(4) in the context of periodic clocks, the original Page-Wootters definition of conditional probabilities fails for systems that have a continuous energy spectrum and, using the equivalence between the Page-Wootters and the clock-neutral, gauge-invariant formalism, must be suitably updated.~Finally, we show how a system evolving periodically with respect to a periodic clock can evolve monotonically with respect to an aperiodic clock, without inconsistency.~The presentation is illustrated by several examples, and we conclude with a brief comparison to other approaches in the literature that also deal with relational descriptions of periodic clocks.
\end{abstract}

\maketitle
\onecolumngrid
\newpage
\tableofcontents
\newpage
\twocolumngrid

\section{Introduction}
One of the central issues in quantum gravity and quantum foundations is the development of a theory in which time and space are not arbitrary, unphysical coordinates, but are instead defined from physical entities (dynamical reference frames).~Such a theory would provide a relational account of dynamics, in the sense that the evolution would be defined relative to the readings of physical clocks.~While a great deal of effort has been devoted to the construction of models of relational dynamics, most of the literature focuses on aperiodic (uniform or monotonic) clocks, the values of which change monotonically without repetition.~With the exception of some approaches that address different kinds of periodic reference frames, a systematic framework for dealing with periodic clocks is currently lacking.~In the following, we address this gap.

Most clocks in our everyday lives involve some form of periodicity, such as the old-fashioned, analog wristwatches.~If working properly, these watches allow us to locate events relative to the position of their pointers, which always return to a given position (unless the watch breaks down).~Of course, practical clocks usually incorporate explicit cycle-counting mechanisms, and thus are not strictly periodic systems.~A sense of continuous flow of time despite the periodicity of the 24-hour day is obtained by the use of calendar days, which serve as ``winding numbers'' that count the repetitions of periodic clocks, and allow us to faithfully track the passage of time.

More generally, the use of harmonic oscillators to approximate the behavior of dynamical systems near their potential minima also shows that physical periodic clocks are widespread.~In fact, periodic clocks might be of use beyond our ordinary, everyday world in the context of classical and quantum gravity.~As there is no preferred time in general relativity, the use of periodic physical fields to track the passage of time (e.g., along an observer's worldline) may be justified in certain situations.~For example, the dynamics of a homogeneous scalar field that is conformally coupled to a closed Friedmann metric can be described in terms of a pair of suitably defined harmonic oscillators \cite{Kiefer:1989va,Kiefer:1990ms,Kiefer:1993cqg}.~In the quantum theory, this leads to a simple but nontrivial model in quantum cosmology.~Periodic clocks  have also recently been employed to begin exploring constraints on the (im)possibility of time travel in relativistic quantum settings \cite{Alonso-Serrano:2023gir} and on the existence of a fundamental period of time \cite{Wendel:2020hqv}, as well as to mimic gravitational time dilation in finite-dimensional quantum systems that are amenable to laboratory implementation \cite{Cafasso:2024zqa}.~Therefore, the development of a systematic formalism to describe a relational quantum dynamics relative to quantum periodic clocks should be rather useful in a series of applications, in laboratory situations and beyond.

In the present article, we present such a systematic treatment of a relational dynamics with periodic clocks, both in the classical and quantum theories.~The presentation is filled with illustrative examples and a particular emphasis is placed on the parallels between the classical and quantum cases.~Concretely, we consider reparametrisation-invariant theories that describe the dynamics of a certain set of degrees of freedom (referred to as the `system') and a set of periodic degrees of freedom.~For simplicity, we assume that there are no interactions between the two sets of degrees of freedom, as this would typically also ruin the periodicity of the clock.~The Hamiltonian is then a sum of Hamiltonians for each set of variables, and it is constrained to vanish due to time-reparametrisation invariance.~We then discuss the construction of relational Dirac observables (observables that are gauge invariant and encode the value of a system quantity relative to a dynamical reference frame) that refer to the periodic clock.

Starting with the classical theory, we first provide a general definition of periodic and aperiodic clocks, and we discuss the definition and use of winding numbers, which can be used to define a monotonic clock from a periodic one.~This monotonic clock can be seen as an ``unwound'' version of the periodic quantity, and it unravels the dynamics relative to the periodic clock.

One of the central results of this article is that the relational observable that describes a system quantity $f$ relative to the unwound clock is generally only invariant during a single cycle of the periodic clock.~This means that such an observable is only transiently invariant, and a true invariant along the entire gauge orbits is obtained only if the described system quantity $f$ is itself periodic.~This holds both classically and in the quantum theory.~In particular, we see that, as far as the invariant relational observables are concerned, winding numbers do not have an intrinsic meaning in a purely relational setting with a periodic clock and without additional counting degrees of freedom.

Quantum periodic clocks are modeled using covariant $\rm{U}(1)$ positive operator-valued measures (POVMs), which can encompass both ideal and non-ideal clocks (i.e., clocks that exhibit quantum states that are not perfectly distinguishable).~In short, periodic quantum clocks are $\rm{U}(1)$-quantum reference frames.~The construction of the quantum relational observables relative to periodic clocks is shown to involve a group averaging \emph{a priori} not with respect to the group generated by the full Hamiltonian constraint (which may be the translation group), but rather only with respect to $\rm{U}(1)$.~One can produce invariants via this partial averaging procedure by selecting system quantities that are themselves periodic.~In this way, one obtains invariant operators that can be interpreted as the physical observables of the clock-neutral picture of Dirac quantisation. 

Instead, averaging over the full group generated by the constraint results (i) in divergences avoided by the partial group average when  the full group is the translation group, and (ii) in well-defined Dirac observables when the full group is $\rm{U}(1)$.~In case (ii) this happens on account of averaging over all configurations that the periodic clock cannot resolve (i.e.\ over its isotropy group), which amounts to automatically projecting the system observables onto their periodic versions and inserting them into the partial group average.~This means that in case (i) the partial group averaging is better behaved and in case (ii) it yields the same gauge-invariant observables as the full group average.~This justifies using only the partial average over a single clock cycle and injecting periodic system observables.

We also show that the clock-neutral picture is equivalent to two other formulations of relational dynamics that are frequently used in the literature:~the Page--Wootters (PW) formalism \cite{pageEvolutionEvolutionDynamics1983,Page:1984qt,woottersTimeReplacedQuantum1984}, which defines a relational Schr\"odinger picture, and the relational Heisenberg picture constructed by symmetry reduction (`quantum deparametrisation').~The equivalence between these three approaches, established via invertible quantum reduction maps, is an extension of the `trinity of relational quantum dynamics' previously established for monotonic clocks \cite{Hoehn:2019owq,HLSrelativistic} to the case of periodic clocks. (For an extension to quantum reference frames associated with general symmetry groups, see \cite{delaHamette:2021oex}.) The `monotonic trinity' has also been expanded to parametrised field theory \cite{Hoehn:2023axh}, was key in understanding recent works on gravitational algebras and entropies in perturbative quantum gravity \cite{Chandrasekaran:2022cip} in terms of quantum reference frames \cite{DeVuyst:2024pop,DVEHK2}, and helped to clarify aspects of time evolution in group field theories \cite{Calcinari:2024pek}, a nonperturbative approach to quantum gravity.

One of the main results of the `periodic trinity' in this work is that the PW prescription to define conditional probabilities must be modified for systems that have a continuous energy spectrum, as the original PW proposal yields ill-defined probabilities in this case.~In fact, we show that for periodic clocks, unlike the case of monotonic clocks, the PW conditional inner product defined as the expectation value of the `projector' onto a clock reading in physical states, but evaluated in the kinematical inner product, is equivalent to the physical inner product only when the group generated by the constraint is compact, and so isomorphic to $\rm{U}(1)$.~In the non-compact case, however, the usual PW conditional inner product diverges.~Therefore, as required by the equivalence with the clock-neutral, gauge-invariant Dirac picture, the correct conditional probability densities should be instead defined in terms of expectation values of physical `projection' operators in the \emph{physical} inner product which is shown to be well-defined also for the non-compact case and to reproduce the already known results in that case.

Finally, given the above differences and subtleties that arise in the relational dynamics with periodic clocks compared to aperiodic ones, we discuss the situation in which both types of clock are present.~In particular, using the quantum reference frame transformation from the perspective of one clock to the perspective of the other, we show how the relational descriptions of the system relative to periodic and aperiodic clocks are compatible with one another as should be expected from them being reductions of the clock-neutral picture.

The paper is organised as follows.~Sec.~\ref{Sec2} is devoted to the discussion of classical relational dynamics.~After introducing the setup of the work and a reparametrisation-invariant, intrinsic definition of periodic and aperiodic clocks, the construction of the unravelled monotonic clock from a periodic one, relational observables, and their transient invariance are discussed.~In Sec.~\ref{sec_covpovm}, we move to the quantum theory and construct covariant POVMs to model quantum periodic clocks.~The clock-neutral Dirac quantised picture and quantum relational observables are discussed in Sec.~\ref{sec_Dirac} and~\ref{sec_observables-periodic}, respectively.~In Sec.~\ref{sec_reduction} we discuss the relational Schr\"odinger picture, obtained via Page-Wootters reduction, and the relational Heisenberg picture, obtained via quantum deparametrisation.~In particular, we demonstrate the equivalence between these two formulations of relational dynamics with periodic clocks and with the clock-neutral Dirac formulation, and utilise it to provide the correct definition of Page-Wootters conditional probabilities.~In Sec.~\ref{sec_clockchanges} we discuss how to switch between the description relative to periodic and aperiodic clocks when both types of clock are present.~We conclude with a comparison between our work and previous literature on periodic clocks in Sec.~\ref{Sec:discussion}.~A brief summary and some final remark are reported in Sec.~\ref{Sec:conclusion}.~The presentation is supplemented with various appendices which contain technical details and proofs of the main results.~Throughout the text, we illustrate our findings with many explicit examples.

\section{Classical relational dynamics with periodic clocks}
\label{Sec2}

\subsection{Preliminaries}

Suppose we are given a reparametrisation-invariant action ${\mathfrak{S}=\int_{\cm} \,ds\,L(q^a,\dot{q}^a)}$ for a composite system on a $D$-dimensional configuration space $\cq_{\rm kin}$, where $\dot q^a$ denotes differentiation with respect to $s$ and $a $ ranges through $1,\ldots, D$.~$\cm$ is a one-dimensional manifold encoding the time direction and reparametrisation-invariance means that the action is invariant under diffeomorphisms of $\cm$:~the Lagrangian transforms as a scalar density $L(q^a,\dot q^a)\mapsto L(q^a,dq^a/d\tilde s)\,d\tilde s/ds$ under a reparametrisation $s\mapsto\tilde s(s)$. Upon Legendre transformation, one finds the Hamiltonian in the form $H=N(s)\,C_H$, where $N(s)$ is an arbitrary (lapse) function and $C_H$ is a so-called Hamiltonian constraint
\begin{align}
C_H=\sum_{a=1}^D\,p_a\,\dot q^a-L(q^a,\dot q^a)\approx0.\nn
\end{align}
It has to vanish on account of the reparametrisation invariance of  $L(q^a,\dot q^a)$ and defines the constraint surface $\cc$ in the kinematical phase space $\cp_{\rm kin}$ (which in this case will simply be $T^*\cq_{\rm kin}$).~The symbol $\approx$ henceforth denotes a so-called \emph{weak equality}, i.e.\ one that only holds on the constraint surface $\cc\subset\cp_{\rm kin}$ \cite{diracLecturesQuantumMechanics1964,Henneaux:1992ig}. We shall see quantum analogs of this later.

We are free to set $N(s)=1$, upon which the Hamiltonian $H$ coincides with the constraint  $C_H$.~The dynamical equations it generates on the kinematical phase space read
\begin{align}
\f{d f}{d s} \ce \{f,C_H\} , \nn 
\end{align}
for an arbitrary function $f: \cp_{\rm kin} \to \mathbb{R}$ and define a flow on $\cp_{\rm kin}$, ${\alpha_{C_H}^s:\mathbb{R}\rightarrow\cp_{\rm kin}}$, with flow parameter $s$.~In any neighbourhood where $f$ is analytic, this flow transforms it as\footnote{For notational simplicity, we are suppressing here the dependence on the phase space point $x\in\cp_{\rm kin}$ in the argument.}
\begin{align}
\label{alpha}
f \mapsto \alpha_{C_H}^s\cdot f &\ce
\sum_{n=0}^{\infty}\,\f{s^n}{n!}\,\{f,C_H \}_n ,
\end{align}
where $\{f,C_H\}_{n+1} \ce \{ \{f,C_H \}_{n},C_H\}$ is the iterated Poisson bracket subject to the convention $\{f,C_H\}_0\ce f$. 

When restricted to $\cc$, this flow constitutes the phase space image of the action of active diffeomorphisms on $\cm$, which are equivalent to the passive diffeomorphisms $s\mapsto\tilde s(s)$.~Since this is a gauge symmetry of the action, any dynamical trajectory is thus also a gauge orbit, in line with the fact that $C_H$ is a first class constraint.~The evolution $f(s)$ in the gauge parameter $s$ is therefore not physical \emph{per se} and in this article we will instead adopt the relational approach \cite{Rovelli:1989jn,Rovelli:1990jm,Rovelli:1990ph,Rovelli:1990pi,rovelliQuantumGravity2004,thiemannModernCanonicalQuantum2008,Dittrich:2005kc,dittrichPartialCompleteObservables2007,Dittrich:2006ee,Dittrich:2007jx,Tambornino:2011vg,Goeller:2022rsx} to constructing a gauge-invariant, i.e.\ reparametrisation-invariant dynamics.~Reparametrisation-invariant information is encoded in Dirac observables $F:\cc\to\mathbb{R}$, which satisfy $\{F,C_H\}\approx0$.~Specifically, we will be interested in so-called relational Dirac observables\,---\,or evolving constants of motion\,---\,which capture how degrees of freedom of interest evolve relative to a choice of a dynamical clock observable along the orbits generated by $C_H$ in $\cc$.~The clock observable will thus define a dynamical coordinate along these orbits and thereby constitute a temporal reference frame for the remaining degrees of freedom. 

The temporal manifold $\cm$ underlying the action $\Fs$ will determine `how far' in the gauge parameter $s$ and, in turn, over which range of dynamical clock readings we may consider the ensuing dynamics.~In particular, when using a periodic clock, the properties of $\cm$ will affect for how many clock cycles we may consider the relational evolution.~We shall require $\cm$ to be (i) connected, so that we will have a continuous `flow of time'; (ii) Hausdorff, so that points (i.e.\ states) on the dynamical orbits are distinguishable; and (iii) without boundary, so that we have a future and past inextendible dynamics rather than a special initial or final endpoint of it (although the latter condition could be easily relaxed).~A standard theorem \cite{onemf} shows that under these conditions, $\cm$ must be homeomorphic to either $\mathbb{R}$ or $S^1$, depending on whether $\cm$ is compact or non-compact.~Given that we have reparametrisation invariance, $\cm$ is then also diffeomorphic to either $\mathbb{R}$ or $S^1$.

\subsection{Decomposition into clock and evolving system}

As indicated above, we will be interested in partitioning our composite system into a clock $C$ and a set of evolving degrees of freedom, constituting a system $S$.~In order to avoid clock pathologies arising from a complicated dynamics \cite{Giddings:2005id,Marolf:1994nz,Hohn:2011us,Smith:2017pwx}, and as often the case in the literature on the Page-Wootters formalism \cite{pageEvolutionEvolutionDynamics1983,Page:1984qt,Hoehn:2019owq,HLSrelativistic,Hoehn:2021wet,giovannettiQuantumTime2015} (see \cite{Smith:2017pwx,Cafasso:2024zqa,castro-ruizTimeReferenceFrames2019} for some interesting exceptions), we shall henceforth assume that no interactions between $C$ and $S$ are present.~While we are therefore not covering the general case, we will be able to prove many explicit results which would be rather challenging in the presence of interactions, especially when they are strong enough to lead to chaos \cite{Hohn:2011us,Dittrich:2016hvj}.~Furthermore, interactions, unless fine-tuned, would typically ruin the periodicity of the dynamics.

The action will then take the form
\ba
\Fs = \Fs_C+\Fs_S = \int_\cm ds\,\left(L_C(q_C^a,\dot{q}_C^a)+L_S(q_S^a,\dot{q}_S^a)\right)\,\nn
\ea
and this also implies a similar form for the constraint
\ba
C_H=H_C+H_S\,.\label{noint}
\ea
Let us further assume that the kinematical phase space has the structure $\cp_{\rm kin} = \cp_C\times \cp_S$, where the system phase space is an arbitrary finite-dimensional symplectic manifold.~The clock phase space $\cp_C$, by contrast, is some two-dimensional phase space since we only need a single clock degree of freedom to provide a dynamical parametrisation of the one-dimensional orbits generated by $C_H$.~The clock and system Hamiltonians $H_C$ and $H_S$ are then functions on $\cp_C$ and $\cp_S$, respectively, and since their equations of motion decouple, we can solve them independently (except that on $\cc$ we have to match a clock dynamics with a given value $H_C=H_C^0$ with a system evolution with energy $H_S=-H_C^0$).~We shall assume them to be autonomous, i.e.\ independent of the gauge parameter $s$, as appropriate for a reparametrisation-invariant model.

An autonomous system on a two-dimensional phase space constitutes a completely integrable system.~Under the assumption that $\cp_C$ is boundary-free and that the flow of $H_C$ generated on it is complete, i.e.\ exists for all values of $s$ necessary to coordinatise $\cm$, Liouville's integrability theorem entails that every constant energy surface of $H_C$ is diffeomorphic to either $S^1$ or $\mathbb{R}$ \cite{Libermann1987}. 

Our subsequent exposition in the quantum theory can readily be generalised to encompass the situation that clock energy levels feature an energy-independent degeneracy.~The classical analog of this is that constant energy surfaces of $H_C$ in $\cp_C$ may be comprised of disconnected pieces and the number of such pieces does not depend on the energy.\footnote{Except possibly for a set of measure zero, such as the non-degenerate $p_t=0$ surface in $\cp_C$ of the otherwise twice-degenerate $H_C=p_t^2$.} Since $\dim\cp_C=2$, this implies that each connected piece will contain a single dynamical orbit.

\subsection{Classical periodic clocks: $\rm{U}(1)$-reference frames}

In this article, we will focus on periodic clocks, so we need to specify what we mean by that.~Owing to the reparametrisation-invariance, it is clear that we cannot define periodicity of the clock with respect to a given gauge parameter $s$; if the dynamics generated by $H_C$ on $\cp_C$ was periodic in the parameter $s$, we could find some other parametrisation $\tilde s(s)$ such that the clock dynamics is no longer periodic with respect to $\tilde s$.~Instead, we need a reparametrisation-invariant manner to say that a clock is periodic. In principle, one could achieve this by defining periodicity of $C$ relationally, i.e.\ here relative to some $S$ degrees of freedom.~However, this would be undesirable as such a notion of periodicity of $C$ would depend on the choice of not only $S$, but also of the periodicity-defining reference degrees of freedom within it.~Furthermore, this would be somewhat circular as, after all, we are interested in the dynamics of $S$ relative to $C$.~Accordingly, we need a way of characterising $C$ as a periodic clock that is both reparametrisation-invariant \emph{and} intrinsic, i.e.\ independent of $S$.~The above mentioned global structure of the clock dynamics provides such a characterisation:

\begin{definition}{\bf (Periodic clock.)}\label{def:perclock}
We shall say that clock $C$ is periodic if the dynamical orbits generated by the autonomous $H_C$ in (an open dense subset\footnote{Otherwise, even the harmonic oscillator would not constitute a periodic clock, as not every solution is diffeomorphic to $S^1$, namely the zero-energy one is not.} of) $\cp_C$ are diffeomorphic to $S^1$.~In other words, $H_C$ acts as a generator for the group $\rm{U}(1)\simeq\rm{SO}(2)$ in $\cp_C$. 

Given the integrability of $C$, we can always find so-called \emph{action-angle variables} $(\phi_C,H_C)$ on $\cp_C$ that are canonically conjugate $\{\phi_C,H_C\}=1$ (on an open dense subset) \cite{arnold1989mathematical}.~The angle variable $\phi_C$ takes value in $[0,\phi_{\rm max})$, for some (possibly energy-dependent) $\phi_{\rm max}$ on each dynamical clock orbit in (a dense subset of) $\cp_C$, which we thus call the \emph{clock period}.~The angle $\phi_C$ defines the reading of the clock and singles out a point on each dynamical orbit in $\cp_C$.~Such a clock function will also be called $\rm{U}(1)$-\emph{covariant} as it transforms uniformly along the orbit.~Hence, a periodic clock $C$ defines a dynamical $\rm{U}(1)$-reference frame.
\end{definition}

This is to be contrasted to the only other possibility consistent with Liouville's integrability theorem (on a two-dimensional phase space), which amounts to an aperiodic clock.
\begin{definition}{\bf (Aperiodic clock.)}\label{def:nonperclock}
We say that a clock $C$ is aperiodic if the orbits generated by the automonous $H_C$ in (an open dense subset\footnote{Otherwise, even the free particle would not constitute an aperiodic clock, as not every solution is diffeomorphic to $\mathbb{R}$, namely the zero-energy one is not.} of) $\mathcal{P}_C$ are diffeomorphic to $\mathbb{R}$.~The reading of the aperiodic clock is given by a phase-space function $Q$ that satisfies $\{Q,H_C\}=1$ (on a dense subset), so that $Q(s) = s+Q^0$.
\end{definition}

For a periodic clock, however, it is important to note that, depending on the shape of $H_S$, the constraint $C_H$ in Eq.~\eqref{noint} need \emph{not} be the generator of a $\rm{U}(1)$ action on $\cc\subset\cp_{\rm kin}$.~For instance, $C_H$ may also generate an action of the translation group $\mathbb{R}$, as we shall see in examples below.~It is also possible that $C_H$ will be a $\rm{U}(1)$ generator on $\cc$, but with a larger period than $\phi_{\rm max}$ which $H_C$ induces on $\cp_C$.~In those cases, a multitude of cycles of clock $C$ will fit into the constraint generated orbits and $C$ will take the same reading multiple times along it.~The relation between the evolving $S$ and the clock $C$ will thus \emph{a priori} be multivalued, posing a potential challenge to the relational dynamics.~In the next subsection, we will explain how to remedy this issue classically, while dealing with the quantum theory in Sec.~\ref{sec_covpovm}.~Remarkably, as we will later see both in the classical and quantum theory, all $S$ degrees of freedom consistent with the constraint turn out to be periodic too so that no multivaluedness will arise.

More generally, given a periodic clock according to this definition, how many of its cycles make up a complete evolution of the composite system depends on its action $\Fs$.~The total number of clock cycles is the number of cycles undergone by $C$ in $\cp_C$ as the gauge parameter $s$ runs once over $\cm$. Due to reparametrisation-invariance, this number of cycles is independent of which parametrisation one chooses.~If $\cm\simeq\mathbb{R}$ (and it does not take infinite parameter time for $C$ to complete one cyclic orbit in $\cp_C$) there will be a countably infinite number of clock cycles, while in the case $\cm\simeq S^1$ this number will typically be finite.~Since $\phi_{\rm max}$ may depend on the clock energy, so too may the number of clock cycles covering a complete evolution on $\cc$. Hence, through the constraint, this number may depend on the system $S$ (see also the discussion in the next subsection).

\begin{example}[\bf Harmonic oscillator]
An obvious example for a non-degenerate periodic clock is a harmonic oscillator
\ba
H_C&=&\f{p_t^2}{2 m_t}+\f{m_t\omega_t^2}{2}\,t^2\,.\label{HOHamiltonian}
\ea
Its phase or angle variable
\ba
\varphi(t,p_t)= \f{1}{\omega_t}\arctan\left(\f{-p_t}{m_t\,\omega_t\,t}\right)\label{HOphase}
\ea
is conjugate to the clock Hamiltonian, $\{\varphi,H_C\}=1$.~As $\arctan x\in(-\frac{\pi}{2},\frac{\pi}{2})$ for $x$ real, notice that we need two branches of $\arctan$ to cover one clock cycle.~More precisely, we may define the angle variable not from Eq.~\eqref{HOphase} but rather from
\ba
\phi_C(t,p_t)= \varphi(t,p_t)+\frac{\pi}{\omega_t}-\frac{\pi}{2\omega_t}\mathrm{sgn}\left(t\right) \,,\label{HOphase2}
\ea
where the sign function satisfies $\lim_{t\to0^\pm}\mathrm{sgn}(t) = \pm1$, and we obtain the limits $\lim_{t\to0^+}\phi_C(t,p_t>0) = 0$, $\lim_{t\to0^{\pm}}\phi_C(t,p_t<0) = \pi/\omega_t$, $\lim_{t\to0^{-}}\phi_C(t,p_t>0) = 2\pi/\omega_t$.~Notice, however, that $\phi_C(t=0,p_t)$ is undefined because $\varphi(t=0,p_t)$ is undefined and $\mathrm{sgn}(0) = 0$.~In this way, the phase variable in Eq.~\eqref{HOphase2} obeys $\{\phi_C,H_C\}=1$ where it is defined and differentiable, and it increases monotonically from $0$ to $2\pi/\omega_t$ in a clock cycle.~Thus, it has an \emph{energy-independent} period $\phi_{\rm max}=2\pi/\omega_t$.~Harmonic oscillators have been used extensively in the literature on relational dynamics, e.g.\ \cite{Rovelli:1990jm,Rovelli:1989jn,rovelliQuantumGravity2004,Bojowald:2010qw,Wendel:2020hqv}, however, not using the angle variable as a clock.~The advantage of the latter is that it is, in constrast to $t$, monotonic for each cycle, thereby avoiding turning points, and it runs over the same values for all (except the $H_C=0$) orbits.
\end{example}

\begin{example}[\bf Particle on a circle]
We can also consider a free particle on a circle with phase space $\cp_C=T^*S^1$, so that we identify $t+1\sim t$, and doubly degenerate Hamiltonian
\ba
H_C=\f{p_t^2}{2m_t}.\label{hcfree}
\ea
The angle variable conjugate to $H_C$ reads
\ba 
\phi_C(t,p_t)=\f{m_t t}{p_t}\label{compactclock}
\ea 
and has an \emph{energy-dependent} period $\phi_{\rm max}=\f{m_t}{p_t}$.~Since $p_t$ is a constant of motion, it is clear that the dynamical orbits are diffeomorphic to $S^1$.
\end{example}
To be clear about the scope of this article, it is also worthwhile to illustrate a periodic system that violates our definition.
\begin{nonexample}[\bf Particle in a box]
A free particle bouncing back and forth between the walls of a box with Hamiltonian
\ba 
H_C=\f{p_t^2}{2m_t}+V(t),\q\text{where}\q V(t)=\begin{cases} 0 &\mbox{if } 0<t<1 \\
+\infty & \mbox{otherwise}, \end{cases}\nn
\ea 
by contrast, does \emph{not} constitute an example for our definition of a periodic clock.~Its dynamical orbits in the phase space $\cp_C=T^*\mathbb{R}\simeq\mathbb{R}^2$ are not diffeomorphic to $S^1$ because the sign of $p_t$ changes during every bounce such that the orbits are discontinuous.~In particular, $H_C$ is not a $\rm{U}(1)$ generator.~Nevertheless, the clock function conjugate to $H_C$ would once more be given by Eq.~\eqref{compactclock} and monotonically and repeatedly run through the range $[-\frac{m_t}{p_t},\frac{m_t}{p_t}]$.~This clock function would thus be periodic with energy-dependent period $\phi_{\rm max}=2m_t/p_t$, however, it would neither be differentiable at $t=0$ nor $t=1$.
\end{nonexample}

In the sequel, we shall restrict to periodic clocks as $\rm{U}(1)$-reference frames so as to enable us to exploit the group structure.

\subsection{Using winding numbers to unravel periodic clocks}

We noted above that the periodicity of the clock leads to an apparent challenge for relational dynamics, namely a multivaluedness of evolving degrees of freedom at a specific clock reading.~It is, however, possible to `unwind' or `unravel' the periodic clock to become a monotonic one, using so-called winding numbers, which we now discuss.~The issue of defining relational observables with respect to the unwound clock is analysed next.

Given a periodic clock according to our definition, it is clear that the evolution of its angle variable reads
\ba
\phi_C(s) &=& \left(s + \phi_C^0\right) \,\,\text{mod} \,\,\phi_{\rm max}\nonumber\\
&=& s + \phi_C^0-\phi_{\rm max}\Big\lfloor \f{s + \phi_C^0}{\phi_{\rm max}}\Big\rfloor,
\label{solution}
\ea
where $\phi_C^0$ is its initial value and $\lfloor\cdot\rfloor$ denotes the floor function. In particular, 
\ba
n:=\Big\lfloor \f{s + \phi_C^0}{\phi_{\rm max}}\Big\rfloor\in\mathbb{Z}\nn
\ea
is the winding number of the clock at parameter time $s$, counted relative to an initial state with angle variable reading $\phi_C=\phi_C^0$.~A change in initial datum can thus induce a shift in the winding number.~Note that the difference between Eq.~\eqref{solution} and the flow of an aperiodic clock according to Definition \ref{def:nonperclock} is precisely the appearance of the $\text{mod } \phi_{\rm max}$ condition or, equivalently, of the floor function.~As an example, the validity of Eq.~\eqref{solution} is illustrated for the harmonic oscillator in Example \ref{ex:HOfloor} in Appendix \ref{app:examples} by using the classical oscillator solutions to compute the flow $\phi_C(s)$ of the angle variable defined in Eq.~\eqref{HOphase2}.

In the case that $\phi_{\rm max}$ depends on the clock energy, note that the initial value $\phi_C^0$ will only be accessible on those orbits with $\phi_C^0\leq\phi_{\rm max}$.~One could remedy this by rescaling the clock function $\tilde\phi_C\ce\f{2\pi}{\phi_{\rm max}}\,\phi_C$, so that $\tilde\phi_C\in[0,2\pi)$ independently of the orbit.~However, in this case the covariance condition would be affected, yielding $\{\tilde\phi_C,H_C\}=\f{2\pi}{\phi_{\rm max}}$ and thus a clock energy-dependent rate of change of the new angle variable along the orbit.~For our purposes, it will be more convenient to work with the $\rm{U}(1)$-covariant $\phi_C$ and it will not be a problem that its range may depend on the orbit.~In fact, later we shall see that the quantum analog of this covariant clock observable will feature an energy-independent range.

We are now in a position to `unwind' the clock and define a monotonic clock function $T$ for the periodic clock on $\cp_C$.~The price we pay is that this monotonic clock function is no longer purely kinematically defined, but depends on the solutions to the equations of motion:
\ba
T(s):=\phi_C(s)+n\,\phi_{\rm max}  = s + \phi_C^0,\label{globalclock}
\ea
which clearly is monotonic along the dynamical orbit.\footnote{For the example of a harmonic oscillator, a similar construction of a monotonic clock function was given in \cite{Wendel:2020hqv}, however, not for the angle variable as here, but for the position variable $t$ in Eq.~\eqref{HOHamiltonian}.~In that case one has to worry about clock energy-dependent turning points.~We refer to Sec.~\ref{Sec:discussion} for a further comparison with the work of \cite{Wendel:2020hqv}.} In particular, when $\cm\simeq\mathbb{R}$, this clock function will run monotonically over all of $\mathbb{R}$ along the clock's cyclic orbits.~Furthermore, if $\phi_{\rm max}$ depends on the clock energy and we keep the initial datum $\phi_C^0$ fixed this definition is dependent on the dynamical orbit of the clock.~However, we are free to leave the initial datum unrestricted and simply replace it with the phase space variable $\phi_C$ in order to obtain a monotonic clock function $T(s)=s+\phi_C$ defined everywhere on the dense subset of $\cp_C$ where $\{\phi_C,H_C\}=1$.~In fact, the dynamically defined $T(s)$ is a covariant function, i.e.\ canonically conjugate to $H_C$ on the same dense subset
\ba
\{T,H_C\}=1\,,\label{classcov}
\ea
and in turn also to the constraint $C_H$.~The unwound $T$ thus constitutes an ideal clock function and we henceforth drop the $0$ label from the initial data.

\subsection{Relational observables for periodic clocks} \label{sss_relobsperclock}

Given the ideal clock function $T$ on $\cp_C$, it is now easy to construct relational observables describing how system properties evolve with respect to it.~The relational observable encoding the value of some system observable $f_S:\cp_S\rightarrow\mathbb{R}$ when the covariant clock function $T$ reads $\tau$ can be conveniently constructed using the power series expansion in Eq.~\eqref{alpha}:
 \ba
F_{f_S,T}(\tau)&:=&\alpha_{C_H}^s\cdot f_S\Big|_{\alpha^s_{C_H}\cdot T=\tau}\nn\\
&\underset{(\ref{alpha})}{=}&\sum_{n=0}^{\infty}\,\f{s^n}{n!}\,\{f_S,C_H \}_n \Big|_{\alpha^s_{C_H}\cdot T=\tau}\nn\\
&=&\sum_{n=0}^{\infty}\,\f{s^n}{n!}\,\{f_S,H_S \}_n \Big|_{\alpha^s_{C_H}\cdot T=\tau}\label{relobs}\\
&=&\sum_{n=0}^{\infty}\,\f{(\tau-\phi_C)^n}{n!}\,\{f_S,H_S \}_n\,,\nn
\ea
where in the last line we have made use of ${\alpha^s_{C_H}\cdot T=s+\phi_C =\tau}$.~This expression requires $f_S$ to be analytic in at least a neighbourhood (\textit{cf}.\ Example~\ref{ex_bad} for an illustration of the importance of analyticity).

Before we explore its invariance properties, we note that Eq.~\eqref{relobs} is an adaptation of the sum representation of relational observables\footnote{There also exists an integral representation of relational observables (see e.g.~\cite{Marolf:1994wh,Giddings:2005id,Chataignier:2019kof,Chataignier:2020fys,ChataignierT}), which is (classically) equivalent to the sum representation used here.} developed in \cite{dittrichPartialCompleteObservables2007,Dittrich:2005kc,Dittrich:2006ee,Dittrich:2007jx} to unravelled periodic clock observables.~There is, however, a slight difference in the shape of the power series construction:~while the clock reading $\tau$ of the monotonic clock function $T$ features here as in \cite{dittrichPartialCompleteObservables2007,Dittrich:2005kc,Dittrich:2006ee,Dittrich:2007jx} (see also \cite{Hoehn:2019owq,HLSrelativistic}), in contrast to these references it is the \emph{periodic} angle variable $\phi_C\in[0,\phi_{\rm max})$\,---\,and not the unravelled $T\in\mathbb{R}$\,---\,that appears in the last line of Eq.~\eqref{relobs}.~This is due to the \emph{dynamical} definition of $T$ in Eq.~\eqref{globalclock}.~Notwithstanding, the observable $F_{f_S,T}(\tau)$ encodes the value of $f_S$ when $T$ reads $\tau$.~But it is not in general fully invariant.

\begin{Lemma}\label{lem_clobs}
For an arbitrary system phase space function $f_S:\cp_S\rightarrow\mathbb{R}$, the relational observables $F_{f_S,T}(\tau)$ satisfy the \emph{transient invariance property}
\ba
\alpha_{C_H}^{s}\cdot F_{f_S,T}(\tau)=\alpha_{C_H}^{z\phi_{\rm max}-\phi_C^0}\cdot F_{f_S,T}(\tau)\,,\label{transientinv}
\ea
with $z\phi_{\rm max}-\phi_C^0\leq s<(z+1)\phi_{\rm max}-\phi_C^0$, for $z\in\mathbb Z$, and
\ba 
\alpha_{C_H}^{z\phi_{\rm max}-\phi_C^0}\cdot F_{f_S,T}(\tau)=F_{f_S,T}(\tau+z\phi_{\rm max}).\q\;\,\label{eq:jump}
\ea
\end{Lemma}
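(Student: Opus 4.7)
The plan is to exploit the non-interaction decomposition $C_H=H_C+H_S$ on $\cp_{\rm kin}=\cp_C\times\cp_S$, which implies $\alpha_{C_H}^s=\alpha_{H_C}^s\circ\alpha_{H_S}^s$ because $H_C$ and $H_S$ Poisson-commute and depend on disjoint variables. Applying this composite flow to the series representation \eqref{relobs}, the angle variable $\phi_C$ in the polynomial factor is moved only by $\alpha_{H_C}^s$, producing the modular shift \eqref{solution}, while each iterated bracket $\{f_S,H_S\}_n$ is moved only by $\alpha_{H_S}^s$ and, by Eq.~\eqref{alpha} applied to $g=\{f_S,H_S\}_n$, becomes $\alpha_{H_S}^s\cdot\{f_S,H_S\}_n=\sum_{m=0}^{\infty}\frac{s^m}{m!}\{f_S,H_S\}_{n+m}$.

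First I would fix a parameter value $s$ in the $z$-th clock cycle, i.e., $z\phi_{\rm max}-\phi_C^0\leq s<(z+1)\phi_{\rm max}-\phi_C^0$. On a trajectory with initial angle $\phi_C^0$, the floor function in \eqref{solution} is locally constant and equals $z$, so that $\phi_C(s)=s+\phi_C^0-z\phi_{\rm max}$ and hence $\tau-\phi_C(s)=(\tau+z\phi_{\rm max})-s-\phi_C^0$. Substituting, the double series for $\alpha_{C_H}^s\cdot F_{f_S,T}(\tau)$ reads $\sum_{n,m\geq0}\frac{\left((\tau+z\phi_{\rm max})-s-\phi_C^0\right)^n}{n!}\frac{s^m}{m!}\{f_S,H_S\}_{n+m}$. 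Collecting terms of fixed total order $k=n+m$ and invoking the binomial identity $\sum_{n+m=k}\binom{k}{n}(A-s)^n s^m=A^k$ with $A=\tau+z\phi_{\rm max}-\phi_C^0$, the explicit $s$-dependence cancels and the sum collapses to $\sum_{k\geq0}\frac{(\tau+z\phi_{\rm max}-\phi_C^0)^k}{k!}\{f_S,H_S\}_k$, which is precisely $F_{f_S,T}(\tau+z\phi_{\rm max})$ evaluated at $\phi_C=\phi_C^0$.

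This single computation yields both halves of the lemma: the right-hand side is manifestly independent of the particular $s$ chosen within the $z$-th cycle, so any two such parameter values produce the same value and \eqref{transientinv} follows, while evaluating at the boundary parameter $s=z\phi_{\rm max}-\phi_C^0$ (where $\phi_C(s)=0$) furnishes the explicit identification \eqref{eq:jump}. The main obstacle, and the very reason the invariance is only \emph{transient}, is the modular jump of $\phi_C(s)$ at cycle boundaries, which is what produces the nontrivial shift $\tau\mapsto\tau+z\phi_{\rm max}$ between consecutive cycles; a secondary point is that the binomial rearrangement silently uses convergence of the power series \eqref{relobs} on the relevant neighbourhood, so the analyticity assumption on $f_S$ flagged below \eqref{relobs} is essential for the derivation to be literally valid.
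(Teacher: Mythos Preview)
Your proof is correct and follows essentially the same route as the paper: both split the flow action so that $\phi_C$ evolves modularly via Eq.~\eqref{solution} while the iterated brackets $\{f_S,H_S\}_n$ are pushed forward by $\alpha_{H_S}^s$, arriving at the identical double series $\sum_{n,m}\frac{(\tau+z\phi_{\rm max}-\phi_C^0-s)^n s^m}{n!\,m!}\{f_S,H_S\}_{n+m}$. The only cosmetic difference is in collapsing this sum: you invoke the binomial identity $\sum_{n+m=k}\binom{k}{n}(A-s)^n s^m=A^k$ directly, whereas the paper takes a slightly longer detour via the relation $\frac{d}{d\tau}F_{f_S,T}(\tau)=\{F_{f_S,T}(\tau),H_S\}$ and a Taylor resummation in $\tau$; your route is the more direct of the two.
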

\begin{proof}
\proofapp
\end{proof}
Note that, for a non-zero initial value $\phi_C(0)=\phi_C^0$ of the angle variable, in each period $\phi_C(s)\in[0,\phi_{\rm max})$ of the evolution \eqref{solution} of the angle variable the parameter $s$ runs over the interval $[z\phi_{\rm max}-\phi_C^0,(z+1)\phi_{\rm max}-\phi_C^0)$, for each $z\in\mathbb Z$.~Thus, Eq.~\eqref{transientinv} tells us that the relational observables $F_{f_S,T}(\tau)$ are generically only constant within each clock cycle and their value jumps as the clock completes a cycle according to Eq.~\eqref{eq:jump}.~We refer to Fig.~\ref{Fig:transientplot} in Example~\ref{ex_2} below for a visualisation of the transient invariance property in a simple example.

This has an important consequence:~since the relational observable $F_{f_S,T}(\tau)$ encodes the value of $f_S$ when $T$ reads $\tau$, it can only be constant along the \emph{entire} gauge orbit if $f_S$ takes the same value when $T$ reads $\tau$ as when it reads $\tau+z\,\phi_{\rm max}$. Hence, it must be periodic along the dynamical orbit too. In particular, we say that $f_S$ is $\theta$-periodic if $\alpha^s_{H_S}\cdot f_S = f_S$ for $s=z\theta$, $z\in\mathbb{Z}$ and $\theta\in\mathbb{R}$. Here, $\alpha_{H_S}^s$ denotes the dynamical flow generated by the system Hamiltonian $H_S$ on the system phase space $\mathcal{P}_S$.
\begin{corol}\label{cor_periodicobs}
The relational observable $F_{f_S,T}(\tau)$ is a \emph{transient} Dirac observable that is only invariant along the gauge orbit generated by $C_H$ \emph{per clock cycle}.~Its value will `jump' discontinuously as the clock completes a cycle, $\phi_C\to\phi_{\rm max}$.~The only relational observables that are \emph{global} Dirac observables, i.e.~invariant along the entire orbit generated by $C_H$, are those corresponding to system observables $f_S:\cp_S\rightarrow\mathbb{R}$ that are $\phi_{\rm max}$-periodic too.~Complete gauge invariance thereby enforces the clock's periodicity onto the evolving degrees of freedom.
\end{corol}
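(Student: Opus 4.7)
The plan is to bootstrap Corollary~\ref{cor_periodicobs} directly from Lemma~\ref{lem_clobs}. For the first assertion\,---\,transient, per-cycle invariance together with discrete jumps\,---\,Eqs.~\eqref{transientinv} and \eqref{eq:jump} already do the work. They say that as the gauge parameter $s$ traverses $[z\phi_{\rm max}-\phi_C^0,(z+1)\phi_{\rm max}-\phi_C^0)$, the $C_H$-flowed observable coincides with $F_{f_S,T}(\tau+z\phi_{\rm max})$, i.e.\ is constant within a given clock cycle; as $s$ crosses to the next interval, this value jumps by one shift $\tau\mapsto\tau+\phi_{\rm max}$. I would therefore just repackage these two equations as the per-cycle invariance statement of the corollary.

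For the characterisation of \emph{global} Dirac invariance, I would impose $\alpha_{C_H}^s\cdot F_{f_S,T}(\tau)=F_{f_S,T}(\tau)$ for all $s\in\mathbb{R}$. By Lemma~\ref{lem_clobs} this is equivalent to $F_{f_S,T}(\tau+z\phi_{\rm max})=F_{f_S,T}(\tau)$ for every $z\in\mathbb{Z}$. Using the power series in Eq.~\eqref{relobs}, together with the no-interaction split $\{f_S,C_H\}_n=\{f_S,H_S\}_n$, I would recognise the sum as the Taylor expansion of the $H_S$-flow and write $F_{f_S,T}(\tau)=\alpha_{H_S}^{\tau-\phi_C}\cdot f_S$. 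Composing flows, the global invariance condition reduces to $\alpha_{H_S}^{z\phi_{\rm max}}\cdot f_S=f_S$ on $\cp_S$ for all $z\in\mathbb{Z}$, which is precisely the assertion that $f_S$ is $\phi_{\rm max}$-periodic along its own Hamiltonian orbit\,---\,so that the clock's periodicity is inherited by the evolving degree of freedom.

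The main conceptual point, rather than a technical obstacle, is to be clean about the distinction between a weak equality on $\cc$ and an identity on $\cp_S$. Since $\phi_C$ and $H_C$ are preserved by the $H_S$-flow\,---\,they live on $\cp_C$ and Poisson-commute with $H_S$ in the decoupled setup of Eq.~\eqref{noint}\,---\,the equality $\alpha_{H_S}^{z\phi_{\rm max}}\cdot f_S=f_S$ as an identity of functions on $\cp_S$ is both necessary and sufficient for the corresponding equality of $F_{f_S,T}(\tau)$'s, independently of where on $\cc$ one evaluates. Once this separation is made, the proof is essentially two lines beyond Lemma~\ref{lem_clobs}, and the final sentence of the corollary\,---\,that complete gauge invariance enforces the clock's periodicity onto the evolving degrees of freedom\,---\,follows as a direct rephrasing of the derived periodicity condition on $f_S$.
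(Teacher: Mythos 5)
Your proof is correct and matches the paper's own reasoning, which is given informally in the paragraph preceding the corollary and is not spelled out in the appendix (the corollary is presented as an immediate consequence of Lemma~\ref{lem_clobs}). Your additional step of rewriting the power series~\eqref{relobs} as the $H_S$-flow, $F_{f_S,T}(\tau)=\alpha_{H_S}^{\tau-\phi_C}\cdot f_S$, and then cancelling the inner flow to reduce $F_{f_S,T}(\tau+z\phi_{\rm max})=F_{f_S,T}(\tau)$ to $\alpha_{H_S}^{z\phi_{\rm max}}\cdot f_S=f_S$ is a clean, explicit version of what the paper asserts verbally ("$f_S$ takes the same value when $T$ reads $\tau$ as when it reads $\tau+z\phi_{\rm max}$"). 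The one place to be slightly careful is the necessity direction in the general case: gauge invariance lives on $\cc$, and when $\phi_{\rm max}$ is energy-dependent (or when not every system energy surface is reached on $\cc$), the condition $\alpha_{H_S}^{z\phi_{\rm max}}\cdot f_S=f_S$ only needs to hold on the accessible part of $\cp_S$ with the matching period; this is glossed in the paper's phrasing as well, so your treatment is at the same level of rigor, but a remark to that effect would tighten the argument.
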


Infinitesimally, the transient invariance property of the lemma means that
\ba
\{F_{f_S,T}(\tau),C_H\}=0\nn
\ea
everywhere on the gauge orbit, except at the points where the clock completes a cycle; at these points this Poisson bracket is undefined.~This also follows directly from Eq.~\eqref{classcov}.~In fact, while $\dot T=1$ as clear from \eqref{globalclock}, $\{T,C_H\}=\{\phi_C,H_C\}=1$ except where a clock cycle is completed.

As a consequence of Lemma~\ref{lem_clobs}, it makes no difference to define the relational dynamics relative to the monotonic clock $T$ or relative to the non-monotonic clock $\phi_C$ as the system relational observables with respect to the monotonic clock $T$ are generically only invariant per clock cycle.~In fact, let $\tau_C\in[0,\phi_{\rm max})$ be the (possibly energy-dependent) evolution parameter analogous to $\tau$ that, however, runs over the values of $\phi_C$ along the evolution.~Thanks to Eq.~\eqref{globalclock}, we can split the monotonic evolution parameter $\tau$ into a continuous and a discrete part:
\ba
\tau=\tau_C +n\,\phi_{\rm max}\,,\q\q n\in\mathbb{Z}\,.\label{shift}
\ea
At an intuitive level, the winding number $n$ and $\tau_C$ are akin to counting the `calendar days' and parametrising the readings of a 24h clock, respectively. 

Inserting this relation into Eq.~\eqref{relobs}, we obtain
\ba
F_{f_S,T}(\tau)&=&\sum_{m=0}^{\infty}\,\f{(\tau_C+n\,\phi_{\rm max}-\phi_C)^m}{m!}\,\{f_S,H_S \}_m\nn\\
&=&F_{f_S,\phi_C}(\tau_C,n),\label{relobsrel}
\ea
where $F_{f_S,\phi_C}(\tau_C,n)$ is the relational observable encoding how $f_S$ evolves relative to the non-monotonic clock variable $\phi_C$ on the cycle of the dynamics determined by winding number $n$.~The relational evolution of $f_S$ relative to $T$ or relative to $\phi_C$ depends generally on the winding number and accordingly, the relational observables \eqref{relobsrel} capture transient information for a given cycle $n$.

\begin{example}[\bf{Two oscillators}]\label{ex_1}
Consider two harmonic oscillators with fixed total energy $E\in\mathbb{R}$, i.e.\ the constraint $C_H$ in Eq.~\eqref{noint} with the following clock and system Hamiltonians:
\ba
H_C&=&\f{p_t^2}{2 m_t}+\f{m_t\omega_t^2}{2}\,t^2\,,\nn\\
H_S&=&\f{p^2}{2 m}+\f{m\omega^2}{2}\,q^2-E\,.\nn
\ea
For the special case that $\omega_t=\omega$, this example has been studied extensively in the literature on relational dynamics \cite{Rovelli:1990jm,Rovelli:1989jn,rovelliQuantumGravity2004,Bojowald:2010qw}, however, not using angle variables as a clock.\footnote{Similar settings with two harmonic oscillators also occur in cosmology as e.g.~for a Friedmann universe with a conformally coupled scalar field \cite{Kiefer:1989va,Kiefer:1990ms,Kiefer:1993cqg}.~There, the total Hamiltonian constraint takes the form $H_C-H_S$, with $H_{C,S}$ the Hamiltonians of unit mass and frequency harmonic oscillators and $E=0$.~Similar conclusions as ours here apply also to such models.~See \cite{Chataignier:2024ley} for an alternative formalism.} Here we will expressly permit $\omega_t\neq\omega$ as this will lead to interesting repercussions in the quantum theory, especially when $\omega_t/\omega\notin\mathbb{Q}$.

We choose
\ba
T(s) = s+\phi_C(t,p_t)\,\label{HOclock}
\ea
as our monotonic clock function, where $\phi_C$
is the angle variable given in Eq.~\eqref{HOphase2}.~We can ask for the position $q$ and momentum $p$ of the second oscillator when the clock function $T$ reads $\tau$.~The corresponding relational observables can be computed according to Eq.~\eqref{relobs}
\ba
F_{q,T}(\tau)&=&q\,\cos\left((\phi_C-\tau)\omega\right)-\f{p}{m\omega}\,\sin\left((\phi_C-\tau)\omega\right)\,,\nn\\
F_{p,T}(\tau)&=&p\,\cos\left((\phi_C-\tau)\omega\right)+m\omega\,q\,\sin\left((\phi_C-\tau)\omega\right),\nn\\\label{horelobs}
\ea
and are canonically conjugate $\{F_{q,T}(\tau),F_{p,T}(\tau)\}=1$.~Using Eq.~\eqref{solution} with $\phi_{\rm max}=\frac{2\pi}{\omega_t}$ (as shown in Example \ref{ex:HOfloor} in Appendix \ref{app:examples}), it is easy to check that the relational observables \eqref{horelobs} satisfy the transient invariance property
\begin{align*}
\alpha_{C_H}^s\cdot F_{q,T}(\tau)&=q\,\cos\left((\phi_C-\tau-z\phi_{\rm max})\omega\right)\\
&\q-\f{p}{m\omega}\,\sin\left((\phi_C-\tau-z\phi_{\rm max})\omega\right)\\
&=F_{q,T}(\tau+z\phi_{\rm max})\,,\\
\alpha_{C_H}^s\cdot F_{p,T}(\tau)&=F_{p,T}(\tau+z\phi_{\rm max})\,,
\end{align*}
for $s\in[z\phi_{\rm max}-\phi_C^0,(z+1)\phi_{\rm max}-\phi_C^0)$, $ z\in\mathbb Z$.~Note that $F_{q,T}(\tau+z\phi_{\rm max})=F_{q,T}(\tau)$ and $F_{p,T}(\tau+z\phi_{\rm max})=F_{p,T}(\tau)$ for $\omega_t/\omega\in\mathbb Q$, $0<\omega_t/\omega\leq 1$.~In this case the system observables $q$ and $p$ are periodic by (a unit fraction of) $\phi_{\rm max}$ and, compatibly with Corollary~\ref{cor_periodicobs}, the relational observables \eqref{horelobs} are global Dirac observables.~This is compatible with the fact that, for commensurate frequencies $\omega_t/\omega\in\mathbb Q$, the system is fully integrable \cite{Dittrich:2016hvj,Dittrich:2015vfa}.~On the contrary, when the frequencies are incommensurate, $\omega_t/\omega\notin\mathbb{Q}$, the relational observables \eqref{horelobs} are only invariant along the gauge orbits per clock cycle and their values jump discontinuously as the clock completes a cycle.~In this case, there are no non-trivial $\phi_{\rm max}$-periodic $S$-observables.

As we will see in Sec.~\ref{ssec_ex2} and~\ref{sec_observables-periodic} (cf.~Examples~\ref{ex_3},~\ref{ex_4} and~\ref{ex_7},~\ref{ex_8}), a similar situation occurs also in the Dirac quantised theory.
\end{example}

\begin{example}[\bf{Oscillator clock and free particle}]\label{ex_2}
Let us choose $\cm\simeq\mathbb{R}$ and consider a harmonic oscillator clock with $H_C$ given by Eq.~\eqref{HOHamiltonian} and a free particle as the system with Hamiltonian
\ba
H_S&=&-\f{p^2}{2m}\,\nn
\ea
(the minus sign, so that we can solve Eq.~\eqref{noint}).~Again, we choose the monotonic clock function in Eq.~\eqref{HOclock}.~The relational observable encoding the position $q$ of the particle when the clock function $T$ reads $\tau\in\mathbb{R}$ is given by
\ba
F_{q,T}(\tau) &=& q-\f{p}{m}(\tau-\phi_C(t,p_t))\nn\\
&=&q-\f{p}{m} (\tau_C+\f{2\pi}{\omega_t} n-\phi_C(t,p_t))\label{hopartobs}\\
&=&F_{q,\phi_C}(\tau_C,n)\nn
\ea
where $\tau_C\in[0,\f{2\pi}{\omega_t})$ and $n$ denotes the winding number of the clock at parameter time $s$.~It is easy to check that
\ba
\alpha_{C_H}^s\cdot F_{q,T}(\tau)=q-\f{p}{m}(\tau-\phi_C^0+\f{2\pi}{\omega_t} n)\,,\label{ex:flowFq}
\ea
whose step function-like behaviour as a function of $s$ is plotted in Fig.~\ref{Fig:transientplot}.~The relational observable $F_{q,T}(\tau)$ in Eq.~\eqref{hopartobs} is thus a transient Dirac observable whose value remains constant within each clock cycle and jumps discontinuously as a cycle is completed.
\begin{figure}[t!]
  \centering
\includegraphics[width=0.3\textwidth]{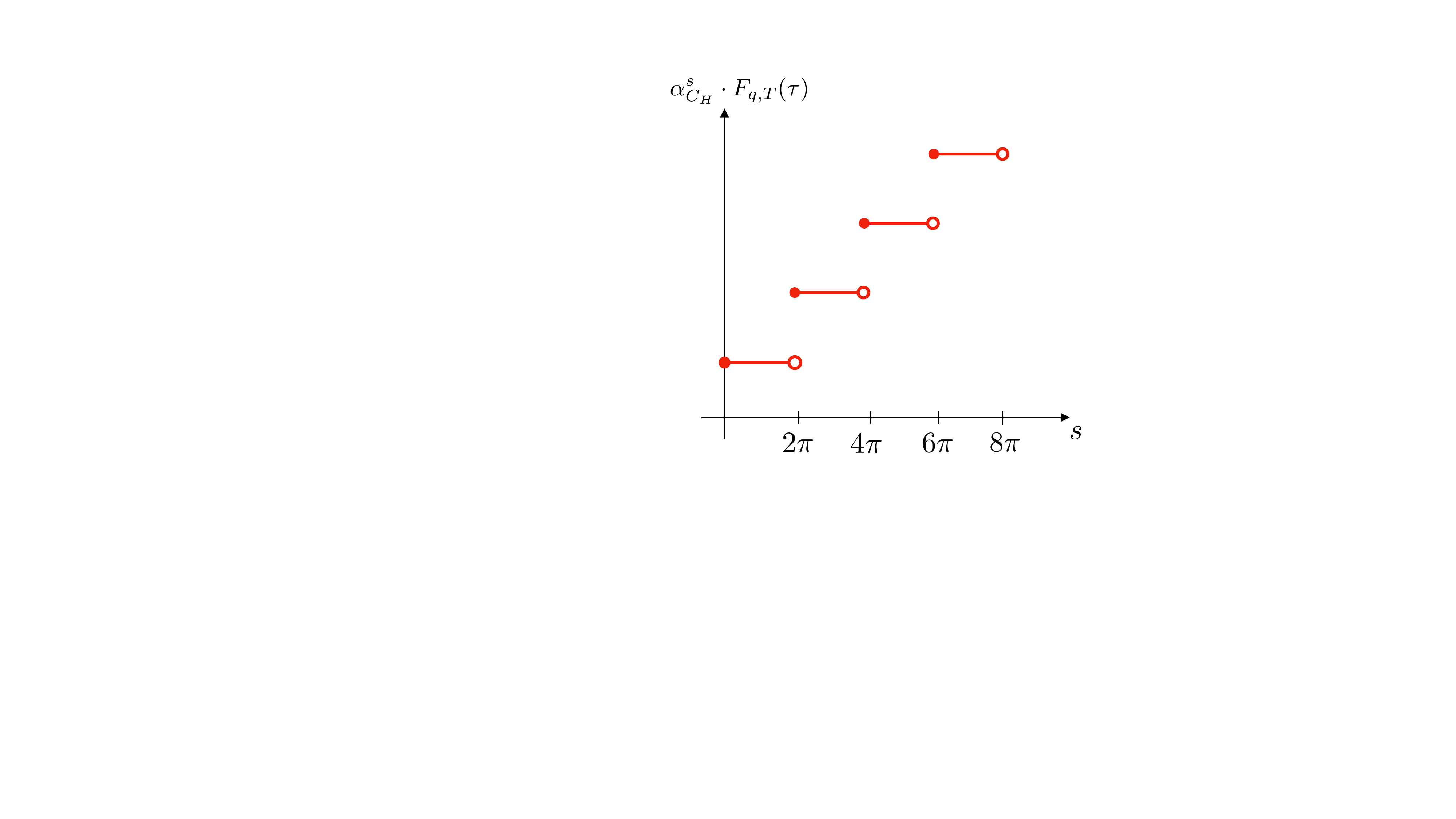}
  \caption{Plot of the flow $\alpha_{C_H}^s\cdot F_{q,T}(\tau)$ given in Eq.~\eqref{ex:flowFq}.~Here, $\phi_C^0=0$, and $\omega_t=1$ so that $\phi_{\rm max}=2\pi$.~The relational observable $F_{q,T}(\tau)$ in Eq.~\eqref{hopartobs} thus remains invariant within each clock cycle and its value jumps discontinuously as the clock completes a cycle (transient Dirac observable).}
  \label{Fig:transientplot}
\end{figure}
It is also clear that the relational observable $F_{p,T}(\tau)=p$ conjugate to $F_{q,T}(\tau)$ is a strong Dirac observable, $\{F_{p,T}(\tau),C_H\}=0$ for all $\tau$.~The periodicity requirement of Corollary~\ref{cor_periodicobs} is in fact trivially met for the constant system observable $p$.~For a less trivial example of a Dirac observable which remains invariant along the entire orbit generated by $C_H$, let us consider the $S$ observable $f_S=\cos(m\omega_tq/p)$.~This is $2\pi/\omega_t$-periodic along the flow generated by $C_H$ and the corresponding relational observable $F_{f_S,T}(\tau)$ satisfies
\ba
\alpha_{C_H}^s\cdot F_{f_S,T}(\tau)=\cos(\omega_t(\f{mq}{p}-\tau+\phi_C^0))=F_{f_S,T}(\tau)\q\forall\,s\,.\nn
\ea
\end{example}

Lastly, let us remark the importance of working with analytic functions $f_S$ when using the power series expansion Eq.~\eqref{relobs} for the relational observables $F_{f_S,T}(\tau)$.~When $f_S$ is not analytic, the power series expansion might in fact fail to give the correct result.~We refer to Example~\ref{ex_bad} in Appendix~\ref{app:examples} for a concrete illustration.~Since we will use this power series later to quantise the relational dynamics with respect to periodic clocks, we will henceforth restrict to systems $S$ which feature a Poisson subalgebra $\ca_S$ of analytic functions that also separates the points in $\cp_S$ and so can be used to coordinatise $\cp_S$.

\subsection{The unravelled (monotonic) clock as a carrier of unphysical information} \label{sUnravClass}
As we have discussed, the relational dynamics relative to the unravelled (monotonic) clock $T$ is the same as that relative to the non-monotonic, periodic clock $\phi_C$, due to the transient invariance property of the system relational observables (a consequence of Lemma~\ref{lem_clobs}).~This implies that the extra information carried by the monotonic clock (the winding number) is irrelevant to the relational dynamics.~The reason for this is that $T$ depends explicitly on the parameter time $s$ [cf.~Eq.~\eqref{globalclock}]; i.e., it is not solely a function of the kinematical phase space variables, and thus it depends on extrinsic information that is not available to the $\rm{U}(1)$-reference frames.~The periodic clock can only keep track of $s$ along one cycle and cannot resolve the winding number information.

Intuitively, one might expect that the winding number could be resolved relative to the clock of an aperiodic reference frame, such as an ideal clock or a free particle.~However, as Lemma~\ref{lem_monoPNP} and Example~\ref{ex:monoPNP} below show, the relational observable that describes the value of the unravelled (monotonic) clock $T$ of a periodic system relative to an aperiodic clock is still a transient Dirac observable; i.e., it is only an invariant per clock cycle.~Thus one must explicitly include the physical mechanism by which the aperiodic clock counts the clock cycles, (which constitutes a modification of the constraint) if the counting is to feature in the relational dynamics.~This is complementary to Lemma~\ref{lem_clobs}, which concerned the observables relative to the unravelled (monotonic) clock $T$ of a periodic system.~In this way, the winding number of a periodic system is not resolved by invariants, and it remains an extrinsic (unphysical) element that is not captured by the relational dynamics. Only an external clock reference frame can access the winding number, namely one that can measure the external evolution parameter $s$. Reparametrisation invariance ensures that any such information is wiped out in a purely relational description. In gravitational scenarios, such an external clock would also be a fictitious one, while in laboratory situations one can imagine a 
\emph{physical} external laboratory clock frame that can measure the winding numbers, however, that  one chooses not to access, e.g.\ to simulate a relational dynamics as in \cite{moreva2014time}. More generally, gauge invariance in the context of internal reference frames as the clocks here can be understood as `external reference frame independence', see \cite[Sec.~II]{Hoehn:2023ehz} for a discussion.

Concretely, the fact that a periodic system's winding number is relationally irrelevant is reflected in the fact that the relational observables, both in Lemmas~\ref{lem_clobs} and~\ref{lem_monoPNP}, depend on $\phi_C$, which is insensitive to the winding number information.~Indeed, the flow $\phi_C(s)$ given in Eq.~\eqref{solution} is not differentiable with respect to all values of $s$ due to the appearance of the floor function (which can be explicitly seen for the harmonic oscillator phase variable in Example~\ref{ex:HOfloor} in Appendix~\ref{app:examples}).

\begin{Lemma}\label{lem_monoPNP}
The relational observable $F_{T,Q}(\tau)$ that encodes the value of the unravelled (monotonic) clock $T$ of a periodic system [cf.~Eq.~\eqref{globalclock}] relative to the value $\tau$ of the clock $Q$ of an aperiodic system [cf.~Definition~\ref{def:nonperclock}] satisfies the \emph{transient invariance property}
\ba
\alpha_{C_H}^{s}\cdot F_{T,Q}(\tau)=\alpha_{C_H}^{z\phi_{\rm max}-\phi_C^0}\cdot F_{T,Q}(\tau)\,,
\ea
with $z\phi_{\rm max}-\phi_C^0\leq s<(z+1)\phi_{\rm max}-\phi_C^0$, for $z\in\mathbb Z$, and
\ba 
\alpha_{C_H}^{z\phi_{\rm max}-\phi_C^0}\cdot F_{T,Q}(\tau)=F_{T,Q}(\tau-z\phi_{\rm max}).\q\;\,
\ea
\end{Lemma}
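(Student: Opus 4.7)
My plan is to mirror the proof of Lemma~\ref{lem_clobs}, simply swapping the roles of clock and observable:~here the aperiodic clock $Q$ serves as the temporal reference frame, and the unravelled periodic clock $T$ is the observable.~As in that proof, the transience will ultimately trace back to the fact that $T$ is not a genuine function on $\cp_C$ but inherits its monotonic behaviour only via the $s$-dependent winding number $n$.

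First I would compute $F_{T,Q}(\tau)$ from the power series~\eqref{relobs} with $Q$ as clock and $T$ as observable.~Because $T$ is supported on the periodic sector and $Q$ on the aperiodic one, the covariance relations $\{T,H_C\}=1$ (from Eq.~\eqref{classcov}) and $\{T,H_S\}=0$ imply $\{T,C_H\}=1$ on the relevant dense subset, with all iterated brackets vanishing beyond first order, so the series truncates.~Representing $T$ kinematically by $\phi_C$ as discussed in Sec.~\ref{sUnravClass}, this gives the simple expression $F_{T,Q}(\tau) = \phi_C + \tau - Q$.~The second step is then to apply $\alpha^s_{C_H}$ piecewise:~$Q$ evolves monotonically as $\alpha^s_{C_H}\cdot Q = Q+s$, $\phi_C$ evolves according to~\eqref{solution}, and within each interval $s\in[z\phi_{\rm max}-\phi_C^0,(z+1)\phi_{\rm max}-\phi_C^0)$ the floor-function identity $\phi_C(s)=\phi_C^0+s-z\phi_{\rm max}$ makes the $s$-dependence cancel in $\alpha^s_{C_H}\cdot F_{T,Q}(\tau)$, leaving a constant value on the $z$-th cycle that coincides with $F_{T,Q}(\tau-z\phi_{\rm max})$ evaluated at $s=0$.~This yields both the transient invariance~\eqref{transientinv} and the jump relation.

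The main (conceptual) obstacle is the identification invoked in the first step:~although $T(s)=s+\phi_C^0$ is dynamically monotonic, the winding number is not recoverable from kinematical phase space data, so inside the power series $T$ can only enter through $\phi_C\in[0,\phi_{\rm max})$.~This is precisely what imprints the clock's periodicity on the relational observable and forbids $F_{T,Q}(\tau)$ from being a global Dirac observable, in direct parallel with the mechanism behind Lemma~\ref{lem_clobs} and reinforcing the message of Sec.~\ref{sUnravClass} that even an aperiodic reference frame cannot relationally resolve the winding number of a periodic system.
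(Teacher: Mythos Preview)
Your proposal is correct and follows essentially the same route as the paper: derive $F_{T,Q}(\tau)=\phi_C+\tau-Q$ from the truncated power series (the paper writes this as $\tau-Q^0+\phi_C^0$), then evolve both $Q$ and $\phi_C$ under $\alpha_{C_H}^s$, using Eq.~\eqref{solution} for the latter so that the linear $s$-terms cancel and only the floor-function contribution $-z\phi_{\rm max}$ survives on each cycle. Your discussion of why $T$ must enter the power series as $\phi_C$ rather than as the dynamically defined monotonic clock is exactly the point the paper emphasises as well.
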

\begin{proof}
\proofapp
\end{proof}

\begin{example}[\bf Harmonic oscillator, free particle, and ideal clock]\label{ex:monoPNP}
Let us consider the case in which $\cm\simeq\mathbb{R}$ and the Hamiltonian constraint reads
\ba
C_H = H_O+H_P+H_I \; ,
\ea
where $H_O$ is the oscillator Hamiltonian
\ba\label{monoPNP-HO}
H_O = \frac{p_1^2}{2m_1}+\frac{m_1\omega^2q_1^2}{2} \,,
\ea
while $H_P$ and $H_I$ are the Hamiltonians of the free particle and ideal clock, respectively:
\begin{align}
H_P &= \frac{p_2^2}{2m_2} \,,\label{monoPNP-HP}\\
H_I &= -p_3 \ .\label{monoPNP-HI}
\end{align}
The (monotonic) clocks for each system are
\begin{align}
T_O(s) &= s + \phi_C(q_1,p_1) \,, \label{TO-mono}\\
T_P(s) &= \frac{m_2 q_2(s)}{p_2(s)} = s+\frac{m_2q_2}{p_2} \,,\label{TPclock}\\
T_I(s) &= -q_3(s) = s-q_3 \label{TIclock}\,,
\end{align}
where $\phi_C(q_1,p_1)$ was defined in Eq.~\eqref{HOphase2} (with the correspondence $t\leftrightarrow q_1$, $p_t\leftrightarrow p_1$).~Each clock is canonically conjugate to $C_H$ in the phase-space regions where they are differentiable.~Notice that, contrary to $T_P(s)$ and $T_I(s)$, $T_O(s)$ cannot be written as a phase-space function without an explicit dependence on $s$; i.e.,
\begin{equation}\label{TO-mono2}
\begin{aligned}
T_O(s) &\neq \phi_C(q_1(s),p_1(s))\\
&= s+\phi_C(q_1,p_1)-\frac{2\pi}{\omega}\left\lfloor\frac{s+\phi_C(q_1,p_1)}{2\pi/\omega}\right\rfloor \,.
\end{aligned}
\end{equation}
The explicit $s$ dependence in Eq.~\eqref{TO-mono} is related to the extrinsic (winding number) information in $T_O(s)$, precisely what is subtracted by the floor function in the second line of Eq.~\eqref{TO-mono2}.~We can consider the relational observables
\begin{align}
    F_{T_O,T_P}(\tau) &= \tau-\frac{m_2q_2}{p_2}+\phi_C(q_1,p_1) \,,\label{FTOTP}\\
    F_{T_O,T_I}(\tau) &= \tau+q_3+\phi_C(q_1,p_1) \,,\label{FTOTI}
\end{align}
which encode the evolution of $T_O$ relative to the clocks of the aperiodic systems.~Using Eq.~\eqref{solution} with $\phi_{\rm max} = 2\pi/\omega$, we find
\begin{align}
\alpha_{C_H}^s\cdot F_{T_O,T_P}(\tau) &= \tau-\frac{m_2q_2}{p_2}+\phi_C^0-\frac{2\pi}{\omega}\left\lfloor\frac{s+\phi_C^0}{2\pi/\omega}\right\rfloor \notag\,,\\
\alpha_{C_H}^s\cdot F_{T_O,T_I}(\tau) &= \tau+q_3+\phi_C^0-\frac{2\pi}{\omega}\left\lfloor\frac{s+\phi_C^0}{2\pi/\omega}\right\rfloor \notag\,.
\end{align}
With $n\phi_{\rm max}\leq s+\phi_C^0< (n+1)\phi_{\rm max}$ ($n\in\mathbb{Z}$), the above observables obey the transient property:
\begin{equation}\label{FTOtransient}
\begin{aligned}
\alpha_{C_H}^s\cdot F_{T_O,T_P}(\tau) &= F_{T_O,T_P}(\tau-n\phi_{\rm max}) \,,\\
\alpha_{C_H}^s\cdot F_{T_O,T_I}(\tau) &= F_{T_O,T_I}(\tau-n\phi_{\rm max}) \,.
\end{aligned}
\end{equation}
\end{example}

\subsection{Reduced phase space}\label{sec_redps}
Usually, in the presence of globally monotonic clock functions, it is also possible to construct a reduced phase space through gauge fixing \cite{hoehnHowSwitchRelational2018,Hoehn:2018whn,Hoehn:2019owq,HLSrelativistic,ChataignierT} (see also \cite{Henneaux:1992ig}).~For example, when such a clock function features in the expression of relational observables, one can gauge-fix the latter by fixing the clock function to some arbitrary value and solving the constraint $C_H$ for the variable conjugate to the clock, thereby entirely removing the clock degrees of freedom from among the dynamical variables.~However, in the present periodic clock case, despite having the monotonic clock function $T(s)$ in Eq.~\eqref{globalclock} at our disposal, the fact that the angle variable $\phi_C$ -- not $T$ -- appears in the expression Eq.~\eqref{relobs} of the relational observables poses some challenges.

For systems which do not feature a Poisson subalgebra of point separating, analytic, and $\phi_{\rm max}$-periodic functions $f_S$ on $\cp_S$, the relational observables $F_{f_S,T}(\tau)$ in Eq.~\eqref{relobs} are in fact not invariant along the entire gauge orbit but only per clock cycle.~Therefore, they cannot be used to parametrise the space of gauge orbits $\cc/\!\!\sim$, where $\sim$ denotes equivalence under gauge transformations\footnote{Also, when the analyticity assumption on $S$ observables is violated, the space of orbits will generally fail to be a phase space with symplectic structure.~For example, for the Example~\ref{ex_bad} discussed in Appendix~\ref{app:examples} it has been shown in \cite{Dittrich:2015vfa} that its corresponding $\cc/\!\!\sim$ fails to be a manifold and thus a phase space.}.~This in turn reflects into the fact that one can not define a global gauge-fixing of the relational observables in Eq.~\eqref{relobs}.~Fixing $\phi_C$ to some value $\phi_C=\phi_C^*\in[0,\phi_{\rm max})$ will in fact define multiple points on the orbit generated by $C_H$, namely
\ba\label{multipleptsGF}
s=\phi_C^*-\phi_C^0+n\phi_{\rm max}\;,
\ea
as for Eq.~\eqref{solution}.~As such, it can not be used to gauge fix $F_{f_S,T}(\tau)$ which will still exhibit a dependence on the clock winding number $n$, namely $F_{f_S,T(n)}(\tau)=F_{f_S,\phi_C^*}(\tau_C,n)$ as for Eq.~\eqref{relobsrel} with $T(n)=\phi_C^*+n\phi_{\rm max}$ (cf.~Eq.~\eqref{globalclock}).~This is for instance the case for the relational observables in Eq.~\eqref{horelobs} of Example~\ref{ex_1} for incommensurate frequencies and for the relational observable in Eq.~\eqref{hopartobs} of Example~\ref{ex_2} as schematically illustrated for the latter in Fig.~\ref{fig:noglobGF}.
\begin{figure}[t!]
  \centering
\includegraphics[width=0.3\textwidth]{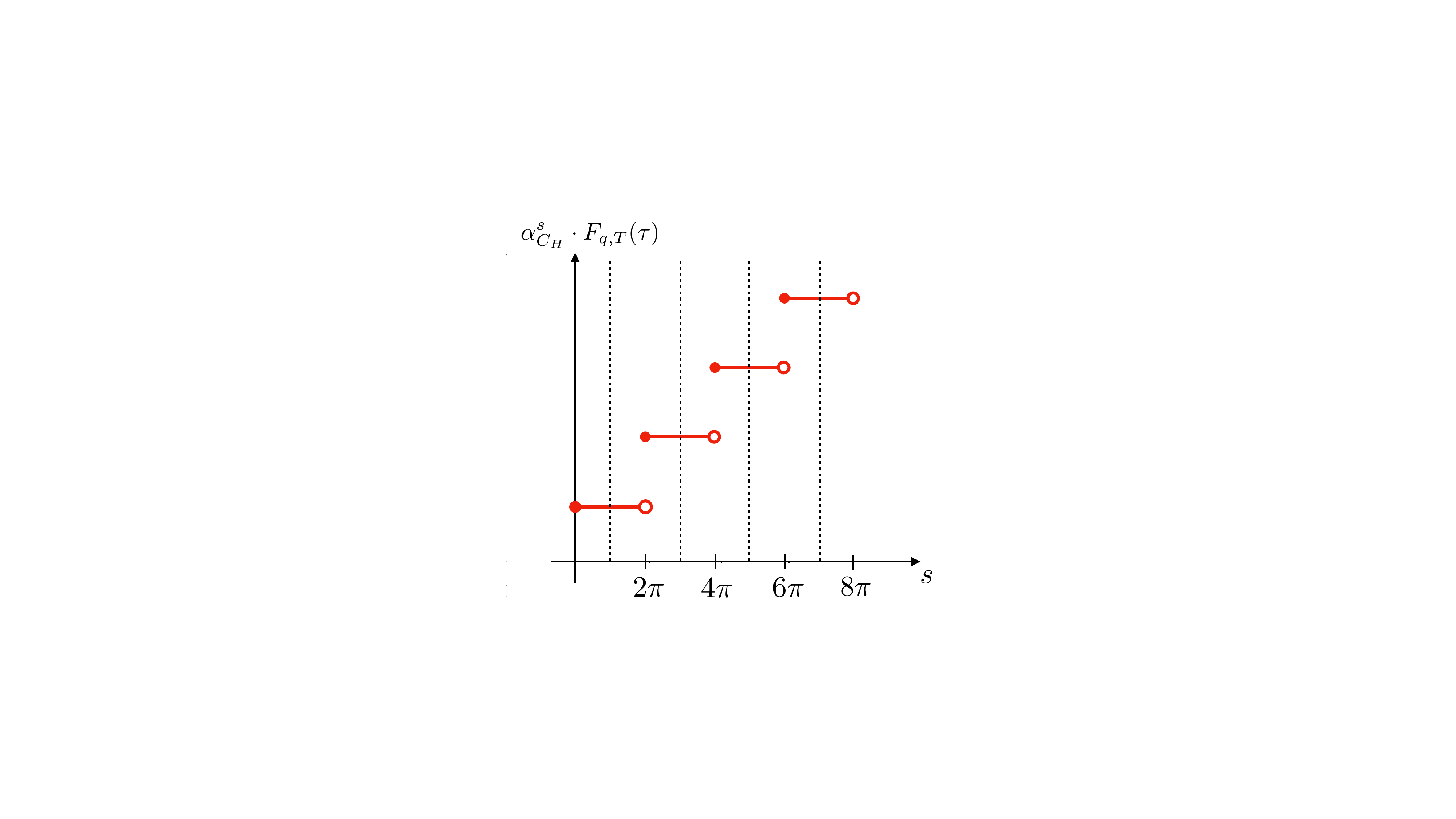}
  \caption{Fixing $\phi_C=\phi_C^*\in[0,\phi_{\rm max})$ defines multiple points on the dynamical orbit of the transient relational observable $F_{q,T}(\tau)$ given in Eq.~\eqref{hopartobs} of Example~\ref{ex_2}.~No global gauge fixing can be defined and different values $F_{q,\phi_C^*}(\tau_C,n)$ of $F_{q,T}(\tau)$ are thus singled out in different clock cycles.~Here, $\phi_C^0=0$, $\omega_t=1$ (hence $\phi_{\rm max}=2\pi$), and $\phi_C^*=\pi$ so that the different points correspond to $s=(2n+1)\pi$ (dashed lines).}
  \label{fig:noglobGF}
\end{figure}

When the system under consideration admits instead a point separating Poisson subalgebra of analytic functions on $\cp_S$ which are also $\phi_{\rm max}$-periodic, no issue arises when constructing the physical phase space as the space of gauge orbits $\cc/\!\!\sim$.~In this case, in fact, Eq.~\eqref{relobs} will provide sufficiently many well-behaved relational observables that, being constant on each orbit, can be used as coordinates on $\cc/\!\!\sim$.~The latter also inherits a symplectic structure through the Poisson brackets of the Dirac observables on $\cc$ \cite{Henneaux:1992ig}.~This is for instance the case of commensurate frequencies for the two oscillators in Example~\ref{ex_1} where we have seen that $F_{q,T}(\tau)$ and $F_{p,T}(\tau)$ in Eq.~\eqref{horelobs} provide us with a pair of canonically conjugate Dirac observables that can be used as canonical coordinates for the two-dimensional space $\cc/\!\!\sim$.

Moreover, the Dirac observables $F_{f_S,T}(\tau)$ encoding the value of $\phi_{\rm max}$-periodic, analytic $S$ observables $f_S$ when the clock $T$ takes the value $\tau$ do not depend on the winding number.~In fact, since for such observables $F_{f_S,T}(\tau+z\phi_{\rm max})=F_{f_S,T}(\tau)$ for any $z\in\mathbb Z$, from Eq.~\eqref{relobsrel} it is clear that
\ba
F_{f_S,T}(\tau)=F_{f_S,T}(\tau-n\,\phi_{\rm max})=F_{f_S,\phi_C}(\tau_C)\,,\nn
\ea
with $\tau_C$ the continuous part of the evolution parameter $\tau$ as given in Eq.~\eqref{shift}.~As such, these Dirac observables can be gauge-fixed by fixing the clock function to some arbitrary value $\phi_C=\phi_C^*\in[0,\phi_{\rm max})$.~This leads to the gauge-fixed relational observables $F_{f_S,\phi_C^*}(\tau_C)$ which no longer feature any clock degrees of freedom among the dynamical variables and take the same value at all points \eqref{multipleptsGF} identified by $\phi_C=\phi_C^*$ along the orbit generated by $C_H$.~Thus, if the system under consideration features a Poisson subalgebra of point separating, analytic, $\phi_{\rm max}$-periodic functions on $\cp_S$, doing such a reduction we obtain a gauge-fixed reduced phase space that is isomorphic to $\cc/\!\!\sim$ and features standard Hamiltonian equations of motion for the gauge-fixed relational Dirac observables in the time parameter $\tau_C$ with symplectic structure obtained through the Dirac bracket \cite{diracLecturesQuantumMechanics1964,Henneaux:1992ig,Vanrietvelde:2018pgb,Vanrietvelde:2018dit,hoehnHowSwitchRelational2018,Hoehn:2018whn,dittrichPartialCompleteObservables2007,Dittrich:2005kc,Thiemann:2004wk}.

This closes our analysis of classical relational dynamics with periodic clocks.~The above mentioned subtleties with periodic clocks have consequences also at the quantum level, both for the Dirac-quantised theory obtained by directly quantising the relational observables in Eq.~\eqref{relobs} and for the reduced phase space quantisation.~In the remainder of this work we shall focus on the former.

\section{Covariant POVMs for periodic clocks}\label{sec_covpovm}

As a first step into the construction of the quantum theory, we shall model quantum periodic clocks via $\rm{U}(1)$-covariant positive operator-valued measures (POVMs).~The construction of covariant POVMs for periodic clocks has been partially discussed in \cite{braunsteinGeneralizedUncertaintyRelations1996,Hoehn:2019owq} (see also the construction of covariant phase observables in \cite{buschOperationalQuantumPhysics}).~ Recently, such POVMs have been employed for harmonic oscillator clocks in the context of exploring constraints on time travel  \cite{Alonso-Serrano:2023gir} and in several discussions using finite-dimensional quantum clocks that are also periodic, e.g.\ see \cite{Smith:2017pwx,Cafasso:2024zqa,Favalli:2020gmx}.~Here we expound on their properties; our exposition applies to both finite- and infinite-dimensional periodic clocks.

We denote the clock Hilbert space by $\ch_C$.~Being the quantisation of a $\rm{U}(1)$-generator, the spectrum of the clock Hamiltonian $\hat H_C$ will have to be discrete and its eigenvectors pairwise commensurable in order to give rise to a projective unitary representation of $\rm{U}(1)$ (i.e.\ a representation up to phase) in the quantum theory~\cite{Hoehn:2019owq}:
\ba
U_C(t_{\rm max})=e^{-i\varphi}\,I_C\,,\q\q\varphi\in [0,2\pi)\,,\label{projrep}
\ea
where\footnote{We use units where $\hbar=1$ in this manuscript.} $U_C(t)\ce\exp(-i\,t\,\hat H_C)$, $I_C$ is the identity on $\ch_C$, and $t_{\rm max}$ is the period of the clock.~Indeed, this equation entails
\begin{align}
e^{-i\varepsilon_j \,t_{\rm max}} = e^{-i\varphi},\quad  \forall\,\varepsilon_j \in \spec(\hat H_C),
\label{point}
\end{align}
which, in turn leads to
\ba \label{eClockEnergies}
\varepsilon_j= \omega_{t} \left( n_j + \frac{\varphi}{2\pi} \right), \quad \forall j .
\ea
where $n_j\in\mathbb{Z}$ and $\omega_{t}\ce 2\pi/t_{\rm max}$. Hence, the spectrum of $\hat H_C$ must be both discrete and rational.~Note that the global phase $\varphi$ is unique only up to multiples of $2\pi$.~We are therefore free to choose $\varphi$ such that $n_{j^*}=0$ occurs in the label set for some $j=j^*$.~In particular, since $\spec(\hat H_C)$ will typically be bounded below, we can interpret $\varepsilon_{j^*}=\varphi/t_{\rm max}$ as the zero-point energy of the clock.

Let us now construct a POVM for the angle observable $\phi_C$ which is $\rm{U}(1)$-covariant.~Assuming $\spec(\hat H_C)$ to be non-degenerate,\footnote{This can be readily generalised to the case where each eigenvalue of $\spec(\hat H_C)$ has the same degree of degeneracy by repeating the formalism described here per degeneracy sector, as in~\cite{HLSrelativistic,DVEHK2}.} we can define clock states for $\phi\in[0,t_{\rm max})$
\begin{align}
\ket{\phi} = \sum_{\varepsilon_j \in \spec(\hat H_C)}  e^{ig(\varepsilon_j)} e^{-i\varepsilon_j \phi} \ket{\varepsilon_j }, \label{discreteClockState}
\end{align}
where $g(\varepsilon_j)$ is an arbitrary real function encoding a freedom in the choice of clock states.~The states \eqref{discreteClockState} clearly satisfy 
\ba
U_C(\phi')\ket{\phi} = \ket{\phi+\phi'}\,,\label{statecov}
\ea
and, owing to Eq.~\eqref{projrep}, are periodic up to the phase $\varphi$.~In later sections, we will use $\ket{\tau}\equiv\ket{\phi=\tau}$ to represent the specific clock reading with respect to which relational quantities are defined. Note that the clock states are nothing but generalised (possibly distributional) coherent states for $\rm{U}(1)$.

Defining the effect densities as
\ba\label{eq:effectdensities}
E_{\phi_C}(d\phi)\ce\f{1}{t_{\rm max}}d\phi\,\ket{\phi}\!\bra{\phi},
\ea
we can define the effect operators as
\ba\label{eq:effectoperators}
E_{\phi_C}(X)\ce\int_{X\subset [0,t_{\rm max})} E_{\phi_C}(d\phi),
\ea
where $X$ denotes some Borel subset on the clock readings. These effect operators are bounded and self-adjoint \cite{buschOperationalQuantumPhysics}.
Using Eq.~\eqref{discreteClockState} and the identity
\ba
\f{1}{t_{\rm max}}\int_0^{t_{\rm max}}d\phi\,e^{-i\,\phi(\varepsilon_i-\varepsilon_j)}=\delta_{\varepsilon_i,\varepsilon_j},\label{id2}
\ea
which follows from Eq.~\eqref{point} and where the symbol on the r.h.s.\ is a Kronecker-delta, it is straightforward to verify that
\ba
E_{\phi_C}([0,t_{\rm max}))=\f{1}{t_{\rm max}}\int_0^{t_{\rm max}}d\phi\,\ket{\phi}\!\bra{\phi}=I_C.\label{resolid}
\ea
Hence, the effect operators, which are positive semi-definite, define a resolution of the identity and thanks to Eq.~\eqref{statecov} we also have
\ba
E_{\phi_C}(X+\phi)=U_C(\phi)\,E_{\phi_C}(X)\,U_C^\dag(\phi).
\ea
Furthermore, on account of Eq.~\eqref{point}
\ba
\ket{\phi}\!\bra{\phi}=\ket{\phi+z t_{\rm max}}\!\bra{\phi+z t_{\rm max}}\,,\q\q\forall\,z\in\mathbb{Z}\,,\label{id}
\ea
so that
\ba
E_{\phi_C}(X+z t_{\rm max})=E_{\phi_C}(X)\,,\q\q\forall\,z\in\mathbb{Z}.
\ea
We have therefore constructed a $\rm{U}(1)$-covariant and $t_{\rm max}$-periodic clock POVM. 

In the sequel, we shall model the periodic clock in the quantum theory through this POVM and its associated $n^{\rm th}$-moment operators:
\ba
\hat \phi_C^{(n)} = \f{1}{t_{\rm max}}\int_0^{t_{\rm max}}\,d\phi\,\phi^n\,\ket{\phi}\!\bra{\phi}\,.\label{nthmoment}
\ea
In particular, the zeroth moment is just the identity, ${\hat\phi_C^{(0)}=I_C}$. In Sec.~\ref{sec_observables-periodic}, the $n^{\rm th}$-moment operators will assume the role of the $n^{\rm th}$ power of the angle observable $\phi_C$ -- appearing in the power series representation of relational observables Eq.~\eqref{relobs} -- in the quantum theory. The following result will be responsible for the failure of the quantisation of these relational observables to produce quantum Dirac observables.
\begin{Lemma}\label{lem_povm}
The $n^{\rm th}$-moment operators of the covariant clock POVM are not conjugate to the clock Hamiltonian for $n>0$
\ba\label{eq:nthmomHCcomm}
[\hat\phi_C^{(n)},\hat H_C] = i\,n\,\hat\phi_C^{(n-1)}-i\,(t_{\rm max})^{n-1}\,\ket{0}\!\bra{0}\,.
\ea
For $n=0$, we clearly have $[\hat\phi_C^{(0)},\hat H_C]=0$. 
\end{Lemma}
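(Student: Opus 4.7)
The plan is to reduce the computation of the commutator to a derivative identity on the clock states and then integrate by parts in the integral defining $\hat\phi_C^{(n)}$. The only nontrivial ingredient beyond ordinary calculus will be the periodicity identity \eqref{id} for the rank-one projectors $\ket{\phi}\!\bra{\phi}$, which takes care of the boundary terms.

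First I would translate the covariance \eqref{statecov} into an infinitesimal statement. Differentiating $U_C(\phi')\ket{\phi}=\ket{\phi+\phi'}$ at $\phi'=0$ (or reading off directly from the expansion \eqref{discreteClockState}) yields $\hat H_C\ket{\phi}=i\,\partial_\phi\ket{\phi}$ and, by hermitian conjugation, $\bra{\phi}\hat H_C=-i\,\partial_\phi\bra{\phi}$. The product rule then gives the operator-valued identity
\begin{equation}
[\ket{\phi}\!\bra{\phi},\hat H_C]=-i\,\partial_\phi\bigl(\ket{\phi}\!\bra{\phi}\bigr).\nn
\end{equation}

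Next I would insert this into the definition \eqref{nthmoment} of $\hat\phi_C^{(n)}$, obtaining
\begin{equation}
[\hat\phi_C^{(n)},\hat H_C]=\frac{-i}{t_{\rm max}}\int_0^{t_{\rm max}}\!d\phi\;\phi^n\,\partial_\phi\bigl(\ket{\phi}\!\bra{\phi}\bigr),\nn
\end{equation}
and integrate by parts. The bulk term produces $\frac{in}{t_{\rm max}}\int_0^{t_{\rm max}}\!d\phi\,\phi^{n-1}\ket{\phi}\!\bra{\phi}=in\,\hat\phi_C^{(n-1)}$ for $n\geq 1$. For the boundary term $[\phi^n\,\ket{\phi}\!\bra{\phi}]_0^{t_{\rm max}}$, the contribution at $\phi=0$ vanishes for $n\geq 1$, and the contribution at $\phi=t_{\rm max}$ is evaluated using the projector periodicity \eqref{id}, $\ket{t_{\rm max}}\!\bra{t_{\rm max}}=\ket{0}\!\bra{0}$, giving $t_{\rm max}^n\,\ket{0}\!\bra{0}$. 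Collecting factors yields the claimed Eq.~\eqref{eq:nthmomHCcomm}.

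The $n=0$ case should be checked separately as a sanity check: then $\phi^n\equiv 1$, the two boundary contributions are both $\ket{0}\!\bra{0}$ and cancel, so one recovers $[\hat\phi_C^{(0)},\hat H_C]=0$, consistent with $\hat\phi_C^{(0)}=I_C$ from \eqref{resolid}. The only conceptual subtlety, rather than computational obstacle, is making sure the boundary term at $\phi=t_{\rm max}$ is handled correctly: the ket $\ket{t_{\rm max}}$ itself differs from $\ket{0}$ by the global phase $e^{-i\varphi}$ appearing in \eqref{projrep}, but this phase cancels in the projector, which is why Eq.~\eqref{id} --- and not covariance of the states alone --- is the right tool here.
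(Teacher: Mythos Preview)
Your proof is correct and follows essentially the same route as the paper's: derive $[\ket{\phi}\!\bra{\phi},\hat H_C]=-i\,\partial_\phi(\ket{\phi}\!\bra{\phi})$, insert into Eq.~\eqref{nthmoment}, integrate by parts, and use the projector periodicity \eqref{id} for the boundary term. The only cosmetic difference is that the paper obtains the derivative identity from the energy-basis commutator $[\ket{\varepsilon_i}\!\bra{\varepsilon_j},\hat H_C]=(\varepsilon_j-\varepsilon_i)\ket{\varepsilon_i}\!\bra{\varepsilon_j}$ rather than by differentiating the covariance relation, but this is the same computation in a different basis.
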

\begin{proof}
\proofapp
\end{proof}
The $n^{\rm th}$-moment operators are therefore only conjugate to the clock Hamiltonian on the subspace
\begin{align}
\mathcal{D} \ce \{ \ket{\psi} \in \mathcal{H}_C \ \big| \ \braket{\phi=0 | \psi} =0 \} \subset \mathcal{H}_C, \label{D}
\end{align}
which is dense in the clock Hilbert space $\mathcal{H}_C$ when its dimensionality is infinite~\cite{garrisonCanonicallyConjugatePairs1970,lahtiCovariantPhaseObservables1999}.~This is the quantum analogue of the situation encountered in the classical theory where the angle variable $\phi_C$, and consequently the dynamically defined monotonic clock $T(s)$, were canonically conjugate to $H_C$ on a dense subset of $\cp_C$ (cf.~Eq.~\eqref{classcov} and the surrounding discussion). Note that, despite this correspondence, the covariant POVM should not be understood as a direct quantisation of $\phi_C$, but rather is defined by the covariance condition. The quantisation of phase variables is ambiguous~\cite{susskind1964quantum}, while here the form of the phase states in Eq.~\eqref{discreteClockState} follows uniquely from the covariance condition in Eq.~\eqref{statecov}, and the demand that they form a POVM via Eq.~\eqref{eq:effectdensities}. Given that the effect operators are bounded and self-adjoint, the resulting POVM is a well-defined measure on the entire clock Hilbert space (including the complement of~$\mathcal{D}$), whose first moment corresponds to $\phi_C$.

The overlap of two clock states is
\begin{align}
\braket{\phi|\phi'} = \sum_{\varepsilon_j \in \spec(\hat H_C)}   e^{i\varepsilon_j (\phi-\phi')}, \label{id3}
\end{align}
such that the clock states are only orthogonal -- i.e.\ perfectly distinguishable -- if the $n_j$ in Eq.~\eqref{eClockEnergies} are such that $\lbrace n_j \rbrace= \mathbb{Z}$~\cite{braunsteinGeneralizedUncertaintyRelations1996}; this corresponds to the case of an \emph{ideal periodic clock}.~Furthermore, we note that the clock states are generically \emph{not} eigenstates of the first moment $\hat\phi_C^{(1)}$ \cite{Hoehn:2019owq},
\begin{equation}\label{eq:hatphiketphi}
\hat\phi_C^{(1)} \ket{\phi}= \f{t_{\rm max}}{2} \ket{\phi} +   i\sum_{\substack{\varepsilon_j,\varepsilon_k \in \spec(\hat H_C) \\ j\neq k}}  \frac{e^{ig(\varepsilon_j) }}{\varepsilon_j- \varepsilon_k} e^{-i\varepsilon_k\phi}\ket{\varepsilon_j}.
\end{equation}
Indeed, there is also no reason to expect them to be eigenstates.~For example, we could have $\hat H_C=\hat L$, where $\hat L$ is an angular momentum operator.~In this case, the spectrum of $\hat H_C$ can be upper and lower bounded and $\ch_C$ finite-dimensional.~In that case $\hat\phi_C^{(1)}$ will be hermitian and feature a finite spectrum, while the clock states $\ket{\phi}$ constitute a continuous one-parameter family of states in $\ch_C$.~It is only true for ideal periodic clocks, defined above, that the clock states are eigenstates of the moment operators.~Indeed, this can be easily verified using the r.h.s.\ of Eq.~\eqref{nthmoment} and the orthogonality of the ideal periodic clock states.

We can use the covariance property given in Eq.~\eqref{statecov} together with the periodicity displayed in Eq.~\eqref{id} to determine the evolution of the clock operator $\hat{\phi}_C$, understood as the first moment of the clock POVM, with respect to $\hat{H}_C$.~In line with the fact that $\hat{\phi}_C$ is conjugate to $\hat{H}_C$ only on a dense subspace of $\mathcal{H}_C$ (cf.~Lemma \ref{lem_povm}), the following Lemma establishes that this evolution is  analogous to the classical case given by Eq.~\eqref{solution}.

\begin{Lemma}\label{lem_quantumPhiEvol}
Let $\ket{\psi_{1,2}}$ be states in the clock Hilbert space such that $\braket{\phi|\psi_{1,2}} = \psi_{1,2}(\phi)$ are integrable functions, and let $\hat{\phi}_C(s):=U^{\dagger}_C(s)\hat{\phi}_CU_C(s)$ with $\braket{\hat{\phi}_C(s)}_{21}:=\braket{\psi_2|\hat{\phi}_C(s)|\psi_1}$. Then, the clock operator $\hat{\phi}_C$ obeys the following evolution law:
\begin{equation}\begin{aligned}
&\braket{\hat{\phi}_C(s)}_{21}\notag\\
&= \frac{1}{t_{\rm max}}\int_0^{t_{\rm max}}d\phi \ \left(s+\phi-t_{\rm max}\left\lfloor\frac{s+\phi}{t_{\rm max}}\right\rfloor\right)\psi_2^*(\phi)\psi_1(\phi)\,,
\end{aligned}
\end{equation}
which we take to be the quantum analogue of the classical evolution given by Eq.~\eqref{solution}, with $t_{\rm max}$ being the counterpart to the classical period $\phi_{\rm max}$.
\end{Lemma}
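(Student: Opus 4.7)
The strategy is to unpack $\hat{\phi}_C(s)$ using the covariance and periodicity of the clock states, fold the integration back into a single cycle, and identify the resulting weight as the floor-function expression on the right-hand side.

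I would begin by recalling $\hat{\phi}_C \equiv \hat{\phi}_C^{(1)} = \tfrac{1}{t_{\rm max}}\int_0^{t_{\rm max}}d\phi\,\phi\,|\phi\rangle\langle\phi|$ from Eq.~\eqref{nthmoment}. The covariance Eq.~\eqref{statecov} gives $U_C^{\dagger}(s)|\phi\rangle = |\phi-s\rangle$, so
\begin{equation*}
\hat{\phi}_C(s) = \frac{1}{t_{\rm max}}\int_0^{t_{\rm max}} d\phi\,\phi\,|\phi-s\rangle\langle\phi-s|,
\end{equation*}
and shifting the integration variable $\phi \mapsto \phi + s$ gives
\begin{equation*}
\hat{\phi}_C(s) = \frac{1}{t_{\rm max}}\int_{-s}^{t_{\rm max}-s}\!d\phi\,(s+\phi)\,|\phi\rangle\langle\phi|.
\end{equation*}

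Next, I would observe that both sides of the claimed identity are periodic in $s$ with period $t_{\rm max}$, so it suffices to prove it for $s\in[0,t_{\rm max})$. On the left, Eq.~\eqref{projrep} gives $U_C(s+t_{\rm max})=e^{-i\varphi}U_C(s)$; the global phase cancels between $U_C^{\dagger}$ and $U_C$, so $\hat{\phi}_C(s+t_{\rm max})=\hat{\phi}_C(s)$. On the right, $s\mapsto s+t_{\rm max}$ adds $t_{\rm max}$ to the linear part and increases the floor by one, leaving the integrand invariant.

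For $s\in[0,t_{\rm max})$ I would split the shifted integral at $\phi=0$ and use the clock-state periodicity $|\phi\rangle\langle\phi|=|\phi+t_{\rm max}\rangle\langle\phi+t_{\rm max}|$ from Eq.~\eqref{id} to translate the piece over $[-s,0)$ to $[t_{\rm max}-s,t_{\rm max})$. After relabelling,
\begin{equation*}
\begin{aligned}
\hat{\phi}_C(s) = \;&\frac{1}{t_{\rm max}}\int_{0}^{t_{\rm max}-s}\!d\phi\,(s+\phi)\,|\phi\rangle\langle\phi| \\
&+ \frac{1}{t_{\rm max}}\int_{t_{\rm max}-s}^{t_{\rm max}}\!d\phi\,(s+\phi-t_{\rm max})\,|\phi\rangle\langle\phi|.
\end{aligned}
\end{equation*}
For $\phi\in[0,t_{\rm max}-s)$ the quantity $s+\phi$ lies in $[0,t_{\rm max})$, so $\lfloor(s+\phi)/t_{\rm max}\rfloor=0$; for $\phi\in[t_{\rm max}-s,t_{\rm max})$ it lies in $[t_{\rm max},2t_{\rm max})$, so the floor equals $1$. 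The two pieces therefore combine into a single integral over $[0,t_{\rm max})$ with weight $s+\phi-t_{\rm max}\lfloor(s+\phi)/t_{\rm max}\rfloor$.

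Finally, I take the matrix element $\langle\psi_2|\hat{\phi}_C(s)|\psi_1\rangle$, using $\langle\phi|\psi_i\rangle=\psi_i(\phi)$; integrability of $\psi_{1,2}$ justifies the use of Fubini and yields the stated formula. The main bookkeeping hazard is the domain manipulation under the clock-state periodicity relation for arbitrary $s\in\mathbb{R}$, which is why I reduce first to $s\in[0,t_{\rm max})$; the rest is a straightforward change of variables plus identification of the floor function.
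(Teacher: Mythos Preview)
Your proof is correct and follows essentially the same route as the paper: both use the covariance $U_C^\dagger(s)\ket{\phi}=\ket{\phi-s}$, shift the integration variable, split the resulting integral, invoke the periodicity $\ket{\phi}\!\bra{\phi}=\ket{\phi+t_{\rm max}}\!\bra{\phi+t_{\rm max}}$ to fold everything back into $[0,t_{\rm max})$, and then read off the floor function on each subinterval. The only cosmetic difference is that you first reduce to $s\in[0,t_{\rm max})$ by the periodicity of both sides and then work directly with $s$, whereas the paper writes $s=(z+\epsilon)t_{\rm max}$ with $\epsilon\in[0,1)$, carries the integer $z$ through the computation, and uses $\lfloor z+x\rfloor=z+\lfloor x\rfloor$ at the end; the underlying manipulations are the same.
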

\begin{proof}
\proofapp
\end{proof}

Before we dig into the details of the Dirac quantised theory, let us emphasise that the formalism of periodic, covariant POVMs encompasses both infinite-dimensional and finite-dimensional quantum clocks.~We close this section by illustrating the above results for the smallest non-trivial clock, that is a qubit clock.~Examples of clocks with infinitely many energy levels will be provided in the coming sections.

\begin{example}[\textbf{Two-level clock}]\label{Ex:2levelclock1}
    Consider the clock $C$ to be a two-level system like a spin-1/2 particle with Hamiltonian $\hat{H}_C=\frac{\omega}{2}\hat{\sigma}_C^x$.~The energy eigenstates are $\ket{\varepsilon_{\pm}}=\ket{\pm}$, with $\ket\pm$ the $\pm1$-eigenstates of the Pauli X operator $\hat{\sigma}_C^x$, and their eigenvalues are $\varepsilon_{\pm}=\pm\frac{\omega}{2}$.~The evolution operator $U_C(t)=\exp(-it\hat{H}_C)$ is $\frac{2\pi}{\omega}$-periodic and the clock states read 
\begin{equation}\label{eq:qubitclockstates}
        \ket{\phi}=e^{ig(\frac{\omega}{2})}e^{-i\frac{\omega}{2}\phi}\ket{+}+e^{ig(-\frac{\omega}{2})}e^{i\frac{\omega}{2}\phi}\ket{-}\,,
    \end{equation}
    for $\phi\in[0,\frac{2\pi}{\omega})$ and $g(\varepsilon_{\pm})=g(\pm\frac{\omega}{2})$ an arbitrary real function.\footnote{For $g(\pm\omega)=2\pi z$, $z\in\mathbb Z$, the clock states \eqref{eq:qubitclockstates} reproduce (up to a $1/\sqrt{2}$ normalisation) those recently used in \cite{Cafasso:2024zqa} for implementing a finite-dimensional analogue of the quantum time-dilation mechanism in the presence of gravitation-like interactions.}~These are nothing but a system of generalised coherent states $\{\ket{\phi}=U_C(\phi)\ket{\phi=0}\}_{\phi\in[0,2\pi/\omega)}$ for $\rm{U}(1)$, and are not all orthogonal/perfectly distinguishable.~That is, the clock is not an ideal reference frame.~It is straightforward to check that the properties \eqref{statecov},~\eqref{resolid},~and \eqref{id} are satisfied with $t_{\rm max}=\frac{2\pi}{\omega}$.~Thus, the clock states \eqref{eq:qubitclockstates} allow to construct a $\rm{U}(1)$-covariant and $\frac{2\pi}{\omega}$-periodic POVM as in Eqs.~\eqref{eq:effectdensities}-\eqref{eq:effectoperators}.~A little algebra shows that the associated $n$-th moments satisfy the property \eqref{eq:nthmomHCcomm} for $n>0$.~In particular, the qubit clock operator $\hat{\phi}_C$, identified with the first moment, reads
    \begin{equation}\label{eq:qubitclockoperator}
    \begin{aligned}
        \hat{\phi}_C=\frac{\pi}{\omega}I_C+\frac{i}{\omega}\Bigl(&e^{i(g(\varepsilon_+)-g(\varepsilon_-)) }\ket{+}\!\bra{-}\\
        &-e^{-i(g(\varepsilon_+)-g(\varepsilon_-))}\ket{-}\!\bra{+}\Bigr)\;,
    \end{aligned}
\end{equation}
and we have
\begin{equation}\label{eq:phiCHCcomm}
\begin{aligned}
    [\hat{\phi}_C,\hat{H}_C]&=i(I_C-\ket{\phi=0}\!\bra{\phi=0})\\
    &=i(I_C-\ket{\phi=2\pi z/\omega}\!\bra{\phi=2\pi z/\omega})\,,\;z\in\mathbb Z\,.
\end{aligned}
\end{equation}
The clock operator and the clock Hamiltonian then form a conjugate pair only on the one-dimensional subspace
\begin{equation}\label{eq:2levelDsubspace}
\begin{aligned}
\mathcal D&=\emph{span}\{e^{ig(\varepsilon_+)}\ket{+}-e^{ig(\varepsilon_-)}\ket{-}\}\\
&=\emph{span}\{\ket{\phi=\pi/\omega}\}\;,
\end{aligned}
\end{equation}
orthogonal to $\ket{\phi=2\pi z/\omega}$ (cf.~Eq.~\eqref{D}).~Moreover, using Eq.~\eqref{eq:qubitclockoperator}, it is easy to check that the clock states are not eigenstates of the clock operator as in Eq.~\eqref{eq:hatphiketphi}.
\end{example}

\section{Dirac quantisation and the physical Hilbert space}\label{sec_Dirac}

Our aim is now to quantise the kinematical phase space $\cp_{\rm kin}=\cp_C\times\cp_S$ and subsequently to impose the constraint Eq.~\eqref{noint} in the quantum theory.~To this end, we replace $\cp_{\rm kin}$ with a kinematical Hilbert space $\ch_{\rm kin}=\ch_C\otimes\ch_S$ and assume both clock and system Hamiltonian are promoted to self-adjoint operators $\hat H_C$ and $\hat H_S$ on $\ch_{\rm kin}$. The resulting formalism can, however, be equally well applied when no classical analogue of the quantum theory exists, such as for intrinsic angular momentum (spin).

\subsection{General procedure}\label{ssec_diracgen}

Since $\hat H_C$ has discrete spectrum (\textit{cf}.\ Sec~\ref{sec_covpovm}), the expansion of an arbitrary kinematical state  in the $\hat H_C$ and $\hat{H}_S$ eigenbases $\ket{\varepsilon_i}_C$, $\ket{E,\sigma_E}_S$, where $\sigma_E$ is a possible degeneracy label, takes the form
\ba
\ket{\psi_{\rm kin}} &=& \sum_{\varepsilon_i\in\spec(\hat H_C)}\,\,\,\,{{\intsum}_{E\in\spec(\hat H_S)}}\,\sum_{\sigma_E}\,\psi_{\rm kin}(\varepsilon_i,E,\sigma_E)\,\nn\\
&&\q\q\q\q\q\q\q\q\times\ket{\varepsilon_i}_C\otimes\ket{E,\sigma_E}_S\,.\label{kinstate}
\ea

In the Dirac constraint quantisation procedure, physical states are required to satisfy the quantisation of the constraint Eq.~\eqref{noint} in the form
\begin{align}
\hat{C}_H \ket{\psi_{\rm phys}} =  \big( \hat{H}_C \otimes I_S + I_C \otimes \hat{H}_S\big) \ket{\psi_{\rm phys}} = 0.
\label{WDW}
\end{align}
Assuming this equation has a non-trivial solution, and denoting
\ba
\sigma_{S\vert C}:=\spec(\hat H_S)\cap\spec(-\hat H_C)\,,\label{spec}
\ea
we have to distinguish between the following two cases: 
\begin{enumerate}[label=(\alph*)]
\item Zero lies in the point (discrete) spectrum of $\hat C_H$. This requires the part of $\spec(\hat H_S)$ containing $\sigma_{S\vert C}$ to be discrete;
\item Zero lies in the continuous spectrum of $\hat C_H$.~This requires the part of $\spec(\hat H_S)$ containing $\sigma_{S\vert C}$ to be continuous.
\end{enumerate}
Cases (a) and (b) will hinge on the relation between the groups and representations generated by the clock Hamiltonian $\hat H_C$ and the constraint $\hat C_H$.~Generally, these will differ and this will become crucial in the quantisation of the relational observables.~We will provide examples for both cases below in Sec.~\ref{ssec_ex2}.

For both cases we can construct a `projector' map from $\ch_{\rm kin}$ to solutions to Eq.~\eqref{WDW}\footnote{This improper projector implements a coherent averaging over the group $G$ generated by the constraint \cite{marolfRAQ1995,Giulini:1998kf,Giulini:1998rk,Hartle:1997dc,Marolf:2000iq,thiemannModernCanonicalQuantum2008}.~For example, when the spectrum of $\hat H_S$ is purely continuous and $\hat C_H$ thus generates a non-compact one-parameter group, this improper projector coincides with
\ba
\Pi_{\rm phys}=\delta(\hat C_H)\ce\f{1}{2\pi}\int_\mathbb{R}ds\,e^{is\hat C_H},\nn
\ea
as can be easily checked by using the spectral decomposition
\ba
\hat C_H=\!\!\!\!\sum_{\varepsilon_i\in\spec(\hat H_C)}\int_{\spec(\hat{H}_{S})} dE\sum_{\sigma_E}(\varepsilon_i+E)\ket{\varepsilon_i}_C\!\bra{\varepsilon_i}\otimes\ket{E,\sigma_E}_S\!\bra{E,\sigma_E}.\nn
\ea
More generally, $\Pi_{\rm phys}=\delta_G(\hat C_H)$, where $\delta_G(\cdot)$ is the delta function over the group $G$. }
\ba
\Pi_{\rm phys}=\sum_{E\in\sigma_{S\vert C}}\sum_{\sigma_{E}}\ket{-E}_C\!\bra{-E}\otimes\ket{E,\sigma_{E}}_S\!\bra{E,\sigma_{E}}.\nn\\\label{impproj}
\ea
Indeed, kinematical states as in Eq.~\eqref{kinstate} map in both cases (a) and (b) to states
\ba
\ket{\psi_{\rm phys}} &=& \Pi_{\rm phys}\,\ket{\psi_{\rm kin}}\nn\\
&=&\!\!\!\!\sum_{E\in\sigma_{S\vert C}}\sum_{\sigma_{E}}\,\psi_{\rm kin}(-E,E,\sigma_{E})\,\ket{-E}_C\ket{E,\sigma_{E}}_S\,,\nn\\\label{crap}
\ea
which formally solve the constraint Eq.~\eqref{WDW}. 

In fact, the situation is somewhat more subtle:~in case (a), $\Pi_{\rm phys}^2=\Pi_{\rm phys}$ and so it is a proper orthogonal projector on the subspace of $\ch_{\rm kin}$ corresponding to solutions to Eq.~\eqref{WDW}.~In particular, physical states are normalizable in $\ch_{\rm kin}$.~By contrast, in case (b), $\Pi_{\rm phys}$ is an improper projector because the proportionality factor in $\Pi^2_{\rm phys}\sim\Pi_{\rm phys}$ will diverge.~Indeed, physical states are improper eigenstates of $\hat C_H$ and thereby not normalizable in $\ch_{\rm kin}$.~As such, they are not elements of $\ch_{\rm kin}$ (but rather distributions on kinematical states \cite{marolfRAQ1995,Giulini:1998kf,Giulini:1998rk,Hartle:1997dc,Marolf:2000iq, thiemannModernCanonicalQuantum2008}).~In this case, we need a new inner product to normalise physical states.

To accommodate both cases, we can define a physical (gauge-invariant) inner product on the space of solutions to (\ref{WDW}), which in case (a) coincides with the restriction of the kinematical inner product $\la\cdot\ket{\cdot}_{\rm kin}$ to the space of solutions:

\ba \label{PIP}
\langle\psi_{\rm phys}|\phi_{\rm phys}\rangle_{\rm phys}&:=&\la\psi_{\rm kin}|\Pi_{\rm phys}\ket{\phi_{\rm kin}}_{\rm kin}\\
&=&  \sum_{E\in\sigma_{S\vert C}}\sum_{\sigma_{E}}
\psi^*_{\rm kin}(-E,E,\sigma_{E})\nn\\
&&\q\q\q\q\q\q\times\phi_{\rm kin}(-E,E,\sigma_{E})\,.\nn
\ea
On the right hand side, the state $\ket{\phi_{\rm kin}}$ ($\ket{\psi_{\rm kin}}$) is any member of the equivalence class of kinematical states that project under Eq.~\eqref{impproj} to the same physical state $\ket{\phi_{\rm phys}}$ ($\ket{\psi_{\rm phys}}$).~Since $\Pi_{\rm phys}$ is symmetric, this expression depends only on the equivalence classes of kinematical states, but not on their representatives.~As such it can be used to define an inner product on the space of solutions to Eq.~\eqref{WDW} also in case (b) (which may require modding out spurious zero-norm solutions).~Cauchy completion then yields the so-called physical Hilbert space $\ch_{\rm phys}$  \cite{marolfRAQ1995,Giulini:1998kf,Giulini:1998rk,Hartle:1997dc,Marolf:2000iq, thiemannModernCanonicalQuantum2008} (which, in case (a) is a proper subspace of $\ch_{\rm kin}$).

From Eq.~\eqref{WDW}, one can see that the dimensionality of $\ch_{\rm phys}$ is determined by the number of solutions to the equation $\varepsilon_{j}+E=0$, with $\varepsilon_j\in\spec(\hat H_C)$ and $E\in\spec(\hat H_S)$, and the degree of degeneracy of $E$ in each solution.~Considering two different solutions, denoted $(\varepsilon_j ,E)$ and $(\varepsilon_k ,E')$ and using~Eq.~\eqref{eClockEnergies}, one finds
\begin{equation} \label{eEnergyDiffCond}
    n_j - n_k = \frac{E' - E}{\omega_{t}}\q,\q n_j,n_k\in\mathbb Z\,.
\end{equation}
Noting that the left-hand side is an integer, we see that if there do not exist two eigenvalues of $\hat{H}_{S}$ which differ by an integer multiple of the clock frequency $\omega_{t}$, then $\ch_\mathrm{phys}$ can be at most one-dimensional (\textit{i.e}.\ $\ch_{\rm phys}\simeq\mathbb{C}$), and the relational theory is thus trivial.~This will be illustrated in an example~\ref{ex_4} below.

It will be convenient to introduce the quantum weak equality $\approx$ of operators, i.e.\ equality of operators $\hat O_1,\hat O_2$ on the `quantum constraint surface' $\ch_{\rm phys}$:
\ba
&&\hat O_1\approx\hat O_2\q\Leftrightarrow\q\nn\\
&&\q\q\q\q\q \hat O_1\,\ket{\psi_{\rm phys}}=\hat O_2\,\ket{\psi_{\rm phys}},\q\forall\,\ket{\psi_{\rm phys}}\in\ch_{\rm phys}.\nn
\ea

\subsection{Examples}\label{ssec_ex2}

It is useful to illustrate the difference between cases (a) and (b) and their consequences for the relation between the groups and representations generated by clock Hamiltonian and constraint. 

\begin{example}[\bf{Commensurate oscillators}]\label{ex_3}
We quantise Example~\ref{ex_1} from Sec.~\ref{sss_relobsperclock} in the case where $\omega_t/\omega=q\in\mathbb{Q}\,\leq 1$, which as discussed there corresponds to the case in which the relational observables describing the position and momentum of $S$ relative to the clock provide us with canonically conjugate global Dirac observables.~This is an example of case (a) in Sec.~\ref{ssec_diracgen}.~We will consider the case of incommensurate frequencies separately in the Example~\ref{ex_4} below.

The constraint is given by 
\ba
\hat C_H&=&\left(\f{\hat p_t^2}{2m_t}+\f{m_t\omega_t^2}{2}\hat t^2\right)\otimes I_S+I_C\otimes\left(\f{\hat p^2}{2m}+\f{m\omega^2}{2}\hat q^2\right)\nn\\
&&\q\q\q\q\q\q\q\q\q\q-E\,I_C\otimes I_S.\label{WDW2}
\ea
Let $m_1,m_2\in\mathbb{N}$ be the smallest numbers such that $q=m_1/m_2$.~As can be readily seen by using the energy eigenbasis for the two oscillators, say $\{\ket{n_1}_C\}_{n_1\in\mathbb N}$ and $\{\ket{n_2}_S\}_{n_2\in\mathbb N}$, solving this constraint in the sense of Eq.~\eqref{WDW} requires $\omega_t$,~$\omega$,~and $E$ to be arranged such that 
\ba
\tilde E=m_1\,n_1+m_2\,n_2,\nn
\ea
where 
\ba
\tilde E:=\f{m_1\,E}{\omega_t}-\f{m_1+m_2}{2}\nn
\ea
 has solutions for some $n_1,n_2\in\mathbb{N}$.~A necessary but not sufficient condition is $\tilde E\in\mathbb{N}$.~Determining the numbers of solutions is in general a complicated problem.~However, for sufficiently large $\tilde E$ there will exist many solutions.~For example, in the special case that $m_1=m_2$ there are $\tilde E+1$ solutions.

Physical states are of the form
\ba
\ket{\psi_{\rm phys}}=\sum_{\stackrel{n_1,n_2:}{m_1n_1+m_2n_2=\tilde E}}\,\alpha_{n_1,n_2}\ket{n_1}_C\otimes\ket{n_2}_S\nn
\ea
for some $\alpha_{n_1,n_2}\in\mathbb{C}$ and the physical inner product is 
\ba
\braket{\psi'_{\rm phys}|\psi_{\rm phys}}_{\rm phys}=\sum_{\stackrel{n_1,n_2:}{m_1n_1+m_2n_2=\tilde E}}\,\alpha'^*_{n_1,n_2}\,\alpha_{n_1,n_2},\nn
\ea
which is simply the kinematical inner product on $\ch_{\rm kin}=\ch_C\otimes\ch_S$ restricted to the subspace of solutions to Eq.~\eqref{WDW2}.~This defines the physical subspace $\ch_{\rm phys}\subset\ch_{\rm kin}$.

Let us now consider the group actions.~$\hat H_C$ and $\hat H_S$ generate a $t_{\rm max}$- and $t_{{\rm max},S}$-periodic projective unitary representation of $\rm{U}(1)$.~The ratio between their periods is $t_{\rm max}/t_{{\rm max},S}=\omega/\omega_t=1/q=m_2/m_1$.~In other words, for every $m_1$ cycles of the clock $C$, the system $S$ undergoes $m_2$ cycles.~Consequently, the constraint in Eq.~\eqref{WDW2} generates a $(m_1 t_{\rm max})$-periodic projective unitary representation of $\rm{U}(1)$ on $\ch_{\rm kin}$.~In particular, in the special case that $\omega_t\equiv\omega$, the group representations generated by $\hat H_C,\hat H_S$ and $\hat C_H$ match, but in general they do not.
\end{example}

\begin{example}[\bf{Incommensurate oscillators}]\label{ex_4}
We consider again the constraint in Eq.~\eqref{WDW2} as in Example~\ref{ex_3}, however, now with the important twist that we assume the two oscillators to have incommensurate frequencies, $\omega_t/\omega\notin\mathbb{Q}$, which is again an example of case (a) in Sec.~\ref{ssec_diracgen}.~However, noting that Eq.~\eqref{eEnergyDiffCond} is unchanged by the inclusion of a constant term in the constraint, and cannot be satisfied for oscillators of incommensurate frequencies, we immediately see that this twist has a dramatic effect: the space of solutions $\ch_{\rm phys}$ is at most one-dimensional.

Hence, when the frequencies are incommensurate and the constraint can be solved, there is up to phase and normalisation a \emph{unique} physical state
\ba
\ket{\psi_{\rm phys}}=\ket{n_1}_C\otimes\Big|\f{E}{\omega}-\f{1}{2}-\left(n_1+\f{1}{2}\right)\f{\omega_t}{\omega}\Big\rangle_S\nn
\ea
and the physical Hilbert space is thus $\ch_{\rm phys}\simeq\mathbb{C}$.~As we shall discuss later in Sec.~\ref{sec_observables-periodic}, it is clear that this has drastic consequences also for the existence of quantum Dirac observables (cf.~Example~\ref{ex_8}).

The above observation becomes more transparent when looking at the group actions generated by the Hamiltonians and the constraint.~Firstly, the clock and system Hamiltonian $\hat H_C$ and $\hat H_S$ generate a $t_{\rm max}$- and a $t_{{\rm max},S}$-periodic projective unitary representation of $\rm{U}(1)$ on $\ch_{\rm kin}=\ch_C\otimes\ch_S$, respectively.~However, since $t_{\rm max}/t_{{\rm max},S}=\omega/\omega_t\notin\mathbb{Q}$ it is clear that if $C$ and $S$ start a cycle simultaneously, they will never complete a cycle for finite `time' simultaneously again; the representations of $\rm{U}(1)$ which they generate are incommensurate.~From this we can already guess that the constraint $\hat C_H$, in fact, does \emph{not} generate a $\rm{U}(1)$ representation.

Indeed, $\hat C_H$ has a discrete spectrum $\omega_t(n_1+\f{1}{2})+\omega(n_2+\f{1}{2})-E$, $n_1,n_2\in\mathbb{N}$ on $\ch_{\rm kin}$.~Let us denote the eigenvalues by $C_i$.~We have 
\ba
\Delta C_{ij}:=C_i-C_j=\omega_t(n_{1,i}-n_{1,j})+\omega(n_{2,i}-n_{2,j}),\nn
\ea
and since $\omega_t/\omega\notin\mathbb{Q}$, the spectrum is non-degenerate.~Now define the unitaries $U_{CS}(s):=\exp(-i\,s\,\hat C_H)$.~It is clear that, in order for $U_{CS}(s)=e^{i\,\beta}I_C\otimes I_S$ for some $\beta\in[0,2\pi)$ and $s\in\mathbb{R}$, we need to have for all eigenvalues $\Delta C_{ij}\,s=2\pi\,z_{ij}$, where $z_{ij}$ is an $i,j$-dependent integer.~But this requires $\Delta C_{ij}/\Delta C_{lk}\in\mathbb{Q}$ for all eigenvalues $i,j,k,l$.~However, given that $\omega_t/\omega\notin\mathbb{Q}$, this is impossible.~Hence, $U_{CS}(s)$ does not define a projective unitary representation of $\rm{U}(1)$ on $\ch_{\rm kin}$.

In fact, given that the spectrum of the constraint is discrete and non-degenerate, we can expand an arbitrary kinematical state in the constraint eigenstates $\ket{C_i}$ as $\ket{\psi_{\rm kin}}=\sum_i\,c_i\,\ket{C_i}$.~On account of what we have just observed, there does not exist a finite $s$ such that $U_{CS}(s)\,\ket{\psi_{\rm kin}}=e^{i\beta}\ket{\psi_{\rm kin}}$ and accordingly the group generated by $\hat C_H$ is noncompact.~The constraint which is the sum of two $\rm{U}(1)$-generators yields, instead, a unitary representation of the translation group $\mathbb{R}$ on $\ch_{\rm kin}$. 

Nevertheless, the recurrence theorems~\cite{PhysRev.107.337,percival,schulman} tell us that in aperiodic intervals in $s$, $U_{CS}(s)\,\ket{\psi_{\rm kin}}$ may come arbitrary close to $\ket{\psi_{\rm kin}}$ in the sense that the difference vector becomes arbitrarily close to the zero-vector.
\end{example}

\begin{example}[\bf{Oscillator clock and free particle}]\label{ex_5}
As an illustration of case (b) in Sec.~\ref{ssec_diracgen}, we quantise Example~\ref{ex_2} of the harmonic oscillator clock and the free particle.~The quantum constraint is given by Eq.~\eqref{WDW} where the Hamiltonians are given by 
\ba
\hat H_C=\f{\hat p_t^2}{2m_t}+\f{m_t\omega_t^2}{2}\hat t^2\q,\q\hat H_S=-\f{\hat p^2}{2m}\;.\nn
\ea
Since the system energies are now doubly degenerate, the basis of the physical Hilbert space is isomorphic to two copies of a harmonic oscillator eigenbasis
\ba
\ket{n}_C\otimes\Big| p=\pm\sqrt{2m\,\omega_t(n+\f{1}{2})}\Big\rangle_S
\ea
 correspondonding to left and right moving modes of the free particle.~Indeed, since the expression in the square root is never zero, the basis elements of the left and right moving modes are orthogonal and an arbitrary physical state can be written as
\ba
\ket{\psi_{\rm phys}} = \sum_{\sigma=+,-}\,\sum_{n=0}^\infty\,\psi_n^\sigma\, \ket{n}_C\otimes\Big| \sigma\sqrt{2m\,\omega_t(n+\f{1}{2})}\Big\rangle_S.\nn
\ea
The physical inner product reads
\ba
\braket{\phi_{\rm phys}|\psi_{\rm phys}}_{\rm phys} = \sum_{\sigma=+,-}\sum_{n=0}^\infty\,\left(\phi^\sigma_n\right)^*\,\psi_n^\sigma.\nn
\ea
Altogether, we thus have a decomposition of the physical Hilbert space into left and right mover sectors ${\ch_{\rm phys}\simeq L^2(\mathbb{R})_+\oplus L^2(\mathbb{R})_-}$.

It is clear that $\hat H_C$ generates a $t_{\rm max}$-periodic projective unitary representation of $\rm{U}(1)$, while the constraint $\hat C_H$, which has continuous spectrum, generates a unitary representation of the translation group $\mathbb{R}$ on the kinematical Hilbert space $\ch_{\rm kin}$.~We thus again have a mismatch between the groups generated by clock Hamiltonian and constraint.
\end{example}

\begin{example}[\bf{Two-level clock and free particle}]\label{Ex:2levelclock2}
    Lastly, as an example with a finite-dimensional clock, we consider the two-level clock of Example~\ref{Ex:2levelclock1} and a free particle.~The clock and system Hamiltonians entering the constraint \eqref{WDW} are then
    \begin{equation}\label{eq:qubitHSandfreeHS}
    \hat{H}_C=\frac{\omega}{2}\hat{\sigma}_C^x\qquad,\qquad\hat{H_S}=-\frac{\hat{p}^2}{2m}\,.
    \end{equation}
    Similarly to Example~\ref{ex_5}, physical states can be written as 
\begin{equation}\label{eq:qubitpsiphys}
    \ket{\psi_{\rm phys}}=\sum_{\sigma=+,-}\psi_{+}^{\sigma}\ket{\varepsilon_+}_C\otimes\ket{\sigma\sqrt{m\omega}}_S\,,
    \end{equation}
    with $\ket{\varepsilon_{\pm}}=\ket{\pm}$ the eigenstates of $\hat{\sigma}_C^x$ and inner product
    $$
    \langle\phi_{\rm phys}|\psi_{\rm phys}\rangle_{\rm phys}=\sum_{\sigma=+,-}(\phi_{+}^{\sigma})^*\psi_{+}^{\sigma}\,.
    $$
    That is, $\mathcal{H}_{\rm phys}\simeq\mathbb{C}^2$ corresponds to a qubit itself.

    As discussed in Example~\ref{Ex:2levelclock1}, $\hat H_C$ generates a $2\pi/\omega$-periodic unitary representation of $\rm{U}(1)$.~The constraint $\hat C_H$, which has continuous spectrum, generates instead a unitary representation of the translation group $\mathbb{R}$ on $\ch_{\rm kin}=\ch_{\rm C}\otimes\ch_{\rm S}$.~Again, the groups generated by $\hat{H}_C$ and $\hat{C}_H$ do not match.
\end{example}

\section{Quantum relational observables relative to periodic clocks}\label{sec_observables-periodic}

Let us now move our discussion to the relational observables in the quantum theory.~As we will see in this section, similar challenges to those discussed in Sec.~\ref{sss_relobsperclock} for the classical theory arise also for relational quantum dynamics with periodic clocks.~Similarly to the classical result of Lemma~\ref{lem_clobs}, relational observables relative to periodic clocks turn out not to constitute Dirac observables in the quantum theory, unless the system observable is periodic too.

This failure to yield gauge-invariant observables is rooted in a mismatch between the $\rm{U}(1)$-group generated by the clock Hamiltonian $\hat H_C$ and the gauge group $G$ generated by the constraint $\hat C_H$, which we have already seen in the examples of the previous section.~If such a mismatch arises, it comes from the fact that the group generated by the constraint on $\ch_{\rm kin}$ does not act freely on the clock subsystem and there is a non-trivial isotropy subgroup.\footnote{In the terminology of \cite{delaHamette:2021oex}, the clock states \eqref{discreteClockState} identify a incomplete reference frame.} In other words, we have an action of $X=G/H$, rather than $G$ on the clock subsystem, where $H$ is the clock states' isotropy group.~According to Eq.~\eqref{id}, $H$ is isomorphic to some subset $\mathbb{Z}_G\subset\mathbb{Z}$, as all transformations by integer multiples of the clock period result in the same clock state up to phase, but the pertinent integer is not permitted to map out of the parameter range of $G$.~For example, when the constraint generates the translation group (cf.~Example~\ref{ex_5}), then the clock subsystem carries a (projective unitary) representation of the coset $\rm{U}(1)\simeq \mathbb{R}/\mathbb{Z}$, i.e.\ the translation group modded out by its normal subgroup the integers.~Similarly, when $G=\rm{U}(1)$ itself, then the clock's isotropy group can be any of its cyclic finite subgroups,~i.e.\ $H=\mathbb{Z}_n$, depending on how many clock cycles fit into one cycle of the gauge flow.

It has been shown for general groups $G$ with \emph{compact} isotropy group $H$ of the quantum reference frame coherent state system that this quantum frame can only gauge-invariantly resolve properties of its complement (the system $S$) that are $H$-invariant themselves \cite{delaHamette:2021oex}. The periodic clock discussion of relational observables below can be viewed as an explicit illustration of this observation for the compact case $H=\mathbb{Z}_n$ and an extension in the noncompact case $H=\mathbb{Z}$. Indeed, the periodic clock can only resolve a periodic time evolution of $S$.

\subsection{Quantisation of classical relational observables: partial and complete $G$-twirl} \label{sec_quantclassobs}

We begin with a direct quantisation of the classical relational Dirac observables.~The power series Eq.~\eqref{relobs} shows that to this end we do not need to define winding numbers in the quantum theory (the winding numbers are hidden in the evolution parameter $\tau$).~Since it is the (non-monotonic) phase observable $\phi_C$, rather than the monotonic clock function $T$ that appears on the r.h.s.\ of Eq.~\eqref{relobs}, we only have to replace the $n^{\rm th}$ power of $\phi_C$ by the $n^{\rm th}$-moment $\hat \phi_C^{(n)}$ of the covariant and periodic clock POVM of Sec.~\ref{sec_covpovm}. 
% There is, however, one subtlety: recalling from the end of Sec.~\ref{sec_covpovm} that $\phi\in[0,t_{\rm max})$, we will need to rescale the classical evolution parameter $\tau$.~Since it runs through the values of the unravelled clock function $T(s)=s+\phi_C$ which is proportional to the angle variable $\phi_C\in[0,2\pi)$, we should rescale it as follows: $\tau\rightarrow \tilde\tau=\f{t_{\rm max}}{2\pi}\tau$.~For notational simplicity, however, we will continue to use the symbol $\tau$ in all of below, but emphasise that whenever we write $\tau$ in the quantum theory, we really mean $\tilde\tau$, i.e.\ a parameter that differs by the factor $t_{\rm max}/2\pi$ from its classical counterpart. 
The quantisation of the relational observables in Eq.~\eqref{relobs} thus reads
\begin{align}
\!\!\hat{F}_{f_S,T}(\tau)  &\ce  \frac{1}{t_{\rm max}}\int_0^{t_{\rm max}}d\phi\, \ket{\phi}\!\bra{\phi}\nn\\
&\q\q\q\q\otimes\sum_{n=0}^{\infty}\, \frac{i^n}{n!} \left( \phi-\tau\right)^n  \bigl[ \hat{f}_S , \hat{H}_S \bigr]_n\nn \\
&= \frac{1}{t_{\rm max}} \int_0^{t_{\rm max}} d\phi\, \ket{\phi}\!\bra{\phi} \otimes e^{i(\tau -\phi) \hat{H}_S}\hat{f}_S\, e^{-i(\tau -\phi) \hat{H}_S} \nn \\
&= U_{CS}^\dag(\tau)\,\left[\f{1}{t_{\rm max}}\int_{0}^{t_{\rm max}} d\phi\, U_{CS}(\phi) \left( \ket{\tau}\!\bra{\tau} \otimes \hat{f}_S \right)  \right.\nn\\
&\left.\q\q\q\q\q\q\q\q\q\q\times U_{CS}^\dagger(\phi)\right] U_{CS}(\tau)\nn\\
&=:U_{CS}^\dag(\tau)\, \mathcal{G}_{[0,t_{\rm max})}\left( \ket{\tau}\!\bra{\tau} \otimes \hat{f}_S\right) U_{CS}(\tau)\,,
\label{qrelobs}
\end{align}
where $U_{CS}(\tau):=\exp(-i\,\tau\,\hat C_H)$ furnish the unitary representation of the group generated by the constraint on the kinematical Hilbert space and labeled by $\tau$ which parametrises the temporal manifold $\cm$.~In particular, if $\cm\simeq\mathbb{R}$, then $\tau\in\mathbb{R}$ in contrast to $\phi\in[0,t_{\rm max})$.~In the third line, we made use of Eq.~\eqref{statecov}.
$\mathcal{G}_{[0,t_{\rm max})}(\cdot)$ denotes the $G$-twirl over an interval $[0,t_{\rm max})$ of the group generated by the quantum constraint $\hat C_H$.~In other words, it is the $\rm{U}(1)$-twirl corresponding to the group generated by the clock Hamiltonian, which as already emphasised need not coincide with the group generated by the constraint (for concrete examples, see Sec.~\ref{ssec_ex2}).~As such, it will generally be a \emph{partial $G$-twirl} over the group generated by the constraint.

The partial $G$-twirl is independent of the clock cycle. Indeed, using Eq.~\eqref{id} and Eq.~\eqref{shift}, one finds
\ba
&&\mathcal{G}_{[0,t_{\rm max})}\left( \ket{\tau}\!\bra{\tau} \otimes \hat{f}_S\right) \nn\\
&&\q\q\q= \mathcal{G}_{[0,t_{\rm max})}\left( \ket{\tau_C+t_{\rm max}\,n}\!\bra{\tau_C+t_{\rm max}\,n} \otimes \hat{f}_S\right)\nn\\
&&\q\q\q= \mathcal{G}_{[0,t_{\rm max})}\left( \ket{\tau_C}\!\bra{\tau_C} \otimes \hat{f}_S\right)\,,\nn
\ea
where now $\tau_C\in[0,t_{\rm max})$.~The partial $G$-twirl is thus $n$-independent, so that
\ba
\hat{F}_{f_S,T}(\tau) &=&U_{CS}^\dag(\tau_C+t_{\rm max}\,n)\, \mathcal{G}_{[0,t_{\rm max})}\left( \ket{\tau_C}\!\bra{\tau_C} \otimes \hat{f}_S\right) \nn\\
&&\q\q\q\q\times U_{CS}(\tau_C+t_{\rm max}\,n)\nn\\
&=&U_{CS}^\dag(t_{\rm max}\,n)\, \hat F_{f_S,\phi_C}(\tau_C,n=0)\,  U_{CS}(t_{\rm max}\,n)\,,\nn
\ea
where $\hat F_{f_S,\phi_C}(\tau_C,n=0) $ is the quantisation of the relational observable relative to the non-monotonic clock $\phi_C$ on its $n=0$ cycle.

The crucial question is now:~do the $\hat F_{f_S,T}(\tau)$ in Eq.~\eqref{qrelobs} constitute a $\tau$-parameter family of quantum Dirac observables, i.e.\ are they gauge-invariant, satisfying $[\hat F_{f_S,T}(\tau),\hat C_H]\,\ket{\psi_{\rm phys}}=0$ $\forall\,\ket{\psi_{\rm phys}}\in\ch_{\rm phys}$? Classically, we have seen that the relational observables $F_{f_S,T}(\tau)$ are generically transient observables (cf.~Lemma~\ref{lem_clobs}) so that they are Dirac observables invariant along the entire gauge orbits only when $f_S$ is as periodic as the non-monotonic phase observable $\phi_C$ (cf.~Corollary~\ref{cor_periodicobs}).~We recall that the transient invariance property is infinitesimally related to the fact that $\phi_C$ is canonically conjugate to the clock Hamiltonian $\{\phi_C,H_C\}=1$ (and thereby to the constraint $C_H$) except where it completes its cycles.~Similarly in the quantum theory we have seen from Lemma~\ref{lem_povm} that conjugacy between the $n^{\rm th}$-moment operators and the clock Hamiltonian only holds on the subset $\cd$ of $\ch_C$ quoted in Eq.~\eqref{D}.~At the quantum level, the question then becomes:~does there at least exist a non-trivial subset of $\ch_{\rm phys}$ such that the evaluation of the commutator of the moment operators with the clock Hamiltonian takes a canonical form?~The following answers this question in the negative.

\begin{Lemma}\label{lem_ruin}
Suppose $\hat H_C$ has discrete, non-degenerate spectrum and $\hat\phi_C^{(n)}$ is the $n^{\rm th}$-moment operator of the covariant and periodic clock POVM in Eq.~\eqref{nthmoment}.~Then
\ba
[\hat\phi_C^{(n)},\hat H_C]\,\ket{\psi_{\rm phys}} = i\,n\,\hat\phi_C^{(n-1)}\,\ket{\psi_{\rm phys}}\label{wantthat}
\ea
only holds for the zero vector $\ket{\psi_{\rm phys}}\equiv0$.
\end{Lemma}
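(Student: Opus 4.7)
The plan is to apply Lemma~\ref{lem_povm} directly to the left-hand side of~\eqref{wantthat} and reduce the identity to an explicit condition on $\ket{\psi_{\rm phys}}$ that can be checked in the physical Hilbert space. For $n=0$ both sides of~\eqref{wantthat} are manifestly zero on all of $\ch_{\rm kin}$, so the claim is meaningful only for $n\geq1$, which I would take as the relevant case.

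First I would promote the commutator formula~\eqref{eq:nthmomHCcomm} to $\ch_{\rm kin}=\ch_C\otimes\ch_S$ by tensoring with $I_S$ and substitute it into the left-hand side of~\eqref{wantthat}. The $i\,n\,\hat\phi_C^{(n-1)}\otimes I_S$ contribution cancels against the right-hand side, and since $(t_{\rm max})^{n-1}\neq0$, the identity~\eqref{wantthat} becomes equivalent to
\begin{equation*}
\bigl(\ket{\phi=0}\!\bra{\phi=0}\otimes I_S\bigr)\ket{\psi_{\rm phys}}=0,
\end{equation*}
i.e.\ to the vanishing of the system vector $(\bra{\phi=0}\otimes I_S)\ket{\psi_{\rm phys}}\in\ch_S$.

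I would then evaluate this vector in closed form by expanding $\ket{\phi=0}$ via~\eqref{discreteClockState} and $\ket{\psi_{\rm phys}}$ via the physical-state representation~\eqref{crap}. Non-degeneracy and orthonormality of the $\hat H_C$ eigenbasis give $\braket{\phi=0|-E}_C=e^{-ig(-E)}$ for every $E\in\sigma_{S|C}$, so the vanishing condition becomes
\begin{equation*}
\sum_{E\in\sigma_{S|C}}\sum_{\sigma_E}e^{-ig(-E)}\,\psi_{\rm kin}(-E,E,\sigma_E)\,\ket{E,\sigma_E}_S=0.
\end{equation*}
Linear independence of the $\ket{E,\sigma_E}_S$ together with $|e^{-ig(-E)}|=1$ then force every coefficient $\psi_{\rm kin}(-E,E,\sigma_E)$ to vanish, which by~\eqref{crap} is precisely the statement $\ket{\psi_{\rm phys}}\equiv0$.

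I do not anticipate any serious technical obstacle once Lemma~\ref{lem_povm} is invoked; the whole argument reduces to a direct calculation in the $\hat H_C$ eigenbasis. Conceptually, the rank-one obstruction $\ket{\phi=0}\!\bra{\phi=0}$ to canonical conjugacy in~\eqref{eq:nthmomHCcomm} survives on $\ch_{\rm phys}$ because $\ket{\phi=0}$ has unit-modulus overlap with every eigenstate of $\hat H_C$ regardless of the phase freedom $g$, so the corresponding projector is injective on the physical subspace.
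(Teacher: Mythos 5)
Your proof is correct and follows essentially the same route as the paper's: apply Lemma~\ref{lem_povm} to reduce the identity to $(\ket{\phi=0}\!\bra{\phi=0}\otimes I_S)\ket{\psi_{\rm phys}}=0$, expand $\ket{\phi=0}$ in the $\hat H_C$ eigenbasis and $\ket{\psi_{\rm phys}}$ in the form of Eq.~\eqref{crap}, then invoke linear independence of the $\ket{E,\sigma_E}_S$ and unimodularity of $e^{-ig(-E)}$ to force all coefficients $\psi_{\rm kin}(-E,E,\sigma_E)$ to vanish. Your explicit treatment of the $n=0$ case is a minor clarification that the paper leaves implicit.
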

\begin{proof}
\proofapp
\end{proof}
This is already suggestive that the power series quantisation in Eq.~\eqref{qrelobs} does not in general yield quantum Dirac observables.~Indeed, we have the following results which establish a quantum analogue of Lemma~\ref{lem_clobs} and its Corollary~\ref{cor_periodicobs} for classical relational observables.

\begin{theorem}
\label{lem_noDirac}
The commutator between the quantisation of relational observables relative to periodic clocks in Eq.~\eqref{qrelobs} and the constraint evaluates to
\ba
&&[\hat F_{f_S,T}(\tau),\hat C_H]=-\f{i}{t_{\rm max}}\ket{0}\!\bra{0}\nn\\
&&\q\otimes U_S^\dag(\tau)\left[U_S(t_{\rm max})\hat f_SU_S^\dag(t_{\rm max})-\hat f_S\right]U_S(\tau).\label{comm:qrelobs}
\ea

Furthermore, $\hat F_{f_S,T}(\tau)$ is a \emph{weak} quantum Dirac observable, i.e.\ $[\hat F_{f_S,T}(\tau),\hat C_H]\approx0$, where $\approx$ is the weak equality, if and only if $\hat f_S$ is \emph{weakly} $t_{\rm max}$-periodic, i.e.\ if and only if 
\ba
{I_C\otimes U_S(t_{\rm max})\hat f_SU_S^\dag(t_{\rm max})\approx I_C\otimes \hat f_S}.\label{condobs}
\ea
In all other cases, $\hat F_{f_S,T}(\tau)$ is neither a weak nor a strong quantum Dirac observable.
\end{theorem}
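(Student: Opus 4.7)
The plan is to split the commutator as $[\hat F_{f_S,T}(\tau),\hat C_H] = [\hat F_{f_S,T}(\tau),\hat H_C\otimes I_S] + [\hat F_{f_S,T}(\tau),I_C\otimes\hat H_S]$ and show that the bulk $\phi$-integral contributions of the two pieces cancel identically, leaving only a boundary term localised on the clock state $\ket{0}\!\bra{0}$. Writing the integrand of Eq.~\eqref{qrelobs} as $\tfrac{1}{t_{\rm max}}\ket\phi\!\bra\phi\otimes\Sigma(\phi)$ with $\Sigma(\phi)\ce e^{i(\tau-\phi)\hat H_S}\hat f_S\, e^{-i(\tau-\phi)\hat H_S}$, two key identities drive the calculation: the covariance $U_C(s)\ket\phi=\ket{\phi+s}$ yields $\hat H_C\ket\phi = i\,d\ket\phi/d\phi$ on the dense domain $\cd$ of Eq.~\eqref{D}, whence $[\ket\phi\!\bra\phi,\hat H_C] = -i\,(d/d\phi)(\ket\phi\!\bra\phi)$; and, since $\hat H_S$ commutes with its own exponentials, a direct differentiation gives $[\Sigma(\phi),\hat H_S] = -i\,d\Sigma/d\phi$.

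Integrating by parts in the $\hat H_C$-term over $[0,t_{\rm max})$ and invoking the periodic identification $\ket{t_{\rm max}}\!\bra{t_{\rm max}}=\ket{0}\!\bra{0}$ from Eq.~\eqref{id}, the $\hat H_C$-commutator produces a boundary piece $-(i/t_{\rm max})\ket{0}\!\bra{0}\otimes[\Sigma(t_{\rm max})-\Sigma(0)]$ together with a bulk piece $+(i/t_{\rm max})\int d\phi\,\ket\phi\!\bra\phi\otimes(d\Sigma/d\phi)$. The $\hat H_S$-commutator contributes exactly $-(i/t_{\rm max})\int d\phi\,\ket\phi\!\bra\phi\otimes(d\Sigma/d\phi)$, cancelling the bulk term. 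What remains is the boundary, and the factorisations $U_S^\dag(\tau-t_{\rm max}) = U_S^\dag(\tau)U_S(t_{\rm max})$ and $U_S(\tau-t_{\rm max}) = U_S(\tau)U_S^\dag(t_{\rm max})$, combined with the commutativity of $U_S(\tau)$ and $U_S(t_{\rm max})$, recast $\Sigma(t_{\rm max})-\Sigma(0)$ into $U_S^\dag(\tau)\bigl[U_S(t_{\rm max})\hat f_S U_S^\dag(t_{\rm max})-\hat f_S\bigr]U_S(\tau)$, which establishes Eq.~\eqref{comm:qrelobs}.

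For the equivalence with weak $t_{\rm max}$-periodicity, the ``if'' direction is immediate from Eq.~\eqref{comm:qrelobs}: since $I_C\otimes U_S(\tau)$ commutes with $\hat C_H$ and thus preserves $\ch_{\rm phys}$, the condition Eq.~\eqref{condobs} makes the $S$-bracket annihilate $(I_C\otimes U_S(\tau))\ket{\psi_{\rm phys}}\in\ch_{\rm phys}$ already before the $\ket{0}\!\bra{0}$ projection is applied. For the ``only if'', I would expand a general physical state as in Eq.~\eqref{crap} and evaluate the commutator on it: the overlap $\braket{0|{-E}}_C = e^{-ig(-E)}$ read off from Eq.~\eqref{discreteClockState} is a nonzero phase (by non-degeneracy of $\spec\hat H_C$), the $S$-kets $\ket{E,\sigma_E}_S$ are linearly independent, and $U_S^\dag(\tau)$ is invertible, so the arbitrariness of the coefficients $\psi_{\rm kin}(-E,E,\sigma_E)$ forces $[U_S(t_{\rm max})\hat f_S U_S^\dag(t_{\rm max})-\hat f_S]\ket{E,\sigma_E}_S=0$ for every $E\in\sigma_{S\vert C}$ and each degeneracy label $\sigma_E$, which reassembles into Eq.~\eqref{condobs}.

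The main obstacle I anticipate is the integration-by-parts bookkeeping, in particular ensuring that the two bulk contributions cancel \emph{identically} on $\ch_{\rm kin}$ (not merely weakly), because the localisation at $\phi=0$ is the entire reason why the obstruction to gauge invariance is concentrated on a single rank-one clock projector. A secondary subtlety is that in case (b) of Sec.~\ref{ssec_diracgen} (continuous $\spec\hat H_S$) the sum in Eq.~\eqref{crap} becomes a direct integral, so the ``only if'' extraction must be phrased against generic test functions rather than individual eigenstates; however, the conclusion Eq.~\eqref{condobs} is unchanged.
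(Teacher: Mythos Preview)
Your proof is correct and reaches the same conclusion as the paper, but via a somewhat different and more streamlined route. The paper's own argument expands $\hat F_{f_S,T}(\tau)$ in the $n^{\rm th}$-moment operators $\hat\phi_C^{(n)}$ using the power-series form of Eq.~\eqref{qrelobs}, then invokes Lemma~\ref{lem_povm} term by term to show that the ``canonical'' piece $i\,n\,\hat\phi_C^{(n-1)}$ cancels against the $\hat H_S$-commutator contribution upon reindexing, leaving only the $\ket{0}\!\bra{0}$ correction terms, which are then resummed via Baker--Campbell--Hausdorff into the closed form Eq.~\eqref{comm:qrelobs}. You instead work directly with the integral representation (second line of Eq.~\eqref{qrelobs}), use $[\ket\phi\!\bra\phi,\hat H_C]=-i\,\partial_\phi(\ket\phi\!\bra\phi)$ once, and integrate by parts; the bulk cancellation is then immediate and the boundary term at $\phi=0,t_{\rm max}$ is already in closed form without any resummation. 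Your method is more economical and makes the mechanism (failure of periodicity of the $S$-conjugated observable at the boundary) transparent in one step; the paper's method has the merit of tying the obstruction explicitly back to the anomalous commutator of Lemma~\ref{lem_povm}, which is the thread the surrounding discussion emphasises.

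Two minor remarks. First, your parenthetical ``on the dense domain $\cd$'' is unnecessary: the identity $[\ket\phi\!\bra\phi,\hat H_C]=-i\,\partial_\phi(\ket\phi\!\bra\phi)$ holds as a strong operator identity (it is how the paper proves Lemma~\ref{lem_povm} itself), so there is no domain restriction to worry about in the integration by parts. Second, for the ``only if'' direction your argument is precisely the content of Lemma~\ref{lem_ruin} in the paper: the nonvanishing of $\braket{0|{-E}}_C$ for all $E\in\sigma_{S\vert C}$ forces the $S$-factor to annihilate every physical state, which is exactly Eq.~\eqref{condobs}. Your anticipation of the case-(b) subtlety is apt but, as you note, does not alter the conclusion.
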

\begin{proof}\proofapp
\end{proof}

When the relational observable \emph{is} a Dirac observable, note that we can write Eq.~\eqref{qrelobs} in the simplified form
\begin{eqnarray}\label{QRO:fSwp}
    \hat{F}_{f_S,T}(\tau)&\approx
    &\mathcal{G}_{[0,t_{\rm max})}\left( \ket{\tau}\!\bra{\tau} \otimes \hat{f}_S\right) \\
    &=&\f{1}{t_{\rm max}}\int_{0}^{t_{\rm max}} d\phi\, U_{CS}(\phi) \left( \ket{\tau}\!\bra{\tau} \otimes \hat{f}_S \right)U_{CS}^\dag(\phi)\,,\nn
\end{eqnarray}
i.e., as the $\rm{U}(1)$-twirl corresponding to a single clock cycle.

Next, let us explore how the relational observables `evolve' under the constraint flow, encompassing the case when they are not invariant.

\begin{Lemma}\label{lem_clobsQ}
Let $\ket{\psi_{1}}$ be a physical state and $\ket{\psi_2}$ a kinematical state such that $\bra{\phi}\otimes\braket{q|\psi_{1,2}} = \psi_{1,2}(\phi,q)$ are integrable functions of $\phi$ for any choice of basis $\ket{q}$ in the system Hilbert space. Given the quantum relational observables defined in Eq.~\eqref{qrelobs}, let $\alpha^s_{C_H}\cdot\hat{F}_{f_S,T}(\tau):= U_{CS}^{\dagger}(s)\hat{F}_{f_S,T}(\tau)U_{CS}(s)$ and $\braket{\alpha^s_{C_H}\cdot\hat{F}_{f_S,T}(\tau)}_{21}:=\braket{\psi_2|\alpha^s_{C_H}\cdot\hat{F}_{f_S,T}(\tau)|\psi_1}$.~(Note that the latter expression invokes the physical inner product Eq.~\eqref{PIP}.) Then, the quantum relational observables obey the following property:
\begin{align}
&\braket{\alpha^s_{C_H}\cdot\hat{F}_{f_S,T}(\tau)}_{21}\notag\\
&\ = \bra{\psi_2}\frac{1}{ t_{\rm max}}\int_0^{ t_{\rm max}}d\phi\,\ket{\phi}\!\bra{\phi}\notag\\
&\ \ \ \otimes\hat{f}_S\left(\left(\tau+ t_{\rm max}\left\lfloor\frac{s+\phi}{ t_{\rm max}}\right\rfloor\right)-\phi\right)\ket{\psi_1}\,,\label{quant:transient}
\end{align}
where $\hat{f}_S(t) = U_S^\dagger(t)\hat{f}_SU_S(t)$ with $U_S(t)=\exp(-it\hat{H}_S)$.~We take Eq.~\eqref{quant:transient} to be the quantum version of the transient invariance property of classical relational observables established in Lemma \ref{lem_clobs}.
\end{Lemma}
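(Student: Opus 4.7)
The plan is to start from the explicit integral representation of $\hat{F}_{f_S,T}(\tau)$ in Eq.~\eqref{qrelobs} and to exploit three ingredients: the tensor-product structure $U_{CS}(s)=U_C(s)\otimes U_S(s)$ implied by the non-interacting form of the constraint Eq.~\eqref{noint}, the clock-state covariance $U_C(s')\ket{\phi}=\ket{\phi+s'}$ from Eq.~\eqref{statecov}, and the periodicity $\ket{\phi}\!\bra{\phi}=\ket{\phi+zt_{\rm max}}\!\bra{\phi+zt_{\rm max}}$ for $z\in\mathbb{Z}$ from Eq.~\eqref{id}.

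First I would conjugate Eq.~\eqref{qrelobs} by $U_{CS}(s)$, distributing across the tensor product to obtain
\begin{align*}
\alpha^s_{C_H}\cdot\hat{F}_{f_S,T}(\tau)=\frac{1}{t_{\rm max}}\int_0^{t_{\rm max}}\!\!d\phi\,\ket{\phi-s}\!\bra{\phi-s}\otimes\hat{f}_S(\tau-\phi+s),
\end{align*}
where I have used $U_S^\dagger(s)e^{i(\tau-\phi)\hat{H}_S}\hat{f}_Se^{-i(\tau-\phi)\hat{H}_S}U_S(s)=\hat{f}_S(\tau-\phi+s)$ with $\hat{f}_S(t)=e^{it\hat{H}_S}\hat{f}_Se^{-it\hat{H}_S}$. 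The change of variables $\tilde\phi=\phi-s$ then produces an integral over the translated interval $[-s,\,t_{\rm max}-s)$. The key step is to fold this range back into $[0,t_{\rm max})$ using the clock-POVM periodicity, while carefully tracking the shift induced in the (non-periodic) argument of $\hat{f}_S$.

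Concretely, for each $\tilde\phi$ in the shifted interval there is a unique integer $z=z(\phi,s)$ such that $\tilde\phi=\phi-zt_{\rm max}$ with $\phi\in[0,t_{\rm max})$; solving $\phi-zt_{\rm max}\in[-s,t_{\rm max}-s)$ yields $z(\phi,s)=\lfloor(s+\phi)/t_{\rm max}\rfloor$. The identity Eq.~\eqref{id} makes $\ket{\tilde\phi}\!\bra{\tilde\phi}$ invariant under this shift, whereas the system factor transforms as $\tau-\tilde\phi=\tau-\phi+zt_{\rm max}=(\tau+t_{\rm max}\lfloor(s+\phi)/t_{\rm max}\rfloor)-\phi$, which is exactly the argument appearing in Eq.~\eqref{quant:transient}. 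Taking matrix elements between $\bra{\psi_2}$ and $\ket{\psi_1}$ then yields the claim; the hypothesis that $\ket{\psi_1}$ is physical (so $U_{CS}(s)\ket{\psi_1}=\ket{\psi_1}$) guarantees that the physical inner product Eq.~\eqref{PIP} coincides with the kinematical pairing we compute on the right-hand side.

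The main obstacle is the careful bookkeeping of the floor function under the folding of the integration range: one must verify that the formula $z=\lfloor(s+\phi)/t_{\rm max}\rfloor$ is valid uniformly for all $s\in\mathbb{R}$ (including $s<0$ and $|s|>t_{\rm max}$, where $z$ takes values outside $\{0,1\}$), and check that the measure-zero set of points where $(s+\phi)/t_{\rm max}$ is an integer does not spoil the identity. The integrability of $\psi_{1,2}(\phi,q)$ assumed in the statement is precisely what makes these boundary ambiguities immaterial at the level of matrix elements.
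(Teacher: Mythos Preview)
Your proposal is correct and follows essentially the same route as the paper's proof: both start from the integral representation of $\hat{F}_{f_S,T}(\tau)$, conjugate by $U_{CS}(s)$ using the tensor-product structure and clock-state covariance to obtain $\ket{\phi-s}\!\bra{\phi-s}\otimes\hat{f}_S(\tau-(\phi-s))$, and then fold the shifted integration range back into $[0,t_{\rm max})$ via the periodicity Eq.~\eqref{id}, tracking the floor-function shift in the (non-periodic) system argument. The only cosmetic difference is that the paper first writes $s=(z+\epsilon)t_{\rm max}$ and splits the folded integral explicitly into two pieces before identifying the floor function, whereas you change variables first and identify $z=\lfloor(s+\phi)/t_{\rm max}\rfloor$ directly; the content is the same.
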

\begin{proof}
\proofapp
\end{proof}

The proof of Theorem~\ref{lem_noDirac}, which is found in Appendix~\ref{app:proof}, demonstrates that the failure of $\hat F_{f_S,T}(\tau)$ to yield quantum Dirac observables when Eq.~\eqref{condobs} is \emph{not} fulfilled is rooted precisely in the term proportional to $\ket{0}\!\bra{0}$ in Lemma~\ref{lem_povm} which ruins the canonical conjugation relations on the \emph{full} clock Hilbert space in the quantum theory.~This in turn reflects in Eq.~\eqref{quant:transient} according to which $\hat F_{f_S,T}(\tau)$ is typically not invariant under the action of the group generated by the constraint, even when restricted to $\ch_{\rm phys}$ where $U_{CS}^{\dagger}(s)\hat{F}_{f_S,T}(\tau)U_{CS}(s)\approx U_{CS}^{\dagger}(s)\hat{F}_{f_S,T}\not\approx\hat{F}_{f_S,T}$ due to Eq.~\eqref{comm:qrelobs}.~Only for weakly $t_{\rm max}$-periodic $\hat f_S$,~i.e.~when Eq.~\eqref{condobs} is satisfied, the RHS of Eq.~\eqref{quant:transient} reduces to the second line of Eq.~\eqref{qrelobs} and $\hat F_{f_S,T}(\tau)$ is weakly invariant under the group generated by the constraint (weak quantum Dirac observable).

Note that Eq.~\eqref{condobs} can be satisfied in a variety of cases.~For example, when $\hat f_S$ is a constant of motion, $[\hat f_S,\hat H_S]=0$, or when  the constraint generates a $t_{\rm max}$-periodic projective representation of $\rm{U}(1)$ too (in which case $U_S(t_{\rm max})=e^{i\theta}I_S$ for some $\theta\in[0,2\pi)$), the condition \eqref{condobs} is even satisfied strongly, in which case $\hat F_{f_S,T}(\tau)$ is a strong quantum Dirac observable.~As we will see later, Eq.~\eqref{condobs} can be fulfilled under weaker conditions and in fact will be satisfied by all system observables compatible with solutions to the constraint.~Generically, however, this set of system observables satisfying Eq.~\eqref{condobs} will be a small subset of system observables.~In other words, generic system observables will typically \emph{not} lead to relational observables that are quantum Dirac observables.

As commented above, if the group representation generated by the constraint matches the group action generated by the clock Hamiltonian (i.e.\ both yield $t_{\rm max}$-periodic projective unitary representations of $\rm{U}(1)$), then the partial $G$-twirl is actually the full $G$-twirl over the group generated by the constraint.~Theorem~\ref{lem_noDirac} shows that the quantisation of the relational observables \emph{does} yield quantum relational Dirac observables in this case for arbitrary $\hat f_S$.~We can then also interpret the failure to produce quantum Dirac observables for arbitrary $\hat f_S\in\cl(\ch_S)$ in generic cases as being related to the fact that there is a mismatch between the groups generated by the clock Hamiltonian and the constraint.~The partial $G$-twirl averages over the former group and is therefore not sufficient to yield invariance under the action of the latter group.~This raises the question, whether we should always be using the full $G$-twirl over the group generated by the constraint in order to define quantum relational Dirac observables.

The following two results answer this question. In short, when $G=\rm{U}(1)$, the full $G$-twirl also averages over the clock's isotropy group $H=\mathbb{Z}_G$, which enforces its periodicity on the system observables and thus provides a valid relational Dirac observable. However, the same observable can also be obtained via the partial $G$-twirl above. By contrast, when $G=(\mathbb{R},+)$ is the translation group, the full $G$-twirl is ill-defined.

\begin{Lemma}\label{lem_fullG}
The $G$-twirl over the full group generated by the constraint $\hat{C}_H$ yields
\begin{eqnarray}
   \f{1}{|H|} \cg_G\left( \ket{\tau}\!\bra{\tau} \otimes \hat{f}_S\right)&=&\f{t_{\rm max}}{N_G} \,\mathcal{G}_{[0,t_{\rm max})}\left( \ket{\tau}\!\bra{\tau} \otimes \hat{f}_S^{H}\right)\nn\\
    &=&\f{t_{\rm max}}{N_G} \,U_{CS}(\tau)\,\hat F_{f_S^{H},T}(\tau)\,U_{CS}^\dag(\tau)\,,\nn
\end{eqnarray}
where
\begin{equation}
    \hat{f}_S^{H} \coloneqq\cg_H\left(\hat{f}_S\right)=\f{1}{|H|}\sum_{z\in\mathbb{Z}_G}\,U_S(zt_{\rm max})\,\hat f_S\,U_S^\dag(zt_{\rm max})\label{isotropyproj}
\end{equation}
is the averaging over the isotropy group $H=\mathbb{Z}_G$ of clock $C$ with respect to the full group $G$ and $|H|$ its cardinality. In the case that $G=\rm{U}(1)$, this assumes that an integer multiple of clock cycles fits into one cycle of $G$.\footnote{If this was not the case, the below expressions would acquire an additional correction term corresponding to the incomplete clock cycle, which would only aggravate the situation for the existence of quantum Dirac observables.} (The normalisation constant is $N_G=\tilde{t}_{\rm max}$ for $G=\rm{U}(1)$, where $\tilde{t}_{\rm max}$ is the analog of $t_{\rm max}$ in Eq.~\eqref{point}, but for the constraint $\hat{C}_H$, and $N_G=2\pi$ for $G=(\mathbb{R},+)$ \cite{Hoehn:2019owq}.)
\end{Lemma}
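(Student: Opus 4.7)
The plan is to compute the full $G$-twirl directly from its definition and reduce it to the partial $G$-twirl by exploiting (i) the tensor product structure $U_{CS}(s)=U_C(s)\otimes U_S(s)$ that follows from $[\hat H_C,\hat H_S]=0$, and (ii) the clock-state periodicity $\ket{\phi+z\,t_{\rm max}}\!\bra{\phi+z\,t_{\rm max}}=\ket{\phi}\!\bra{\phi}$ from Eq.~\eqref{id}. The second equality of the lemma is immediate from the definition of the partial $G$-twirl in Eq.~\eqref{qrelobs}, so the real content is the first equality. I will treat the compact and non-compact cases in parallel, since the structural manipulation is the same and only the interpretation of ``$|H|$'' and the normalisation $N_G$ differ.

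I first handle the compact case $G=\rm{U}(1)$. Here $\hat C_H$ generates a $\tilde{t}_{\rm max}$-periodic projective unitary representation of $\rm{U}(1)$ on $\ch_{\rm kin}$, and by hypothesis an integer number $|H|=\tilde{t}_{\rm max}/t_{\rm max}$ of clock cycles fits into one cycle of $G$. The full $G$-twirl reads
\begin{align}
\cg_G\!\left(\ket{\tau}\!\bra{\tau}\otimes\hat f_S\right)
&=\f{1}{\tilde{t}_{\rm max}}\int_0^{\tilde{t}_{\rm max}}\!\!ds\,U_{CS}(s)\!\left(\ket{\tau}\!\bra{\tau}\otimes\hat f_S\right)\!U_{CS}^\dag(s). \notag
\end{align}
Splitting the integration domain as $\int_0^{\tilde{t}_{\rm max}}ds=\sum_{z=0}^{|H|-1}\int_0^{t_{\rm max}}d\phi$ via $s=\phi+z\,t_{\rm max}$, factorising $U_{CS}(s)=U_C(s)\otimes U_S(s)$, and using Eq.~\eqref{id} to collapse $U_C(\phi+z\,t_{\rm max})\ket{\tau}\!\bra{\tau}U_C^\dag(\phi+z\,t_{\rm max})=\ket{\tau+\phi}\!\bra{\tau+\phi}$, I interchange the finite sum with the integral to obtain
\begin{align}
\cg_G\!\left(\ket{\tau}\!\bra{\tau}\otimes\hat f_S\right)
&=\f{1}{\tilde{t}_{\rm max}}\!\int_0^{t_{\rm max}}\!\!d\phi\,\ket{\tau+\phi}\!\bra{\tau+\phi}\nn\\
&\q\otimes U_S(\phi)\!\left[\sum_{z=0}^{|H|-1}\!U_S(zt_{\rm max})\hat f_S\,U_S^\dag(zt_{\rm max})\right]\!U_S^\dag(\phi).\nn
\end{align}
The bracketed operator is $|H|\,\hat f_S^H$ by the definition \eqref{isotropyproj}. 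Using $\tilde t_{\rm max}=|H|\,t_{\rm max}$ and $N_G=\tilde t_{\rm max}$, the resulting prefactor is $|H|/\tilde t_{\rm max}=1/t_{\rm max}$, and the integral is precisely $\cg_{[0,t_{\rm max})}(\ket{\tau}\!\bra{\tau}\otimes\hat f_S^H)$. Dividing through by $|H|$ yields the first equality, and applying Eq.~\eqref{qrelobs} to $\hat f_S^H$ in place of $\hat f_S$ yields the second.

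For the non-compact case $G=(\mathbb R,+)$ with $N_G=2\pi$ the same manipulation goes through formally, with the finite sum over $z\in\{0,\ldots,|H|-1\}$ replaced by a sum over $z\in\mathbb Z_G=\mathbb Z$. The resulting sum $\sum_{z\in\mathbb Z}U_S(zt_{\rm max})\hat f_S U_S^\dag(zt_{\rm max})=|H|\,\hat f_S^H$ and the overall normalisation $1/|H|$ are each individually divergent, but the ratio is finite and equal to the partial $G$-twirl of $\hat f_S^H$ with prefactor $t_{\rm max}/N_G=t_{\rm max}/2\pi$. This matches the stated formula; the rigorous interpretation is as a limit of integrals over $[-Mt_{\rm max},Mt_{\rm max}]$ divided by $2M$ as $M\to\infty$, which is exactly the Abelian-group averaging that $1/|H|\cdot\cg_G$ is shorthand for. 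The main obstacle is precisely this last step: one must be careful that the ``isotropy-group averaging'' $\hat f_S^H$ is well-defined as an operator (or as a bilinear form on a suitable dense domain) in the non-compact case, so that the rearrangement of the divergent integral and divergent sum is legitimate; the compact case is a straightforward unwinding of definitions.
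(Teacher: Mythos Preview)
Your proof is correct and follows essentially the same route as the paper: split the full $G$-integral into clock-cycle pieces via $s=\phi+z\,t_{\rm max}$, use the clock periodicity Eq.~\eqref{id} (equivalently Eq.~\eqref{projrep}) to strip the $z$-dependence from the clock factor, and collect the remaining $z$-sum on the system side as $|H|\,\hat f_S^H$, leaving the partial $G$-twirl over a single cycle. Your explicit remark that the non-compact case is only formal (with the $|H|$'s cancelling in the ratio) is appropriate and matches the paper's treatment, which handles both cases uniformly at the formal level.
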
 

\begin{proof}
     \proofapp
\end{proof}

Specifically, we have
\begin{equation}\label{periodicity}
    U_S(zt_{\rm max})\,\hat f_S^{H}\,U^\dag_S(zt_{\rm max})=\hat f_S^{H},
\end{equation}
so the $H$-averaged system observable \emph{is} periodic.\footnote{In the finite case, $H=\mathbb{Z}_n$, this follows from the group law which is addition modulo $n$ and applies because also the projective unitary representation of the gauge group $G=\rm{U}(1)$ is periodic in $t$.} We thus note that when $H=\mathbb{Z}$, i.e.\ when $G$ is the translation group, the full $G$-twirl is generally not well-defined, whereas the partial one can still yield sensible results;\footnote{E.g., suppose $\hat{f}_S$ is gauge-invariant already, i.e.\ $[\hat{f}_S,H_S]=0$, and $H=\mathbb{Z}$. Then $\cg_\mathbb{R}\left(\ket{\tau}\!\bra{\tau}\otimes \hat{f}_S\right)=|\mathbb{Z}|\f{t_{\rm max}}{2\pi}I_C\otimes\hat{f}_S$, and so the full $G$-twirl counts the infinitely many times $|\mathbb{Z}|$ that $C$ reads $\tau$ along the orbit generated by $\hat C_H$ and multiplies the gauge-invariant observable $f_S$ by that number. Note that this is \emph{distinct} from the  full $G$-twirl in the presence of monotonic clocks, which under the same circumstances yields the desired result $\cg_{\mathbb{R}}\left(\ket{\tau}\!\bra{\tau}\otimes \hat{f}_S\right)=I_C\otimes\hat{f}_S$, where $\tau$ corresponds to the reading of a monotonic clock \cite{Hoehn:2019owq}; this is in line with that clock reading $\tau$ only once along the orbit. Similarly, the partial $G$-twirl yields the correct result in the present context, namely $\cg_{[0,t_{\rm max})}\left(\ket{\tau}\!\bra{\tau}\otimes \hat{f}_S\right)=I_C\otimes\hat{f}_S$.} we will also see this in Example~\ref{ex_9} below. In the compact case, i.e.\ $G=\rm{U}(1)$ and $H=\mathbb{Z}_n$, it is easy to see that the isotropy group $G$-twirl is, in fact, a projector on the system algebra, namely the projector into its $t_{\rm max}$-periodic subalgebra. Furthermore, given that $\hat f_S^{H}$ satisfies Eq.~\eqref{condobs} (even strongly), it is clear that the full $G$-twirl yields a Dirac observable -- when it is defined.
\begin{corol} \label{corFullGwithHfinite}
    For $H$ finite, the full $G$-twirl yields a strong Dirac observable,
    \begin{eqnarray} \label{GtwirlCommutator}
        \left[\cg_G\left( \ket{\tau}\!\bra{\tau} \otimes \hat{f}_S\right),\hat C_H\right]=0,
    \end{eqnarray}
    and we have
    \begin{equation}
        \cg_G\left( \ket{\tau}\!\bra{\tau} \otimes \hat{f}_S\right)=\cg_G\left( \ket{\tau}\!\bra{\tau} \otimes \hat{f}_S^{H}\right)\approx|H|\f{t_{\rm max}}{\tilde{t}_{\rm max}}\,\hat{F}_{f_S^{H},T}(\tau)\,.
    \end{equation}
\end{corol}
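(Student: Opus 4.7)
The plan is to deploy Lemma~\ref{lem_fullG} as the main tool, combined with the idempotence of the isotropy $H$-averaging $\cg_H$. The corollary makes two distinct claims: that the full $G$-twirl commutes \emph{strongly} with $\hat C_H$, and that it can be re-expressed in terms of the quantum relational observable built from the periodic projection $\hat f_S^{H}$ of $\hat f_S$. These will be handled separately.

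For the strong Dirac observable property \eqref{GtwirlCommutator}, the standard invariance argument for compact-group averaging applies. Since $H$ finite in the present setting forces $G=\rm{U}(1)$ to be compact, the $G$-twirl is the finite integral $\f{1}{N_G}\int_0^{\tilde t_{\rm max}} ds\,U_{CS}(s)(\,\cdot\,)U_{CS}^\dag(s)$. Conjugating this by an arbitrary $U_{CS}(r)$ and shifting the dummy variable shows that the result is invariant under the constraint flow; differentiating at $r=0$ then yields $[\cg_G(\ket{\tau}\!\bra{\tau}\otimes\hat f_S),\hat C_H]=0$ as an identity on $\ch_{\rm kin}$, i.e.\ not merely weakly.

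For the identification with the quantum relational observable, the strategy is to apply Lemma~\ref{lem_fullG} twice: once with the original $\hat f_S$, giving the RHS $\f{t_{\rm max}}{N_G}\cg_{[0,t_{\rm max})}(\ket{\tau}\!\bra{\tau}\otimes\hat f_S^{H})$, and once with $\hat f_S$ replaced by $\hat f_S^{H}$. The crucial observation is that, by Eq.~\eqref{periodicity}, $\hat f_S^{H}$ is already $H$-invariant, so $(\hat f_S^{H})^{H}=\hat f_S^{H}$ (the $H$-average is an idempotent projector onto the $t_{\rm max}$-periodic subalgebra of system operators). This makes the two RHSs coincide, yielding the equality $\cg_G(\ket{\tau}\!\bra{\tau}\otimes\hat f_S)=\cg_G(\ket{\tau}\!\bra{\tau}\otimes\hat f_S^{H})$. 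Finally, since Eq.~\eqref{periodicity} is precisely the strong form of the periodicity condition \eqref{condobs}, Theorem~\ref{lem_noDirac} applies to $\hat f_S^{H}$ and the simplified form \eqref{QRO:fSwp} gives $\cg_{[0,t_{\rm max})}(\ket{\tau}\!\bra{\tau}\otimes\hat f_S^{H})\approx\hat F_{f_S^{H},T}(\tau)$; substituting $N_G=\tilde t_{\rm max}$ for $G=\rm{U}(1)$ then produces the stated proportionality.

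No substantial obstacle is anticipated: all the heavy lifting has already been done in Lemma~\ref{lem_fullG} and Theorem~\ref{lem_noDirac}, and the argument amounts to bookkeeping on top of them. The one point deserving care is the distinction between strong equalities on $\ch_{\rm kin}$ (for the vanishing commutator and for the equality of the two $G$-twirls) and the weak equality $\approx$ on $\ch_{\rm phys}$ used in the final identification with $\hat F_{f_S^{H},T}(\tau)$.
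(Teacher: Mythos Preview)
Your proof is correct and follows essentially the same route as the paper, which also derives the corollary by combining Lemma~\ref{lem_fullG} with the strong periodicity of $\hat f_S^{H}$ (Eq.~\eqref{periodicity}) and Theorem~\ref{lem_noDirac}. The one minor difference is that for the strong commutator~\eqref{GtwirlCommutator} you invoke the standard shift-invariance argument for compact-group averaging directly, whereas the paper obtains it via the chain Lemma~\ref{lem_fullG} $\Rightarrow$ $\hat F_{f_S^{H},T}(\tau)$ is a strong Dirac observable (since $\hat f_S^{H}$ satisfies Eq.~\eqref{condobs} strongly) $\Rightarrow$ its $U_{CS}(\tau)$-conjugate, and hence the full $G$-twirl, commutes with $\hat C_H$; both arguments are valid and equally short.
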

Furthermore, if $\hat{f}_S$ is weakly $t_{\rm max}$-periodic, then ${I_C\otimes \hat f_S^{H}\approx I_C\otimes \hat f_S}$, and thus $\hat{F}_{f_S^{H},T}(\tau)\approx\hat{F}_{f_S,T}(\tau)$, and hence, the relational Dirac observable obtained via the full $G$-twirl coincides, up to a prefactor, with one obtained via the partial $G$-twirl. We have thus not gained anything new and it is sufficient to restrict to the partial $G$-twirl in what follows -- provided we insert periodic $S$ observables. This is consistent with the observations in \cite{delaHamette:2021oex}, but provides a more explicit illustration of them. ~In other words, since the full $G$-twirl corresponds to averaging over multiple winding numbers, we see that unless one explicitly considers a physical counting system for these winding numbers, the result is either ill-defined (when $G$ is the translation group), or equivalent to the theory without counting winding numbers (when $G=\rm{U}(1)$).

\subsection{Examples}\label{Sec:quantexamples}

Let us illustrate the content of Theorem~\ref{lem_noDirac} for the examples considered in the previous section.
\begin{example}[\bf{Commensurate oscillators}]\label{ex_7}
We continue the discussion of Example~\ref{ex_3}, i.e.\ two oscillators with fixed energy and commensurate frequencies $\omega_t/\omega=m_1/m_2$, $m_1,m_2\in\mathbb{N}$, subject to the constraint in Eq.~\eqref{WDW2}.~Relational observables have been studied before in this model, however, only for the special case that $\omega_t=\omega$ and not with covariant and periodic clock POVMs  \cite{Rovelli:1990jm,Rovelli:1989jn,rovelliQuantumGravity2004,Bojowald:2010qw}.~We choose the first oscillator (the $C$ tensor factor) as our clock and ask what the position $\hat q$ and momentum $\hat p$ of the second is, when the phase observable of the first reads $\tau$.~The corresponding relational observables computed according to Eq.~\eqref{qrelobs} read
\ba
\hat F_{q,T}(\tau)&=&\f{1}{t_{\rm max}}\int_0^{t_{\rm max}}d\phi\,\ket{\phi}\!\bra{\phi}\otimes[\hat q\,\cos\left((\phi-\tau)\,\omega\right)\nn\\
&&\q\q\q\q-\f{\hat p}{m\omega }\sin\left((\phi-\tau)\,\omega\right)],\label{HOrelob}\\
\hat F_{p,T}(\tau)&=&\f{1}{t_{\rm max}}\int_0^{t_{\rm max}}d\phi\,\ket{\phi}\!\bra{\phi}\otimes\left[\hat p\,\cos\left((\phi-\tau)\,\omega\right)\right.\nn\\
&&\left.\q\q\q\q +m\omega\hat q\,\sin\left((\phi-\tau)\,\omega\right)\right].\nn
\ea
and are the direct quantisation of the classical relational observables in Eq.~\eqref{horelobs}.~The commutator of these relational observables with the constraint can be easily evaluated and in the case of the first one yields
\ba
[\hat F_{q,\phi_C}(\tau)&,&\hat C_H]=-\f{i}{t_{\rm max}}\bigl[\ket{0}\!\bra{0}\nn\\
&\otimes&\bigl(\hat q\left(\cos\left((t_{\rm max}-\tau)\,\omega\right)-\cos(\tau\omega)\right)\nn\\
&-&\f{\hat p}{\omega m}\left(\sin\left((t_{\rm max}-\tau)\,\omega\right)+\sin(\tau\omega)\right)\bigr)\bigr].\label{HOrelobcom}
\ea
Recall from Example~\ref{ex_3} that $m_1\,t_{\rm max}=m_2\,t_{{\rm max},S}$.~Hence, $t_{\rm max}\,\omega=2\pi\f{m_2}{m_1}$ and the commutator in Eq.~\eqref{HOrelobcom} only vanishes provided that $m_2/m_1\in\mathbb{N}$.~Hence, in that case, $\hat F_{q,T}(\tau)$ is a strong quantum relational Dirac observable.~The system $S$ undergoes exactly $m_2/m_1\in\mathbb{N}$ cycles for every cycle of the clock $C$, i.e.\ the period of the clock is equal to or an integer multiple of the period of the system $S$.~In particular, from Example~\ref{ex_3} we already know that the constraint generates a $t_{\rm max}$-periodic projective unitary representation of $\rm{U}(1)$ in this case and so the condition in Eq.~\eqref{condobs} of Theorem~\ref{lem_noDirac} is satisfied even strongly.~By contrast, when $m_2/m_1\notin\mathbb{N}$, the expression in Eq.~\eqref{HOrelob} is \emph{not} a quantum Dirac observable simply because $\hat q$ is not periodic.~The argumentation for $\hat F_{p,T}(\tau)$ is identical.

This is the analogue of the situation discussed in Example~\ref{ex_1} for the classical model and, for $E$ sufficiently large, there are typically enough states in $\ch_{\rm phys}$ (\textit{cf}.\ Example~\ref{ex_3}) to obtain a semiclassical limit that matches the classical relational dynamics \cite{Rovelli:1990jm,Rovelli:1989jn,rovelliQuantumGravity2004,Bojowald:2010qw}.
\end{example}

\begin{example}[\bf{Incommensurate oscillators}]\label{ex_8} Let us return to the example of two harmonic oscillators with incommensurate frequencies studied in Example~\ref{ex_4}.~Thus, the constraint is once more given by Eq.~\eqref{WDW2}, but subject to the condition that $\omega_t/\omega\notin\mathbb{Q}$.~The quantisation of the classical relational observables in Eq.~\eqref{horelobs} is again given by Eq.~\eqref{HOrelob}.~However, in this case it is impossible for the commutator in Eq.~\eqref{HOrelobcom} to vanish because we have $t_{\rm max}/t_{{\rm max},S}=\omega/\omega_t$ and this implies $t_{\rm max}\omega=2\pi\f{\omega}{\omega_t}$, which is an irrational multiple of $2\pi$. Hence, in this case, neither $\hat F_{q,T}(\tau)$ nor $\hat F_{p,T}(\tau)$ are Dirac observables.

One might wonder whether there exist \emph{any} quantum relational Dirac observables in this model.~However, we have already seen in Example~\ref{ex_4} that in this case, the physical Hilbert space is at most one-dimensional, $\ch_{\rm phys}\simeq\mathbb{C}$.~Up to a multiplicative real constant, there is a \emph{unique} quantum Dirac observable on $\ch_{\rm phys}$: the identity $I_C\otimes I_S$ which can be obtained as a relational observable by inserting $f_S=I_S$ into Eq.~\eqref{qrelobs}.~For instance, $I_C\otimes\hat H_S\approx -\omega_t\left(n_1+\f{1}{2}\right)\, I_C\otimes I_S$, where $\approx$ denotes a weak equality, i.e.\ equality on $\ch_{\rm phys}$.

It is therefore clear that in this model there do not exist any quantum relational Dirac observables which describe how some system degrees of freedom evolve relative to the clock.~This is the analogous situation to the classical theory of Sec.~\ref{sss_relobsperclock} where also a drastic difference between commensurate and incommensurate frequencies did occur.~In the latter case, as discussed in Example~\ref{ex_1}, there were no non-trivial periodic $S$-observables yielding global classical Dirac observables.
\end{example}

\begin{example}[\bf{Oscillator clock and free particle}]\label{ex_9}
Next, we consider the example of the harmonic oscillator clock and the free particle from Sec.~\ref{ssec_ex2} in the quantum theory.~The constraint is
\ba
\hat C_H= \f{\hat p_t^2}{2 m_t}+\f{m_t\omega^2_t}{2}\,\hat t^2\otimes I_S-I_C\otimes\f{\hat p^2}{2m}.\nn
\ea
As already emphasised in Example~\ref{ex_5}, this is an example for case (b) in Sec.~\ref{ssec_diracgen} where the clock Hamiltonian generates a $t_{\rm max}$-periodic projective unitary representation of $\rm{U}(1)$, while the constraint $\hat C_H$ generates a unitary representation of the translation group $\mathbb{R}$.~We clearly have a mismatch of the two group actions and thus suspect that the quantisation of the classical relational observable $F_{q,T}(\tau)$ in Eq.~\eqref{hopartobs} of Example~\ref{ex_2}, which was a transient observable in the classical theory, does not yield a Dirac observable also in the quantum theory.~Indeed, computing the relational observable encoding the position $\hat q$ of the particle when the harmonic oscillator clock reads $\tau$ via Eq.~\eqref{qrelobs} gives
\ba\label{freenoobs}
\hat F_{q,T}(\tau)=I_C\otimes\left(\hat q-\f{\tau\hat p}{m}\right)+\hat\phi_C^{(1)}\otimes\f{\hat p}{m},
\ea
which coincides with the direct quantisation of the classical expression in Eq.~\eqref{hopartobs}.~Its commutator with the constraint does not vanish
\ba\label{eq:FqTCHcomm}
[\hat F_{q,T}(\tau),\hat C_H]=-\f{i}{m}\,\ket{0}\!\bra{0}\otimes\hat p,
\ea
being ruined by the correction term to the canonical commutation relations in Lemma~\ref{lem_povm}.~Thus, similarly to the classical situation of Example~\ref{ex_2}, $\hat F_{q,T}(\tau)$ is neither a strong nor weak Dirac observable.

We may thus wonder what happens when, instead of the partial $G$-twirl, we employ the full version of Lemma~\ref{lem_fullG}.~A formal calculation shows that we obtain an ill-defined expression
\begin{equation}
    \cg\left( \ket{\tau}\!\bra{\tau} \otimes\hat q\right)\!=\!|\mathbb{Z}|\f{t_{\rm max}}{2\pi}\left(I_C\otimes\left(\cg_\mathbb{Z}(\hat q)\!-\!\f{\tau\hat p}{m}\right)\!+\!\hat\phi_C^{(1)}\otimes\f{\hat p}{m}\right)\nn
\end{equation}
with divergent prefactor $|H|=|\mathbb{Z}|$, where
\begin{equation}
    \cg_\mathbb{Z}(\hat q)=\f{1}{|\mathbb{Z}|}\left(\hat q-t_{\rm max}\f{\hat p}{m}\sum_{z\in\mathbb{Z}}z \right)
\end{equation}
is the ill-defined isotropy group average in this case (also the bi-infinite sum is not well-defined). This divergence is rooted in the fact that $\ket{\tau}\!\bra{\tau}=\ket{\tau_C}\!\bra{\tau_C}$ for some $\tau_C\in[0,t_{\rm max})$ and that there exist infinitely many clock cycles in which the periodic clock reads $\tau_C$. The question what the position $\hat q$ of the free particle is when the harmonic oscillator clock reads $\tau_C$ thus has infinitely many answers and the full $G$-twirl sums over all of them.

On the contrary, the momentum $\hat p$ of the particle, which is a constant of the motion as in the classical theory, via Eq.~\eqref{qrelobs} yields $\hat F_{p,T}(\tau)=I_C\otimes\hat p$.~Condition~\eqref{condobs} in Lemma~\ref{lem_ruin} is thus satisfied strongly and $\hat F_{p,T}(\tau)$ is a strong quantum Dirac observable. 

As we saw in Example~\ref{ex_5}, the physical Hilbert space of this example is infinite-dimensional, there is thus an infinite-dimensional algebra of Dirac observables as well. These can be obtained by projecting the system algebra $\mathcal{A}_S=\mathcal{B}(\mathcal{H}_S)$ onto its $t_{\rm max}$-periodic subalgebra and inserting the resulting operators into Eq.~\eqref{qrelobs} (unlike for $\hat q$, this will yield valid operators in some cases); this will become clearer when discussing Page-Wootters reduction in the next section. For example, using Eq.~\eqref{physsobs} therein, one can project $\hat q$ onto the observable $\hat{q}^{\rm phys} \coloneqq \int_\mathbb{R}ds\, q \ket{\pi(q)}\!\bra{q}$, where 
\ba
    \ket{\pi(q)} \ce \f{1}{\sqrt{2\pi}} \sum_{\substack{\mu = \pm 1 \\ n\in\mathbb{N}_{0}}} & e^{-i \mu p_{n} q} \ket{p=p_{n}} 
\ea
%\ba
%    \ket{\pi(q)} \ce \f{1}{\sqrt{2\pi}} \sum_{\substack{\mu = \pm 1 \\ n\in\mathbb{N}_{0}}} & e^{-i \mu \sqrt{2m \omega_{t}(n+\frac{1}{2})} q} \nn\\
%& \ket{p=\mu \sqrt{2m \omega_{t}(n+\frac{1}{2})}} 
%\ea
with $p_{n}\coloneqq \sqrt{2m \omega_{t}(n+\frac{1}{2})}$. This is the projection of $\ket{q}$ onto those eigenstates of $\hat{H}_{S}$ which are compatible with the constraint. While $\hat{q}^{\rm phys}$ is not periodic, specifically
\ba
U_S(t_{\rm max})\hat{q}^{\rm phys}U_S^\dag(t_{\rm max})=-\hat{q}^{\rm phys}U_S^\dag(t_{\rm max})\neq\hat{q}^{\rm phys}, \nn
\ea
it is weakly periodic, i.e.\ $I_C\otimes U_S(t_{\rm max})\hat{q}^{\rm phys}U_S^\dag(t_{\rm max})\approx I_C\otimes \hat{q}^{\rm phys}$, and thus $[\hat F_{q^{\rm phys},T}(\tau),\hat C_H]\approx 0$.

\end{example}
%\sum_{\mu = \pm 1}\sum_{n\in\mathbb{N}_{0}} e^{-i \mu \sqrt{2m \omega_{t}(n+\frac{1}{2}} q} 
%    \nn\\
%&     \ket{p=\mu \sqrt{2m \omega_{t}(n+\frac{1}{2}}}
\begin{example}[\bf{Two-level clock and free particle}]\label{Ex:2levelclock3}
Lastly, we continue with the example of the two-level clock and the free particle from Sec.~\ref{ssec_ex2} and whose Hamiltonian operators are given in Eq.~\eqref{eq:qubitHSandfreeHS}.~This also provides us with an application of our formalism to the case when no classical analogue exists.~As already emphasised in Example~\ref{Ex:2levelclock2}, there is again a mismatch between the $U(1)$-representation generated by the clock Hamiltonian and the representation of the translation group generated by the constraint.~The relational observable $\hat{F}_{q,T}(\tau)$ encoding the position $\hat q$ of the particle when the qubit clock reads $\tau$ is again given by Eq.~\eqref{freenoobs} with the clock operator $\hat{\phi}_C=\hat{\phi}_C^{(1)}$ now given by Eq.~\eqref{eq:qubitclockoperator}.~As discussed in Example~\ref{Ex:2levelclock1} (cf.~Eq.~\eqref{eq:phiCHCcomm}), $\hat{\phi}_C$ and $\hat{H}_C$ fail to be a Heisenberg pair except on the subspace given in Eq.~\eqref{eq:2levelDsubspace}.~This reflects into the commutator $[\hat{F}_{q,T}(\tau),\hat{C}_H]$ which reads as in Eq.~\eqref{eq:FqTCHcomm} and does not vanish on $\ch_{\rm phys}$, as it can be easily checked from the expression of physical states given in Eq.~\eqref{eq:qubitpsiphys}.~The relational observable $\hat{F}_{q,T}(\tau)$ is thus neither a strong nor a weak Dirac observable.~By contrast, the relational observable $\hat{F}_{p,T}(\tau)$ encoding the momentum $\hat p$ of the particle (a constant of the motion) when the qubit clock reads $\tau$ yields $\hat{F}_{p,T}(\tau)=I_C\otimes\hat{p}$, as in the previous example, and is a strong Dirac observable. 

Given that $\mathcal{H}_{\rm phys}\simeq\mathbb{C}^2$ (cf.\ Example~\ref{Ex:2levelclock2}), it is clear, however, that the algebra $\mathcal{A}_{\rm phys}=\mathcal{L}(\mathcal{H}_{\rm phys})$ of Dirac observables is four-dimensional.~Again, these can be obtained by projecting the system algebra into its $t_{\rm max}$-periodic subalgebra and inserting the resulting operators into the partial $G$-twirl in Eq.~\eqref{qrelobs}.~This will become clear in the following section. 
\end{example}

\section{Quantum reduction for periodic clocks}\label{sec_reduction}

Having seen that relational observables relative to periodic clocks only promote to quantum Dirac observables if the system observable is periodic itself and are otherwise not gauge-invariant, one might wonder whether it is possible to construct a useful relational quantum dynamics via quantum reductions to the ``clock perspective" as in the Page-Wootters formalism \cite{pageEvolutionEvolutionDynamics1983,woottersTimeReplacedQuantum1984,Page:1984qt,delaHamette:2021oex,Gambini:2006ph, Gambini:2006yj,Gambini:2008ke, giovannettiQuantumTime2015,Smith:2017pwx,Hoehn:2021wet,Cafasso:2024zqa,Smith:2019imm,Dolby:2004ak, castro-ruizTimeReferenceFrames2019,Boette:2018uix,Diaz:2018uny,Diaz:2019xie,leonPauliObjection2017,Nikolova:2017huj,baumann2019generalized,Favalli:2020gmx,Foti:2020erm,Hoehn:2019owq,HLSrelativistic,Hausmann:2023jpn,diaz2023parallel} or in quantum deparametrisations \cite{hoehnHowSwitchRelational2018,Hoehn:2018whn}.~For monotonic clocks, it was shown in \cite{Hoehn:2019owq,HLSrelativistic} (see also the related results in \cite{Chataignier:2020fys,ChataignierT,delaHamette:2021oex}) that the Page-Wootters formalism and quantum deparametrisations are fully equivalent to the relational quantum dynamics encoded in relational Dirac observables on the physical Hilbert space.~The relational Dirac observables in Dirac quantisation provide a manifestly gauge-invariant \emph{clock-neutral picture} of the relational quantum dynamics, the Page-Wootters formalism was shown to yield a \emph{relational Schr\"odinger picture}, while quantum deparametrisations produce a \emph{relational Heisenberg picture}.~In particular, the relational Schr\"odinger and Heisenberg pictures constitute quantum analogs of gauge-fixed formulations of the clock-neutral picture.~This equivalence of three faces of the same dynamics was termed \emph{the trinity of relational quantum dynamics}.

Given that this equivalence holds through a quantum analog of gauge-fixing for monotonic clocks and we have already seen that `projectors' $\ket{\tau}\!\bra{\tau}$ onto clock readings do not provide good gauge-fixing conditions for periodic clocks, one might worry about the status of the trinity for periodic clocks.~As we will now exhibit, the trinity in fact survives entirely:~relational observables in Dirac quantisation, the Page-Wootters formalism and quantum deparametrisations continue to be equivalent for periodic clocks. However, this equivalence is more subtle than in the case of monotonic clocks \cite{Hoehn:2019owq,HLSrelativistic} and is rooted in the fact that solving the constraints induces the clock periodicity also on the states and observables of the system.~Specifically, relational observables associated with system observables that are compatible with solutions to the constraints will also be quantum Dirac observables and thus gauge-invariant.~The challenge then is, however, that the set of periodic states and observables for the system may be `small'.~Heuristically, one may paraphrase this as meaning that \emph{in a universe evolving relative to a periodic clock, only observables and states that are periodic for the evolving degrees of freedom will be physical}.~In Sec.~\ref{sec_clockchanges}, we will compare observables described with respect to periodic and aperiodic clocks.

\subsection{The physical system Hilbert space} \label{sec_HSphys}

Before we discuss the Page-Wootters formalism and quantum deparametrisations in the context of periodic clocks, it is worthwhile to construct the \emph{physical system Hilbert space} $\ch_S^{\rm phys}$, i.e.\ the Hilbert space for the system $S$ that is compatible with solutions to the constraint.~In other words, we construct the space spanned by the system energy eigenstates subject to the spectrum condition in Eq.~\eqref{spec}.~This Hilbert space will be the image of the physical Hilbert space $\ch_{\rm phys}$ under the Page-Wootters and quantum symmetry reduction maps defined below.

The two cases in Sec.~\ref{sec_Dirac}, namely (a), $\sigma_{S\vert C}$, defined in Eq.~\eqref{spec}, lies in the point spectrum of $\hat H_S$, and (b), $\sigma_{S\vert C}$ lies in the continuous spectrum of $\hat H_S$, need to be distinguished due to the following subtlety.~As discussed in the previous section, in case (a), the physical Hilbert space $\ch_{\rm phys}$ is a subspace of the kinematical Hilbert space $\ch_{\rm kin}$, while this is \emph{not} the case for (b).~Similarly now, in case (a), the physical system Hilbert space $\ch_S^{\rm phys}$ will be a subspace of the kinematical system Hilbert space $\ch_S$, while in case (b) this will \emph{not} be true.~To see this, note that we can define the `projector' 
\ba
\Pi_{\sigma_{S\vert C}}:=\sum_{E\in\sigma_{S\vert C}}\sum_{\sigma_E}\,\ket{E,\sigma_E}_S\!\bra{E,\sigma_E}\label{physsproj}
\ea
from the kinematical system Hilbert space $\ch_S$ to  what will become $\ch_S^{\rm phys}$.~We write projector in quotation marks because, while in case (a) it satisfies $\Pi_{\sigma_{S\vert C}}^2=\Pi_{\sigma_{S\vert C}}$ and is thus the orthogonal projector onto $\ch_S^{\rm phys}$, it is an improper projector in case (b) because then we have a discrete sum of Dirac delta function normalised energy eigenstates so that $\Pi_{\sigma_{S\vert C}}^2$ yields a divergence.~This is analogous to the situation with $\Pi_{\rm phys}$ in Sec.~\ref{sec_Dirac}.

Let
\ba
\ket{\psi_S}=\,\,\,\,{{\intsum}_{E}}\,\sum_{\sigma_E}\,\psi_S(E,\sigma_E)\,\ket{E,\sigma_E}_S\label{physsys}
\ea
be an arbitrary state in the kinematical system Hilbert space $\ch_S$.~The corresponding physical system state is
\ba
\ket{\psi_S^{\rm phys}}&\ce&\Pi_{\sigma_{S\vert C}}\,\ket{\psi_S}\nn\\
&=& \sum_{E\in\sigma_{S\vert C}}\sum_{\sigma_E}\,\psi(E,\sigma_E)\,\ket{E,\sigma_E}_S,\label{physsysstate}
\ea
which in case (b), being a sum of improper energy eigenstates, is not normalizable in $\ch_S$ and thus not contained in it.~Rather, it should be understood in a rigged Hilbert space sense as being a distribution over the kinematical system states.~Since distributions can be integrated against states, it is natural to define the inner product for physical system states in both case (a) and (b) as\footnote{In fact, a more precise version would replace the r.h.s.\ by $\braket{\phi_S^{\rm phys}|\psi_S}_S$, taking into account the distributional nature of physical system states in case (b). However, in line with much of the physics literature on the subject, we use this more sloppy version in what follows. Given the symmetry of $\Pi_{\sigma_{S|C}}$, all below statements regarding expectation values and overlaps in the physical system inner product also hold in the more precise version. In the same vein, in case (b), one should also rather think of the l.h.s.\ in Eq.~\eqref{physsysstate} as $\bra{\psi_S^{\rm phys}}$.}
\ba
\braket{\phi_S^{\rm phys}|\psi_S^{\rm phys}}_{\ch_S^{\rm phys}}\ce\braket{\phi_S|\,\Pi_{\sigma_{S\vert C}}\,|\psi_S}_S=\braket{\phi_S|\psi_S^{\rm phys}}_S,\nn\\\label{sysPIP}
\ea
where $\ket{\psi_S}$ is any member of the equivalence class of kinematical system states in $\ch_S$ that map under $\Pi_{\sigma_{S\vert C}}$ to the physical system state $\ket{\psi_S^{\rm phys}}$, and similarly for $\ket{\phi_S}$.~Here, $\braket{\cdot|\cdot}_S$ is the inner product on $\ch_S$ and since $\Pi_{\sigma_{S\vert C}}$ is symmetric, the inner product indeed only depends on the equivalence class of kinematical system states.~In case (a) it is clear that the physical system inner product coincides with the standard inner product on $\ch_S$, but restricted to the subspace $\ch_S^{\rm phys}\subset\ch_S$ of states in Eq.~\eqref{physsysstate}.~In case (b), the situation is more subtle.~In order to turn the image $\Pi_{\sigma_{S\vert C}}(\ch_S)$ into a Hilbert space, it may be necessary to divide out spurious zero-norm physical system states and it will be necessary to Cauchy complete in the norm defined by Eq.~\eqref{sysPIP}.~The result of this will also be denoted by $\ch_S^{\rm phys}$ and in both cases (a) and (b) we shall refer to it as the \emph{physical system Hilbert space}.~We stress: in case (b) it is \emph{not} a subspace of $\ch_S$.

The system energy eigenstates compatible with the constraints form an orthonormal basis for $\ch_S^{\rm phys}$.~In other words, in case (b), the physical system inner product replaces the Dirac delta function normalisation of $\braket{\cdot|\cdot}_S$ of energy eigenstates with the Kronecker delta orthonormalisation.

\begin{Lemma}\label{lem_sysphysIP}
Let $E,E'\in\sigma_{S\vert C}$. Then 
\ba
\braket{E,\sigma_E|E',\sigma_{E'}}_{\ch_S^{\rm phys}} = \delta_{E,E'}\delta_{\sigma_E,\sigma_{E'}}.\nn
\ea
\end{Lemma}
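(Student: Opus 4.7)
The plan is to unpack the defining formula Eq.~\eqref{sysPIP} of the physical system inner product evaluated on representatives of the physical eigenstates $\ket{E,\sigma_E}^{\rm phys}$ and $\ket{E',\sigma_{E'}}^{\rm phys}$, with the explicit ``projector'' $\Pi_{\sigma_{S\vert C}}$ from Eq.~\eqref{physsproj} inserted, and then treat cases (a) and (b) of Sec.~\ref{sec_Dirac} separately. The structural fact underlying both is that, by Eq.~\eqref{eClockEnergies}, $\spec(\hat{H}_C)$ is always discrete, so the intersection $\sigma_{S\vert C}=\spec(\hat{H}_S)\cap\spec(-\hat{H}_C)$ is discrete as well, and $\Pi_{\sigma_{S\vert C}}$ is thus a discrete sum in either case.

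In case (a), $\sigma_{S\vert C}$ lies in the point spectrum of $\hat{H}_S$, so the vectors $\ket{E,\sigma_E}_S$ are proper kinematical eigenstates obeying $\braket{E,\sigma_E|E',\sigma_{E'}}_S=\delta_{E,E'}\delta_{\sigma_E,\sigma_{E'}}$, and $\Pi_{\sigma_{S\vert C}}$ is a genuine orthogonal projector onto $\ch_S^{\rm phys}\subset\ch_S$. Taking these eigenstates as their own kinematical representatives in Eq.~\eqref{sysPIP} reduces the physical inner product to the restriction of $\braket{\cdot|\cdot}_S$ to $\ch_S^{\rm phys}$, from which the claim is immediate.

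In case (b), the kinematical eigenstates $\ket{E,\sigma_E}_S$ with $E\in\sigma_{S\vert C}$ are Dirac-delta normalised distributions and are not elements of $\ch_S$, so I would instead evaluate Eq.~\eqref{sysPIP} using proper wavepacket representatives $\ket{\psi_S},\ket{\phi_S}\in\ch_S$ whose images under $\Pi_{\sigma_{S\vert C}}$ equal the desired physical states. For any such proper kinematical state one obtains
\begin{equation}
\braket{\phi_S|\Pi_{\sigma_{S\vert C}}|\psi_S}_S=\sum_{\tilde{E}\in\sigma_{S\vert C}}\sum_{\sigma_{\tilde{E}}}\phi_S^*(\tilde{E},\sigma_{\tilde{E}})\,\psi_S(\tilde{E},\sigma_{\tilde{E}}),\nonumber
\end{equation}
with $\psi_S(E,\sigma_E)=\braket{E,\sigma_E|\psi_S}_S$ an ordinary (non-distributional) function because $\ket{\psi_S}\in\ch_S$. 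Identifying $\ket{E,\sigma_E}^{\rm phys}$ with the equivalence class of kinematical representatives whose projected amplitudes on the discrete set $\sigma_{S\vert C}$ equal $\delta_{\tilde{E},E}\delta_{\sigma_{\tilde{E}},\sigma_E}$ then collapses the sum to $\delta_{E,E'}\delta_{\sigma_E,\sigma_{E'}}$.

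The only conceptually delicate point is this transition in case (b) from the Dirac-delta normalisation of the kinematical improper eigenstates to the Kronecker-delta normalisation of the corresponding physical eigenstates, which I expect to be the main obstacle. The resolution is built into the defining relation Eq.~\eqref{sysPIP}: since $\Pi_{\sigma_{S\vert C}}$ appears only once (not squared, as it would in $\Pi_{\sigma_{S\vert C}}^2$) and is sandwiched between proper kinematical states, exactly one Dirac delta is effectively ``stripped off'' and the residual evaluation on the discrete set $\sigma_{S\vert C}$ produces the finite Kronecker delta asserted by the lemma -- in complete parallel with how the group-averaging inner product in Eq.~\eqref{PIP} regularises improper eigenstates of $\hat{C}_H$.
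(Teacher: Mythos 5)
Your proof is correct and takes essentially the same approach as the paper: case (a) is dismissed as immediate, and in case (b) you insert proper kinematical representatives into Eq.~\eqref{sysPIP} so that a single application of $\Pi_{\sigma_{S\vert C}}$ replaces the continuum Dirac-delta normalisation by a Kronecker one evaluated on the discrete set $\sigma_{S\vert C}$. The paper realises the same idea just slightly more concretely by exhibiting an explicit wavepacket $\ket{\chi_E}$ supported on an interval $I_E$ that isolates $E$ within $\sigma_{S\vert C}$ and satisfies $\chi(E,\sigma_E)=1$, whereas you describe the representative abstractly via its projected amplitudes; the mechanism and conclusion are identical.
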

\begin{proof}\proofapp\end{proof}

Any state in $\ch_S^{\rm phys}$ can be obtained from a state in $\ch_S$ via `projection' with $\Pi_{\sigma_{S\vert C}}$ (the wave function $\psi(E,\sigma_E)$ for $E\in\sigma_{S\vert C}$ can be extended to other eigenvalues in $\spec(\hat H_S)$ in many ways while maintaining square integrability/summability).

Below, we will also need to discuss observables on $\ch_S^{\rm phys}$.~Let $\hat f_S\in\cl(\ch_S)$ be an arbitrary operator on the system Hilbert space $\ch_S$.~Then we define the associated \emph{physical system observable} $\hat f_S^{\rm phys}\in\cl(\ch_S^{\rm phys})$ by
\ba\label{physsobs}
\hat f_S^{\rm phys}\ce\begin{cases}
 \hat f_S\,\restriction{\mathcal{H}_S^{\rm phys}}& \text{if }[\hat f_S,\hat H_S]=0, \\
 \Pi_{\sigma_{S\vert C}}\, \hat f_S\,\restriction{\mathcal{H}_S^{\rm phys}}   & \text{otherwise}.
\end{cases}
\ea
We emphasise that the domain of $\hat f_S^{\rm phys}$ is $\ch_S^{\rm phys}$, and Eq.~\eqref{physsobs} thus includes a modification of the domain of $\hat f_S$ (denoted by the symbol $\restriction$).~We have distinguished the case that $\hat f_S$ is a constant of motion as in that case it commutes with $\Pi_{\sigma_{S\vert C}}$, and therefore if we did project it with the latter, it would produce a divergence in case (b) when acting on physical system states.~The reason we have not conjugated $\hat f_S$ in the second case is again case (b) as it would then too generate a divergence when acting on $\ch_S^{\rm phys}$.~In case (a) this does not make a difference on $\ch_S^{\rm phys}$.

It will be convenient to define the \emph{system weak equality} $\approx_S$ indicating equality of two system operators $\hat O_1,\hat O_2$ on  the physical system Hilbert space
\ba
\hat O_1\approx_S\hat O_2 \q\Leftrightarrow\q \hat O_1\,\ket{\psi_S^{\rm phys}}=\hat O_2\,\ket{\psi_S^{\rm phys}},\nn\\
\q\forall\,\ket{\psi_S^{\rm phys}}\in\ch_S^{\rm phys}.
\ea
In particular, we note that owing to Eqs.~\eqref{projrep} and~\eqref{WDW} we have
\ba
I_C\otimes U_{S}^\dag( z t_{\rm max})\,\ket{\psi_{\rm phys}}= e^{i\varphi\,z}I_C\otimes I_S\,\ket{\psi_{\rm phys}}\,,\label{eq:speccon}
\ea
for some $\varphi\in[0,2\pi)$.~Therefore, solving the constraint induces the clock periodicity on the physical system Hilbert space, i.e.\
\ba
U_S(z t_{\rm max})\approx_S e^{iz\varphi}\,I_S,\q\forall z\in\mathbb{Z}_G\,,\label{periodicb}
\ea
with $\mathbb{Z}_G$ the set of integers counting the clock cycles which fit into one period of $G$.~Importantly, every physical system observable, Eq.~\eqref{physsobs}, is therefore weakly $t_{\rm max}$-periodic
\ba
U_S(zt_{\rm max})\,\hat f_S^{\rm phys}\,U_S^\dag(zt_{\rm max})&\approx_S&\hat f_S^{\rm phys},\label{periodicc}\\
&&\forall\,\hat f_S^{\rm phys}\in\cl(\ch_S^{\rm phys})\nn
\ea
and thus satisfies the condition in Eq.~\eqref{condobs} of Theorem~\ref{lem_noDirac}.

In summary, solving the constraint Eq.~\eqref{WDW} can drastically change the permissible set of system states and observables.~Through the constraint the clock Hamiltonian induces the clock periodicity on the physical system Hilbert space $\ch_S^{\rm phys}$ and observables on it.~Depending on the system $S$, the set of $t_{\rm max}$-periodic states and observables may be `small' compared to the original set of kinematical states and observables of $S$, as indicated by some of the examples.~However, as long as the constraint can be solved in the quantum theory, these will exist, but may be trivial as in the extreme example of the incommensurate oscillators.~In case (b), $\ch_S^{\rm phys}$ is not a subspace of the original system Hilbert space $\ch_S$.~The situation that the physical system Hilbert space is no longer contained in the kinematical system Hilbert space does not arise for monotonic clocks with continuous spectrum Hamiltonians \cite{Hoehn:2019owq,HLSrelativistic}.

\subsection{The relational Schr\"{o}dinger picture (the Page-Wootters formalism)}
\subsubsection{State reductions and embeddings}

Examples of periodic clocks in the Page-Wootters formalism can be found in~\cite{Smith:2017pwx,Favalli:2020gmx,Cafasso:2024zqa}. Here we give a general formulation. As in \cite{Hoehn:2019owq,HLSrelativistic}, we define the Page-Wootters reduction map $\calr_{\mathbf S}(\tau):\ch_{\rm phys}\rightarrow\ch^{\rm phys}_S$ to the ``perspective of clock $C$"  through the conditioning on the clock reading $\tau$:
\ba
\calr_{\mathbf S}(\tau)\ce\bra{\tau}\otimes I_S.\label{PWred}
\ea 
Suppose $\tau=\tau_C+z t_{\rm max}$ for $\tau_C\in[0,t_{\rm max})$ and some $z\in\mathbb{Z}_G$. Owing to Eq.~\eqref{projrep}, for periodic clocks we have in addition the property
\ba
\calr_{\mathbf S}(\tau) = e^{iz\varphi}\calr_{\mathbf S}(\tau_C)\,,\q z\in\mathbb{Z}.\label{redperiod}
\ea

The standard argument of the Page-Wootters formalism~\cite{pageEvolutionEvolutionDynamics1983, woottersTimeReplacedQuantum1984,Page:1984qt,delaHamette:2021oex,giovannettiQuantumTime2015,Smith:2017pwx,Hoehn:2021wet,Cafasso:2024zqa,Smith:2019imm,Dolby:2004ak, castro-ruizTimeReferenceFrames2019,Boette:2018uix,Hoehn:2019owq,HLSrelativistic,Hausmann:2023jpn,diaz2023parallel} applies, whereby one can use the resolution of the identity given in Eq.~\eqref{resolid} to write the physical state as a so-called ``history state'':
\ba \label{eHistory}
    \ket{\psi_{\rm phys}}=\f{1}{t_{\rm max}}\int_{0}^{t_{\rm max}} d \phi \,\ket{\phi}\otimes\ket{\psi^{\rm phys}_S(\phi)}
\ea
and show that the reduced system states
\ba
\ket{\psi^{\rm phys}_S(\tau)}\ce\calr_{\mathbf S}(\tau)\,\ket{\psi_{\rm phys}}\label{schrodexp}
\ea
satisfy the Schr\"odinger equation on $\ch_S^{\rm phys}$
\ba
i\f{d}{d\tau}\ket{\psi^{\rm phys}_S(\tau)} = \hat H_S\,\ket{\psi^{\rm phys}_S(\tau)},\label{schrod}
\ea
where we recall that the domain of $\hat H_S$ has been restricted to $\ch_S^{\rm phys}$, via Eq~\eqref{physsobs}.~Indeed, using Eq.~\eqref{crap}, one finds $\calr_{\mathbf S}(\tau)\,\ket{\psi_{\rm phys}}=U_S(\tau)\,\ket{\psi^{\rm phys}_S}$, where $\ket{\psi_S^{\rm phys}}$ is given by Eq.~\eqref{physsysstate} with wave function
\ba
\psi_S(E,\sigma_E)=e^{-ig(-E)}\psi_{\rm kin}(-E,E,\sigma_E).\q\label{Heisstate}
\ea

Note, however, that on account of Eqs.~\eqref{redperiod} and~\eqref{periodicb}, the relational Schr\"odinger state dynamics is now periodic up to phase
\ba \label{ePeriodicPWstate}
\ket{\psi_S(\tau)}=e^{iz\varphi}\ket{\psi_S(\tau_C)}.
\ea
We will shortly interpret this relational Schr\"odinger state dynamics in the light of our observation that relational observables associated with periodic clocks break gauge-invariance, unless the encoded $S$-observable is periodic too.

Due to this phase, the inverse of this map will be clock cycle dependent. Every unravelled clock reading $\tau$ will lie in an interval $[zt_{\rm max},(z+1) t_{\rm max})$, for some $z\in\mathbb{Z}_G$.
For $\tau\in[z t_{\rm max},(z+1)t_{\rm max})$, the inverse $\calr_{\mathbf S}^{-1}(\tau):\ch_S^{\rm phys}\rightarrow\ch_{\rm phys}$ of this reduction map reads
\ba\label{PWinvred1}
\calr_{\mathbf S}^{-1}(\tau)&\ce&\f{1}{t_{\rm max}}\int_{0}^{t_{\rm max}} d\phi\,\ket{\phi}\otimes U_S(\phi-\tau).
\ea
Owing to Eq.~\eqref{periodicb}, it \emph{only} holds on the physical system Hilbert space that\footnote{Note that one could also define the inverse reduction map as
\ba
\calr_{\mathbf S}^{-1}(\tau)&\ce&\f{1}{t_{\rm max}}\int_{z t_{\rm max}}^{(z+1)t_{\rm max}} d\phi\,\ket{\phi}\otimes U_S(\phi-\tau)\,,\nn
\ea
i.e.\ with a shift in integration range.~In this case one would have the identity $\calr_{\mathbf S}^{-1}(\tau) = e^{-iz\varphi}\,\calr_{\mathbf S}^{-1}(\tau_C)$ also outside of $\ch_S^{\rm phys}$.~While this yields identical results for mapping states from $\ch_S^{\rm phys}$ back into the physical Hilbert space, it does lead to differences for the observable embedding map below.~In that case, one would obtain a periodic embedding map and could thus only reconstruct $\hat F_{f_S,T}(\tau_C)$, i.e.\ for clock readings $\tau_C\in[0,t_{\rm max})$.}
\ba
\calr_{\mathbf S}^{-1}(\tau) = e^{-iz\varphi}\,\calr_{\mathbf S}^{-1}(\tau_C),\q\q\text{on }\ch_S^{\rm phys}.\label{id4}
\ea
For later purpose note that this can also be written as
\ba
\calr_{\mathbf S}^{-1}(\tau)=U_{CS}^\dag(\tau)\f{1}{t_{\rm max}}\int_0^{t_{\rm max}}d\phi\,U_{CS}(\phi)\left(\ket{\tau}\otimes I_S\right),\nn\\\label{PWinvred}
\ea
which, unless the constraint $\hat C_H$ generates a $t_{\rm max}$-periodic projective unitary representation of $\rm{U}(1)$ too, amounts to a \emph{partial coherent group average} of the operator $\ket{\tau}\otimes I_S$ over the group generated by the constraint. In analogy to the quantisation of the relational observables in Eq.~\eqref{qrelobs}, this is the coherent average over the group generated by the clock Hamiltonian, namely one cycle of the clock. In particular, unless $\hat C_H$ generates a $t_{\rm max}$-periodic projective unitary representation of $\rm{U}(1)$ too, we have
\ba
\calr_{\mathbf S}^{-1}(\tau)\neq\Pi_{\rm phys}\left(\ket{\tau}\otimes I_S\right).\label{issue}
\ea
Nevertheless, invertibility holds as follows:
\begin{Lemma}\label{lem_5}
The reduction maps satisfy for all admissible unravelled clock readings $\tau$
\ba
\calr_{\mathbf S}^{-1}(\tau)\cdot\calr_{\mathbf S}(\tau)&\approx& I_{\rm phys}\,,\nn\\
\calr_{\mathbf S}(\tau)\cdot\calr_{\mathbf S}^{-1}(\tau)&\approx_S& I_S^{\rm phys},\nn
\ea
where $I_{\rm phys}$ and $I_S^{\rm phys}$ are the identities on $\ch_{\rm phys}$ and $\ch_S^{\rm phys}$, respectively.
\end{Lemma}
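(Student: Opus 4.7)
The plan is to verify both identities by direct spectral computation, leveraging the two key ingredients already established in the excerpt: the history-state representation of physical states in Eq.~\eqref{eHistory}, and the orthogonality identity Eq.~\eqref{id2} for the clock energy eigenvalues.

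For the first identity, I would use that $\calr_{\mathbf S}(\tau)\ket{\psi_{\rm phys}}=U_S(\tau)\ket{\psi_S^{\rm phys}}$, as noted immediately below Eq.~\eqref{Heisstate}. Substituting into Eq.~\eqref{PWinvred1}, the factor $U_S(\phi-\tau)U_S(\tau)$ collapses to $U_S(\phi)$, and what remains is exactly the history-state decomposition of $\ket{\psi_{\rm phys}}$ given in Eq.~\eqref{eHistory}, with the integrand $\ket{\phi}\otimes\ket{\psi_S^{\rm phys}(\phi)}$. Thus $\calr_{\mathbf S}^{-1}(\tau)\calr_{\mathbf S}(\tau)\ket{\psi_{\rm phys}}=\ket{\psi_{\rm phys}}$ for all $\ket{\psi_{\rm phys}}\in\ch_{\rm phys}$, which is exactly the weak equality claimed.

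For the second identity, I would compute the composition
$$\calr_{\mathbf S}(\tau)\cdot\calr_{\mathbf S}^{-1}(\tau)=\f{1}{t_{\rm max}}\int_0^{t_{\rm max}}d\phi\,\braket{\tau|\phi}\,U_S(\phi-\tau),$$
expand the clock overlap via Eq.~\eqref{id3}, and act on an arbitrary $\ket{\psi_S^{\rm phys}}\in\ch_S^{\rm phys}$ decomposed in the $\hat H_S$-eigenbasis as in Eq.~\eqref{physsysstate}. After the change of variable $u=\phi-\tau$ (the $t_{\rm max}$-periodicity of the integrand, a consequence of Eq.~\eqref{point}, makes the shift of domain harmless), each term produces a factor $\f{1}{t_{\rm max}}\int_0^{t_{\rm max}}du\,e^{-i(\varepsilon_j+E)u}$. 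The step that closes the argument is the spectrum condition $E\in\sigma_{S\vert C}$ from Eq.~\eqref{spec}, which guarantees that $-E\in\spec(\hat H_C)$; hence Eq.~\eqref{id2} yields a non-zero Kronecker delta $\delta_{\varepsilon_j,-E}$, and summing over $j$ reassembles $\ket{\psi_S^{\rm phys}}$ exactly.

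The main subtle point, and the reason that the second identity is only a \emph{system-weak} equality rather than an operator equality on all of $\ch_S$, is precisely this reliance on $E\in\sigma_{S\vert C}$: for wave-function components outside $\sigma_{S\vert C}$ the Kronecker delta would not fire and the composition would annihilate them. I would also briefly observe that in case (b) of Sec.~\ref{ssec_diracgen}, where $\ch_S^{\rm phys}$ is not contained in $\ch_S$, the algebraic manipulations are unchanged but should be interpreted through the physical system inner product Eq.~\eqref{sysPIP}; Lemma~\ref{lem_sysphysIP}, which turns the kinematical Dirac-delta normalisation into the Kronecker-delta normalisation on $\ch_S^{\rm phys}$, is exactly what ensures the reassembled state $\ket{\psi_S^{\rm phys}}$ has the right inner-product structure on $\ch_S^{\rm phys}$.
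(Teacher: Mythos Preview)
Your proposal is correct. For the second identity it is essentially the paper's argument: both compute $\calr_{\mathbf S}(\tau)\calr_{\mathbf S}^{-1}(\tau)=\f{1}{t_{\rm max}}\int_0^{t_{\rm max}}d\phi\,\braket{\tau|\phi}\,U_S(\phi-\tau)$, expand via Eq.~\eqref{id3}, and invoke Eq.~\eqref{id2} after restricting to $E\in\sigma_{S\vert C}$ (the paper phrases this as right-multiplication by $\Pi_{\sigma_{S\vert C}}$ rather than acting on a state, and skips your change of variable since the residual phase $e^{i\tau(\varepsilon_j+E)}$ becomes $1$ once the Kronecker delta fires).

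For the first identity the routes differ slightly. You feed $\calr_{\mathbf S}(\tau)\ket{\psi_{\rm phys}}=U_S(\tau)\ket{\psi_S^{\rm phys}}$ into $\calr_{\mathbf S}^{-1}(\tau)$ and recognise the history state Eq.~\eqref{eHistory}. The paper instead writes the composite as $\f{1}{t_{\rm max}}\int d\phi\,\ket{\phi}\!\bra{\tau_C}\otimes U_S(\phi-\tau_C)$, uses the constraint Eq.~\eqref{WDW} directly to trade $U_S(\phi-\tau_C)$ for $U_C^\dag(\phi-\tau_C)$ on physical states (turning $\bra{\tau_C}$ into $\bra{\phi}$), and closes with the resolution of identity Eq.~\eqref{resolid}. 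Your version leans on results already stated just above the lemma and is arguably quicker; the paper's version is more self-contained and makes the role of the constraint explicit.
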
\begin{proof}\proofapp\end{proof}
We emphasise that the reduction map in Eq.~\eqref{PWred}, which amounts to conditioning states on a particular clock reading, is \emph{only} invertible when acting on the physical Hilbert space $\ch_{\rm phys}$.~This is a consequence of the constraint induced gauge redundancy in the description of $\ch_{\rm phys}$ in terms of kinematical degrees of freedom; the conditioning, as a (partial) gauge fixing, only removes redundant information in the clock factor.~It is clear that this conditioning is \emph{not} invertible for kinematical states in $\ch_{\rm kin}$ where no such redundancy arises.

Hence, we have an (up to phase) periodic relational dynamics of the system $S$ relative to the clock $C$ obeying the Schr\"odinger equation and this reduced relational dynamics is consistent with gauge-invariance of the state:~using the inverse reduction map $\calr_{\mathbf S}^{-1}(\tau)$ at clock reading $\tau$ we can also reconstruct the gauge-invariant physical state $\ket{\psi_{\rm phys}}$ from $\ket{\psi_S(\tau)}$.

\subsubsection{Reduction and embedding of observables}

Let us now consider observables.~As in the case of monotonic clocks \cite{Hoehn:2019owq,HLSrelativistic}, we define an embedding map of observables on $\ch_S^{\rm phys}$ into observables on the physical Hilbert space $\ch_{\rm phys}$ by using the reduction map and its inverse
\ba
\mathcal{E}_{\mathbf S}^\tau\left(\hat f_S^{\rm phys}\right)&\ce&\calr_{\mathbf S}^{-1}(\tau)\,\hat f_S^{\rm phys}\,\calr_{\mathbf S}(\tau).\label{embed}
\ea
Thanks to Eq.~\eqref{id4}, this embedding is weakly periodic
\ba
\mathcal{E}_{\mathbf S}^\tau\left(\hat f_S^{\rm phys}\right)\approx\mathcal{E}_{\mathbf S}^{\tau_C}\left(\hat f_S^{\rm phys}\right)\,,\label{id5}
\ea
in line with the induced periodicity of the physical system observables in Eq.~\eqref{periodicc}.~The following result tells us that the embedded physical system observables coincide weakly with the quantisation of the relational observables in Eq.~\eqref{qrelobs} associated with them -- and these are quantum Dirac observables.
\begin{theorem}\label{thm_relobs}
Let $\hat f_S^{\rm phys}\in\cl(\ch_S^{\rm phys})$ be a physical system observable. Its embedding coincides weakly with the quantisation of the relational observables in Eq.~\eqref{qrelobs},
\ba
\mathcal{E}_{\mathbf S}^\tau\left(\hat f_S^{\rm phys}\right)\approx\hat F_{f_S^{\rm phys},T}(\tau),
\ea
which in this case \emph{are} weak quantum Dirac observables, i.e.\ $[\hat F_{f_S^{\rm phys},T}(\tau),\hat C_H]\approx0$.

Conversely, the reduction of a relational observable associated with a physical system observable $\hat f_S^{\rm phys}$ coincides with that observable on the physical system Hilbert space $\ch_S^{\rm phys}$,\footnote{Note the difference of this relation compared to the case of monotonic clocks where \cite{Hoehn:2019owq,HLSrelativistic}
\ba
\calr_{\mathbf S}(\tau)\,\hat F_{f_S,T}(\tau)\,\calr_{\mathbf S}^{-1}(\tau) = \Pi_{\sigma_{S\vert C}}\,\hat f_S\,\Pi_{\sigma_{S\vert C}}\nn
\ea
for arbitrary $\hat f_S\in\cl(\ch_S)$.}
\ba
\calr_{\mathbf S}(\tau)\,\hat F_{f_S^{\rm phys},T}(\tau)\,\calr_{\mathbf S}^{-1}(\tau)\approx_S \hat f_S^{\rm phys}.
\ea
\end{theorem}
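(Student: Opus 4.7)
The plan is to prove the two claims in order, exploiting the representation of the inverse reduction map in Eq.~\eqref{PWinvred} as a partial $G$-twirl of $\ket{\tau}\otimes I_S$, which brings $\mathcal{E}_{\mathbf S}^\tau(\hat f_S^{\rm phys})$ and $\hat F_{f_S^{\rm phys},T}(\tau)$ into structurally parallel forms.

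For the first claim, I would substitute Eq.~\eqref{PWinvred} together with $\calr_{\mathbf S}(\tau)=\bra{\tau}\otimes I_S$ into the definition Eq.~\eqref{embed} and use the elementary identity $(\ket{\tau}\otimes I_S)\hat f_S^{\rm phys}(\bra{\tau}\otimes I_S)=\ket{\tau}\!\bra{\tau}\otimes\hat f_S^{\rm phys}$ to obtain
\begin{equation*}
\mathcal{E}_{\mathbf S}^\tau\!\left(\hat f_S^{\rm phys}\right)=U_{CS}^\dag(\tau)\,\cg_{[0,t_{\rm max})}\!\left(\ket{\tau}\!\bra{\tau}\otimes\hat f_S^{\rm phys}\right).
\end{equation*}
Compared with Eq.~\eqref{qrelobs}, this differs from $\hat F_{f_S^{\rm phys},T}(\tau)$ only by the trailing factor $U_{CS}^\dag(\phi)U_{CS}(\tau)$ inside the integrand of the partial $G$-twirl. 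Since $\hat C_H\ket{\psi_{\rm phys}}=0$ implies $U_{CS}^\dag(\phi)U_{CS}(\tau)\ket{\psi_{\rm phys}}=\ket{\psi_{\rm phys}}$, these extra factors may be inserted freely when acting on physical states, yielding the weak equality $\mathcal{E}_{\mathbf S}^\tau(\hat f_S^{\rm phys})\approx\hat F_{f_S^{\rm phys},T}(\tau)$.

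The weak Dirac observable property is then immediate: every physical system observable is weakly $t_{\rm max}$-periodic by Eq.~\eqref{periodicc}, so the hypothesis Eq.~\eqref{condobs} of Theorem~\ref{lem_noDirac} is satisfied by $\hat f_S^{\rm phys}$, giving $[\hat F_{f_S^{\rm phys},T}(\tau),\hat C_H]\approx0$.

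For the converse, I would chain together the first claim and the invertibility of the reduction map from Lemma~\ref{lem_5}. Specifically, for any $\ket{\phi_S^{\rm phys}}\in\ch_S^{\rm phys}$, the state $\calr_{\mathbf S}^{-1}(\tau)\ket{\phi_S^{\rm phys}}$ lies in $\ch_{\rm phys}$, so applying the weak equality just established gives
\begin{equation*}
\hat F_{f_S^{\rm phys},T}(\tau)\,\calr_{\mathbf S}^{-1}(\tau)\ket{\phi_S^{\rm phys}}=\mathcal{E}_{\mathbf S}^\tau(\hat f_S^{\rm phys})\,\calr_{\mathbf S}^{-1}(\tau)\ket{\phi_S^{\rm phys}}=\calr_{\mathbf S}^{-1}(\tau)\,\hat f_S^{\rm phys}\,\calr_{\mathbf S}(\tau)\calr_{\mathbf S}^{-1}(\tau)\ket{\phi_S^{\rm phys}}.
\end{equation*}
Applying $\calr_{\mathbf S}(\tau)$ on the left and using $\calr_{\mathbf S}(\tau)\calr_{\mathbf S}^{-1}(\tau)\approx_S I_S^{\rm phys}$ from Lemma~\ref{lem_5} twice then collapses the expression to $\hat f_S^{\rm phys}\ket{\phi_S^{\rm phys}}$, proving $\calr_{\mathbf S}(\tau)\hat F_{f_S^{\rm phys},T}(\tau)\calr_{\mathbf S}^{-1}(\tau)\approx_S \hat f_S^{\rm phys}$.

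The main obstacle I anticipate is purely bookkeeping: carefully distinguishing between the two weak equalities $\approx$ on $\ch_{\rm phys}$ and $\approx_S$ on $\ch_S^{\rm phys}$, and checking that each operator application preserves the appropriate physical subspace so that the weak equalities remain legitimate to invoke. In particular, one must verify that $\mathcal{E}_{\mathbf S}^\tau(\hat f_S^{\rm phys})$ maps $\ch_{\rm phys}$ into itself (which it does, since $\mathrm{Im}\,\calr_{\mathbf S}^{-1}(\tau)\subset\ch_{\rm phys}$), justifying the sequential application of $\calr_{\mathbf S}(\tau)\calr_{\mathbf S}^{-1}(\tau)\approx_S I_S^{\rm phys}$. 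The phase subtlety of Eq.~\eqref{id4} for $\tau\notin[0,t_{\rm max})$ is harmless here because the embedding Eq.~\eqref{id5} is manifestly weakly periodic and the periodicity phase cancels between $\calr_{\mathbf S}^{-1}(\tau)$ and $\calr_{\mathbf S}(\tau)$ in the sandwich.
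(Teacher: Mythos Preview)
Your argument for the first claim is essentially the paper's, with one notational slip: substituting Eq.~\eqref{PWinvred} into Eq.~\eqref{embed} gives
\[
\mathcal{E}_{\mathbf S}^\tau(\hat f_S^{\rm phys})=U_{CS}^\dag(\tau)\,\frac{1}{t_{\rm max}}\int_0^{t_{\rm max}}d\phi\,U_{CS}(\phi)\bigl(\ket{\tau}\!\bra{\tau}\otimes\hat f_S^{\rm phys}\bigr),
\]
which is \emph{not} the partial $G$-twirl $\cg_{[0,t_{\rm max})}(\cdot)$ since the trailing $U_{CS}^\dag(\phi)$ is absent. Your next sentence (``differs only by the trailing factor $U_{CS}^\dag(\phi)U_{CS}(\tau)$'') shows you know this, so the slip is cosmetic; the paper writes the integral out explicitly to avoid the ambiguity. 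The identification of the weak Dirac property via Eq.~\eqref{periodicc} and Theorem~\ref{lem_noDirac} is exactly what the paper does.

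For the converse you take a genuinely different route. The paper expands $\calr_{\mathbf S}(\tau)\,\hat F_{f_S^{\rm phys},T}(\tau)\,\calr_{\mathbf S}^{-1}(\tau)\,\Pi_{\sigma_{S\vert C}}$ directly as a product of two $\phi$-integrals and recognises each factor as the combination $\calr_{\mathbf S}(\tau)\calr_{\mathbf S}^{-1}(\tau)$ already evaluated in the proof of Lemma~\ref{lem_5}. You instead bootstrap from the first claim: since $\calr_{\mathbf S}^{-1}(\tau)\ket{\phi_S^{\rm phys}}\in\ch_{\rm phys}$, the weak equality $\hat F_{f_S^{\rm phys},T}(\tau)\approx\mathcal{E}_{\mathbf S}^\tau(\hat f_S^{\rm phys})$ applies there, and two applications of $\calr_{\mathbf S}(\tau)\calr_{\mathbf S}^{-1}(\tau)\approx_S I_S^{\rm phys}$ from Lemma~\ref{lem_5} finish the job. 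Your route is shorter and avoids repeating integral manipulations; the paper's direct expansion has the minor advantage of not needing to track that $\hat f_S^{\rm phys}$ preserves $\ch_S^{\rm phys}$ between the two invocations of Lemma~\ref{lem_5}, though as you note this is automatic since $\hat f_S^{\rm phys}\in\cl(\ch_S^{\rm phys})$.
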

\begin{proof}
\proofapp
\end{proof}

In other words, the embedding of physical system observables is consistent with the quantisation of the relational observables.~Note that Theorem~\ref{thm_relobs} and Eq.~\eqref{id5} imply that the quantisation of relational observables associated with physical system observables is weakly $t_{\rm max}$-periodic too
\ba
\hat F_{f_S^{\rm phys},T}(\tau)\approx\hat F_{f_S^{\rm phys},T}(\tau_C).\nn
\ea

\subsubsection{Preservation of expectation values and inner products}

As in the case of monotonic clocks  \cite{Hoehn:2019owq,HLSrelativistic}, the expectation values of physical observables are preserved by the reduction and embedding maps:
\begin{theorem}\label{thm_expobs}
Let $\hat f_S^{\rm phys}\in\cl(\ch_S^{\rm phys})$ be a physical system observable. The expectation value of the corresponding relational observable evaluated in the physical inner product on $\ch_{\rm phys}$, given in Eq.~\eqref{PIP}, coincides with the expectation value of $\hat{f}_S^{\rm phys}$ evaluated in the inner product on $\ch_S^{\rm phys}$, given in Eq.~\eqref{sysPIP}, i.e.
\ba
&&\braket{\phi_{\rm phys}|\,\hat F_{f_S^{\rm phys},T}(\tau)\,|\psi_{\rm phys}}_{\rm phys}\nn\\
&&\q\q\q\q\q\q=\braket{\phi_S^{\rm phys}(\tau)|\,\hat f_S^{\rm phys}\,|\psi_S^{\rm phys}(\tau)}_{\ch_S^{\rm phys}}\nn\\
&&\q\q\q\q\q\q=\braket{\phi_S(\tau)|\,\hat f_S^{\rm phys}\,|\psi_S^{\rm phys}(\tau)}_S,\nn
\ea
where 
\begin{itemize}
\item[(i)] physical states and physical system states are related by Page-Wootters reduction, ${\ket{\psi_S^{\rm phys}(\tau)}:=\calr_{\mathbf S}(\tau)\,\ket{\psi_{\rm phys}}}$ and similarly for $\ket{\phi_S^{\rm phys}(\tau)}$, and 
\item[(ii)] $\ket{\phi_S(\tau)}:=U_S(\tau)\ket{\phi_S}$ is any kinematical system state $\ket{\phi_S}\in\ch_S$ such that $\Pi_{\sigma_{S\vert C}}\,\ket{\phi_S(\tau)} = \calr_{\mathbf S}(\tau)\ket{\phi_{\rm phys}}=\ket{\phi_S^{\rm phys}(\tau)}\in\ch_S^{\rm phys}$.
\end{itemize}
\end{theorem}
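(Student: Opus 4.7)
The plan is to combine Theorem~\ref{thm_relobs}, Lemma~\ref{lem_5}, and one preparatory isometry fact: for any $\ket{\phi_{\rm phys}},\ket{\psi_{\rm phys}}\in\ch_{\rm phys}$ and any admissible clock reading $\tau$,
\begin{align*}
\braket{\phi_{\rm phys}|\psi_{\rm phys}}_{\rm phys} = \braket{\phi_S^{\rm phys}(\tau)|\psi_S^{\rm phys}(\tau)}_{\ch_S^{\rm phys}},
\end{align*}
i.e.\ the Page-Wootters map is an isometry from $\ch_{\rm phys}$ onto $\ch_S^{\rm phys}$. I would verify this by a direct spectral computation: insert the decomposition~\eqref{crap} of $\ket{\psi_{\rm phys}}$ into $\calr_{\mathbf S}(\tau) = \bra{\tau}\otimes I_S$, use the clock-state expansion~\eqref{discreteClockState} giving $\braket{\tau|-E} = e^{-ig(-E)}e^{iE\tau}$, and apply Lemma~\ref{lem_sysphysIP} (Kronecker-delta orthonormality of $\ket{E,\sigma_E}_S$ on $\ch_S^{\rm phys}$). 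The $\tau$- and $g$-dependent phases cancel between bra and ket, reproducing the RHS of~\eqref{PIP} and in particular showing $\tau$-independence.

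With the isometry in hand, the first equality follows in three moves. Theorem~\ref{thm_relobs} first allows the replacement
\begin{align*}
\braket{\phi_{\rm phys}|\hat F_{f_S^{\rm phys},T}(\tau)|\psi_{\rm phys}}_{\rm phys} \approx \braket{\phi_{\rm phys}|\calr_{\mathbf S}^{-1}(\tau)\hat f_S^{\rm phys}\calr_{\mathbf S}(\tau)|\psi_{\rm phys}}_{\rm phys}.
\end{align*}
Since $\calr_{\mathbf S}^{-1}(\tau)$ maps $\ch_S^{\rm phys}\to\ch_{\rm phys}$, the vector $\ket{\chi_{\rm phys}} := \calr_{\mathbf S}^{-1}(\tau)\hat f_S^{\rm phys}\calr_{\mathbf S}(\tau)\ket{\psi_{\rm phys}}$ is itself physical. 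Applying the isometry to the pair $(\ket{\phi_{\rm phys}},\ket{\chi_{\rm phys}})$ converts the expression to $\braket{\phi_S^{\rm phys}(\tau)|\calr_{\mathbf S}(\tau)\calr_{\mathbf S}^{-1}(\tau)\hat f_S^{\rm phys}|\psi_S^{\rm phys}(\tau)}_{\ch_S^{\rm phys}}$, and the identity $\calr_{\mathbf S}(\tau)\calr_{\mathbf S}^{-1}(\tau)\approx_S I_S^{\rm phys}$ from Lemma~\ref{lem_5} then collapses it to $\braket{\phi_S^{\rm phys}(\tau)|\hat f_S^{\rm phys}|\psi_S^{\rm phys}(\tau)}_{\ch_S^{\rm phys}}$, as required.

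The second equality is a direct application of the definition~\eqref{sysPIP} of the physical system inner product with $\ket{\chi_S^{\rm phys}} := \hat f_S^{\rm phys}\ket{\psi_S^{\rm phys}(\tau)}\in\ch_S^{\rm phys}$ as the right-hand entry. What remains is to check that the ansatz $\ket{\phi_S(\tau)} = U_S(\tau)\ket{\phi_S}$ of the theorem statement can meet the condition $\Pi_{\sigma_{S\vert C}}\ket{\phi_S(\tau)} = \ket{\phi_S^{\rm phys}(\tau)}$: choosing any $\ket{\phi_S}\in\ch_S$ with $\Pi_{\sigma_{S\vert C}}\ket{\phi_S}=\calr_{\mathbf S}(0)\ket{\phi_{\rm phys}}$ works, since $[\Pi_{\sigma_{S\vert C}},U_S(\tau)]=0$ (both are diagonal in the $\hat H_S$ eigenbasis) and $\calr_{\mathbf S}(\tau)\ket{\phi_{\rm phys}} = U_S(\tau)\calr_{\mathbf S}(0)\ket{\phi_{\rm phys}}$ by Eqs.~\eqref{schrodexp}--\eqref{Heisstate}.

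The main obstacle will be case (b) of Sec.~\ref{ssec_diracgen}, in which physical states are not normalizable in $\ch_{\rm kin}$ and the reduced states $\ket{\psi_S^{\rm phys}(\tau)}$ are distributional rather than genuine vectors in $\ch_S$. All manipulations above must then be interpreted in the rigged-Hilbert-space sense of Sec.~\ref{sec_HSphys}: the improper projectors $\Pi_{\rm phys}$ and $\Pi_{\sigma_{S\vert C}}$ produce divergences only when squared, never when paired with a bona fide kinematical state, so every intermediate expression remains meaningful term by term. This is precisely why the theorem statement offers the alternative expression in the kinematical inner product on $\ch_S$, which bypasses any need to regularize $\Pi_{\sigma_{S\vert C}}^2$ altogether in case (b).
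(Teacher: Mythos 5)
Your proof is correct, and it reorganizes the argument in a logically sound but genuinely different way from the paper.

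The paper's proof is a single direct computation: it plugs $\hat F_{f_S^{\rm phys},T}(\tau)\approx\mathcal{E}_{\mathbf S}^\tau(\hat f_S^{\rm phys})$ into the definition of the physical inner product and then establishes, by one spectral calculation using Eq.~\eqref{id2} and the clock-state expansion, the key identity $\bra{\phi_{\rm kin}}\calr_{\mathbf S}^{-1}(\tau)\Pi_{\sigma_{S\vert C}}=\bra{\phi_S^{\rm phys}(\tau)}$, from which the claim follows immediately. Your approach instead factors the proof through the isometry $\braket{\phi_{\rm phys}|\psi_{\rm phys}}_{\rm phys}=\braket{\phi_S^{\rm phys}(\tau)|\psi_S^{\rm phys}(\tau)}_{\ch_S^{\rm phys}}$, which in the paper appears \emph{after} the theorem as Corollary~\ref{corolInProdsPW} (it is obtained by setting $\hat f_S^{\rm phys}=I_S$). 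That would be circular if you merely invoked the corollary, but you propose to prove the isometry independently by the spectral computation with Eq.~\eqref{crap}, the clock-state expansion, and Lemma~\ref{lem_sysphysIP} --- which indeed works, and in fact carries essentially the same analytic content as the paper's identity $\bra{\phi_{\rm kin}}\calr_{\mathbf S}^{-1}(\tau)\Pi_{\sigma_{S\vert C}}=\bra{\phi_S^{\rm phys}(\tau)}$, just packaged as an inner-product statement rather than a bra identity. Once you have the isometry, the remaining steps (pass to $\mathcal{E}_{\mathbf S}^\tau(\hat f_S^{\rm phys})$ via Theorem~\ref{thm_relobs}, note that $\calr_{\mathbf S}^{-1}(\tau)\hat f_S^{\rm phys}\calr_{\mathbf S}(\tau)\ket{\psi_{\rm phys}}$ is again physical, apply the isometry, then use $\calr_{\mathbf S}(\tau)\calr_{\mathbf S}^{-1}(\tau)\approx_S I_S^{\rm phys}$ from Lemma~\ref{lem_5}) are all valid, and your treatment of the third line (the representative $\ket{\phi_S(\tau)}=U_S(\tau)\ket{\phi_S}$ and $[\Pi_{\sigma_{S\vert C}},U_S(\tau)]=0$) is right. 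The conceptual gain of your route is modularity: ``isometry $+$ operator correspondence $\Rightarrow$ expectation-value correspondence'' reads cleanly. The cost is that you do the same spectral work, just earlier. One minor slip: $\braket{\tau|-E}_C=e^{-ig(-E)}e^{-iE\tau}$, not $e^{-ig(-E)}e^{+iE\tau}$; the sign cancels between bra and ket so the conclusion is unaffected, but it should be fixed.
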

\begin{proof}
\proofapp
\end{proof}

This has a useful corollary, showing the equivalence of the inner products on $\ch_{\rm phys}$ and $\ch_S^{\rm phys}$:

\begin{corol} \label{corolInProdsPW}
Setting $\hat f_S^{\rm phys}=I_S$ in Theorem~\ref{thm_expobs}, we find that the Page-Wootters reduction map $\calr_{\mathbf S}(\tau):\ch_{\rm phys}\rightarrow\ch_S^{\rm phys}$ preserves the physical inner product, i.e. 
\ba
\braket{\phi_{\rm phys}|\psi_{\rm phys}}_{\rm phys}=\braket{\phi_S^{\rm phys}(\tau)|\psi_S^{\rm phys}(\tau)}_{\ch_S^{\rm phys}},\nn
\ea
where ${\ket{\psi_S^{\rm phys}(\tau)}:=\calr_{\mathbf S}(\tau)\,\ket{\psi_{\rm phys}}}$ and similarly for $\ket{\phi_S^{\rm phys}(\tau)}$.
\end{corol}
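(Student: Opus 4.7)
The plan is to apply Theorem~\ref{thm_expobs} directly with the choice $\hat f_S^{\rm phys}=I_S$. First I would check that this choice is admissible as a physical system observable under Eq.~\eqref{physsobs}: since $[I_S,\hat H_S]=0$, the identity falls in the first case of that definition, and its restriction to $\ch_S^{\rm phys}$ is simply the identity $I_S^{\rm phys}$ on that space. With $\hat f_S^{\rm phys}=I_S$, the right-hand side of Theorem~\ref{thm_expobs} reduces immediately to $\braket{\phi_S^{\rm phys}(\tau)|\psi_S^{\rm phys}(\tau)}_{\ch_S^{\rm phys}}$, which is the desired right-hand side of the corollary.

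The remaining step is to identify the left-hand side of Theorem~\ref{thm_expobs} with the physical inner product of $\ket{\phi_{\rm phys}}$ and $\ket{\psi_{\rm phys}}$. For this I would evaluate $\hat F_{I_S,T}(\tau)$ from Eq.~\eqref{qrelobs}: because $I_S$ commutes with $\hat H_S$ all iterated commutators $[I_S,\hat H_S]_n$ vanish for $n\geq 1$, so the power series collapses to
\begin{equation*}
\hat F_{I_S,T}(\tau)=\frac{1}{t_{\rm max}}\int_0^{t_{\rm max}}d\phi\,\ket{\phi}\!\bra{\phi}\otimes I_S,
\end{equation*}
which by the resolution of identity Eq.~\eqref{resolid} equals $I_C\otimes I_S$, the kinematical identity. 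In particular $\hat F_{I_S,T}(\tau)\approx I_{\rm phys}$, so that $\braket{\phi_{\rm phys}|\hat F_{I_S,T}(\tau)|\psi_{\rm phys}}_{\rm phys}=\braket{\phi_{\rm phys}|\psi_{\rm phys}}_{\rm phys}$. Combining this with the specialisation of Theorem~\ref{thm_expobs} yields the claimed equality.

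There is essentially no main obstacle: the entire argument is a direct specialisation of Theorem~\ref{thm_expobs} together with the resolution of identity for the covariant POVM. The only thing worth emphasising is that the resulting equality holds for every admissible unravelled clock reading $\tau$ (including those on clock cycles $n\neq 0$), since the phase factors $e^{\pm iz\varphi}$ arising from Eqs.~\eqref{redperiod} and~\eqref{id4} cancel between $\ket{\psi_S^{\rm phys}(\tau)}$ and $\bra{\phi_S^{\rm phys}(\tau)}$; this is automatic from Theorem~\ref{thm_expobs}, but I would state it explicitly to make the $\tau$-independence of both sides manifest.
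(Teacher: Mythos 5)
Your proposal is correct and follows exactly the route the paper intends: the corollary is obtained by specialising Theorem~\ref{thm_expobs} to $\hat f_S^{\rm phys}=I_S$, and your explicit verification that $\hat F_{I_S,T}(\tau)=I_C\otimes I_S$ via the resolution of the identity supplies the one small step the paper leaves implicit in identifying the left-hand side with $\braket{\phi_{\rm phys}|\psi_{\rm phys}}_{\rm phys}$. Nothing further is needed.
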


The reduction and embedding are thus formally unitary and the physical Hilbert space $\ch_{\rm phys}$ and the physical system Hilbert space $\ch_S^{\rm phys}$ are isometric under the reduction map, as are the algebras generated by observables on them. In particular, every quantum Dirac observable can be written as a relational observable corresponding to some physical system observable $\hat f_S^{\rm phys}$, and, vice versa, every physical system observable is the reduction of a relational Dirac observable.~In this sense, the equivalence between relational Dirac observables and the Page-Wootters formulation, established in \cite{Hoehn:2019owq,HLSrelativistic} for monotonic clocks (and in \cite{delaHamette:2021oex} for quantum reference frames for general symmetry groups), extends to periodic clocks as well.~As mentioned earlier, however, it may well be that the physical Hilbert spaces $\ch_{\rm phys}$ and $\ch_S^{\rm phys}$ will turn out to be `too small' in order to support non-trivial observables on them (recall the Examples~\ref{ex_4},~\ref{ex_8} of incommensurate oscillators where the physical Hilbert space and the algebra generated by Dirac observables are one-dimensional).

\subsubsection{Correct conditional probabilities for periodic clocks} \label{sCorrectCondProbPW}

Altogether, this means that for periodic clocks we should define the conditional probabilities (or rather probability \emph{densities}) of the Page-Wootters formalism in terms of expectation values of physical conditioning operators (or densities) in physical inner products:
\ba\label{eProbDens}
P(f_S|\tau)&\ce&\f{\braket{\psi_{\rm phys}|\,\hat F_{\ket{f_S^{\rm phys}}\!\bra{f_S^{\rm phys}},T}(\tau)\,|\psi_{\rm phys}}_{\rm phys}}{\braket{\psi_{\rm phys}\,|\psi_{\rm phys}}_{\rm phys}}\nn\\
&=&\f{\braket{\psi_S^{\rm phys}(\tau)|\,\ket{f_S^{\rm phys}}\!\bra{f_S^{\rm phys}}\,|\psi_S^{\rm phys}(\tau)}_{\ch_S^{\rm phys}}}{\braket{\psi_S^{\rm phys}(\tau)|\psi_S^{\rm phys}(\tau)}_{\ch_S^{\rm phys}}},\nn
\ea
where $\ket{f_S^{\rm phys}}$ is the eigenstate of $\hat f_S^{\rm phys}$ corresponding to the outcome $f_S$. In this manner, the conditional probabilities are manifestly gauge-invariant.

The conditional probabilities of the Page-Wootters formalism are often defined using the \emph{conditional} inner product $\braket{\phi_{\rm phys}|\ket{\tau}\!\bra{\tau}\otimes I_S |\psi_{\rm phys}}_{\rm kin}$, which is just the expectation value of the `projector' onto clock reading $\tau$ in physical states, but evaluated in the kinematical inner product.~This conditional inner product can be interpreted as a gauge-fixed inner product.~For monotonic clocks it is equivalent to the physical inner product Eq.~\eqref{PIP} \cite{Hoehn:2019owq,HLSrelativistic}, but for periodic clocks, a subtlety arises, as illustrated by the following lemma.

\begin{Lemma}\label{lem_6}
Let $G$ be the group generated by the constraint $\hat C_H$. If $G=\rm{U}(1)$, then the conditional inner product equals the physical inner product, i.e.
\ba
\braket{\phi_{\rm phys}|\ket{\tau}\!\bra{\tau}\otimes I_S |\psi_{\rm phys}}_{\rm kin}=\braket{\phi_{\rm phys}|\psi_{\rm phys}}_{\rm phys}.\nn
\ea
However, if $G=\mathbb{R}$, then the conditional inner product $\braket{\phi_{\rm phys}|\ket{\tau}\!\bra{\tau}\otimes I_S |\psi_{\rm phys}}_{\rm kin}$ diverges.
\end{Lemma}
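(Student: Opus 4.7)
The plan is to compute the conditional inner product $\braket{\phi_{\rm phys}|\ket{\tau}\!\bra{\tau}\otimes I_S|\psi_{\rm phys}}_{\rm kin}$ directly by expanding both physical states via Eq.~\eqref{crap} and the clock state via Eq.~\eqref{discreteClockState}, and then separating the analysis according to the two Dirac-quantisation cases of Sec.~\ref{ssec_diracgen}. Using $\braket{\tau|-E}_C = e^{-ig(-E)}e^{-iE\tau}$, the factor $(\bra{\tau}\otimes I_S)\ket{\psi_{\rm phys}}$ becomes $\sum_{E\in\sigma_{S|C}}\sum_{\sigma_E}\psi_{\rm kin}(-E,E,\sigma_E)\,e^{-ig(-E)}e^{-iE\tau}\ket{E,\sigma_E}_S$, and sandwiching against $\bra{\phi_{\rm phys}}$ leaves a double sum whose structure is entirely controlled by the system energy overlap $\braket{E',\sigma_{E'}|E,\sigma_E}_S$. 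The point is that the $\tau$-dependent phases and the $g(-E)$ phases cancel pairwise when $E'=E$, so the final answer depends only on how the system energy eigenstates are normalised.

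If $G=\rm{U}(1)$, then $\hat{C}_H$ generates a compact projective representation and its spectrum is discrete and rational in the sense of Eq.~\eqref{eClockEnergies}; zero consequently lies in the point spectrum, placing us in case (a) of Sec.~\ref{ssec_diracgen}. The relevant part of $\spec(\hat{H}_S)$ is discrete and $\braket{E',\sigma_{E'}|E,\sigma_E}_S=\delta_{E,E'}\delta_{\sigma_E,\sigma_{E'}}$, so the double sum collapses to $\sum_{E\in\sigma_{S|C}}\sum_{\sigma_E}\phi^*_{\rm kin}(-E,E,\sigma_E)\psi_{\rm kin}(-E,E,\sigma_E)$, which is exactly $\braket{\phi_{\rm phys}|\psi_{\rm phys}}_{\rm phys}$ from Eq.~\eqref{PIP}, establishing the first claim.

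If $G=\mathbb{R}$ in the generic (non-compact) way, then zero lies in the continuous spectrum of $\hat{C}_H$ (case (b)), so the relevant part of $\spec(\hat{H}_S)$ is continuous and the system eigenstates are Dirac-delta normalised: $\braket{E',\sigma_{E'}|E,\sigma_E}_S=\delta(E-E')\delta_{\sigma_E,\sigma_{E'}}$. However, $\sigma_{S|C}=\spec(\hat{H}_S)\cap\spec(-\hat{H}_C)$ is discrete since $\spec(\hat{H}_C)$ is, so the physical-state sum remains a discrete sum against distributional basis vectors. Setting $E'=E$ in the diagonal terms then produces a factor $\delta(0)$, which I would make precise by regularising the $\ket{E,\sigma_E}_S$ with normalised wave-packets of width $\epsilon$ and taking $\epsilon\to 0$; the inverse-width blow-up gives the announced divergence, and at the same time exposes why the physical inner product Eq.~\eqref{PIP} sidesteps this issue (the single insertion of $\Pi_{\rm phys}$ carries only one $\delta$-factor which is cancelled by the modified inner product, whereas $\ket{\tau}\!\bra{\tau}\otimes I_S$ does not).

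The main obstacle is the mismatch between the dichotomy $G=\rm{U}(1)$ vs $G=\mathbb{R}$ and the dichotomy (a) vs (b): Example~\ref{ex_4} shows that $G=\mathbb{R}$ can coexist with a purely discrete $\hat{C}_H$ spectrum (case (a), with a finite conditional inner product equal to the physical one). The statement as intended is therefore that the divergence follows whenever the continuous-spectrum scenario (b) is realised, which is the generic realisation of $G=\mathbb{R}$ in the periodic-clock setting considered here; the discrete-but-incommensurable exception is a measure-zero coincidence of clock and system spectra and does not alter the upshot that, in the non-compact case, the kinematical conditional inner product cannot in general stand in for the physical one and must be replaced by the prescription in Eq.~\eqref{eProbDens}.
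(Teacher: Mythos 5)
Your route is genuinely different from the paper's.~You expand both sides of $\braket{\phi_{\rm phys}|\ket{\tau}\!\bra{\tau}\otimes I_S|\psi_{\rm phys}}_{\rm kin}$ in the constraint-energy basis via Eq.~\eqref{crap} and Eq.~\eqref{discreteClockState} and reduce everything to the normalisation convention of $\braket{E',\sigma_{E'}|E,\sigma_E}_S$ (Kronecker delta in case (a), Dirac delta in case (b)).~The paper instead replaces $\bra{\phi_{\rm phys}}$ by $\bra{\phi_{\rm kin}}\Pi_{\rm phys}$, decomposes $\Pi_{\rm phys}$ as a $G$-group average split into clock cycles, and invokes Lemma~\ref{lem_5} to extract the exact prefactor $|\mathbb{Z}_G|\,t_{\rm max}/N_G$ multiplying $\braket{\phi_{\rm phys}|\psi_{\rm phys}}_{\rm phys}$.~The paper's computation is more informative because that prefactor quantifies the divergence precisely as an infinite overcounting of clock cycles (and, in the $\rm{U}(1)$ case, is shown to equal $1$ via Eq.~\eqref{eNormalisationNote}); your regularisation-of-$\delta(0)$ sketch reaches the same qualitative conclusion but does not exhibit the interpretation of the divergence as $|\mathbb{Z}_G|$.

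Your last paragraph identifies a real gap, and you should not understate it.~Take Example~\ref{ex_4}: $\hat H_C,\hat H_S$ incommensurate oscillators, $\omega_t/\omega\notin\mathbb{Q}$, and $E$ tuned so that $\ch_{\rm phys}\simeq\mathbb{C}$.~There $G=\mathbb{R}$ (the paper explicitly shows $U_{CS}(s)$ is not projectively periodic) yet zero lies in the \emph{point} spectrum of $\hat C_H$, i.e.\ case~(a).~For the unique physical state $\ket{\psi_{\rm phys}}=\alpha\ket{n_1}_C\otimes\ket{m}_S$, a direct computation gives $\braket{\psi_{\rm phys}|\ket{\tau}\!\bra{\tau}\otimes I_S|\psi_{\rm phys}}_{\rm kin}=|\alpha|^2\,|\braket{\tau|n_1}_C|^2=|\alpha|^2=\braket{\psi_{\rm phys}|\psi_{\rm phys}}_{\rm phys}$, which is finite and equal to the physical inner product.~So the second half of Lemma~\ref{lem_6} is \emph{false} as literally stated, and the paper's own proof inherits the same gap: writing $\Pi_{\rm phys}=\f{1}{N_G}\sum_{z\in\mathbb{Z}_G}\int_{zt_{\rm max}}^{(z+1)t_{\rm max}}d\phi\,U_{CS}(\phi)$ with $N_G=2\pi$ is the \emph{improper}-projector formula, valid only when zero lies in the continuous spectrum; in case~(a) with $G=\mathbb{R}$ this expression is $\delta(0)\Pi_{\rm phys}$, not the proper projector Eq.~\eqref{impproj}, so the chain of equalities fails at its first step.~Your reformulation in terms of cases (a)/(b) — equivalently, in terms of whether $\Pi_{\rm phys}$ is a proper or improper projector — is the correct statement, and your identification of the normalisation of the system energy eigenstates as the decisive mechanism is exactly what makes this clean.~However, do not brush this aside as a ``measure-zero coincidence'': the Lemma is a mathematical claim and Example~\ref{ex_4} is an explicit counterexample constructed in this very paper.~The hypothesis should simply be ``zero in the continuous spectrum of $\hat C_H$'' rather than ``$G=\mathbb{R}$''; that is a strict sub-case of the latter.
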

\begin{proof}
\proofapp 
\end{proof}

The interpretation of Lemma~\ref{lem_6} is clear: the group average used to obtain the physical states is oblivious to the clock cycle and simply counts all cycles that fit into the group $G$, during which the periodic clock runs through the same value multiple times. When $G$ generates a compact group, this ``overcounting'' is finite, and compensated by the normalisation constant of the group average, giving the correct inner product.  When $G$ generates a non-compact group, however, this ``overcounting'' occurs infinitely many times, and cannot be accounted for by a finite normalisation constant, leading to the failure of the conditional inner product.

To illustrate the importance of choosing the correct inner product, we can contrast the definition of the conditional probability densities $P(f_S|\tau)$ above with the following na\"ive (but standard) definition with respect to the kinematical inner product:
\ba
    \tilde{P}(f_S|\tau) &\ce& \f{\braket{\psi_{\rm phys}|\, (\ket{\tau}\!\bra{\tau}\otimes \ket{f_S^{\rm phys}}\!\bra{f_S^{\rm phys}}) \,|\psi_{\rm phys}}_{\rm kin}}{\braket{\psi_{\rm phys}|\, (\ket{\tau}\!\bra{\tau}\otimes I_{S}) \,|\psi_{\rm phys}}_{\rm kin}} . \nn
\ea
As we show in Appendix~\ref{sFailureCondProbPW}, choosing a physical state $\ket{\phi_{\rm phys}}$ such that $\ket{f_S^{\rm phys}}=\calr_{\mathbf S}(\tau)\ket{\phi_{\rm phys}}$, this na\"ive conditional probability can be written
\ba 
    \tilde{P}(f_S|\tau) &\ce& \f{|\braket{\phi_{\rm phys}|\, (\ket{\tau}\!\bra{\tau}\otimes I_{S}) \,|\psi_{\rm phys}}_{\rm kin} |^{2}
    }{\braket{\psi_{\rm phys}|\, (\ket{\tau}\!\bra{\tau}\otimes I_{S}) \,|\psi_{\rm phys}}_{\rm kin}} .
\ea
Applying Lemma~\ref{lem_6} to both the numerator and denominator, one can then show that in the case where $G=\rm{U}(1)$, $\tilde{P}(f_S|\tau)$ coincides with $P(f_S|\tau)$, but when $G=\mathbb{R}$ is the translation group, $\tilde{P}(f_S|\tau)$ diverges (see Appendix~\ref{sFailureCondProbPW}). In other words, an incorrect choice of inner product in the definition of the conditional probability densities will lead them to diverge in the case of a perodic clock describing an aperiodic system.

\subsection{The relational Heisenberg picture (quantum deparametrisation)}

In Sec.~\ref{sec_redps} we discussed the construction of classical reduced phase spaces through gauge-fixing which encounters global challenges when the clock is periodic (see also \cite{Vanrietvelde:2018pgb,Vanrietvelde:2018dit,hoehnHowSwitchRelational2018,Hoehn:2018whn,Hoehn:2019owq,HLSrelativistic}).~This symmetry reduction leads to a deparametrisation of the relational dynamics.~We shall now extend the quantum version of this deparametrisation from monotonic clocks \cite{hoehnHowSwitchRelational2018,Hoehn:2018whn,Hoehn:2019owq,HLSrelativistic} to periodic clocks, resulting in a relational Heisenberg picture. 

As in \cite{Hoehn:2019owq,HLSrelativistic}, this reduced quantum theory will be unitarily equivalent to the relational Schr\"odinger picture of the Page-Wootters formalism of the previous subsection and can be interpreted as the description of the dynamics of the system $S$ relative to the temporal reference frame defined by the periodic clock $C$.~Quantum deparametrisation consists of two steps: (i) transform the constraint in such a way that it only acts on the chosen reference system (here the periodic clock $C$), fixing its now redundant degrees of freedom, while retaining the evolving system degrees of freedom as the unconstrained and independent ones (\emph{constraint trivialisation}); (ii) condition on a classical gauge-fixing condition of the unraveled clock $T=\tau$ to remove the redundant clock degrees of freedom.~This symmetry induced redundancy only arises on solutions to the constraint and as such the quantum deparametrisation procedure will only be invertible on solutions to the constraint.

\subsubsection{Constraint trivialisation: kinematically disentangling the clock}

The constraint trivialisation map is defined by \cite{Vanrietvelde:2018pgb,Vanrietvelde:2018dit,hoehnHowSwitchRelational2018,Hoehn:2018whn,Hoehn:2019owq,HLSrelativistic}
\ba
\ct_{C}&\ce&\sum_{n=0}^\infty\,\f{i^n}{n!}\,\hat\phi^{(n)}\otimes\left(\hat H_S+\varepsilon_*I_S\right)^n\nn\\
&=&\f{1}{t_{\rm max}}\int_0^{t_{\rm max}} d\phi\,\ket{\phi}\!\bra{\phi}\otimes e^{i\phi(\hat H_S+\varepsilon_*I_S)},\label{trivialisation}
\ea
where
$
\varepsilon_*
$
is any energy eigenvalue of the periodic clock such that $-\varepsilon_*\in\sigma_{S\vert C}$ and $\hat\phi^{(n)}$ is the $n^{\rm th}$-moment operator of the covariant and periodic clock POVM $E_{\phi_C}$, see Sec.~\ref{sec_covpovm}.~The constraint trivialisation defines an isometry $\ct_{C}:\ch_{\rm phys}\rightarrow\ct_C(\ch_{\rm phys})$ from the physical Hilbert space into a new, `trivialised' physical Hilbert space $\ct_C(\ch_{\rm phys})$ in which only the clock, i.e.\ reference degrees of freedom are constrained.~Including $\varepsilon_*$ in the map will be necessary in order to render it invertible on solutions to the constraint.
\begin{Lemma}\label{lem_triv1}
On solutions to the constraint in Eq.~\eqref{WDW}, the inverse $\ct_C(\ch_{\rm phys})\rightarrow\ch_{\rm phys}$ of the constraint trivialisation map is given by
\ba
\ct_{C}^{(-1)}=\f{1}{t_{\rm max}}\int_0^{t_{\rm max}} d\phi\,\ket{\phi}\!\bra{\phi}\otimes e^{-i\phi(\hat H_S+\varepsilon_*I_S)},\label{trivinv}
\ea
so that
\ba
\ct_{C}^{(-1)}\cdot\ct_{C}\approx I_{\rm phys}.\nn
\ea
\end{Lemma}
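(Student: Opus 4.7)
The plan is to verify the claim by direct computation on the orthonormal basis of $\mathcal{H}_{\rm phys}$ given by the kets $\ket{-E}_C\otimes\ket{E,\sigma_E}_S$ with $E\in\sigma_{S\vert C}$ (cf.~Eq.~\eqref{crap}), exploiting the fact that a physical state is supported only on clock energies $-E$ compatible with the constraint. By linearity and continuity in the physical inner product, it suffices to establish $\mathcal{T}_C^{(-1)}\cdot\mathcal{T}_C\ket{-E}_C\otimes\ket{E,\sigma_E}_S=\ket{-E}_C\otimes\ket{E,\sigma_E}_S$ for each such basis element.

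First I would apply $\mathcal{T}_C$ to $\ket{-E}_C\otimes\ket{E,\sigma_E}_S$. Since $\ket{E,\sigma_E}_S$ is an eigenstate of $\hat H_S$, the system exponential in Eq.~\eqref{trivialisation} simply produces the phase $e^{i\phi(E+\varepsilon_*)}$, while the inner product $\braket{\phi|-E}_C=e^{-ig(-E)}e^{-iE\phi}$ follows from Eq.~\eqref{discreteClockState}. Combining these, the $E$-dependent phases cancel inside the $\phi$-integral and one is left with
\begin{equation}
\mathcal{T}_C\,\ket{-E}_C\otimes\ket{E,\sigma_E}_S = e^{-ig(-E)}\,\frac{1}{t_{\rm max}}\!\int_0^{t_{\rm max}}\!d\phi\,e^{i\varepsilon_*\phi}\ket{\phi}_C\otimes\ket{E,\sigma_E}_S .\nn
\end{equation}
Expanding $\ket{\phi}_C$ in the energy eigenbasis and invoking the Kronecker-delta identity Eq.~\eqref{id2}, the $\phi$-integral collapses the superposition onto the single eigenvector $\ket{\varepsilon_*}_C$, giving $\mathcal{T}_C\ket{-E}_C\otimes\ket{E,\sigma_E}_S=e^{-ig(-E)+ig(\varepsilon_*)}\ket{\varepsilon_*}_C\otimes\ket{E,\sigma_E}_S$. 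This is the essence of trivialisation: all physical basis states are mapped to a product with the fixed clock factor $\ket{\varepsilon_*}_C$, which is why $\varepsilon_*$ must be chosen with $-\varepsilon_*\in\sigma_{S\vert C}$ to guarantee the image lies in $\mathcal{T}_C(\mathcal{H}_{\rm phys})$ coherently.

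Next I would apply $\mathcal{T}_C^{(-1)}$ as given in Eq.~\eqref{trivinv} to the resulting state. The same manipulation, now with the sign of the system exponent reversed, yields $\braket{\phi|\varepsilon_*}_C\,e^{-i\phi(E+\varepsilon_*)}=e^{-ig(\varepsilon_*)}e^{-iE\phi}$, and Eq.~\eqref{id2} picks out the component $\ket{\varepsilon_j}_C$ with $\varepsilon_j=-E$. Because $E\in\sigma_{S\vert C}$ ensures $-E$ is in $\spec(\hat H_C)$, this component exists and one obtains
\begin{equation}
\mathcal{T}_C^{(-1)}\,\ket{\varepsilon_*}_C\otimes\ket{E,\sigma_E}_S = e^{ig(-E)-ig(\varepsilon_*)}\ket{-E}_C\otimes\ket{E,\sigma_E}_S .\nn
\end{equation}
The two phase factors exactly cancel upon composing, yielding $\mathcal{T}_C^{(-1)}\cdot\mathcal{T}_C\ket{-E}_C\otimes\ket{E,\sigma_E}_S=\ket{-E}_C\otimes\ket{E,\sigma_E}_S$, and linearity completes the proof of $\mathcal{T}_C^{(-1)}\cdot\mathcal{T}_C\approx I_{\rm phys}$.

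The main obstacle here is conceptual rather than computational: one must appreciate that $\mathcal{T}_C$ is \emph{not} invertible on all of $\mathcal{H}_{\rm kin}$, since its action involves the periodic clock POVM and so has a large kernel (kinematical clock energies $\varepsilon_j\neq-E$ simply do not survive the $\phi$-integral). The inclusion of $\varepsilon_*$ in Eq.~\eqref{trivialisation} is precisely what ensures that, on the constrained subspace, $\mathcal{T}_C$ acts bijectively onto its image by relabelling every physical clock sector $\ket{-E}_C$ to the single sector $\ket{\varepsilon_*}_C$; without this shift the map would annihilate rather than reversibly compress the clock factor. One should also emphasise that the weak equality $\approx$ is essential: the reverse composition $\mathcal{T}_C\cdot\mathcal{T}_C^{(-1)}$ only acts as the identity on $\mathcal{T}_C(\mathcal{H}_{\rm phys})$, mirroring the logic of the Page-Wootters reduction maps and Lemma~\ref{lem_5}.
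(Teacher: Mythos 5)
Your proof is correct, and it arrives at the same conclusion via the same essential ingredients (expansion of the clock states in the $\hat H_C$-eigenbasis, the Kronecker-delta identity Eq.~\eqref{id2}, and the spectrum condition $E\in\sigma_{S\vert C}$ guaranteeing $-E\in\spec(\hat H_C)$). The organization differs somewhat from the paper's: the paper first forms the full operator product $\ct_C^{(-1)}\cdot\ct_C$, arriving at a triple sum over clock energies with a double $\phi$-integral, and only then applies it to an arbitrary $\ket{\psi_{\rm phys}}$; you instead act with $\ct_C$ first on the physical basis vectors, observe that each gets mapped to the fixed-clock-sector product state $e^{-ig(-E)+ig(\varepsilon_*)}\ket{\varepsilon_*}_C\otimes\ket{E,\sigma_E}_S$, and then apply $\ct_C^{(-1)}$. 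Your intermediate step is in fact exactly the second claim of Lemma~\ref{lem_8} (the kinematical disentangling of the clock), which the paper proves separately. So your route is marginally more economical: it establishes Lemma~\ref{lem_triv1} and the disentangling part of Lemma~\ref{lem_8} with a single pass, at the cost of presupposing that the basis $\{\ket{-E}_C\otimes\ket{E,\sigma_E}_S\}$ spans $\ch_{\rm phys}$ (which is secure given Eq.~\eqref{crap}). Your closing remark about the role of $\varepsilon_*$ and the non-invertibility of $\ct_C$ off the constraint surface correctly identifies the conceptual content the paper conveys through Lemma~\ref{lem_triv1} and Corollary~\ref{cor_triv1}.
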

\begin{proof}
\proofapp
\end{proof}
We emphasise that this inverse relation only holds for physical states.~This has an immediate consequence for the inverse direction:
\begin{corol}\label{cor_triv1}
Let $I_{\ct_C(\ch_{\rm phys})}$ and $\overset{*}{\approx}$ be the identity and weak equality on the trivialised physical Hilbert space $\ct_C(\ch_{\rm phys})$, respectively. Then
\ba
\ct_C\cdot\ct_C^{(-1)}\overset{*}{\approx}I_{\ct_C(\ch_{\rm phys})}.\nn
\ea
\end{corol}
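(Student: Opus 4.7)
The plan is to derive this as an almost immediate consequence of Lemma~\ref{lem_triv1}, by exploiting the defining property that every element of $\ct_C(\ch_{\rm phys})$ is, by construction, the image under $\ct_C$ of some physical state. The strategy is to take an arbitrary $\ket{\psi'} \in \ct_C(\ch_{\rm phys})$, write it as $\ket{\psi'} = \ct_C \ket{\psi_{\rm phys}}$ for some $\ket{\psi_{\rm phys}} \in \ch_{\rm phys}$ (which exists by surjectivity of $\ct_C$ onto its image), and then reduce the composition $\ct_C \cdot \ct_C^{(-1)}$ acting on $\ket{\psi'}$ to the composition $\ct_C^{(-1)} \cdot \ct_C$ acting on $\ket{\psi_{\rm phys}}$, to which Lemma~\ref{lem_triv1} applies.

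Concretely, I would write
\begin{align*}
\ct_C \cdot \ct_C^{(-1)}\,\ket{\psi'}
&= \ct_C \cdot \ct_C^{(-1)} \cdot \ct_C \,\ket{\psi_{\rm phys}} \\
&= \ct_C \left( \ct_C^{(-1)} \cdot \ct_C \right) \ket{\psi_{\rm phys}} \\
&\approx \ct_C \,\ket{\psi_{\rm phys}} \\
&= \ket{\psi'}\,,
\end{align*}
where the weak equality in the third line is precisely the content of Lemma~\ref{lem_triv1}, which guarantees $\ct_C^{(-1)} \cdot \ct_C \ket{\psi_{\rm phys}} = \ket{\psi_{\rm phys}}$ for any $\ket{\psi_{\rm phys}} \in \ch_{\rm phys}$. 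Since the resulting identity holds on every element of $\ct_C(\ch_{\rm phys})$, this is by definition $\ct_C \cdot \ct_C^{(-1)} \overset{*}{\approx} I_{\ct_C(\ch_{\rm phys})}$.

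The only subtlety worth flagging is why the equality is \emph{weak} on $\ct_C(\ch_{\rm phys})$ rather than strong: it is because $\ct_C^{(-1)}$ as defined in Eq.~\eqref{trivinv} is an integral operator on all of $\ch_{\rm kin}$ whose composition with $\ct_C$ only collapses to the identity after using the constraint equation (via Lemma~\ref{lem_triv1}). Outside the image of $\ch_{\rm phys}$, there is no reason for $\ct_C \cdot \ct_C^{(-1)}$ to act as the identity, which is exactly why the statement is framed in terms of $\overset{*}{\approx}$ on the trivialised physical Hilbert space. I do not anticipate a real obstacle here; the argument is a direct transport of Lemma~\ref{lem_triv1} along the isometry $\ct_C$.
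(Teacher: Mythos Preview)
Your argument is correct and is precisely the unpacking of what the paper leaves implicit: the corollary is stated there as an ``immediate consequence'' of Lemma~\ref{lem_triv1} with no separate proof, and your surjectivity-onto-the-image reasoning is the standard way to read that.
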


We are now ready to understand the key property of the trivialisation map: it transforms the constraint and physical state in such a way that only the clock degrees of freedom are constrained.

\begin{Lemma}\label{lem_8}
The map $\ct_C$ (weakly) trivialises the constraint in Eq.~\eqref{WDW} to the clock degrees of freedom on $\ct_C(\ch_{\rm phys})$, i.e.
\ba
\ct_C\,\hat C_H\,\ct_C^{(-1)}\overset{*}{\approx}\left(\hat H_C-\varepsilon_*\right)\otimes I_S.\nn
\ea
Furthermore, it transforms physical states in Eq.~\eqref{crap} into a product form (relative to the tensor factorization of $\ch_{\rm kin}$):
\ba
&&\ct_C\,\ket{\psi_{\rm phys}} =e^{ig(\varepsilon_*)} \ket{\varepsilon_*}_C\otimes\ket{\psi_S^{\rm phys}}\nn
\ea
with $\ket{\psi_S^{\rm phys}}\in\ch_S^{\rm phys}$ given by Eqs.~\eqref{physsysstate} and~\eqref{Heisstate}.
\end{Lemma}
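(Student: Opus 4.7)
The plan is to establish part~(ii) of the lemma (the product form of the trivialised physical state) first by direct computation, and then to deduce part~(i) (the trivialisation of the constraint) as a short consequence of part~(ii), the physical state condition, and Lemma~\ref{lem_triv1}.

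For part~(ii), I would start from the explicit form of the physical state $\ket{\psi_{\rm phys}}$ in Eq.~\eqref{crap} and apply the integral representation of $\ct_C$ from Eq.~\eqref{trivialisation}. The clock factor produces $\braket{\phi|-E}_C = e^{-ig(-E)}e^{-iE\phi}$, while the system factor $e^{i\phi(\hat H_S+\varepsilon_*)}$ acts as $e^{i\phi(E+\varepsilon_*)}$ on $\ket{E,\sigma_E}_S$. The two $E$-dependent phases inside the $\phi$-integral cancel, leaving only $e^{i\phi\varepsilon_*}$. Expanding $\ket{\phi}$ in the $\hat H_C$-eigenbasis and using the orthogonality identity~\eqref{id2} then collapses $\frac{1}{t_{\rm max}}\int_0^{t_{\rm max}}d\phi\,\ket{\phi}\,e^{i\phi\varepsilon_*}$ to $e^{ig(\varepsilon_*)}\ket{\varepsilon_*}_C$. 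The remaining coefficient $e^{-ig(-E)}\psi_{\rm kin}(-E,E,\sigma_E)$ is exactly $\psi_S(E,\sigma_E)$ as defined in Eq.~\eqref{Heisstate}, so one recovers the product form with $\ket{\psi_S^{\rm phys}}$ given by Eq.~\eqref{physsysstate}.

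For part~(i), with part~(ii) available I would evaluate both sides of the claimed equality on an arbitrary trivialised physical state $\ket{\tilde\psi_{\rm phys}}=\ct_C\ket{\psi_{\rm phys}}\in\ct_C(\ch_{\rm phys})$ and verify that both give zero. On the right-hand side, part~(ii) together with $(\hat H_C-\varepsilon_*)\ket{\varepsilon_*}_C=0$ immediately yields $\left((\hat H_C-\varepsilon_*)\otimes I_S\right)\ket{\tilde\psi_{\rm phys}}=0$. On the left-hand side, the inverse relation $\ct_C^{(-1)}\ct_C\approx I_{\rm phys}$ from Lemma~\ref{lem_triv1} gives $\ct_C\hat C_H\ct_C^{(-1)}\ket{\tilde\psi_{\rm phys}}=\ct_C\hat C_H\ket{\psi_{\rm phys}}$, which equals zero by the constraint equation~\eqref{WDW}. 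Both operators therefore agree on all of $\ct_C(\ch_{\rm phys})$, which is precisely the $\overset{*}{\approx}$ statement.

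The only real technical input is the orthogonality identity~\eqref{id2}, whose validity relies on the discrete and rational character of $\spec(\hat H_C)$ established in Sec.~\ref{sec_covpovm}; without this, the $\phi$-integral in part~(ii) would not collapse onto a single clock-energy eigenstate. The main conceptual caveat is that the weak equality in part~(i) holds only on $\ct_C(\ch_{\rm phys})$—both operators simply act as the zero operator there—and $(\hat H_C-\varepsilon_*)\otimes I_S$ is to be read as the defining condition selecting the trivialised physical subspace of tensor-product form, rather than as an operator identity on all of $\ch_{\rm kin}$.
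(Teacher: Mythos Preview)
Your argument for part~(ii) is correct and is essentially the computation the paper has in mind when it says the product form ``easily follows'' from Eqs.~\eqref{crap}, \eqref{trivialisation}, \eqref{discreteClockState} and~\eqref{id2}.

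For part~(i), your approach is correct but genuinely different from the paper's. The paper does \emph{not} deduce the constraint trivialisation from part~(ii); instead it proceeds in the opposite order, computing the commutator $[\ct_C,\hat H_C\otimes I_S]$ directly from the moment-operator identity of Lemma~\ref{lem_povm}. This produces $\ct_C\,\hat C_H\,\ct_C^{(-1)}=(\hat H_C-\varepsilon_*)\otimes I_S\,\ct_C\ct_C^{(-1)}$ plus an explicit boundary term proportional to $\ket{0}\!\bra{0}\otimes\bigl(e^{it_{\rm max}(\hat H_S+\varepsilon_*)}-I_S\bigr)\ct_C^{(-1)}$, and one then checks separately (using Lemma~\ref{lem_triv1} and the spectrum condition~\eqref{point}) that this correction vanishes on $\ct_C(\ch_{\rm phys})$. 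Your route is shorter and more conceptual: once part~(ii) is in hand, both sides of the claimed weak equality simply annihilate every trivialised physical state, so $\overset{*}{\approx}$ is immediate. The paper's approach, on the other hand, yields an explicit operator identity on $\ch_{\rm kin}$ with an identified obstruction term, which makes transparent \emph{why} the trivialisation is only weak and ties it back to the same $\ket{0}\!\bra{0}$ correction that spoils exact conjugacy in Lemma~\ref{lem_povm}.
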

\begin{proof}
\proofapp
\end{proof}
With respect to the kinematical tensor product decomposition, we may thus also call the trivialisation map a disentangling operation.~For a deeper discussion of this kinematical disentangling and the role of entanglement in relational models, see \cite{Hoehn:2019owq,HLSrelativistic,Hoehn:2021wet,Hoehn:2023ehz}.

\subsubsection{State reductions and embeddings}

Upon kinematically disentangling the clock from the evolving system and fixing the former to one of its energy eigenstates, the clock tensor factor has become entirely redundant and no longer carries any information about the original physical state.~We are thus free to remove it without losing information, just as in the Page-Wootters case.~Noting that $\braket{\tau|\varepsilon_*}_C=e^{-i(g(\varepsilon_*)-\varepsilon_*\tau)}$, we define the quantum deparametrisation map to the system physical Hilbert space, $\calr_{\mathbf H}:\ch_{\rm phys}\rightarrow\ch_{S}^{\rm phys}$, by
\ba
\calr_{\mathbf H}\ce \left(e^{-i\varepsilon_*\tau}{}_C\bra{\tau}\otimes I_S\right)\ct_C. \label{Hredmap}
\ea
Consistent with a relational Heisenberg picture, the image of this deparametrisation map is independent of $\tau$, i.e.
\ba
\calr_{\mathbf H}\,\ket{\psi_{\rm phys}} = \ket{\psi^{\rm phys}_S}.\label{Hredstate} 
\ea
In other words, $\calr_{\mathbf H}$ is independent of the clock reading $\tau$ on its entire domain of definition, $\ch_{\rm phys}$; for this reason we do not write $\calr_{\mathbf H}$ as a function of $\tau$. This contrasts with the up-to-a-phase periodicity of the Page-Wootters reduction map in Eq.~\eqref{redperiod}.
As such, the inverse reduction (or parametrisation) map, $\ch_S^{\rm phys}\rightarrow\ch_{\rm phys}$, is independent of $\tau$:
\ba
\calr_{\mathbf H}^{-1}\ce e^{ig(\varepsilon_*)}\,\ct_C^{(-1)}\left(\ket{\varepsilon_*}_C\otimes I_S\right).\label{QRinv}
\ea
The following lemma, extending results in \cite{Hoehn:2019owq,HLSrelativistic} to the $\rm{U}(1)$ case, establishes invertibility for physical states and shows that the Page-Wootters reduction and quantum deparametrisation maps are (weakly) unitarily equivalent.
\begin{Lemma}\label{lem_10}
The quantum deparametrisation map weakly equals the Page-Wootters reduction map and a system time evolution,
\ba
\calr_{\mathbf H}\approx U_S^\dag(\tau)\cdot \calr_{\mathbf S}(\tau),
\ea
while their inverses satisfy the strong relation for all $\tau$
\ba
\calr^{-1}_{\mathbf H}=\calr_{\mathbf S}^{-1}(\tau)\cdot U_S(\tau).\label{qrinv2}
\ea
In particular,
\ba
\calr_{\mathbf H}^{-1}\cdot\calr_{\mathbf H}&\approx& I_{\rm phys},\nn\\
\calr_{\mathbf H}\cdot\calr_{\mathbf H}^{-1}&\approx_S& I^{\rm phys}_S.\nn
\ea
\end{Lemma}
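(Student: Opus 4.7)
} The plan is to establish the two stated identities between the reduction maps by direct computation, leveraging Lemma~\ref{lem_8} (which tells us how $\ct_C$ acts on physical states) and the explicit form of $\ct_C^{(-1)}$ from Eq.~\eqref{trivinv}, and then to deduce the two invertibility relations from these by combining with Lemma~\ref{lem_5}.

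First, I would prove the weak identity $\calr_{\mathbf H}\approx U_S^\dag(\tau)\cdot\calr_{\mathbf S}(\tau)$ by evaluating both sides on an arbitrary physical state $\ket{\psi_{\rm phys}}$. On the left, inserting Lemma~\ref{lem_8} gives $\ct_C\ket{\psi_{\rm phys}}=e^{ig(\varepsilon_*)}\ket{\varepsilon_*}_C\otimes\ket{\psi_S^{\rm phys}}$, so that using the overlap $\braket{\tau|\varepsilon_*}_C=e^{-ig(\varepsilon_*)}e^{i\varepsilon_*\tau}$ (read off from Eq.~\eqref{discreteClockState}), the phase factors cancel cleanly, leaving $\calr_{\mathbf H}\ket{\psi_{\rm phys}}=\ket{\psi_S^{\rm phys}}$. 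On the right, recall from Eq.~\eqref{schrodexp} and the discussion below it that $\calr_{\mathbf S}(\tau)\ket{\psi_{\rm phys}}=U_S(\tau)\ket{\psi_S^{\rm phys}}$, so that $U_S^\dag(\tau)\calr_{\mathbf S}(\tau)\ket{\psi_{\rm phys}}=\ket{\psi_S^{\rm phys}}$ as well. Since this holds for every $\ket{\psi_{\rm phys}}\in\ch_{\rm phys}$, the weak identity follows.

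Next, I would establish the strong identity $\calr_{\mathbf H}^{-1}=\calr_{\mathbf S}^{-1}(\tau)\cdot U_S(\tau)$ by a pure operator computation, with no reference to the physical subspace. Substituting the explicit form Eq.~\eqref{trivinv} of $\ct_C^{(-1)}$ into Eq.~\eqref{QRinv} and again using $\braket{\phi|\varepsilon_*}_C=e^{-ig(\varepsilon_*)}e^{i\varepsilon_*\phi}$, the phases involving $g(\varepsilon_*)$ and $\varepsilon_*$ cancel and the integrand collapses to $\ket{\phi}\otimes U_S(\phi)$, yielding
\begin{equation}
\calr_{\mathbf H}^{-1}=\f{1}{t_{\rm max}}\int_0^{t_{\rm max}}d\phi\,\ket{\phi}\otimes U_S(\phi)\,.\nn
\end{equation}
On the other hand, the explicit form of $\calr_{\mathbf S}^{-1}(\tau)$ from Eq.~\eqref{PWinvred1} gives $\calr_{\mathbf S}^{-1}(\tau)\,U_S(\tau)=\f{1}{t_{\rm max}}\int_0^{t_{\rm max}}d\phi\,\ket{\phi}\otimes U_S(\phi-\tau)U_S(\tau)$, and using the group law $U_S(\phi-\tau)U_S(\tau)=U_S(\phi)$ on $\ch_S$ the two expressions agree exactly, which proves the strong equality.

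Finally, the two composition relations follow immediately. For $\calr_{\mathbf H}^{-1}\cdot\calr_{\mathbf H}$, combine the strong identity just proved with the weak identity from the first step: on $\ch_{\rm phys}$ one has $\calr_{\mathbf H}^{-1}\cdot\calr_{\mathbf H}\approx\calr_{\mathbf S}^{-1}(\tau)U_S(\tau)U_S^\dag(\tau)\calr_{\mathbf S}(\tau)=\calr_{\mathbf S}^{-1}(\tau)\calr_{\mathbf S}(\tau)\approx I_{\rm phys}$ by Lemma~\ref{lem_5}. For $\calr_{\mathbf H}\cdot\calr_{\mathbf H}^{-1}$, the care needed is that the weak identity $\calr_{\mathbf H}\approx U_S^\dag(\tau)\calr_{\mathbf S}(\tau)$ is valid only on $\ch_{\rm phys}$, but this is precisely where $\calr_{\mathbf H}^{-1}\ket{\psi_S^{\rm phys}}$ lives. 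Using that $U_S(\tau)$ preserves $\ch_S^{\rm phys}$ (since $\hat H_S$ commutes with $\Pi_{\sigma_{S\vert C}}$), we can apply Lemma~\ref{lem_5} on $U_S(\tau)\ket{\psi_S^{\rm phys}}\in\ch_S^{\rm phys}$ to conclude $\calr_{\mathbf H}\cdot\calr_{\mathbf H}^{-1}\approx_S U_S^\dag(\tau)\,I_S^{\rm phys}\,U_S(\tau)=I_S^{\rm phys}$. The only real subtlety to guard against is bookkeeping of the domains of validity of the various weak equalities, in particular distinguishing $\approx$ (on $\ch_{\rm phys}$) from $\approx_S$ (on $\ch_S^{\rm phys}$); this is the step where I would be most careful.
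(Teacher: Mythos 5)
Your proof is correct and follows essentially the same route as the paper's. The paper establishes the first (weak) identity from Eqs.~\eqref{schrodexp}--\eqref{Heisstate} and \eqref{Hredstate} applied to a physical state, which is what you do (you invoke Lemma~\ref{lem_8} directly, whereas the paper routes through Eq.~\eqref{Hredstate}, which is itself derived from Lemma~\ref{lem_8} -- same content). Your derivation of the strong identity by substituting Eq.~\eqref{trivinv} into Eq.~\eqref{QRinv}, using $\braket{\phi|\varepsilon_*}_C=e^{-i(g(\varepsilon_*)-\varepsilon_*\phi)}$, and comparing with Eq.~\eqref{PWinvred1} is precisely the paper's argument, spelled out. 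The invertibility relations via Lemma~\ref{lem_5} also match; your extra care about $U_S(\tau)$ preserving $\ch_S^{\rm phys}$ and the domain of validity of $\approx$ versus $\approx_S$ is a useful amount of explicitness that the paper leaves implicit.
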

\begin{proof}
\proofapp
\end{proof}

\subsubsection{Reduction and embedding of observables}

As in the Page-Wootters case, we can exploit the quantum deparametrisation to embed evolving Heisenberg observables (recall their periodicity, Eq.~\eqref{periodicc})
\ba
\hat f_S^{\rm phys}(\tau)=U^\dag_S(\tau)\,\hat f_S^{\rm phys}\,U_S(\tau)\nn
\ea 
acting on the physical system Hilbert space $\ch_S^{\rm phys}$ into the algebra of Dirac observables $\cl(\ch_{\rm phys})$ via
\ba
\mathcal{E}_{\mathbf H}\left(\hat f_S^{\rm phys}(\tau)\right)\ce\calr_{\mathbf H}^{-1}\,\hat f_S^{\rm phys}(\tau)\,\calr_{\mathbf H},
\ea
Recall that, despite the apparent dependence of the reduction map $\calr_{\mathbf H}$ in Eq.~\eqref{Hredmap} on a clock reading, there is in fact no such dependence across its entire domain of definition, as shown in Eq.~\eqref{Hredstate}. The definition of the encoding map $\mathcal{E}_{\mathbf H}(\cdot)$ is therefore likewise independent of clock reading; dependence on clock reading results only from its argument, as the following theorem shows.

\begin{theorem}\label{thm_relobs2}
Let $\hat f_S^{\rm phys}(\tau)\in\cl(\ch_S^{\rm phys})$ be any evolving Heisenberg observable on the physical system Hilbert space.~Its embedding coincides weakly with the quantum relational Dirac observable in Eq.~\eqref{qrelobs},
\ba
\mathcal{E}_{\mathbf H}\left(\hat f_S^{\rm phys}(\tau)\right)\approx\hat F_{f_S^{\rm phys},T}(\tau).
\ea
Conversely, the quantum deparametrisation of a quantum relational Dirac observable weakly yields the corresponding relational Heisenberg observable on the physical system Hilbert space,
\ba
\calr_{\mathbf H}\,\hat F_{f_S^{\rm phys},T}(\tau)\,\calr_{\mathbf H}^{-1}\approx_S\hat f_S^{\rm phys}(\tau).
\ea
\end{theorem}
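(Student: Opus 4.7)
The plan is to reduce Theorem~\ref{thm_relobs2} to the analogous Page--Wootters statement in Theorem~\ref{thm_relobs} by means of the unitary equivalence between the Heisenberg and Schr\"odinger reduction maps established in Lemma~\ref{lem_10}. Both the embedding and the reduction statements have exactly parallel structures, so the same two-step strategy will work for each: (i) rewrite the reduction maps $\calr_{\mathbf H}$ and $\calr_{\mathbf H}^{-1}$ in terms of $\calr_{\mathbf S}(\tau)$ and $\calr_{\mathbf S}^{-1}(\tau)$, picking up factors of $U_S(\tau)$; (ii) absorb those $U_S(\tau)$ factors into the argument to convert between Heisenberg and Schr\"odinger system operators; (iii) invoke Theorem~\ref{thm_relobs}.

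Concretely, for the embedding statement I would begin by unfolding
\begin{equation}
\mathcal{E}_{\mathbf H}\!\left(\hat f_S^{\rm phys}(\tau)\right)
=\calr_{\mathbf H}^{-1}\,\hat f_S^{\rm phys}(\tau)\,\calr_{\mathbf H}
\end{equation}
and substituting $\calr_{\mathbf H}^{-1}=\calr_{\mathbf S}^{-1}(\tau)\,U_S(\tau)$ (which holds strongly by Lemma~\ref{lem_10}) together with $\calr_{\mathbf H}\approx U_S^{\dagger}(\tau)\,\calr_{\mathbf S}(\tau)$ (weak equality on $\ch_{\rm phys}$). Using the definition $\hat f_S^{\rm phys}(\tau)=U_S^{\dagger}(\tau)\hat f_S^{\rm phys}U_S(\tau)$, the evolution operators telescope and leave
\begin{equation}
\mathcal{E}_{\mathbf H}\!\left(\hat f_S^{\rm phys}(\tau)\right)
\approx\calr_{\mathbf S}^{-1}(\tau)\,\hat f_S^{\rm phys}\,\calr_{\mathbf S}(\tau)
=\mathcal{E}_{\mathbf S}^{\tau}\!\left(\hat f_S^{\rm phys}\right)
\approx\hat F_{f_S^{\rm phys},T}(\tau),
\end{equation}
where the last step is Theorem~\ref{thm_relobs}. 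Care is needed to track that $\approx$ here is a weak equality on $\ch_{\rm phys}$: since $\calr_{\mathbf H}$ acts on a physical state first, the weak equality from Lemma~\ref{lem_10} applies; everything else is a strong equality.

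For the reduction statement the argument runs symmetrically. I would compute
\begin{equation}
\calr_{\mathbf H}\,\hat F_{f_S^{\rm phys},T}(\tau)\,\calr_{\mathbf H}^{-1}
\approx U_S^{\dagger}(\tau)\,\calr_{\mathbf S}(\tau)\,\hat F_{f_S^{\rm phys},T}(\tau)\,\calr_{\mathbf S}^{-1}(\tau)\,U_S(\tau),
\end{equation}
again using Lemma~\ref{lem_10}. Applying Theorem~\ref{thm_relobs}, the inner three factors reduce to $\hat f_S^{\rm phys}$ in the system weak sense $\approx_S$, so that the full expression becomes $U_S^{\dagger}(\tau)\,\hat f_S^{\rm phys}\,U_S(\tau)=\hat f_S^{\rm phys}(\tau)$ on $\ch_S^{\rm phys}$. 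The subtlety in this direction is that $\approx_S$ propagates through the outer unitaries $U_S^{\dagger}(\tau)$ and $U_S(\tau)$: since these leave $\ch_S^{\rm phys}$ invariant (the physical system Hilbert space is spanned by eigenstates of $\hat H_S$ with eigenvalues in $\sigma_{S|C}$), conjugating by them preserves weak equality on $\ch_S^{\rm phys}$.

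The only potential obstacle is the careful bookkeeping of the two different weak equalities ($\approx$ on $\ch_{\rm phys}$ versus $\approx_S$ on $\ch_S^{\rm phys}$) and making sure that the weak relations in Lemma~\ref{lem_10} compose correctly when inserted into an embedding/reduction sandwich; in particular one must verify that the $U_S(\tau)$ emerging from the inverse map genuinely cancels against the $U_S^{\dagger}(\tau)$ hidden in $\hat f_S^{\rm phys}(\tau)$ before any weakness is invoked, so that no projection onto $\ch_S^{\rm phys}$ is needed to close the cancellation. Beyond that, the proof is essentially a one-line chain once Lemma~\ref{lem_10} and Theorem~\ref{thm_relobs} are in hand.
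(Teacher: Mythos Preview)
Your proposal is correct and follows essentially the same route as the paper: both reduce the statement to Theorem~\ref{thm_relobs} via the unitary equivalence of reduction maps in Lemma~\ref{lem_10}, telescoping the $U_S(\tau)$ factors through the Heisenberg definition of $\hat f_S^{\rm phys}(\tau)$. The paper's version introduces a second clock reading $\tau'$ in the reduction map and then uses $U_S(\tau-\tau')\calr_{\mathbf S}(\tau')\approx\calr_{\mathbf S}(\tau)$ to absorb it, but your choice to use the same $\tau$ throughout is equally valid and slightly more direct; your explicit bookkeeping of the two weak equalities and the remark that $U_S(\tau)$ preserves $\ch_S^{\rm phys}$ are nice touches that the paper leaves implicit.
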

\begin{proof}
\proofapp
\end{proof}

\subsubsection{Preservation of physical expectation values and inner product}

The expectation values of physical observables are preserved by the embedding and deparametrisation maps.

\begin{theorem}\label{thm_expobs2}
Let $\hat f_S^{\rm phys}(\tau)\in\cl(\ch_S^{\rm phys})$ be any evolving Heisenberg observable.~Its expectation value in the inner product on $\ch_S^{\rm phys}$, given in Eq.~\eqref{sysPIP}, coincides with the expectation value of the corresponding relational observable evaluated in the physical inner product on $\ch_{\rm phys}$, given in Eq.~\eqref{PIP}, i.e.\
\ba
\braket{\phi_{\rm phys}|\,\hat F_{f_S^{\rm phys},T}(\tau)\,|\psi_{\rm phys}}_{\rm phys}&=&\braket{\phi_S^{\rm phys}|\,\hat f_S^{\rm phys}(\tau)\,|\psi_S^{\rm phys}}_{\ch_S^{\rm phys}}\nn\\
&=&\braket{\phi_S|\,\hat f_S^{\rm phys}(\tau)\,|\psi_S^{\rm phys}}_S,\nn
\ea
where 
\begin{itemize}
\item[(i)] physical states and physical system states are related by the deparametrisation, ${\ket{\psi_S^{\rm phys}}:=\calr_{\mathbf H}\,\ket{\psi_{\rm phys}}}$ and similarly for $\ket{\phi_S^{\rm phys}}$, and 
\item[(ii)] $\ket{\phi_S}\in\ch_S$ is any kinematical system state such that $\Pi_{\sigma_{S\vert C}}\,\ket{\phi_S} = \calr_{\mathbf H}(\tau)\ket{\phi_{\rm phys}}=\ket{\phi_S^{\rm phys}}\in\ch_S^{\rm phys}$.
\end{itemize}
\end{theorem}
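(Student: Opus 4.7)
The plan is to reduce Theorem~\ref{thm_expobs2} to the already established Page--Wootters version (Theorem~\ref{thm_expobs}) by exploiting the weak unitary equivalence between the two reduction maps provided by Lemma~\ref{lem_10}. The statement contains two equalities: one between the physical matrix element on $\ch_{\rm phys}$ and a matrix element in the Heisenberg picture on $\ch_S^{\rm phys}$, and one between the latter and a kinematical matrix element. I would treat them separately.

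For the first equality, I would start from the right-hand side and rewrite the Heisenberg observable as $\hat f_S^{\rm phys}(\tau)=U_S^\dag(\tau)\hat f_S^{\rm phys}U_S(\tau)$. Since $U_S(\tau)$ preserves $\ch_S^{\rm phys}$ (it acts diagonally on the energy basis labelled by $E\in\sigma_{S\vert C}$), the unitary can be absorbed into the states. The key step is to identify the transformed states with the Page--Wootters reduced states at time $\tau$: using Lemma~\ref{lem_10}, one has $\ket{\psi_S^{\rm phys}}=\calr_{\mathbf H}\ket{\psi_{\rm phys}}\approx U_S^\dag(\tau)\calr_{\mathbf S}(\tau)\ket{\psi_{\rm phys}}=U_S^\dag(\tau)\ket{\psi_S^{\rm phys}(\tau)}$, and analogously for $\ket{\phi_S^{\rm phys}}$. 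Hence
\begin{align*}
&\braket{\phi_S^{\rm phys}|\hat f_S^{\rm phys}(\tau)|\psi_S^{\rm phys}}_{\ch_S^{\rm phys}}\\
&\q=\braket{\phi_S^{\rm phys}(\tau)|\hat f_S^{\rm phys}|\psi_S^{\rm phys}(\tau)}_{\ch_S^{\rm phys}}\,,
\end{align*}
and the first equality of Theorem~\ref{thm_expobs2} follows immediately by applying Theorem~\ref{thm_expobs} to the right-hand side, with the Schr\"odinger state $\ket{\psi_S^{\rm phys}(\tau)}=\calr_{\mathbf S}(\tau)\ket{\psi_{\rm phys}}$ corresponding to the physical state $\ket{\psi_{\rm phys}}$. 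An equivalent route would be to start from $\braket{\phi_{\rm phys}|\hat F_{f_S^{\rm phys},T}(\tau)|\psi_{\rm phys}}_{\rm phys}$, insert Theorem~\ref{thm_relobs2} to replace the relational observable by its Heisenberg embedding $\calr_{\mathbf H}^{-1}\hat f_S^{\rm phys}(\tau)\calr_{\mathbf H}$, and use the invertibility relations of Lemma~\ref{lem_10} together with Corollary~\ref{corolInProdsPW} (the deparametrisation analog of which follows from Lemma~\ref{lem_10} and Corollary~\ref{corolInProdsPW}).

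For the second equality, I would invoke the definition of the physical system inner product in Eq.~\eqref{sysPIP}: for any $\ket{\phi_S}\in\ch_S$ such that $\Pi_{\sigma_{S\vert C}}\ket{\phi_S}=\ket{\phi_S^{\rm phys}}$, one has $\braket{\phi_S^{\rm phys}|\chi_S^{\rm phys}}_{\ch_S^{\rm phys}}=\braket{\phi_S|\chi_S^{\rm phys}}_S$ for any $\ket{\chi_S^{\rm phys}}\in\ch_S^{\rm phys}$. Applying this with $\ket{\chi_S^{\rm phys}}=\hat f_S^{\rm phys}(\tau)\ket{\psi_S^{\rm phys}}$ then yields the required identity, provided that $\hat f_S^{\rm phys}(\tau)\ket{\psi_S^{\rm phys}}$ remains in $\ch_S^{\rm phys}$. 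This is guaranteed by the definition \eqref{physsobs} of physical system observables combined with the fact that $U_S(\tau)$ leaves $\ch_S^{\rm phys}$ invariant.

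The main subtlety, and the step I expect to require the most care, is case (b) of Sec.~\ref{sec_Dirac} where the relevant part of $\mathrm{spec}(\hat H_S)$ is continuous, so that $\ch_S^{\rm phys}$ is \emph{not} a subspace of $\ch_S$ and physical system states are distributional with respect to $\braket{\cdot|\cdot}_S$. In that case, the intermediate expression $\braket{\phi_S|\hat f_S^{\rm phys}(\tau)|\psi_S^{\rm phys}}_S$ has to be understood in the distributional sense outlined after Eq.~\eqref{sysPIP}, and one should check that the rearrangement of $U_S(\tau)$ between states and observables does not spoil square-summability of the physical wave functions. Once this is properly handled, the argument proceeds uniformly in both cases, and all three matrix elements coincide.
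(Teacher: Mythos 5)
Your proposal is correct. The paper in fact omits a proof of Theorem~\ref{thm_expobs2}, remarking only that it is ``analogous to that of Theorem~\ref{thm_expobs} upon invoking Lemma~\ref{lem_10} and Theorem~\ref{thm_relobs2}''; your primary route is a clean realisation of exactly that idea, differing only in organisation: rather than redoing the integral computation of Theorem~\ref{thm_expobs}'s proof with $\calr_{\mathbf H}^{-1}$ in place of $\calr_{\mathbf S}^{-1}(\tau)$, you first use Lemma~\ref{lem_10} to identify $\ket{\psi_S^{\rm phys}}=U_S^\dag(\tau)\ket{\psi_S^{\rm phys}(\tau)}$ and absorb the conjugating unitaries $U_S^\dag(\tau)\,\cdot\,U_S(\tau)$ of the Heisenberg observable into the states (legitimate since $U_S(\tau)$ is diagonal on the energy basis labelled by $\sigma_{S\vert C}$, hence a unitary on $\ch_S^{\rm phys}$ also with respect to the modified inner product of case (b)), and then invoke the already-proved Theorem~\ref{thm_expobs} as a black box. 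Your alternative route, via Theorem~\ref{thm_relobs2} and the invertibility relations in Lemma~\ref{lem_10}, is the reading that matches the paper's hint most literally and amounts to transcribing the proof of Theorem~\ref{thm_expobs} with $\calr_{\mathbf H}$ in place of $\calr_{\mathbf S}(\tau)$. Your treatment of the second equality via Eq.~\eqref{sysPIP}, and your flagging of the distributional subtlety in case (b), are likewise consistent with the framework set up in Sec.~\ref{sec_HSphys}.
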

The proof of Theorem~\ref{thm_expobs2} is analogous to that of Theorem~\ref{thm_expobs} upon invoking Lemma~\ref{lem_10} and Theorem~\ref{thm_relobs2}, and is thus omitted (see also \cite{hoehnHowSwitchRelational2018,Hoehn:2018whn,Hoehn:2019owq,HLSrelativistic} for the case of monotonic clocks). 

Theorem~\ref{thm_expobs2} entails that quantum deparametrisation defines an isometry.
\begin{corol}
Setting $\hat f_S^{\rm phys}=I_S$ in Theorem~\ref{thm_expobs2}, we find that quantum deparametrisation $\calr_{\mathbf H}:\ch_{\rm phys}\rightarrow\ch_S^{\rm phys}$ preserves the inner product, i.e. 
\ba
\braket{\phi_{\rm phys}|\psi_{\rm phys}}_{\rm phys}=\braket{\phi_S^{\rm phys}|\psi_S^{\rm phys}}_{\ch_S^{\rm phys}},\nn
\ea
where ${\ket{\psi_S^{\rm phys}}:=\calr_{\mathbf H}\,\ket{\psi_{\rm phys}}}$ and similarly for $\ket{\phi_S^{\rm phys}}$.
\end{corol}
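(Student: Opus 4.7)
The plan is to apply Theorem~\ref{thm_expobs2} directly with the choice $\hat f_S^{\rm phys}=I_S$, noting that the identity trivially qualifies as a physical system observable under Eq.~\eqref{physsobs}, since $[I_S,\hat H_S]=0$ so that $I_S^{\rm phys}=I_S\restriction{\ch_S^{\rm phys}}$, which is nothing but the identity $I_S^{\rm phys}$ on $\ch_S^{\rm phys}$.~Moreover, this choice is evidently weakly $t_{\rm max}$-periodic in the sense of Eq.~\eqref{condobs}, so the corresponding relational observable is a bona fide weak Dirac observable in the sense of Theorem~\ref{lem_noDirac}.

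Substituting $\hat f_S^{\rm phys}=I_S$ into the right-hand side of Theorem~\ref{thm_expobs2} immediately yields $\braket{\phi_S^{\rm phys}|\psi_S^{\rm phys}}_{\ch_S^{\rm phys}}$, so the only thing left to verify is that the left-hand side collapses to the physical inner product $\braket{\phi_{\rm phys}|\psi_{\rm phys}}_{\rm phys}$.~To this end, I would plug $\hat f_S=I_S$ into the power-series/integral formula for the relational observable in Eq.~\eqref{qrelobs}; the Heisenberg conjugations by $U_S$ drop out since $I_S$ is a constant of motion, and what remains is
\begin{equation}
\hat F_{I_S,T}(\tau)=\f{1}{t_{\rm max}}\int_{0}^{t_{\rm max}}d\phi\,\ket{\phi}\!\bra{\phi}\otimes I_S=I_C\otimes I_S,
\end{equation}
where the last equality uses the POVM resolution of identity Eq.~\eqref{resolid}.~Hence $\hat F_{I_S,T}(\tau)$ coincides with the kinematical identity strongly, and in particular $\hat F_{I_S,T}(\tau)\approx I_{\rm phys}$ on $\ch_{\rm phys}$.

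Combining these two observations, Theorem~\ref{thm_expobs2} evaluated at $\hat f_S^{\rm phys}=I_S$ reads
\begin{equation}
\braket{\phi_{\rm phys}|\psi_{\rm phys}}_{\rm phys}=\braket{\phi_{\rm phys}|\,\hat F_{I_S,T}(\tau)\,|\psi_{\rm phys}}_{\rm phys}=\braket{\phi_S^{\rm phys}|\psi_S^{\rm phys}}_{\ch_S^{\rm phys}},
\end{equation}
with $\ket{\psi_S^{\rm phys}}=\calr_{\mathbf H}\,\ket{\psi_{\rm phys}}$ and analogously for $\ket{\phi_S^{\rm phys}}$, which is the claimed identity.~There is no real obstacle here; the only subtle point worth flagging explicitly is that Theorem~\ref{thm_expobs2} relies on the Heisenberg embedding producing a weak Dirac observable for physical system observables (Theorem~\ref{thm_relobs2}), and the clock-periodicity of $I_S$ that makes $\hat F_{I_S,T}(\tau)$ a valid Dirac observable is precisely what is ensured, at the trivial level, by the identity being a constant of motion.~Thus the corollary follows without further work.
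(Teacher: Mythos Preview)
Your proposal is correct and follows exactly the approach the paper intends: the corollary is stated as an immediate specialisation of Theorem~\ref{thm_expobs2} to $\hat f_S^{\rm phys}=I_S$, and you have spelled out the two trivial verifications (that $I_S$ is a physical system observable and that $\hat F_{I_S,T}(\tau)=I_C\otimes I_S$ via Eq.~\eqref{resolid}) which the paper leaves implicit.
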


This extends the equivalence between the relational dynamics in terms of relational Dirac observables on the physical Hilbert space and the relational Heisenberg picture of the quantum deparametrised theory from monotonic clocks (incl.\ relativistic ones)  \cite{hoehnHowSwitchRelational2018,Hoehn:2018whn,Hoehn:2019owq,HLSrelativistic} to periodic clocks.~We recall, however, that for periodic clocks, the physical Hilbert space may be `too small' to support non-trivial periodic S-observables yielding global Dirac observables as it was e.g.~the case for incommensurate oscillators (cfr.~Sec.~\ref{Sec:quantexamples}).~It is also clear that the relational Heisenberg picture of the quantum deparametrised theory is unitarily equivalent to the relational Schr\"odinger picture of the Page-Wootters formalism.

\section{Comparing dynamics with respect to periodic and aperiodic clocks}~\label{sec_clockchanges}

Given that a periodic clock entails a necessarily periodic relational dynamics, in contrast with an aperiodic clock, it is illustrative to consider a scenario in which we can compare dynamics with respect to each kind of clock.~To this end, we consider the case of a tripartite kinematical Hilbert space $\ch_{\rm kin}=\ch_A\otimes\ch_B\otimes\ch_S$ and a constraint operator of the form $\hat{C}_H = \hat{H}_A + \hat{H}_B + \hat{H}_S$, with $\hat{H}_i \in \cl (\ch_i)$ where $i=A,B,S$, we omit tensor factors of identity operators, and we assume for simplicity that each $\hat{H}_i$ is nondegenerate.~Furthermore, let $\hat{H}_A$ have a purely continuous spectrum, thus corresponding to an aperiodic clock, with a time parameter we denote by $\tau_{A}$ and a clock operator denoted by $\hat{Q}_A$ (which satisfies $e^{it \hat{H}_A}\hat{Q}_A e^{-it\hat{H}_A}=t\hat{I}_A+\hat{Q}_A$), and let $\hat{H}_B$ correspond to a periodic clock, whose time parameter we denote by $\phi_{B}$ and the first moment of the clock POVM (cf.~Eq.~\eqref{nthmoment}) is denoted by $\hat{\phi}_B$.

\subsection{Comparing relational observables}

Let us first consider the construction of the relational observables on $\ch_{\rm phys}$ with respect to the aperiodic clock $A$, i.e.~$\hat{F}_{f_{S},T_{A}}(\tau_{A})$ defined according to Eq.~(27) in~\cite{Hoehn:2019owq}.~Then $\hat{F}_{f_{S},T_{A}}(\tau_{A})$ is a strong Dirac observable, according to Theorem~1 in~\cite{Hoehn:2019owq}.~In this case the periodicity or aperiodicity of $\hat{F}_{f_{S},T_{A}}(\tau_{B})$ is determined by the respective periodicity or aperiodicity of $\hat{f}_{S}$.~Now, consider instead the relational observable with respect to the periodic clock $B$, i.e.~$\hat{F}_{f_{S},T_{B}}(\phi_{B})$, defined according to Eq.~\eqref{qrelobs}.~Then, according to Lemma~\ref{lem_ruin}, this is only a Dirac observable if $\hat{f}_{S}$ is weakly $t_\mathrm{max}$-periodic with respect to the unitary generated by $\hat{H}_{S}$ (where $t_\mathrm{max}$ is the period of the covariant time observable with respect to $\hat{H}_{B}$).~Thus, when this latter condition does not hold, in order to have a relational Dirac observable one must instead construct the relational observable with respect to the physical system observable associated with clock $B$, i.e.~$\hat{f}_{S}^{\rm phys}$ as defined in Eq.~\eqref{physsobs}, thus inducing $t_\mathrm{max}$-periodicity.~However, $\hat{f}_{S}^{\rm phys}$ is a distinct observable to the $\hat{f_{S}}$ with respect to which $\hat{F}_{f_{S},T_{A}}(\tau_{A})$ was constructed. Indeed, it is now generally an operator pertaining to both $S$ and $A$ because the `projector' $\Pi_{\sigma_{S|C}}$ in Eq.~\eqref{physsobs} now has to be replaced with the corresponding $\Pi_{\sigma_{AC|C}}$, which acts on both $AS$. The fact that $\hat{F}_{f_{S}^{\rm phys},T_{B}}(\phi_{B})$ and $\hat{F}_{f_{S},T_{A}}(\tau_{A})$ differ in periodicity (or lack thereof) in that case is therefore not surprising.

\subsection{Comparing perspectives}

Let us now compare the relational description obtained by performing a quantum reduction with respect to the aperiodic clock $A$ (see Sec.~V of~\cite{Hoehn:2019owq}) with one obtained by reducing with respect to the periodic clock $B$ (see Sec.~\ref{sec_reduction} above).~We denote the reduced Hilbert space with respect to clock $A$ by $\mathcal{H}_{BS\vert A}$, and the reduced space with respect to clock $B$ by $\mathcal{H}_{AS\vert B}$.

As in~\cite{Hoehn:2019owq} we can transform from the perspective of clock $A$ to that of clock $B$ by first applying the inverse reduction map with respect to clock $A$, and then the reduction map with respect to clock $B$ (using the Schr\"{o}dinger or Heisenberg picture version of each map as appropriate).~Then, for example, the relational Schr\"{o}dinger picture state $\ket{\psi_{BS\vert A}(\tau_{A})}\in\mathcal{H}_{BS\vert A}$ with respect to $A$ corresponds to the state $\ket{\psi_{AS\vert B}(\tau_{B})}=\Lambda^{A \to B}_{\rm \bf{S}}\ket{\psi_{BS\vert A}(\tau_{A})}\in\mathcal{H}_{AS\vert B}$ in the relational Schr\"{o}dinger picture with respect to clock $B$, with the frame change map $\Lambda^{A \to B}_{\rm \bf{S}}$ defined by
\begin{equation}
\Lambda^{A \to B}_{\rm \bf{S}} \ce \! \mathcal{R}_{\rm \bf{S}}(\tau_B) \circ \mathcal{R}^{-1}_{\rm \bf{S}}(\tau_A) ,
\end{equation}
where $\tau_A$ (respectively $\tau_B$) denotes the time read by clock $A$ (respectively clock $B$).~We recall that the inverse-reduction map $\mathcal{R}^{-1}_{\rm \bf{S}}(\tau_A)$ with respect to an aperiodic clock is as in Eq.~\eqref{PWinvred1}, except that $t_{\rm max}$ is replaced by $2\pi$ and the integration takes place over the entire real line~\cite{Hoehn:2019owq}.~An observable $\hat{O}_{BS|A}^{\rm phys}\in \cl (\ch_{BS\vert A})$ in the relational Schr\"{o}dinger picture with respect to clock $A$ then transforms to the observable $\hat{O}_{AS|B}^{\rm phys} (\tau_A, \! \tau_B)\in \cl (\ch_{AS\vert B})$ given by
\begin{align} \label{eObsTransSchr}
\hat{O}_{AS|B}^{\rm phys} (\tau_A, \! \tau_B) \!&=\! \Lambda^{A \to B}_{\rm \bf{S}} \hat{O}_{BS|A}^{\rm phys} \left(\Lambda^{A \to B}_{\rm \bf{S}} \right)^\dagger \\
&= \! \mathcal{R}_{\rm \bf{S}}(\tau_B) \circ \mathcal{E}_{\rm \bf{S}}^{\tau_A} \!\left( \hat{O}_{BS|A}^{\rm phys} \right) \! \circ \mathcal{R}^{-1}_{\rm \bf{S}}(\tau_B) \nn 
\end{align}
 Note that, despite the appearance of $\tau_A$ and $\tau_B$ in $\hat{O}_{AS|B}^{\rm phys} (\tau_A, \! \tau_B)$, it is in fact a Schr\"{o}dinger-picture observable (see~\cite{Hoehn:2019owq,HLSrelativistic} for a discussion).~In particular, the dependence on $\tau_B$ is due to the fact that $\hat{O}_{BS|A}^{\rm phys}$ may encode evolving degrees of freedom of the new clock $B$.~If this is not the case, then there is no $\tau_B$-dependence, as demonstrated in the following theorem (cf.~Theorem 7 in~\cite{Hoehn:2019owq}).

\begin{theorem}\label{thm_notbdep}
Consider an operator on $BS$ from the perspective of $A$, denoted $\hat{O}_{BS|A}^{\rm phys}\in \cl (\ch_{BS\vert A})$.~From the perspective of $B$, this operator is independent of $\tau_{B}$, so that $\hat{O}_{AS|B}^{\rm phys} (\tau_A, \! \tau_B)=\hat{O}_{AS|B}^{\rm phys} (\tau_A)\in \cl (\ch_{AS\vert B})$ if and only if $\left[I_{A}\otimes\hat{O}_{BS|A},I_{A}\otimes H_{B}\otimes I_{S}\right]\approx 0$. 
\end{theorem}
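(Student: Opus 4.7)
My plan is to route the question through the physical Hilbert space $\mathcal{H}_{\rm phys}$. I define the embedded Dirac observable $\hat{O}_{\rm phys}(\tau_A) := \mathcal{E}_{\rm \bf{S}}^{\tau_A}(\hat{O}_{BS|A}^{\rm phys})$, so that by Eq.~\eqref{eObsTransSchr} one has $\hat{O}_{AS|B}^{\rm phys}(\tau_A,\tau_B) = \mathcal{R}_{\rm \bf{S}}^B(\tau_B)\,\hat{O}_{\rm phys}(\tau_A)\,\mathcal{R}_{\rm \bf{S}}^{B,-1}(\tau_B)$, where I label the Page--Wootters maps with the clock to which they refer. Since reduction relative to $B$ yields a relational Schr\"odinger picture on $\mathcal{H}_{AS|B}^{\rm phys}$ with generator $\hat{H}_A+\hat{H}_S$ (cf.\ Eq.~\eqref{schrod}), differentiating gives, weakly on $\mathcal{H}_{AS|B}^{\rm phys}$,
\begin{equation}
i\,\partial_{\tau_B}\hat{O}_{AS|B}^{\rm phys}(\tau_A,\tau_B)=\bigl[\hat{H}_A+\hat{H}_S,\,\hat{O}_{AS|B}^{\rm phys}(\tau_A,\tau_B)\bigr],
\end{equation}
so $\tau_B$-independence is equivalent to the weak vanishing of this commutator on $\mathcal{H}_{AS|B}^{\rm phys}$.

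By the isometric equivalence of the $B$-reduction (Corollary~\ref{corolInProdsPW}), this weakly lifts to $[\hat{O}_{\rm phys}(\tau_A),\,\hat{H}_A+\hat{H}_S]\approx 0$ on $\mathcal{H}_{\rm phys}$. Because $\hat{O}_{\rm phys}(\tau_A)$ is a weak Dirac observable, $[\hat{O}_{\rm phys}(\tau_A),\hat{C}_H]\approx 0$ with $\hat{C}_H=\hat{H}_A+\hat{H}_B+\hat{H}_S$, and therefore the condition is equivalent to $[\hat{O}_{\rm phys}(\tau_A),\,I_A\otimes\hat{H}_B\otimes I_S]\approx 0$ on $\mathcal{H}_{\rm phys}$.

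Next I pull this condition back through the $A$-reduction. Writing $\hat{O}_{\rm phys}(\tau_A)=\mathcal{R}_{\rm \bf{S}}^{A,-1}(\tau_A)\,\hat{O}_{BS|A}^{\rm phys}\,\mathcal{R}_{\rm \bf{S}}^A(\tau_A)$, I observe that $I_A\otimes\hat{H}_B\otimes I_S$ commutes kinematically with both $\mathcal{R}_{\rm \bf{S}}^A(\tau_A)=\bra{\tau_A}_A\otimes I_{BS}$ (since $\hat{H}_B$ acts trivially on the $A$-factor) and with the aperiodic inverse-reduction map $\mathcal{R}_{\rm \bf{S}}^{A,-1}(\tau_A)\propto\int ds\,\ket{s}_A\otimes U_{BS}(s-\tau_A)$ (since $U_{BS}$ is generated by $\hat{H}_B+\hat{H}_S$, which commutes with $\hat{H}_B$). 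Consequently, the commutator threads through:
\begin{equation}
\bigl[\hat{O}_{\rm phys}(\tau_A),\,I_A\otimes\hat{H}_B\otimes I_S\bigr]=\mathcal{R}_{\rm \bf{S}}^{A,-1}(\tau_A)\,\bigl[\hat{O}_{BS|A}^{\rm phys},\,\hat{H}_B\otimes I_S\bigr]\,\mathcal{R}_{\rm \bf{S}}^A(\tau_A).
\end{equation}
Weak invertibility of the $A$-reduction on $\mathcal{H}_{\rm phys}$ (the monotonic-clock analogue of Lemma~\ref{lem_5}) then implies that the left-hand side vanishes weakly on $\mathcal{H}_{\rm phys}$ if and only if $[\hat{O}_{BS|A}^{\rm phys},\hat{H}_B\otimes I_S]$ vanishes weakly on $\mathcal{H}_{BS|A}^{\rm phys}$, which is exactly the stated condition $[I_A\otimes\hat{O}_{BS|A},I_A\otimes\hat{H}_B\otimes I_S]\approx 0$. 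Combined with the previous paragraph, this yields both directions of the \emph{if-and-only-if}.

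The main obstacle I foresee is making the commutator-threading identity fully rigorous in the continuous-spectrum case (b) of Sec.~\ref{sec_Dirac}, where $\mathcal{H}_{BS|A}^{\rm phys}$ is not a subspace of $\mathcal{H}_B\otimes\mathcal{H}_S$ and $\mathcal{R}_{\rm \bf{S}}^{A,-1}$ acts as a weak inverse only. Writing out the improper integral representation of $\mathcal{R}_{\rm \bf{S}}^{A,-1}$ and using $[\hat{H}_B+\hat{H}_S,\hat{H}_B]=0$ explicitly on the distributional physical states should nevertheless suffice to transport the commutator through the reduction maps.
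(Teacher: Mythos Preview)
Your argument is correct and establishes both directions of the equivalence, but it follows a genuinely different route from the paper's proof. The paper works explicitly: it writes out $\hat{O}_{AS|B}^{\rm phys}(\tau_A,\tau_B)$ using $\delta(\hat C_H)$ and the integral form of $\mathcal{R}_{\rm \bf S}^{-1}(\tau_B)$, uses the clock covariance $\ket{\tau_B}=e^{-i\hat H_B\tau_B}\ket{0}_B$ together with $[\delta(\hat C_H),\hat H_B]=0$, and then proves the key intermediary identity that $\hat C_H$ annihilates the inverse $B$-reduction operator on $\ch_{AS|B}$. This eliminates an $e^{i\hat C_H\tau_B}$ factor and yields an explicit expression in which all $\tau_B$-dependence resides in $e^{i\hat H_B\tau_B}\hat{O}_{BS|A}^{\rm phys}e^{-i\hat H_B\tau_B}$; the if-and-only-if then follows by inspection and differentiation. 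Your approach instead differentiates from the outset, uses the relational Schr\"odinger equation (Eq.~\eqref{schrod}) to convert $\tau_B$-independence into $[\hat H_A+\hat H_S,\hat{O}_{AS|B}^{\rm phys}]\approx_{AS}0$, lifts this to $\ch_{\rm phys}$ via the weak invertibility of the $B$-maps (Lemma~\ref{lem_5} is the relevant reference here, rather than Corollary~\ref{corolInProdsPW}), trades $\hat H_A+\hat H_S$ for $\hat H_B$ using that the $A$-embedding produces a weak Dirac observable, and finally threads the $\hat H_B$-commutator through the $A$-reduction because $\hat H_B$ commutes with both $\bra{\tau_A}_A\otimes I_{BS}$ and $U_{BS}$. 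The paper's calculation buys an explicit closed form (its Eq.~(C22)) and the intermediary annihilation result, at the price of more bookkeeping; your route is cleaner and exploits the trinity structure more directly, but it leans on differentiation and on $\hat{O}_{\rm phys}(\tau_A)$ preserving $\ch_{\rm phys}$, which hold here since $A$ is aperiodic and its embedding yields Dirac observables as noted in the main text.
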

\begin{proof}
\proofapp
\end{proof}

Now, in order to ask how an observable on $S$ described with respect to $A$ looks from the perspective of $B$, we must assume that the reduced Hilbert space with respect to clock $A$ factorises along the same lines as the kinematical space, i.e.~$\mathcal{H}_{BS\vert A}\simeq\mathcal{H}_{B\vert A}\otimes\mathcal{H}_{S\vert A}$ for some $\mathcal{H}_{B\vert A}$ and $\mathcal{H}_{S\vert A}$, so that $S$ can meaningfully be called a subsystem in this perspective~\cite{Hoehn:2021wet,Hoehn:2023ehz}.~Let us then consider an observable in the relational Schr\"{o}dinger picture that only acts nontrivially on $\mathcal{H}_{S\vert A}$, i.e.~$\hat{O}_{BS|A}^{\rm phys}=I_{B|A}\otimes \hat{f}_{S|A}$ for some $\hat{f}_{S|A}$.~Using Eq.~\eqref{eObsTransSchr} to transform this to the relational Schr\"{o}dinger picture observable with respect to clock $B$, we obtain
\ba
&\hat{O}_{AS|B}^{\rm phys} (\tau_A )   = \mathcal{R}_{\rm \bf{S}}(0) \circ \mathcal{E}_{\rm \bf{S}}^{\tau_A} \!\left( I_{B|A}\otimes \hat{f}_{S|A} \right) \! \circ \mathcal{R}^{-1}_{\rm \bf{S}}(0) \nonumber
\ea
which depends on $\tau_A$, but is $\tau_{B}$-independent, in accordance with Theorem~\ref{thm_notbdep}.~The $\tau_{B}$-dependence of the statistics associated with the observable $\hat{O}_{AS|B}^{\rm phys} (\tau_A )$ is then entirely encoded in the state $\ket{\psi_{AS|B}(\tau_{B})}\ce\calr_{\mathbf S}(\tau_{B})\ket{\psi_{\rm phys}}$, and this state is $t_\mathrm{max}$-periodic according to Eq~\eqref{ePeriodicPWstate}.

This clarifies how an observable can have an aperiodic behaviour with respect to one clock, and periodic behaviour with respect to another; from the perspective of clock~$B$ all probabilities are periodic with respect to the clock reading $\tau_{B}$, and the (aperiodic) label $\tau_{A}$ simply selects the observable from the set $\lbrace \hat{O}_{AS|B}^{\rm phys} (\tau_A ) \rbrace_{\tau_{A}}$ whose evolution is to be considered.

\subsection{Unravelling a periodic quantum clock with respect to an aperiodic one}\label{sUnravQuant}

This setting also allows us to show that, just as in the classical case (cf.~Sec.~\ref{sUnravClass}), one cannot use a monotonic observable for a periodic quantum clock to construct a relational Dirac observable.~Let us define $U_{C_H}(t):=\exp(-it\hat{C}_H)$ for the tripartite system.~In this context, we can establish the following Lemma, which is a quantum version of Lemma \ref{lem_monoPNP}.
\begin{Lemma}\label{lem_monoPNPQ}
The quantisation of the relational observable $F_{T_B,Q_A}(\tau)$ of Lemma~\ref{lem_monoPNP}, which encodes the value of the monotonic clock $T_B$ of a periodic system relative to the value $\tau$ of the clock $Q_A$ of an aperiodic system is given by
\begin{equation}
\hat{F}_{T_B,Q_A}(\tau) = \tau\hat{I}-\hat{Q}_A+\hat{\phi}_{B} \ ,
\end{equation}
and it satisfies the property:
\begin{equation}\label{monoPNPQ-0}
\begin{aligned}
&\braket{\alpha^s_{C_H}\cdot\hat{F}_{T_B,Q_A}(\tau)}_{21}\\
&= \braket{\hat{F}_{T_B,Q_A}(\tau)- t_{{\rm max};B}\hat{Z}_B(s)}_{21}\,,
\end{aligned}
\end{equation}
where $\alpha^s_{C_H}\cdot\hat{F}_{T_B,Q_A}(\tau) := U_{C_H}^{\dagger}(s)\hat{F}_{T_B,Q_A}(\tau)U_{C_H}(s)$, and
\begin{equation}
\hat{Z}_B(s) := \frac{1}{ t_{{\rm max};B}}\int_0^{ t_{{\rm max};B}}d\phi_B\, \left\lfloor\frac{s+\phi_B}{ t_{{\rm max};B}}\right\rfloor\ket{\phi_B}\!\bra{\phi_B} \,,
\end{equation}
and $\braket{\hat{O}}_{21} = \braket{\psi_2|\hat{O}|\psi_1}$ for any operator $\hat{O}$, with $\ket{\psi_{1,2}}$ leading to wave functions that are integrable in $\phi$ (cf. Lemmas~\ref{lem_quantumPhiEvol} and~\ref{lem_clobsQ}). Thus, Eq.~\eqref{monoPNPQ-0} is a quantum version of the result of Lemma \ref{lem_monoPNP}.
\end{Lemma}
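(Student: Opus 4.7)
The plan is to first establish the form $\hat F_{T_B,Q_A}(\tau)=\tau\hat I-\hat Q_A+\hat\phi_B$ by direct quantisation of the classical relational observable $F_{T_B,Q_A}(\tau)=\tau-Q_A+\phi_B$ derived in Example~\ref{ex:monoPNP} (cf.~Eq.~\eqref{FTOTP}):~the aperiodic clock function $Q_A$ is promoted to its self-adjoint operator $\hat Q_A$, and the covariant angle variable $\phi_B$ to the first-moment operator $\hat\phi_B$ of the periodic clock POVM. I would then compute the constraint flow $\alpha^s_{C_H}\cdot\hat F_{T_B,Q_A}(\tau):=U^\dagger_{C_H}(s)\hat F_{T_B,Q_A}(\tau)U_{C_H}(s)$. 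Since $\hat C_H=\hat H_A+\hat H_B+\hat H_S$ is a sum of mutually commuting Hamiltonians, $U_{C_H}(s)$ factorises as $U_A(s)U_B(s)U_S(s)$, and each summand can be conjugated independently. Using the covariance $U_A^\dagger(s)\hat Q_A U_A(s)=\hat Q_A+s\hat I_A$ of the aperiodic clock (the quantum counterpart of Definition~\ref{def:nonperclock}) and defining $\hat\phi_B(s):=U_B^\dagger(s)\hat\phi_B U_B(s)$, the flow rearranges into
\ba
\alpha^s_{C_H}\cdot\hat F_{T_B,Q_A}(\tau)=\hat F_{T_B,Q_A}(\tau)+\bigl(\hat\phi_B(s)-\hat\phi_B-s\hat I\bigr).\nonumber
\ea

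The key step is then to take the matrix element $\braket{\cdot}_{21}=\braket{\psi_2|\cdot|\psi_1}$ and apply Lemma~\ref{lem_quantumPhiEvol}, extended (in the manner already used in Lemma~\ref{lem_clobsQ}) to allow the clock-factor wave functions to depend on the remaining $A$ and $S$ degrees of freedom. Since $\hat\phi_B(s)$ acts nontrivially only on $\ch_B$, the $A$ and $S$ factors contract into an effective partial overlap $\psi_{21}(\phi_B)$, reducing the expectation value to the form in Lemma~\ref{lem_quantumPhiEvol} with $t_{\rm max}\to t_{{\rm max};B}$:
\ba
\braket{\hat\phi_B(s)}_{21}=\f{1}{t_{{\rm max};B}}\int_0^{t_{{\rm max};B}}d\phi_B\,\Bigl(s+\phi_B-t_{{\rm max};B}\Bigl\lfloor\f{s+\phi_B}{t_{{\rm max};B}}\Bigr\rfloor\Bigr)\psi_{21}(\phi_B).\nonumber
\ea
Subtracting $s\braket{\hat I}_{21}$ and $\braket{\hat\phi_B}_{21}$ precisely cancels the $s$ and $\phi_B$ terms in the integrand, leaving exactly $-t_{{\rm max};B}\braket{\hat Z_B(s)}_{21}$. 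Substituting back into the flow formula above yields Eq.~\eqref{monoPNPQ-0}.

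The main obstacle is essentially the same one that already features in Lemmas~\ref{lem_quantumPhiEvol} and~\ref{lem_clobsQ}, namely the failure of $\hat\phi_B$ to be canonically conjugate to $\hat H_B$ across the entire clock Hilbert space (Lemma~\ref{lem_povm}); this is precisely what produces the floor-function (winding-number) operator $\hat Z_B(s)$ rather than a clean $s\hat I$ shift and prevents $\hat F_{T_B,Q_A}(\tau)$ from being a strong Dirac observable. Once the state covariance $U_B(s)\ket{\phi_B}=\ket{\phi_B+s}$ (Eq.~\eqref{statecov}) and the POVM periodicity $\ket{\phi_B}\!\bra{\phi_B}=\ket{\phi_B+zt_{{\rm max};B}}\!\bra{\phi_B+zt_{{\rm max};B}}$ (Eq.~\eqref{id}) are combined with a change of variables restricting the integration to the fundamental domain $[0,t_{{\rm max};B})$, the floor function emerges naturally and the remainder of the argument reduces to routine operator manipulation.
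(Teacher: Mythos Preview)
Your proposal is correct and follows essentially the same approach as the paper: direct quantisation of the classical expression $F_{T_B,Q_A}(\tau)=\tau-Q_A+\phi_B$, followed by factorising $U_{C_H}(s)$, using the covariance $U_A^\dagger(s)\hat Q_A U_A(s)=\hat Q_A+s\hat I$, and applying Lemma~\ref{lem_quantumPhiEvol} to $\hat\phi_B(s)$ so that the $s$ and $\phi_B$ terms cancel, leaving precisely the floor-function contribution $-t_{{\rm max};B}\hat Z_B(s)$. The paper's proof is just a slightly more compact write-up of the same chain of manipulations.
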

\begin{proof}
\proofapp
\end{proof}
This Lemma can be straightforwardly generalised for systems with an arbitrary number of periodic and aperiodic clocks.~The example below illustrates the property proved in Lemma \ref{lem_monoPNPQ}, and it serves as a quantum version of Example \ref{ex:monoPNP}.
\begin{example}[\bf Periodic and aperiodic clocks]\label{ex:monoPNP-Q}
 Let us consider the tripartite system for which $\mathcal{H}_A$ is the Hilbert space of an ideal clock, whereas $\mathcal{H}_B$ is that of a periodic clock, which we take to be either infinite-dimensional as a harmonic oscillator or a two-level system, and $\mathcal{H}_S$ that of a free particle.~The constraint operator reads $\hat{C}_H = \hat{H}_A+\hat{H}_B+\hat{H}_S$, where $\hat{H}_A = \hat{H}_I$, $\hat{H}_B = \hat{H}_O$ or $\hat{H}_{\rm qubit}$, and $\hat{H}_S = \hat{H}_P$ are the operators corresponding to the Hamiltonians given in Eqs.~\eqref{monoPNP-HI},~\eqref{monoPNP-HO} or \eqref{eq:qubitHSandfreeHS},~and~\eqref{monoPNP-HP}.~The clock operator associated with $\hat{H}_I$ can be straightforwardly defined to be $\hat{T}_I:=-\hat{q}_3$, so that $[\hat{T}_I,\hat{H}_I] = \mathrm{i}$, similarly to the classical definition given in Eq.~\eqref{TIclock}.~Furthermore, the clock states of $\hat{H}_P$ are (see, for instance, \cite{braunsteinGeneralizedUncertaintyRelations1996,HLSrelativistic,Chataignier:2019kof,Chataignier:2020fys,ChataignierT})
\begin{equation}\label{FreeParticleClock}
\ket{t_P;\sigma} = \int_{-\infty}^{\infty}d p_2\ \sqrt{\left|\frac{p_2}{2\pi \hbar m_2}\right|}\theta(-\sigma p_2)\exp\left(\frac{\mathrm{i} t_P p_2^2}{2m\hbar}\right)\ket{p_2}\,,
\end{equation}
where $\sigma=\pm1$ and $\theta(-\sigma p_2)$ is the Heaviside step function.~Thus, the particle's clock operator is the first moment of the POVM associated with Eq.~\eqref{FreeParticleClock}:
\begin{equation}
\hat{T}_P:= \sum_{\sigma=\pm1}\int_{-\infty}^{\infty}d t_P\ t_P\ket{t_P;\sigma}\bra{t_p;\sigma} \,,
\end{equation}
and it satisfies $[\hat{T}_P,\hat{H}_P] = \mathrm{i}$, similarly to the classical definition in Eq.~\eqref{TPclock} \cite{HLSrelativistic}.~Finally, as for the periodic clock $B$, the clock states for the two-level clock are given in Eq.~\eqref{eq:qubitclockstates} and the corresponding clock operator $\hat{\phi}_{\rm qubit}$ given in Eq.~\eqref{eq:qubitclockoperator} is not conjugate to $\hat{H}_{\rm qubit}$ (cf.~Eq.~\eqref{eq:phiCHCcomm}).~Similarly, the clock states for the harmonic oscillator can be defined as in Eq.~\eqref{discreteClockState}, so that the clock operator is given by Eq.~\eqref{nthmoment} for $n=1$.~Due to Lemma~\ref{lem_povm}, it fails to be conjugate to $\hat{H}_O$.~This is similar to the classical theory, in which the evolution of $\phi_C$ given in Eq.~\eqref{HOphase2} is not differentiable for all values of $s$ (cf.~Eq.~\eqref{solution}).~The quantum relational observables encoding the value of $\hat{T}_B$ relative to $\hat{T}_P$ and $\hat{T}_I$ when these read $\tau$ are the operators (omitting tensor factors of identity operators)
\begin{align}
\hat{F}_{T_B,T_P}(\tau) &= \tau\hat{I} - \hat{T}_P +\hat{\phi}_B \,,\\
\hat{F}_{T_B,T_I}(\tau) &= \tau\hat{I}+\hat{q}_3+\hat{\phi}_B \,,
\end{align}
with $B=O,\rm qubit$.~In the case of the harmonic oscillator, these are the analogues of the classical observables given in Eqs.~\eqref{FTOTP} and~\eqref{FTOTI}, while there is no classical analogue for the qubit case.~In either case, the above operators fail to commute with $\hat{C}_H$ because $\hat{\phi}_B$ is not conjugate to $\hat{H}_B$.~Instead, owing to Lemmas~\ref{lem_quantumPhiEvol} and~\ref{lem_monoPNPQ}, they obey the property 
\begin{equation}\notag
\braket{\alpha^s_{C_H}\cdot\hat{F}_{T_B,T_P}(\tau)}_{21} = \braket{\hat{F}_{T_O,T_P}(\tau)- t_{{\rm max};B}\hat{Z}_B(s)}_{21} \,,
\end{equation}
and similarly for $\hat{F}_{T_B,T_I}(\tau)$, for any pair of states $\ket{\psi_{1,2}}$ that leads to wave functions that are integrable in $\phi$.~When $B$ is a harmonic oscillator, this property is the analogue of the classical result given in Eq.~\eqref{FTOtransient}.
\end{example}

\section{Discussion}\label{Sec:discussion}

Given their operational relevance, periodic clocks have, of course, been discussed in the literature on relational dynamics before \cite{Smith:2017pwx,Alonso-Serrano:2023gir,Wendel:2020hqv,Martinez:2023fsd,Bojowald:2021uqo,Dittrich:2015vfa,Dittrich:2016hvj,Bojowald:2010qw,Cafasso:2024zqa,Dittrich:2007th,pageEvolutionEvolutionDynamics1983,Rovelli:1990jm,Rovelli:1989jn,Kiefer:1989va,Kiefer:1990ms,Kiefer:1993cqg,Favalli:2020gmx}, however, mostly using specific examples. A systematic treatment, on the other hand, paralleling the classical and quantum theories, describing the relation between different approaches in this context, as well as contrasting with the treatment of monotonic clocks had been missing.~Our work aims to fill this gap. 

Rather than providing an exhaustive comparison with earlier work (which would be difficult due to the spectrum of different clock variables in the literature), we briefly compare with and comment on only two interesting recent directions.

Let us start with \cite{Wendel:2020hqv,Bojowald:2021uqo,Martinez:2023fsd}, which explores the possibility of a fundamental period of time, but otherwise shares a similar aim as ours: making sense of relational evolution with respect to a periodic reference degree of freedom.~The term ``periodic'' in \cite{Wendel:2020hqv,Bojowald:2021uqo,Martinez:2023fsd} generically refers to the confinement of the clock degrees of freedom to a finite (energy-dependent) range which thus have turning points.~This behaviour is rooted in non-trivial self-interactions in a Hamiltonian constraint of the form (in our notation) $C_H=H_S(q_S,p_S)^2-p_C^2-W(q_C)$ with a quadratic \cite{Wendel:2020hqv,Bojowald:2021uqo} or exponential \cite{Martinez:2023fsd} clock potential $W(q_C)$.~Such a constraint is also encompassed by our setup, as there are no interactions between $C$ and $S$.~However, the key difference between \cite{Wendel:2020hqv,Bojowald:2021uqo,Martinez:2023fsd} and the present manuscript is that the former use $q_C$ as the initial clock variable, whereas our formalism, applied to this case, would first construct a clock using a covariant POVM for the entire Hamiltonian $H_C=-p_C^2-W(q_C)$, which amounts to using the angle variable $\phi_C$ as a time keeper. That is, our time variable is conjugate to the constraint (on a dense subspace) and monotonic for each clock cycle, whereas this is not the case in \cite{Wendel:2020hqv,Bojowald:2021uqo,Martinez:2023fsd}, which features turning points.~To avoid energy-dependent turning points of $q_C$, the latter works also introduce winding numbers, as we do for the angle variable to accommodate different cycles. 

The different time variables lead to a further drastic difference: in \cite{Wendel:2020hqv,Bojowald:2021uqo,Martinez:2023fsd}, due to the self-interaction $W(q_C)$,
the $S$-evolution relative to $q_C$ (as well as its unwound version) is governed by a time-dependent Hamiltonian, which is obtained by solving the constraint for $p_C$. The evolving states are then constructed by concatenating the different branches of the wave functions.~By contrast, in our case, the relational $S$-dynamics obtained via Page-Wootters reduction is governed by the time-independent $\hat H_S$ (cfr.~Eq.~\eqref{schrod}) and the relational Schr\"odinger state evolution is periodic up to phase (cfr.~Eq.~\eqref{ePeriodicPWstate}), so that no concatenation is needed. This means in particular, that the contemplations about a fundamental period of time in \cite{Wendel:2020hqv,Bojowald:2021uqo} apply to a distinct time evolution and cannot be easily translated into our formulation. Indeed, the proposal in \cite{Wendel:2020hqv,Bojowald:2021uqo} relies on the evolution in their unwound clock not being exactly periodic (thanks to the time dependence in the Hamiltonian), whereas the gauge-invariant relational evolution in our variable, for the same system, is exactly periodic by fiat.

Given the finite dimensionality of the Hilbert space of a periodic quantum clock in some cases (such as a qubit or spin), one may consider a different path to the one taken here, constructing a relational quantum theory using only a discrete set of time states~\cite{baumann2022noncausal,diaz2023parallel,Favalli:2020gmx}.~This leads to a discrete form of the history state given in Eq.~\eqref{eHistory}, and can be understood in terms of the Feynman-Kitaev clock construction, where this state represents some computation~\cite{Feynman1985,Kitaev2002,Breuckmann2014,Caha2018}.~In this construction, one begins by assuming the desired form of this state, in contrast with our work which begins with the constraint given by Eq.~\eqref{WDW} and examines the relational dynamics (including the history state) that arises therefrom.

\section{Conclusion}\label{Sec:conclusion}

In summary, we have presented a systematic framework for exploring relational dynamics with periodic clocks, emphasising the parallels between classical and quantum observables relative to periodic clocks.

A key ingredient in our analysis was the resort to angle variables to model the periodic clock, as opposed to its often used configuration variable which may experience turning points. In the quantum theory, this results in $\rm{U}(1)$-covariant POVM observables, which enjoy many useful properties to model a wide variety of periodic clocks. Both classically and quantumly, the angle variable is conjugate to the clock Hamiltonian, except where the clock completes its cycles, and this simplifies many steps in the construction.

We have demonstrated that classical and quantum relational observables relative to a periodic clock are not global Dirac observables, but only transiently invariant per clock cycle, unless the quantity they encode is periodic along the entire flow generated by the constraint.~By considering a partial group-averaging procedure for quantum relational observables, we have established the equivalence of the clock-neutral picture of Dirac quantisation with the quantum deparameterisation and Page-Wootters approaches in this context, and thus extended the `trinity' of relational quantum dynamics to include periodic clocks.~This means in particular that the relational dynamics in all three formulations is necessarily periodic.~We also showed that a full average over the gauge flow generated by the constraint does not improve the situation:~when the full group is the translation group, this leads to divergences where the partial average is better behaved, and when the full group is $\rm{U}(1)$ (possibly with a distinct representation and periodicity than that of the periodic clock), then one recovers the same relational Dirac observables as with the partial average.~Specifically, the equivalence with the Page-Wootters formalism also shows that one can obtain every gauge-invariant operator via a partial group average.

We further showed that a na\"ive application of the Page-Wootters formalism with respect to a periodic clock results in divergent conditional probabilities for systems with continuous energy spectra, but we have found that this can be resolved by invoking the clock-neutral Dirac picture.~Crucially, the use of the gauge-invariant physical inner product of the clock-neutral picture, instead of the usually invoked conditional one, allowed us to correct the otherwise ill-defined conditional probabilities.

We have finally considered a scenario including both a periodic and an aperiodic clock, and shown how to resolve the apparent tension when the periodic behaviour of some system with respect to the former clock appears aperiodic with respect to the latter clock.

Despite the prevalence of periodic clocks in the literature, a systematic treatment for describing dynamics relative to them had been missing.~Our analysis shows the care that must be taken when constructing the relational theory in this context, compared to the more straightforward context of aperiodic clocks.~Our systematic treatment of this operationally important category of clocks thus represents a step forward in the program of temporal quantum reference frames.~At the same time, this class of models constitutes technically a special case, and physically many other types may be relevant beyond laboratory situations.~For example, much work remains to be done on more generic cases, such as encompassing interactions~\cite{Hohn:2011us,Smith:2017pwx,Smith:2019imm,Cafasso:2024zqa,castro-ruizTimeReferenceFrames2019,Hoehn:2023axh} and non-integrability~\cite{Dittrich:2016hvj,Hohn:2011us}, both of which appear pertinent in the context of gravity and may preclude simple periodic clock observables as explored here.

\section*{Acknowledgements}

The authors thank Alexander R.~H.~Smith for many useful discussions on the topic, as well as Martin Bojowald and Claus Kiefer for discussion and providing useful references. The work of L.C.\ was supported by the Basque Government Grant \mbox{IT1628-22}, and by the Grant PID2021-123226NB-I00 (funded by MCIN/AEI/10.13039/501100011033 and by “ERDF A way of making Europe”). It is also partly funded by the IKUR 2030 Strategy of the Basque Government.  The work of P.A.H.\ and F.M.M.\ was supported in part by funding from the Okinawa Institute of Science and Technology Graduate University.~This project/publication was also made possible through the support of the ID\# 62312 grant from the John Templeton Foundation, as part of the \href{https://www.templeton.org/grant/the-quantum-information-structure-of-spacetime-qiss-second-phase}{\textit{`The Quantum Information Structure of Spacetime'} Project (QISS)}, as well as Grant ID\# 62423 from the John Templeton Foundation.~The opinions expressed in this project/publication are those of the author(s) and do not necessarily reflect the views of the John Templeton Foundation. P.A.H.\ was further supported by the Foundational Questions Institute under grant number FQXi-RFP-1801A. M.P.E.L.\ acknowledges support from ERC-2021-COG 101043705 ``Cocoquest''. ~F.M.M.'s research at Western University is also supported by Francesca Vidotto's Canada Research Chair in the Foundation of Physics, and NSERC Discovery Grant ``Loop Quantum Gravity:~from Computation to Phenomenology''.~Western University is located in the traditional territories of Anishinaabek, Haudenosaunee,  L\=unaap\'eewak and Chonnonton Nations. 

\bibliography{periodic-trinity}

%merlin.mbs apsrev4-1.bst 2010-07-25 4.21a (PWD, AO, DPC) hacked
%Control: key (0)
%Control: author (8) initials jnrlst
%Control: editor formatted (1) identically to author
%Control: production of article title (-1) disabled
%Control: page (0) single
%Control: year (1) truncated
%Control: production of eprint (0) enabled
\begin{thebibliography}{92}%
\makeatletter
\providecommand \@ifxundefined [1]{%
 \@ifx{#1\undefined}
}%
\providecommand \@ifnum [1]{%
 \ifnum #1\expandafter \@firstoftwo
 \else \expandafter \@secondoftwo
 \fi
}%
\providecommand \@ifx [1]{%
 \ifx #1\expandafter \@firstoftwo
 \else \expandafter \@secondoftwo
 \fi
}%
\providecommand \natexlab [1]{#1}%
\providecommand \enquote  [1]{``#1''}%
\providecommand \bibnamefont  [1]{#1}%
\providecommand \bibfnamefont [1]{#1}%
\providecommand \citenamefont [1]{#1}%
\providecommand \href@noop [0]{\@secondoftwo}%
\providecommand \href [0]{\begingroup \@sanitize@url \@href}%
\providecommand \@href[1]{\@@startlink{#1}\@@href}%
\providecommand \@@href[1]{\endgroup#1\@@endlink}%
\providecommand \@sanitize@url [0]{\catcode `\\12\catcode `\$12\catcode
  `\&12\catcode `\#12\catcode `\^12\catcode `\_12\catcode `\%12\relax}%
\providecommand \@@startlink[1]{}%
\providecommand \@@endlink[0]{}%
\providecommand \url  [0]{\begingroup\@sanitize@url \@url }%
\providecommand \@url [1]{\endgroup\@href {#1}{\urlprefix }}%
\providecommand \urlprefix  [0]{URL }%
\providecommand \Eprint [0]{\href }%
\providecommand \doibase [0]{http://dx.doi.org/}%
\providecommand \selectlanguage [0]{\@gobble}%
\providecommand \bibinfo  [0]{\@secondoftwo}%
\providecommand \bibfield  [0]{\@secondoftwo}%
\providecommand \translation [1]{[#1]}%
\providecommand \BibitemOpen [0]{}%
\providecommand \bibitemStop [0]{}%
\providecommand \bibitemNoStop [0]{.\EOS\space}%
\providecommand \EOS [0]{\spacefactor3000\relax}%
\providecommand \BibitemShut  [1]{\csname bibitem#1\endcsname}%
\let\auto@bib@innerbib\@empty
%</preamble>
\bibitem [{\citenamefont {Kiefer}(1990)}]{Kiefer:1989va}%
  \BibitemOpen
  \bibfield  {author} {\bibinfo {author} {\bibfnamefont {C.}~\bibnamefont
  {Kiefer}},\ }\href {\doibase 10.1016/0550-3213(90)90271-E} {\bibfield
  {journal} {\bibinfo  {journal} {Nucl. Phys. B}\ }\textbf {\bibinfo {volume}
  {341}},\ \bibinfo {pages} {273} (\bibinfo {year} {1990})}\BibitemShut
  {NoStop}%
\bibitem [{\citenamefont {Kiefer}(1991)}]{Kiefer:1990ms}%
  \BibitemOpen
  \bibfield  {author} {\bibinfo {author} {\bibfnamefont {C.}~\bibnamefont
  {Kiefer}},\ }\href {\doibase 10.1016/0003-4916(91)90178-B} {\bibfield
  {journal} {\bibinfo  {journal} {Annals Phys.}\ }\textbf {\bibinfo {volume}
  {207}},\ \bibinfo {pages} {53} (\bibinfo {year} {1991})}\BibitemShut
  {NoStop}%
\bibitem [{\citenamefont {Kiefer}(1993)}]{Kiefer:1993cqg}%
  \BibitemOpen
  \bibfield  {author} {\bibinfo {author} {\bibfnamefont {C.}~\bibnamefont
  {Kiefer}},\ }\href {\doibase 10.1088/0264-9381/10/S/033} {\bibfield
  {journal} {\bibinfo  {journal} {Class. Quantum Grav.}\ }\textbf {\bibinfo
  {volume} {10}},\ \bibinfo {pages} {S233} (\bibinfo {year}
  {1993})}\BibitemShut {NoStop}%
\bibitem [{\citenamefont {Alonso-Serrano}\ \emph {et~al.}(2024)\citenamefont
  {Alonso-Serrano}, \citenamefont {Schuster},\ and\ \citenamefont
  {Visser}}]{Alonso-Serrano:2023gir}%
  \BibitemOpen
  \bibfield  {author} {\bibinfo {author} {\bibfnamefont {A.}~\bibnamefont
  {Alonso-Serrano}}, \bibinfo {author} {\bibfnamefont {S.}~\bibnamefont
  {Schuster}}, \ and\ \bibinfo {author} {\bibfnamefont {M.}~\bibnamefont
  {Visser}},\ }\href {\doibase 10.3390/universe10020073} {\bibfield  {journal}
  {\bibinfo  {journal} {Universe}\ }\textbf {\bibinfo {volume} {10}},\ \bibinfo
  {pages} {73} (\bibinfo {year} {2024})},\ \Eprint
  {http://arxiv.org/abs/2312.05202} {arXiv:2312.05202 [gr-qc]} \BibitemShut
  {NoStop}%
\bibitem [{\citenamefont {Wendel}\ \emph {et~al.}(2020)\citenamefont {Wendel},
  \citenamefont {Martinez},\ and\ \citenamefont {Bojowald}}]{Wendel:2020hqv}%
  \BibitemOpen
  \bibfield  {author} {\bibinfo {author} {\bibfnamefont {G.}~\bibnamefont
  {Wendel}}, \bibinfo {author} {\bibfnamefont {L.}~\bibnamefont {Martinez}}, \
  and\ \bibinfo {author} {\bibfnamefont {M.}~\bibnamefont {Bojowald}},\ }\href
  {\doibase 10.1103/PhysRevLett.124.241301} {\bibfield  {journal} {\bibinfo
  {journal} {Phys. Rev. Lett.}\ }\textbf {\bibinfo {volume} {124}},\ \bibinfo
  {pages} {241301} (\bibinfo {year} {2020})},\ \Eprint
  {http://arxiv.org/abs/2005.11572} {arXiv:2005.11572 [gr-qc]} \BibitemShut
  {NoStop}%
\bibitem [{\citenamefont {Cafasso}\ \emph {et~al.}(2024)\citenamefont
  {Cafasso}, \citenamefont {Pranzini}, \citenamefont {Malo}, \citenamefont
  {Giovannetti},\ and\ \citenamefont {Chiofalo}}]{Cafasso:2024zqa}%
  \BibitemOpen
  \bibfield  {author} {\bibinfo {author} {\bibfnamefont {D.}~\bibnamefont
  {Cafasso}}, \bibinfo {author} {\bibfnamefont {N.}~\bibnamefont {Pranzini}},
  \bibinfo {author} {\bibfnamefont {J.~Y.}\ \bibnamefont {Malo}}, \bibinfo
  {author} {\bibfnamefont {V.}~\bibnamefont {Giovannetti}}, \ and\ \bibinfo
  {author} {\bibfnamefont {M.}~\bibnamefont {Chiofalo}},\ }\href@noop {}
  {\bibfield  {journal} {\bibinfo  {journal} {Physical Review D}\ }\textbf
  {\bibinfo {volume} {110}},\ \bibinfo {pages} {106014} (\bibinfo {year}
  {2024})}\BibitemShut {NoStop}%
\bibitem [{\citenamefont {Page}\ and\ \citenamefont
  {Wootters}(1983)}]{pageEvolutionEvolutionDynamics1983}%
  \BibitemOpen
  \bibfield  {author} {\bibinfo {author} {\bibfnamefont {D.~N.}\ \bibnamefont
  {Page}}\ and\ \bibinfo {author} {\bibfnamefont {W.~K.}\ \bibnamefont
  {Wootters}},\ }\href@noop {} {\bibfield  {journal} {\bibinfo  {journal}
  {Phys. Rev. D}\ }\textbf {\bibinfo {volume} {27}},\ \bibinfo {pages} {2885}
  (\bibinfo {year} {1983})}\BibitemShut {NoStop}%
\bibitem [{\citenamefont {Page}(1984)}]{Page:1984qt}%
  \BibitemOpen
  \bibfield  {author} {\bibinfo {author} {\bibfnamefont {D.~N.}\ \bibnamefont
  {Page}},\ }\href {\doibase 10.1088/0264-9381/1/4/015} {\bibfield  {journal}
  {\bibinfo  {journal} {Class. Quantum Grav.}\ }\textbf {\bibinfo {volume}
  {1}},\ \bibinfo {pages} {417} (\bibinfo {year} {1984})}\BibitemShut {NoStop}%
%%CITATION = CQGRD,1,417;%%
\bibitem [{\citenamefont {Wootters}(1984)}]{woottersTimeReplacedQuantum1984}%
  \BibitemOpen
  \bibfield  {author} {\bibinfo {author} {\bibfnamefont {W.~K.}\ \bibnamefont
  {Wootters}},\ }\href@noop {} {\bibfield  {journal} {\bibinfo  {journal} {Int.
  J. Theor. Phys.}\ }\textbf {\bibinfo {volume} {23}},\ \bibinfo {pages} {701}
  (\bibinfo {year} {1984})}\BibitemShut {NoStop}%
\bibitem [{\citenamefont {H{\"o}hn}\ \emph {et~al.}(2021)\citenamefont
  {H{\"o}hn}, \citenamefont {Smith},\ and\ \citenamefont
  {Lock}}]{Hoehn:2019owq}%
  \BibitemOpen
  \bibfield  {author} {\bibinfo {author} {\bibfnamefont {P.~A.}\ \bibnamefont
  {H{\"o}hn}}, \bibinfo {author} {\bibfnamefont {A.~R.}\ \bibnamefont {Smith}},
  \ and\ \bibinfo {author} {\bibfnamefont {M.~P.}\ \bibnamefont {Lock}},\
  }\href {\doibase 10.1103/PhysRevD.104.066001} {\bibfield  {journal} {\bibinfo
   {journal} {Physical Review D}\ }\textbf {\bibinfo {volume} {104}},\ \bibinfo
  {pages} {066001} (\bibinfo {year} {2021})}\BibitemShut {NoStop}%
\bibitem [{\citenamefont {H\"ohn}\ \emph {et~al.}(2021)\citenamefont {H\"ohn},
  \citenamefont {Smith},\ and\ \citenamefont {Lock}}]{HLSrelativistic}%
  \BibitemOpen
  \bibfield  {author} {\bibinfo {author} {\bibfnamefont {P.~A.}\ \bibnamefont
  {H\"ohn}}, \bibinfo {author} {\bibfnamefont {A.~R.~H.}\ \bibnamefont
  {Smith}}, \ and\ \bibinfo {author} {\bibfnamefont {M.~P.~E.}\ \bibnamefont
  {Lock}},\ }\href {\doibase 10.3389/fphy.2021.587083} {\bibfield  {journal}
  {\bibinfo  {journal} {Front. in Phys.}\ }\textbf {\bibinfo {volume} {9}},\
  \bibinfo {pages} {181} (\bibinfo {year} {2021})}\BibitemShut {NoStop}%
\bibitem [{\citenamefont {de~la Hamette}\ \emph {et~al.}(2021)\citenamefont
  {de~la Hamette}, \citenamefont {Galley}, \citenamefont {H\"ohn},
  \citenamefont {Loveridge},\ and\ \citenamefont
  {M\"uller}}]{delaHamette:2021oex}%
  \BibitemOpen
  \bibfield  {author} {\bibinfo {author} {\bibfnamefont {A.-C.}\ \bibnamefont
  {de~la Hamette}}, \bibinfo {author} {\bibfnamefont {T.~D.}\ \bibnamefont
  {Galley}}, \bibinfo {author} {\bibfnamefont {P.~A.}\ \bibnamefont {H\"ohn}},
  \bibinfo {author} {\bibfnamefont {L.}~\bibnamefont {Loveridge}}, \ and\
  \bibinfo {author} {\bibfnamefont {M.~P.}\ \bibnamefont {M\"uller}},\
  }\href@noop {} {\  (\bibinfo {year} {2021})},\ \Eprint
  {http://arxiv.org/abs/2110.13824} {arXiv:2110.13824 [quant-ph]} \BibitemShut
  {NoStop}%
\bibitem [{\citenamefont {H\"ohn}\ \emph {et~al.}(2024)\citenamefont {H\"ohn},
  \citenamefont {Russo},\ and\ \citenamefont {Smith}}]{Hoehn:2023axh}%
  \BibitemOpen
  \bibfield  {author} {\bibinfo {author} {\bibfnamefont {P.~A.}\ \bibnamefont
  {H\"ohn}}, \bibinfo {author} {\bibfnamefont {A.}~\bibnamefont {Russo}}, \
  and\ \bibinfo {author} {\bibfnamefont {A.~R.~H.}\ \bibnamefont {Smith}},\
  }\href {\doibase 10.1103/PhysRevD.109.105011} {\bibfield  {journal} {\bibinfo
   {journal} {Phys. Rev. D}\ }\textbf {\bibinfo {volume} {109}},\ \bibinfo
  {pages} {105011} (\bibinfo {year} {2024})},\ \Eprint
  {http://arxiv.org/abs/2308.12912} {arXiv:2308.12912 [quant-ph]} \BibitemShut
  {NoStop}%
\bibitem [{\citenamefont {Chandrasekaran}\ \emph {et~al.}(2023)\citenamefont
  {Chandrasekaran}, \citenamefont {Longo}, \citenamefont {Penington},\ and\
  \citenamefont {Witten}}]{Chandrasekaran:2022cip}%
  \BibitemOpen
  \bibfield  {author} {\bibinfo {author} {\bibfnamefont {V.}~\bibnamefont
  {Chandrasekaran}}, \bibinfo {author} {\bibfnamefont {R.}~\bibnamefont
  {Longo}}, \bibinfo {author} {\bibfnamefont {G.}~\bibnamefont {Penington}}, \
  and\ \bibinfo {author} {\bibfnamefont {E.}~\bibnamefont {Witten}},\ }\href
  {\doibase 10.1007/JHEP02(2023)082} {\bibfield  {journal} {\bibinfo  {journal}
  {JHEP}\ }\textbf {\bibinfo {volume} {02}},\ \bibinfo {pages} {082} (\bibinfo
  {year} {2023})},\ \Eprint {http://arxiv.org/abs/2206.10780} {arXiv:2206.10780
  [hep-th]} \BibitemShut {NoStop}%
\bibitem [{\citenamefont {De~Vuyst}\ \emph
  {et~al.}(2024{\natexlab{a}})\citenamefont {De~Vuyst}, \citenamefont {Eccles},
  \citenamefont {H\"ohn},\ and\ \citenamefont {Kirklin}}]{DeVuyst:2024pop}%
  \BibitemOpen
  \bibfield  {author} {\bibinfo {author} {\bibfnamefont {J.}~\bibnamefont
  {De~Vuyst}}, \bibinfo {author} {\bibfnamefont {S.}~\bibnamefont {Eccles}},
  \bibinfo {author} {\bibfnamefont {P.~A.}\ \bibnamefont {H\"ohn}}, \ and\
  \bibinfo {author} {\bibfnamefont {J.}~\bibnamefont {Kirklin}},\ }\href@noop
  {} {\  (\bibinfo {year} {2024}{\natexlab{a}})},\ \Eprint
  {http://arxiv.org/abs/2405.00114} {arXiv:2405.00114 [hep-th]} \BibitemShut
  {NoStop}%
\bibitem [{\citenamefont {De~Vuyst}\ \emph
  {et~al.}(2024{\natexlab{b}})\citenamefont {De~Vuyst}, \citenamefont {Eccles},
  \citenamefont {H{\"o}hn},\ and\ \citenamefont {Kirklin}}]{DVEHK2}%
  \BibitemOpen
  \bibfield  {author} {\bibinfo {author} {\bibfnamefont {J.}~\bibnamefont
  {De~Vuyst}}, \bibinfo {author} {\bibfnamefont {S.}~\bibnamefont {Eccles}},
  \bibinfo {author} {\bibfnamefont {P.~A.}\ \bibnamefont {H{\"o}hn}}, \ and\
  \bibinfo {author} {\bibfnamefont {J.}~\bibnamefont {Kirklin}},\ }\href@noop
  {} {\bibfield  {journal} {\bibinfo  {journal} {arXiv preprint
  arXiv:2412.15502}\ } (\bibinfo {year} {2024}{\natexlab{b}})}\BibitemShut
  {NoStop}%
\bibitem [{\citenamefont {Calcinari}\ and\ \citenamefont
  {Gielen}(2025)}]{Calcinari:2024pek}%
  \BibitemOpen
  \bibfield  {author} {\bibinfo {author} {\bibfnamefont {A.}~\bibnamefont
  {Calcinari}}\ and\ \bibinfo {author} {\bibfnamefont {S.}~\bibnamefont
  {Gielen}},\ }\href {\doibase 10.22331/q-2025-01-27-1610} {\bibfield
  {journal} {\bibinfo  {journal} {Quantum}\ }\textbf {\bibinfo {volume} {9}},\
  \bibinfo {pages} {1610} (\bibinfo {year} {2025})}\BibitemShut {NoStop}%
\bibitem [{\citenamefont {Dirac}(1964)}]{diracLecturesQuantumMechanics1964}%
  \BibitemOpen
  \bibfield  {author} {\bibinfo {author} {\bibfnamefont {P.~A.~M.}\
  \bibnamefont {Dirac}},\ }\href@noop {} {\emph {\bibinfo {title} {Lectures on
  {{Quantum Mechanics}}}}}\ (\bibinfo  {publisher} {{Belfer Graduate School of
  Sciencem Yeshiva University, New York}},\ \bibinfo {year} {1964})\BibitemShut
  {NoStop}%
\bibitem [{\citenamefont {Henneaux}\ and\ \citenamefont
  {Teitelboim}(1992)}]{Henneaux:1992ig}%
  \BibitemOpen
  \bibfield  {author} {\bibinfo {author} {\bibfnamefont {M.}~\bibnamefont
  {Henneaux}}\ and\ \bibinfo {author} {\bibfnamefont {C.}~\bibnamefont
  {Teitelboim}},\ }\href@noop {} {\emph {\bibinfo {title} {{Quantization of
  Gauge Systems}}}}\ (\bibinfo  {publisher} {Princeton University Press},\
  \bibinfo {address} {Princeton},\ \bibinfo {year} {1992})\BibitemShut
  {NoStop}%
%%CITATION = INSPIRE-345963;%%
\bibitem [{\citenamefont {Rovelli}(1991{\natexlab{a}})}]{Rovelli:1989jn}%
  \BibitemOpen
  \bibfield  {author} {\bibinfo {author} {\bibfnamefont {C.}~\bibnamefont
  {Rovelli}},\ }\href {\doibase 10.1103/PhysRevD.43.442} {\bibfield  {journal}
  {\bibinfo  {journal} {Phys. Rev. D}\ }\textbf {\bibinfo {volume} {43}},\
  \bibinfo {pages} {442} (\bibinfo {year} {1991}{\natexlab{a}})}\BibitemShut
  {NoStop}%
%%CITATION = PHRVA,D43,442;%%
\bibitem [{\citenamefont {Rovelli}(1990)}]{Rovelli:1990jm}%
  \BibitemOpen
  \bibfield  {author} {\bibinfo {author} {\bibfnamefont {C.}~\bibnamefont
  {Rovelli}},\ }\href {\doibase 10.1103/PhysRevD.42.2638} {\bibfield  {journal}
  {\bibinfo  {journal} {Phys. Rev. D}\ }\textbf {\bibinfo {volume} {42}},\
  \bibinfo {pages} {2638} (\bibinfo {year} {1990})}\BibitemShut {NoStop}%
%%CITATION = PHRVA,D42,2638;%%
\bibitem [{\citenamefont {Rovelli}(1991{\natexlab{b}})}]{Rovelli:1990ph}%
  \BibitemOpen
  \bibfield  {author} {\bibinfo {author} {\bibfnamefont {C.}~\bibnamefont
  {Rovelli}},\ }\href {\doibase 10.1088/0264-9381/8/2/011} {\bibfield
  {journal} {\bibinfo  {journal} {Class. Quantum Grav.}\ }\textbf {\bibinfo
  {volume} {8}},\ \bibinfo {pages} {297} (\bibinfo {year}
  {1991}{\natexlab{b}})}\BibitemShut {NoStop}%
%%CITATION = CQGRD,8,297;%%
\bibitem [{\citenamefont {Rovelli}(1991{\natexlab{c}})}]{Rovelli:1990pi}%
  \BibitemOpen
  \bibfield  {author} {\bibinfo {author} {\bibfnamefont {C.}~\bibnamefont
  {Rovelli}},\ }\href {\doibase 10.1088/0264-9381/8/2/012} {\bibfield
  {journal} {\bibinfo  {journal} {Class. Quantum Grav.}\ }\textbf {\bibinfo
  {volume} {8}},\ \bibinfo {pages} {317} (\bibinfo {year}
  {1991}{\natexlab{c}})}\BibitemShut {NoStop}%
%%CITATION = CQGRD,8,317;%%
\bibitem [{\citenamefont {Rovelli}(2004)}]{rovelliQuantumGravity2004}%
  \BibitemOpen
  \bibfield  {author} {\bibinfo {author} {\bibfnamefont {C.}~\bibnamefont
  {Rovelli}},\ }\href@noop {} {\emph {\bibinfo {title} {Quantum {{Gravity}}}}}\
  (\bibinfo  {publisher} {{Cambridge University Press}},\ \bibinfo {address}
  {{Cambridge}},\ \bibinfo {year} {2004})\BibitemShut {NoStop}%
\bibitem [{\citenamefont
  {Thiemann}(2008)}]{thiemannModernCanonicalQuantum2008}%
  \BibitemOpen
  \bibfield  {author} {\bibinfo {author} {\bibfnamefont {T.}~\bibnamefont
  {Thiemann}},\ }\href@noop {} {\emph {\bibinfo {title} {Modern Canonical
  Quantum General Relativity}}}\ (\bibinfo  {publisher} {{Cambridge University
  Press}},\ \bibinfo {year} {2008})\BibitemShut {NoStop}%
\bibitem [{\citenamefont {Dittrich}(2006)}]{Dittrich:2005kc}%
  \BibitemOpen
  \bibfield  {author} {\bibinfo {author} {\bibfnamefont {B.}~\bibnamefont
  {Dittrich}},\ }\href {\doibase 10.1088/0264-9381/23/22/006} {\bibfield
  {journal} {\bibinfo  {journal} {Class. Quantum Grav.}\ }\textbf {\bibinfo
  {volume} {23}},\ \bibinfo {pages} {6155} (\bibinfo {year}
  {2006})}\BibitemShut {NoStop}%
\bibitem [{\citenamefont
  {Dittrich}(2007)}]{dittrichPartialCompleteObservables2007}%
  \BibitemOpen
  \bibfield  {author} {\bibinfo {author} {\bibfnamefont {B.}~\bibnamefont
  {Dittrich}},\ }\href {\doibase 10.1007/s10714-007-0495-2} {\bibfield
  {journal} {\bibinfo  {journal} {Gen. Relativ. Gravit.}\ }\textbf {\bibinfo
  {volume} {39}},\ \bibinfo {pages} {1891} (\bibinfo {year}
  {2007})}\BibitemShut {NoStop}%
\bibitem [{\citenamefont {Dittrich}\ and\ \citenamefont
  {Tambornino}(2007{\natexlab{a}})}]{Dittrich:2006ee}%
  \BibitemOpen
  \bibfield  {author} {\bibinfo {author} {\bibfnamefont {B.}~\bibnamefont
  {Dittrich}}\ and\ \bibinfo {author} {\bibfnamefont {J.}~\bibnamefont
  {Tambornino}},\ }\href {\doibase 10.1088/0264-9381/24/4/001} {\bibfield
  {journal} {\bibinfo  {journal} {Class. Quantum Grav.}\ }\textbf {\bibinfo
  {volume} {24}},\ \bibinfo {pages} {757} (\bibinfo {year}
  {2007}{\natexlab{a}})}\BibitemShut {NoStop}%
\bibitem [{\citenamefont {Dittrich}\ and\ \citenamefont
  {Tambornino}(2007{\natexlab{b}})}]{Dittrich:2007jx}%
  \BibitemOpen
  \bibfield  {author} {\bibinfo {author} {\bibfnamefont {B.}~\bibnamefont
  {Dittrich}}\ and\ \bibinfo {author} {\bibfnamefont {J.}~\bibnamefont
  {Tambornino}},\ }\href {\doibase 10.1088/0264-9381/24/18/001} {\bibfield
  {journal} {\bibinfo  {journal} {Class. Quantum Grav.}\ }\textbf {\bibinfo
  {volume} {24}},\ \bibinfo {pages} {4543} (\bibinfo {year}
  {2007}{\natexlab{b}})}\BibitemShut {NoStop}%
\bibitem [{\citenamefont {Tambornino}(2012)}]{Tambornino:2011vg}%
  \BibitemOpen
  \bibfield  {author} {\bibinfo {author} {\bibfnamefont {J.}~\bibnamefont
  {Tambornino}},\ }\href {\doibase 10.3842/SIGMA.2012.017} {\bibfield
  {journal} {\bibinfo  {journal} {SIGMA}\ }\textbf {\bibinfo {volume} {8}},\
  \bibinfo {pages} {017} (\bibinfo {year} {2012})}\BibitemShut {NoStop}%
\bibitem [{\citenamefont {Goeller}\ \emph {et~al.}(2022)\citenamefont
  {Goeller}, \citenamefont {H\"ohn},\ and\ \citenamefont
  {Kirklin}}]{Goeller:2022rsx}%
  \BibitemOpen
  \bibfield  {author} {\bibinfo {author} {\bibfnamefont {C.}~\bibnamefont
  {Goeller}}, \bibinfo {author} {\bibfnamefont {P.~A.}\ \bibnamefont {H\"ohn}},
  \ and\ \bibinfo {author} {\bibfnamefont {J.}~\bibnamefont {Kirklin}},\
  }\href@noop {} {\  (\bibinfo {year} {2022})},\ \Eprint
  {http://arxiv.org/abs/2206.01193} {arXiv:2206.01193 [hep-th]} \BibitemShut
  {NoStop}%
\bibitem [{one()}]{onemf}%
  \BibitemOpen
  \href@noop {} {\enquote {\bibinfo {title} {The manifold atlas project:
  1-manifolds},}\ }\bibinfo {howpublished}
  {\url{http://www.map.mpim-bonn.mpg.de/1-manifolds}}\BibitemShut {NoStop}%
\bibitem [{\citenamefont {Giddings}\ \emph {et~al.}(2006)\citenamefont
  {Giddings}, \citenamefont {Marolf},\ and\ \citenamefont
  {Hartle}}]{Giddings:2005id}%
  \BibitemOpen
  \bibfield  {author} {\bibinfo {author} {\bibfnamefont {S.~B.}\ \bibnamefont
  {Giddings}}, \bibinfo {author} {\bibfnamefont {D.}~\bibnamefont {Marolf}}, \
  and\ \bibinfo {author} {\bibfnamefont {J.~B.}\ \bibnamefont {Hartle}},\
  }\href {\doibase 10.1103/PhysRevD.74.064018} {\bibfield  {journal} {\bibinfo
  {journal} {Phys. Rev. D}\ }\textbf {\bibinfo {volume} {74}},\ \bibinfo
  {pages} {064018} (\bibinfo {year} {2006})}\BibitemShut {NoStop}%
%%CITATION = HEP-TH/0512200;%%
\bibitem [{\citenamefont {Marolf}(1995{\natexlab{a}})}]{Marolf:1994nz}%
  \BibitemOpen
  \bibfield  {author} {\bibinfo {author} {\bibfnamefont {D.}~\bibnamefont
  {Marolf}},\ }\href {\doibase 10.1088/0264-9381/12/10/007} {\bibfield
  {journal} {\bibinfo  {journal} {Class. Quantum Grav.}\ }\textbf {\bibinfo
  {volume} {12}},\ \bibinfo {pages} {2469} (\bibinfo {year}
  {1995}{\natexlab{a}})}\BibitemShut {NoStop}%
%%CITATION = GR-QC/9412016;%%
\bibitem [{\citenamefont {H\"ohn}\ \emph {et~al.}(2012)\citenamefont {H\"ohn},
  \citenamefont {Kubalova},\ and\ \citenamefont {Tsobanjan}}]{Hohn:2011us}%
  \BibitemOpen
  \bibfield  {author} {\bibinfo {author} {\bibfnamefont {P.~A.}\ \bibnamefont
  {H\"ohn}}, \bibinfo {author} {\bibfnamefont {E.}~\bibnamefont {Kubalova}}, \
  and\ \bibinfo {author} {\bibfnamefont {A.}~\bibnamefont {Tsobanjan}},\ }\href
  {\doibase 10.1103/PhysRevD.86.065014} {\bibfield  {journal} {\bibinfo
  {journal} {Phys. Rev. D}\ }\textbf {\bibinfo {volume} {86}},\ \bibinfo
  {pages} {065014} (\bibinfo {year} {2012})}\BibitemShut {NoStop}%
%%CITATION = ARXIV:1111.5193;%%
\bibitem [{\citenamefont {Smith}\ and\ \citenamefont
  {Ahmadi}(2019)}]{Smith:2017pwx}%
  \BibitemOpen
  \bibfield  {author} {\bibinfo {author} {\bibfnamefont {A.~R.~H.}\
  \bibnamefont {Smith}}\ and\ \bibinfo {author} {\bibfnamefont
  {M.}~\bibnamefont {Ahmadi}},\ }\href {\doibase 10.22331/q-2019-07-08-160}
  {\bibfield  {journal} {\bibinfo  {journal} {Quantum}\ }\textbf {\bibinfo
  {volume} {3}},\ \bibinfo {pages} {160} (\bibinfo {year} {2019})}\BibitemShut
  {NoStop}%
%%CITATION = ARXIV:1712.00081;%%
\bibitem [{\citenamefont {Ali~Ahmad}\ \emph {et~al.}(2022)\citenamefont
  {Ali~Ahmad}, \citenamefont {Galley}, \citenamefont {H\"ohn}, \citenamefont
  {Lock},\ and\ \citenamefont {Smith}}]{Hoehn:2021wet}%
  \BibitemOpen
  \bibfield  {author} {\bibinfo {author} {\bibfnamefont {S.}~\bibnamefont
  {Ali~Ahmad}}, \bibinfo {author} {\bibfnamefont {T.~D.}\ \bibnamefont
  {Galley}}, \bibinfo {author} {\bibfnamefont {P.~A.}\ \bibnamefont {H\"ohn}},
  \bibinfo {author} {\bibfnamefont {M.~P.~E.}\ \bibnamefont {Lock}}, \ and\
  \bibinfo {author} {\bibfnamefont {A.~R.~H.}\ \bibnamefont {Smith}},\ }\href
  {\doibase 10.1103/PhysRevLett.128.170401} {\bibfield  {journal} {\bibinfo
  {journal} {Phys. Rev. Lett.}\ }\textbf {\bibinfo {volume} {128}},\ \bibinfo
  {pages} {170401} (\bibinfo {year} {2022})},\ \Eprint
  {http://arxiv.org/abs/2103.01232} {arXiv:2103.01232 [quant-ph]} \BibitemShut
  {NoStop}%
\bibitem [{\citenamefont {Giovannetti}\ \emph {et~al.}(2015)\citenamefont
  {Giovannetti}, \citenamefont {Lloyd},\ and\ \citenamefont
  {Maccone}}]{giovannettiQuantumTime2015}%
  \BibitemOpen
  \bibfield  {author} {\bibinfo {author} {\bibfnamefont {V.}~\bibnamefont
  {Giovannetti}}, \bibinfo {author} {\bibfnamefont {S.}~\bibnamefont {Lloyd}},
  \ and\ \bibinfo {author} {\bibfnamefont {L.}~\bibnamefont {Maccone}},\ }\href
  {\doibase 10.1103/PhysRevD.92.045033} {\bibfield  {journal} {\bibinfo
  {journal} {Phys. Rev. D}\ }\textbf {\bibinfo {volume} {79}},\ \bibinfo
  {pages} {945933} (\bibinfo {year} {2015})}\BibitemShut {NoStop}%
\bibitem [{\citenamefont {Castro-Ruiz}\ \emph {et~al.}(2020)\citenamefont
  {Castro-Ruiz}, \citenamefont {Giacomini}, \citenamefont {Belenchia},\ and\
  \citenamefont {Brukner}}]{castro-ruizTimeReferenceFrames2019}%
  \BibitemOpen
  \bibfield  {author} {\bibinfo {author} {\bibfnamefont {E.}~\bibnamefont
  {Castro-Ruiz}}, \bibinfo {author} {\bibfnamefont {F.}~\bibnamefont
  {Giacomini}}, \bibinfo {author} {\bibfnamefont {A.}~\bibnamefont
  {Belenchia}}, \ and\ \bibinfo {author} {\bibfnamefont {{\v C}.}~\bibnamefont
  {Brukner}},\ }\href {\doibase 10.1038/s41467-020-16013-1} {\bibfield
  {journal} {\bibinfo  {journal} {Nat. Commun.}\ }\textbf {\bibinfo {volume}
  {11}},\ \bibinfo {pages} {2672} (\bibinfo {year} {2020})}\BibitemShut
  {NoStop}%
\bibitem [{\citenamefont {Dittrich}\ \emph {et~al.}(2017)\citenamefont
  {Dittrich}, \citenamefont {H\"ohn}, \citenamefont {Koslowski},\ and\
  \citenamefont {Nelson}}]{Dittrich:2016hvj}%
  \BibitemOpen
  \bibfield  {author} {\bibinfo {author} {\bibfnamefont {B.}~\bibnamefont
  {Dittrich}}, \bibinfo {author} {\bibfnamefont {P.~A.}\ \bibnamefont
  {H\"ohn}}, \bibinfo {author} {\bibfnamefont {T.~A.}\ \bibnamefont
  {Koslowski}}, \ and\ \bibinfo {author} {\bibfnamefont {M.~I.}\ \bibnamefont
  {Nelson}},\ }\href {\doibase 10.1016/j.physletb.2017.02.038} {\bibfield
  {journal} {\bibinfo  {journal} {Phys. Lett. B}\ }\textbf {\bibinfo {volume}
  {769}},\ \bibinfo {pages} {554} (\bibinfo {year} {2017})}\BibitemShut
  {NoStop}%
%%CITATION = ARXIV:1602.03237;%%
\bibitem [{\citenamefont {Libermann}\ and\ \citenamefont
  {Marle}(1987)}]{Libermann1987}%
  \BibitemOpen
  \bibfield  {author} {\bibinfo {author} {\bibfnamefont {P.}~\bibnamefont
  {Libermann}}\ and\ \bibinfo {author} {\bibfnamefont {C.-M.}\ \bibnamefont
  {Marle}},\ }\enquote {\bibinfo {title} {Symplectic manifolds and poisson
  manifolds},}\ in\ \href {\doibase 10.1007/978-94-009-3807-6_3} {\emph
  {\bibinfo {booktitle} {Symplectic Geometry and Analytical Mechanics}}}\
  (\bibinfo  {publisher} {Springer Netherlands},\ \bibinfo {address}
  {Dordrecht},\ \bibinfo {year} {1987})\ pp.\ \bibinfo {pages}
  {89--184}\BibitemShut {NoStop}%
\bibitem [{\citenamefont {Arnol'd}(1989)}]{arnold1989mathematical}%
  \BibitemOpen
  \bibfield  {author} {\bibinfo {author} {\bibfnamefont {V.~I.}\ \bibnamefont
  {Arnol'd}},\ }\href@noop {} {\emph {\bibinfo {title} {Mathematical methods of
  classical mechanics}}},\ \bibinfo {series} {Graduate Texts in Mathematics},
  Vol.~\bibinfo {volume} {60}\ (\bibinfo  {publisher} {Springer-Verlag},\
  \bibinfo {address} {New York},\ \bibinfo {year} {1989})\BibitemShut {NoStop}%
\bibitem [{\citenamefont {Bojowald}\ \emph {et~al.}(2011)\citenamefont
  {Bojowald}, \citenamefont {H\"ohn},\ and\ \citenamefont
  {Tsobanjan}}]{Bojowald:2010qw}%
  \BibitemOpen
  \bibfield  {author} {\bibinfo {author} {\bibfnamefont {M.}~\bibnamefont
  {Bojowald}}, \bibinfo {author} {\bibfnamefont {P.~A.}\ \bibnamefont
  {H\"ohn}}, \ and\ \bibinfo {author} {\bibfnamefont {A.}~\bibnamefont
  {Tsobanjan}},\ }\href {\doibase 10.1103/PhysRevD.83.125023} {\bibfield
  {journal} {\bibinfo  {journal} {Phys. Rev. D}\ }\textbf {\bibinfo {volume}
  {83}},\ \bibinfo {pages} {125023} (\bibinfo {year} {2011})}\BibitemShut
  {NoStop}%
%%CITATION = ARXIV:1011.3040;%%
\bibitem [{\citenamefont {Marolf}(1995{\natexlab{b}})}]{Marolf:1994wh}%
  \BibitemOpen
  \bibfield  {author} {\bibinfo {author} {\bibfnamefont {D.}~\bibnamefont
  {Marolf}},\ }\href {\doibase 10.1088/0264-9381/12/5/011} {\bibfield
  {journal} {\bibinfo  {journal} {Class. Quantum Grav.}\ }\textbf {\bibinfo
  {volume} {12}},\ \bibinfo {pages} {1199} (\bibinfo {year}
  {1995}{\natexlab{b}})}\BibitemShut {NoStop}%
%%CITATION = GR-QC/9404053;%%
\bibitem [{\citenamefont {Chataignier}(2020)}]{Chataignier:2019kof}%
  \BibitemOpen
  \bibfield  {author} {\bibinfo {author} {\bibfnamefont {L.}~\bibnamefont
  {Chataignier}},\ }\href {\doibase 10.1103/PhysRevD.101.086001} {\bibfield
  {journal} {\bibinfo  {journal} {Phys. Rev. D}\ }\textbf {\bibinfo {volume}
  {101}},\ \bibinfo {pages} {086001} (\bibinfo {year} {2020})}\BibitemShut
  {NoStop}%
\bibitem [{\citenamefont {Chataignier}(2021)}]{Chataignier:2020fys}%
  \BibitemOpen
  \bibfield  {author} {\bibinfo {author} {\bibfnamefont {L.}~\bibnamefont
  {Chataignier}},\ }\href {\doibase 10.1103/PhysRevD.103.026013} {\bibfield
  {journal} {\bibinfo  {journal} {Phys. Rev. D}\ }\textbf {\bibinfo {volume}
  {103}},\ \bibinfo {pages} {026013} (\bibinfo {year} {2021})}\BibitemShut
  {NoStop}%
\bibitem [{\citenamefont {Chataignier}(2022)}]{ChataignierT}%
  \BibitemOpen
  \bibfield  {author} {\bibinfo {author} {\bibfnamefont {L.}~\bibnamefont
  {Chataignier}},\ }\href {\doibase 10.1007/978-3-030-94448-3} {\emph {\bibinfo
  {title} {{Timeless Quantum Mechanics and the Early Universe}}}},\ Springer
  Theses\ (\bibinfo  {publisher} {Springer},\ \bibinfo {address} {Cham},\
  \bibinfo {year} {2022})\BibitemShut {NoStop}%
\bibitem [{\citenamefont {Chataignier}(2024)}]{Chataignier:2024ley}%
  \BibitemOpen
  \bibfield  {author} {\bibinfo {author} {\bibfnamefont {L.}~\bibnamefont
  {Chataignier}},\ }\href {\doibase 10.1103/PhysRevD.110.126023} {\bibfield
  {journal} {\bibinfo  {journal} {Physical Review D}\ }\textbf {\bibinfo
  {volume} {110}},\ \bibinfo {pages} {126023} (\bibinfo {year}
  {2024})}\BibitemShut {NoStop}%
\bibitem [{\citenamefont {Dittrich}\ \emph {et~al.}(2015)\citenamefont
  {Dittrich}, \citenamefont {H\"ohn}, \citenamefont {Koslowski},\ and\
  \citenamefont {Nelson}}]{Dittrich:2015vfa}%
  \BibitemOpen
  \bibfield  {author} {\bibinfo {author} {\bibfnamefont {B.}~\bibnamefont
  {Dittrich}}, \bibinfo {author} {\bibfnamefont {P.~A.}\ \bibnamefont
  {H\"ohn}}, \bibinfo {author} {\bibfnamefont {T.~A.}\ \bibnamefont
  {Koslowski}}, \ and\ \bibinfo {author} {\bibfnamefont {M.~I.}\ \bibnamefont
  {Nelson}},\ }\href@noop {} {\  (\bibinfo {year} {2015})},\ \Eprint
  {http://arxiv.org/abs/1508.01947} {arXiv:1508.01947 [gr-qc]} \BibitemShut
  {NoStop}%
%%CITATION = ARXIV:1508.01947;%%
\bibitem [{\citenamefont {Moreva}\ \emph {et~al.}(2014)\citenamefont {Moreva},
  \citenamefont {Brida}, \citenamefont {Gramegna}, \citenamefont {Giovannetti},
  \citenamefont {Maccone},\ and\ \citenamefont {Genovese}}]{moreva2014time}%
  \BibitemOpen
  \bibfield  {author} {\bibinfo {author} {\bibfnamefont {E.}~\bibnamefont
  {Moreva}}, \bibinfo {author} {\bibfnamefont {G.}~\bibnamefont {Brida}},
  \bibinfo {author} {\bibfnamefont {M.}~\bibnamefont {Gramegna}}, \bibinfo
  {author} {\bibfnamefont {V.}~\bibnamefont {Giovannetti}}, \bibinfo {author}
  {\bibfnamefont {L.}~\bibnamefont {Maccone}}, \ and\ \bibinfo {author}
  {\bibfnamefont {M.}~\bibnamefont {Genovese}},\ }\href {\doibase
  10.1103/PhysRevA.89.052122} {\bibfield  {journal} {\bibinfo  {journal} {Phys.
  Rev. A}\ }\textbf {\bibinfo {volume} {89}},\ \bibinfo {pages} {052122}
  (\bibinfo {year} {2014})}\BibitemShut {NoStop}%
\bibitem [{\citenamefont {H\"ohn}\ \emph {et~al.}(2023)\citenamefont {H\"ohn},
  \citenamefont {Kotecha},\ and\ \citenamefont {Mele}}]{Hoehn:2023ehz}%
  \BibitemOpen
  \bibfield  {author} {\bibinfo {author} {\bibfnamefont {P.~A.}\ \bibnamefont
  {H\"ohn}}, \bibinfo {author} {\bibfnamefont {I.}~\bibnamefont {Kotecha}}, \
  and\ \bibinfo {author} {\bibfnamefont {F.~M.}\ \bibnamefont {Mele}},\
  }\href@noop {} {\  (\bibinfo {year} {2023})},\ \Eprint
  {http://arxiv.org/abs/2308.09131} {arXiv:2308.09131 [quant-ph]} \BibitemShut
  {NoStop}%
\bibitem [{\citenamefont {H\"ohn}\ and\ \citenamefont
  {Vanrietvelde}(2020)}]{hoehnHowSwitchRelational2018}%
  \BibitemOpen
  \bibfield  {author} {\bibinfo {author} {\bibfnamefont {P.~A.}\ \bibnamefont
  {H\"ohn}}\ and\ \bibinfo {author} {\bibfnamefont {A.}~\bibnamefont
  {Vanrietvelde}},\ }\href@noop {} {\bibfield  {journal} {\bibinfo  {journal}
  {New J.\ Phys.}\ }\textbf {\bibinfo {volume} {22}},\ \bibinfo {pages}
  {123048} (\bibinfo {year} {2020})},\ \Eprint
  {http://arxiv.org/abs/1810.04153} {arXiv:1810.04153 [gr-qc]} \BibitemShut
  {NoStop}%
\bibitem [{\citenamefont {H\"ohn}(2019)}]{Hoehn:2018whn}%
  \BibitemOpen
  \bibfield  {author} {\bibinfo {author} {\bibfnamefont {P.~A.}\ \bibnamefont
  {H\"ohn}},\ }\href {\doibase 10.3390/universe5050116} {\bibfield  {journal}
  {\bibinfo  {journal} {Universe}\ }\textbf {\bibinfo {volume} {5}},\ \bibinfo
  {pages} {116} (\bibinfo {year} {2019})}\BibitemShut {NoStop}%
%%CITATION = ARXIV:1811.00611;%%
\bibitem [{\citenamefont {Vanrietvelde}\ \emph {et~al.}(2020)\citenamefont
  {Vanrietvelde}, \citenamefont {H\"ohn}, \citenamefont {Giacomini},\ and\
  \citenamefont {Castro-Ruiz}}]{Vanrietvelde:2018pgb}%
  \BibitemOpen
  \bibfield  {author} {\bibinfo {author} {\bibfnamefont {A.}~\bibnamefont
  {Vanrietvelde}}, \bibinfo {author} {\bibfnamefont {P.~A.}\ \bibnamefont
  {H\"ohn}}, \bibinfo {author} {\bibfnamefont {F.}~\bibnamefont {Giacomini}}, \
  and\ \bibinfo {author} {\bibfnamefont {E.}~\bibnamefont {Castro-Ruiz}},\
  }\href {\doibase 10.22331/q-2020-01-27-225} {\bibfield  {journal} {\bibinfo
  {journal} {Quantum}\ }\textbf {\bibinfo {volume} {4}},\ \bibinfo {pages}
  {225} (\bibinfo {year} {2020})}\BibitemShut {NoStop}%
\bibitem [{\citenamefont {Vanrietvelde}\ \emph {et~al.}(2023)\citenamefont
  {Vanrietvelde}, \citenamefont {H\"ohn},\ and\ \citenamefont
  {Giacomini}}]{Vanrietvelde:2018dit}%
  \BibitemOpen
  \bibfield  {author} {\bibinfo {author} {\bibfnamefont {A.}~\bibnamefont
  {Vanrietvelde}}, \bibinfo {author} {\bibfnamefont {P.~A.}\ \bibnamefont
  {H\"ohn}}, \ and\ \bibinfo {author} {\bibfnamefont {F.}~\bibnamefont
  {Giacomini}},\ }\href {\doibase 10.22331/q-2023-08-22-1088} {\bibfield
  {journal} {\bibinfo  {journal} {Quantum}\ }\textbf {\bibinfo {volume} {7}},\
  \bibinfo {pages} {1088} (\bibinfo {year} {2023})},\ \Eprint
  {http://arxiv.org/abs/1809.05093} {arXiv:1809.05093 [quant-ph]} \BibitemShut
  {NoStop}%
\bibitem [{\citenamefont {Thiemann}(2006)}]{Thiemann:2004wk}%
  \BibitemOpen
  \bibfield  {author} {\bibinfo {author} {\bibfnamefont {T.}~\bibnamefont
  {Thiemann}},\ }\href {\doibase 10.1088/0264-9381/23/4/006} {\bibfield
  {journal} {\bibinfo  {journal} {Class. Quantum Grav.}\ }\textbf {\bibinfo
  {volume} {23}},\ \bibinfo {pages} {1163} (\bibinfo {year}
  {2006})}\BibitemShut {NoStop}%
%%CITATION = GR-QC/0411031;%%
\bibitem [{\citenamefont {Braunstein}\ \emph {et~al.}(1996)\citenamefont
  {Braunstein}, \citenamefont {Caves},\ and\ \citenamefont
  {Milburn}}]{braunsteinGeneralizedUncertaintyRelations1996}%
  \BibitemOpen
  \bibfield  {author} {\bibinfo {author} {\bibfnamefont {S.~L.}\ \bibnamefont
  {Braunstein}}, \bibinfo {author} {\bibfnamefont {C.~M.}\ \bibnamefont
  {Caves}}, \ and\ \bibinfo {author} {\bibfnamefont {G.~J.}\ \bibnamefont
  {Milburn}},\ }\href {\doibase 10.1006/aphy.1996.0040} {\bibfield  {journal}
  {\bibinfo  {journal} {Ann. Phys.}\ }\textbf {\bibinfo {volume} {247}},\
  \bibinfo {pages} {135} (\bibinfo {year} {1996})}\BibitemShut {NoStop}%
\bibitem [{\citenamefont {Busch}\ \emph {et~al.}(1995)\citenamefont {Busch},
  \citenamefont {Grabowski},\ and\ \citenamefont
  {Lahti}}]{buschOperationalQuantumPhysics}%
  \BibitemOpen
  \bibfield  {author} {\bibinfo {author} {\bibfnamefont {P.}~\bibnamefont
  {Busch}}, \bibinfo {author} {\bibfnamefont {M.}~\bibnamefont {Grabowski}}, \
  and\ \bibinfo {author} {\bibfnamefont {P.~J.}\ \bibnamefont {Lahti}},\ }\href
  {\doibase 10.1007/978-3-540-49239-9} {\emph {\bibinfo {title} {Operational
  {{Quantum Physics}}}}},\ \bibinfo {series} {Lecture {{Notes}} in {{Physics
  Monographs}}}, Vol.~\bibinfo {volume} {31}\ (\bibinfo  {publisher}
  {Springer-Verlag},\ \bibinfo {address} {Berlin, Heidelberg},\ \bibinfo {year}
  {1995})\BibitemShut {NoStop}%
\bibitem [{\citenamefont {Favalli}\ and\ \citenamefont
  {Smerzi}(2020)}]{Favalli:2020gmx}%
  \BibitemOpen
  \bibfield  {author} {\bibinfo {author} {\bibfnamefont {T.}~\bibnamefont
  {Favalli}}\ and\ \bibinfo {author} {\bibfnamefont {A.}~\bibnamefont
  {Smerzi}},\ }\href {\doibase 10.22331/q-2020-10-29-354} {\bibfield  {journal}
  {\bibinfo  {journal} {Quantum}\ }\textbf {\bibinfo {volume} {4}},\ \bibinfo
  {pages} {354} (\bibinfo {year} {2020})},\ \Eprint
  {http://arxiv.org/abs/2003.09042} {arXiv:2003.09042 [quant-ph]} \BibitemShut
  {NoStop}%
\bibitem [{\citenamefont {Garrison}\ and\ \citenamefont
  {Wong}(1970)}]{garrisonCanonicallyConjugatePairs1970}%
  \BibitemOpen
  \bibfield  {author} {\bibinfo {author} {\bibfnamefont {J.~C.}\ \bibnamefont
  {Garrison}}\ and\ \bibinfo {author} {\bibfnamefont {J.}~\bibnamefont
  {Wong}},\ }\href {\doibase 10.1063/1.1665388} {\bibfield  {journal} {\bibinfo
   {journal} {J. Math. Phys.}\ }\textbf {\bibinfo {volume} {11}},\ \bibinfo
  {pages} {2242} (\bibinfo {year} {1970})}\BibitemShut {NoStop}%
\bibitem [{\citenamefont {Lahti}\ and\ \citenamefont
  {Pellonp{\"a}{\"a}}(1999)}]{lahtiCovariantPhaseObservables1999}%
  \BibitemOpen
  \bibfield  {author} {\bibinfo {author} {\bibfnamefont {P.}~\bibnamefont
  {Lahti}}\ and\ \bibinfo {author} {\bibfnamefont {J.-P.}\ \bibnamefont
  {Pellonp{\"a}{\"a}}},\ }\href {\doibase 10.1063/1.532998} {\bibfield
  {journal} {\bibinfo  {journal} {J. Math. Phys.}\ }\textbf {\bibinfo {volume}
  {40}},\ \bibinfo {pages} {4688} (\bibinfo {year} {1999})}\BibitemShut
  {NoStop}%
\bibitem [{\citenamefont {Susskind}\ and\ \citenamefont
  {Glogower}(1964)}]{susskind1964quantum}%
  \BibitemOpen
  \bibfield  {author} {\bibinfo {author} {\bibfnamefont {L.}~\bibnamefont
  {Susskind}}\ and\ \bibinfo {author} {\bibfnamefont {J.}~\bibnamefont
  {Glogower}},\ }\href {\doibase 10.1103/PhysicsPhysiqueFizika.1.49} {\bibfield
   {journal} {\bibinfo  {journal} {Physics Physique Fizika}\ }\textbf {\bibinfo
  {volume} {1}},\ \bibinfo {pages} {49} (\bibinfo {year} {1964})}\BibitemShut
  {NoStop}%
\bibitem [{\citenamefont {Marolf}(1995{\natexlab{c}})}]{marolfRAQ1995}%
  \BibitemOpen
  \bibfield  {author} {\bibinfo {author} {\bibfnamefont {D.}~\bibnamefont
  {Marolf}},\ }\href@noop {} {\  (\bibinfo {year} {1995}{\natexlab{c}})},\
  \Eprint {http://arxiv.org/abs/gr-qc/9508015} {arXiv:gr-qc/9508015}
  \BibitemShut {NoStop}%
\bibitem [{\citenamefont {Giulini}\ and\ \citenamefont
  {Marolf}(1999{\natexlab{a}})}]{Giulini:1998kf}%
  \BibitemOpen
  \bibfield  {author} {\bibinfo {author} {\bibfnamefont {D.}~\bibnamefont
  {Giulini}}\ and\ \bibinfo {author} {\bibfnamefont {D.}~\bibnamefont
  {Marolf}},\ }\href {\doibase 10.1088/0264-9381/16/7/322} {\bibfield
  {journal} {\bibinfo  {journal} {Class. Quant. Grav.}\ }\textbf {\bibinfo
  {volume} {16}},\ \bibinfo {pages} {2489} (\bibinfo {year}
  {1999}{\natexlab{a}})},\ \Eprint {http://arxiv.org/abs/gr-qc/9902045}
  {arXiv:gr-qc/9902045} \BibitemShut {NoStop}%
\bibitem [{\citenamefont {Giulini}\ and\ \citenamefont
  {Marolf}(1999{\natexlab{b}})}]{Giulini:1998rk}%
  \BibitemOpen
  \bibfield  {author} {\bibinfo {author} {\bibfnamefont {D.}~\bibnamefont
  {Giulini}}\ and\ \bibinfo {author} {\bibfnamefont {D.}~\bibnamefont
  {Marolf}},\ }\href {\doibase 10.1088/0264-9381/16/7/321} {\bibfield
  {journal} {\bibinfo  {journal} {Class. Quant. Grav.}\ }\textbf {\bibinfo
  {volume} {16}},\ \bibinfo {pages} {2479} (\bibinfo {year}
  {1999}{\natexlab{b}})},\ \Eprint {http://arxiv.org/abs/gr-qc/9812024}
  {arXiv:gr-qc/9812024} \BibitemShut {NoStop}%
\bibitem [{\citenamefont {Hartle}\ and\ \citenamefont
  {Marolf}(1997)}]{Hartle:1997dc}%
  \BibitemOpen
  \bibfield  {author} {\bibinfo {author} {\bibfnamefont {J.~B.}\ \bibnamefont
  {Hartle}}\ and\ \bibinfo {author} {\bibfnamefont {D.}~\bibnamefont
  {Marolf}},\ }\href {\doibase 10.1103/PhysRevD.56.6247} {\bibfield  {journal}
  {\bibinfo  {journal} {Phys. Rev. D}\ }\textbf {\bibinfo {volume} {56}},\
  \bibinfo {pages} {6247} (\bibinfo {year} {1997})}\BibitemShut {NoStop}%
%%CITATION = GR-QC/9703021;%%
\bibitem [{\citenamefont {Marolf}(2000)}]{Marolf:2000iq}%
  \BibitemOpen
  \bibfield  {author} {\bibinfo {author} {\bibfnamefont {D.}~\bibnamefont
  {Marolf}},\ }\href@noop {} {\  (\bibinfo {year} {2000})},\ \Eprint
  {http://arxiv.org/abs/gr-qc/0011112} {arXiv:gr-qc/0011112 [gr-qc]}
  \BibitemShut {NoStop}%
%%CITATION = GR-QC/0011112;%%
\bibitem [{\citenamefont {Bocchieri}\ and\ \citenamefont
  {Loinger}(1957)}]{PhysRev.107.337}%
  \BibitemOpen
  \bibfield  {author} {\bibinfo {author} {\bibfnamefont {P.}~\bibnamefont
  {Bocchieri}}\ and\ \bibinfo {author} {\bibfnamefont {A.}~\bibnamefont
  {Loinger}},\ }\href {\doibase 10.1103/PhysRev.107.337} {\bibfield  {journal}
  {\bibinfo  {journal} {Phys. Rev.}\ }\textbf {\bibinfo {volume} {107}},\
  \bibinfo {pages} {337} (\bibinfo {year} {1957})}\BibitemShut {NoStop}%
\bibitem [{\citenamefont {Percival}(1961)}]{percival}%
  \BibitemOpen
  \bibfield  {author} {\bibinfo {author} {\bibfnamefont {I.~C.}\ \bibnamefont
  {Percival}},\ }\href {\doibase 10.1063/1.1703705} {\bibfield  {journal}
  {\bibinfo  {journal} {J. Math. Phys.}\ }\textbf {\bibinfo {volume} {2}},\
  \bibinfo {pages} {235} (\bibinfo {year} {1961})}\BibitemShut {NoStop}%
\bibitem [{\citenamefont {Schulman}(1978)}]{schulman}%
  \BibitemOpen
  \bibfield  {author} {\bibinfo {author} {\bibfnamefont {L.~S.}\ \bibnamefont
  {Schulman}},\ }\href {\doibase 10.1103/PhysRevA.18.2379} {\bibfield
  {journal} {\bibinfo  {journal} {Phys. Rev. A}\ }\textbf {\bibinfo {volume}
  {18}},\ \bibinfo {pages} {2379} (\bibinfo {year} {1978})}\BibitemShut
  {NoStop}%
\bibitem [{\citenamefont {Gambini}\ and\ \citenamefont
  {Pullin}(2007)}]{Gambini:2006ph}%
  \BibitemOpen
  \bibfield  {author} {\bibinfo {author} {\bibfnamefont {R.}~\bibnamefont
  {Gambini}}\ and\ \bibinfo {author} {\bibfnamefont {J.}~\bibnamefont
  {Pullin}},\ }\href {\doibase 10.1007/s10701-007-9144-6} {\bibfield  {journal}
  {\bibinfo  {journal} {Found. Phys.}\ }\textbf {\bibinfo {volume} {37}},\
  \bibinfo {pages} {1074} (\bibinfo {year} {2007})}\BibitemShut {NoStop}%
%%CITATION = QUANT-PH/0608243;%%
\bibitem [{\citenamefont {Gambini}\ \emph {et~al.}(2007)\citenamefont
  {Gambini}, \citenamefont {Porto},\ and\ \citenamefont
  {Pullin}}]{Gambini:2006yj}%
  \BibitemOpen
  \bibfield  {author} {\bibinfo {author} {\bibfnamefont {R.}~\bibnamefont
  {Gambini}}, \bibinfo {author} {\bibfnamefont {R.}~\bibnamefont {Porto}}, \
  and\ \bibinfo {author} {\bibfnamefont {J.}~\bibnamefont {Pullin}},\ }\href
  {\doibase 10.1007/s10714-007-0451-1} {\bibfield  {journal} {\bibinfo
  {journal} {Gen. Relativ. Gravit.}\ }\textbf {\bibinfo {volume} {39}},\
  \bibinfo {pages} {1143} (\bibinfo {year} {2007})}\BibitemShut {NoStop}%
%%CITATION = GR-QC/0603090;%%
\bibitem [{\citenamefont {Gambini}\ \emph {et~al.}(2009)\citenamefont
  {Gambini}, \citenamefont {Porto}, \citenamefont {Pullin},\ and\ \citenamefont
  {Torterolo}}]{Gambini:2008ke}%
  \BibitemOpen
  \bibfield  {author} {\bibinfo {author} {\bibfnamefont {R.}~\bibnamefont
  {Gambini}}, \bibinfo {author} {\bibfnamefont {R.~A.}\ \bibnamefont {Porto}},
  \bibinfo {author} {\bibfnamefont {J.}~\bibnamefont {Pullin}}, \ and\ \bibinfo
  {author} {\bibfnamefont {S.}~\bibnamefont {Torterolo}},\ }\href {\doibase
  10.1103/PhysRevD.79.041501} {\bibfield  {journal} {\bibinfo  {journal} {Phys.
  Rev. D}\ }\textbf {\bibinfo {volume} {79}},\ \bibinfo {pages} {041501}
  (\bibinfo {year} {2009})}\BibitemShut {NoStop}%
%%CITATION = ARXIV:0809.4235;%%
\bibitem [{\citenamefont {Smith}\ and\ \citenamefont
  {Ahmadi}(2020)}]{Smith:2019imm}%
  \BibitemOpen
  \bibfield  {author} {\bibinfo {author} {\bibfnamefont {A.~R.~H.}\
  \bibnamefont {Smith}}\ and\ \bibinfo {author} {\bibfnamefont
  {M.}~\bibnamefont {Ahmadi}},\ }\href {\doibase 10.1038/s41467-020-18264-4}
  {\bibfield  {journal} {\bibinfo  {journal} {Nature Commun.}\ }\textbf
  {\bibinfo {volume} {11}},\ \bibinfo {pages} {5360} (\bibinfo {year}
  {2020})},\ \Eprint {http://arxiv.org/abs/1904.12390} {arXiv:1904.12390
  [quant-ph]} \BibitemShut {NoStop}%
\bibitem [{\citenamefont {Dolby}(2004)}]{Dolby:2004ak}%
  \BibitemOpen
  \bibfield  {author} {\bibinfo {author} {\bibfnamefont {C.~E.}\ \bibnamefont
  {Dolby}},\ }\href@noop {} {\  (\bibinfo {year} {2004})},\ \Eprint
  {http://arxiv.org/abs/gr-qc/0406034} {arXiv:gr-qc/0406034 [gr-qc]}
  \BibitemShut {NoStop}%
%%CITATION = GR-QC/0406034;%%
\bibitem [{\citenamefont {Boette}\ and\ \citenamefont
  {Rossignoli}(2018)}]{Boette:2018uix}%
  \BibitemOpen
  \bibfield  {author} {\bibinfo {author} {\bibfnamefont {A.}~\bibnamefont
  {Boette}}\ and\ \bibinfo {author} {\bibfnamefont {R.}~\bibnamefont
  {Rossignoli}},\ }\href {\doibase 10.1103/PhysRevA.98.032108} {\bibfield
  {journal} {\bibinfo  {journal} {Phys. Rev. A}\ }\textbf {\bibinfo {volume}
  {98}},\ \bibinfo {pages} {032108} (\bibinfo {year} {2018})}\BibitemShut
  {NoStop}%
%%CITATION = ARXIV:1806.00956;%%
\bibitem [{\citenamefont {Diaz}\ and\ \citenamefont
  {Rossignoli}(2019)}]{Diaz:2018uny}%
  \BibitemOpen
  \bibfield  {author} {\bibinfo {author} {\bibfnamefont {N.~L.}\ \bibnamefont
  {Diaz}}\ and\ \bibinfo {author} {\bibfnamefont {R.}~\bibnamefont
  {Rossignoli}},\ }\href {\doibase 10.1103/PhysRevD.99.045008} {\bibfield
  {journal} {\bibinfo  {journal} {Physical Review D}\ }\textbf {\bibinfo
  {volume} {99}},\ \bibinfo {pages} {045008} (\bibinfo {year}
  {2019})}\BibitemShut {NoStop}%
\bibitem [{\citenamefont {Diaz}\ \emph {et~al.}(2019)\citenamefont {Diaz},
  \citenamefont {Matera},\ and\ \citenamefont {Rossignoli}}]{Diaz:2019xie}%
  \BibitemOpen
  \bibfield  {author} {\bibinfo {author} {\bibfnamefont {N.~L.}\ \bibnamefont
  {Diaz}}, \bibinfo {author} {\bibfnamefont {J.~M.}\ \bibnamefont {Matera}}, \
  and\ \bibinfo {author} {\bibfnamefont {R.}~\bibnamefont {Rossignoli}},\
  }\href {\doibase 10.1103/PhysRevD.100.125020} {\bibfield  {journal} {\bibinfo
   {journal} {Physical Review D}\ }\textbf {\bibinfo {volume} {100}},\ \bibinfo
  {pages} {125020} (\bibinfo {year} {2019})}\BibitemShut {NoStop}%
\bibitem [{\citenamefont {Leon}\ and\ \citenamefont
  {Maccone}(2017)}]{leonPauliObjection2017}%
  \BibitemOpen
  \bibfield  {author} {\bibinfo {author} {\bibfnamefont {J.}~\bibnamefont
  {Leon}}\ and\ \bibinfo {author} {\bibfnamefont {L.}~\bibnamefont {Maccone}},\
  }\href {\doibase 10.1007/s10701-017-0115-2} {\bibfield  {journal} {\bibinfo
  {journal} {Found. Phys.}\ }\textbf {\bibinfo {volume} {47}},\ \bibinfo
  {pages} {1597} (\bibinfo {year} {2017})}\BibitemShut {NoStop}%
\bibitem [{\citenamefont {Nikolova}\ \emph {et~al.}(2018)\citenamefont
  {Nikolova}, \citenamefont {Brennen}, \citenamefont {Osborne}, \citenamefont
  {Milburn},\ and\ \citenamefont {Stace}}]{Nikolova:2017huj}%
  \BibitemOpen
  \bibfield  {author} {\bibinfo {author} {\bibfnamefont {A.}~\bibnamefont
  {Nikolova}}, \bibinfo {author} {\bibfnamefont {G.}~\bibnamefont {Brennen}},
  \bibinfo {author} {\bibfnamefont {T.~J.}\ \bibnamefont {Osborne}}, \bibinfo
  {author} {\bibfnamefont {G.}~\bibnamefont {Milburn}}, \ and\ \bibinfo
  {author} {\bibfnamefont {T.~M.}\ \bibnamefont {Stace}},\ }\href {\doibase
  10.1103/PhysRevA.97.030101} {\bibfield  {journal} {\bibinfo  {journal} {Phys.
  Rev. A}\ }\textbf {\bibinfo {volume} {97}},\ \bibinfo {pages} {030101}
  (\bibinfo {year} {2018})}\BibitemShut {NoStop}%
%%CITATION = ARXIV:1705.04130;%%
\bibitem [{\citenamefont {Baumann}\ \emph {et~al.}(2021)\citenamefont
  {Baumann}, \citenamefont {Del~Santo}, \citenamefont {Smith}, \citenamefont
  {Giacomini}, \citenamefont {Castro-Ruiz},\ and\ \citenamefont
  {Brukner}}]{baumann2019generalized}%
  \BibitemOpen
  \bibfield  {author} {\bibinfo {author} {\bibfnamefont {V.}~\bibnamefont
  {Baumann}}, \bibinfo {author} {\bibfnamefont {F.}~\bibnamefont {Del~Santo}},
  \bibinfo {author} {\bibfnamefont {A.~R.}\ \bibnamefont {Smith}}, \bibinfo
  {author} {\bibfnamefont {F.}~\bibnamefont {Giacomini}}, \bibinfo {author}
  {\bibfnamefont {E.}~\bibnamefont {Castro-Ruiz}}, \ and\ \bibinfo {author}
  {\bibfnamefont {C.}~\bibnamefont {Brukner}},\ }\href {\doibase
  10.22331/q-2021-08-16-524} {\bibfield  {journal} {\bibinfo  {journal}
  {Quantum}\ }\textbf {\bibinfo {volume} {5}},\ \bibinfo {pages} {524}
  (\bibinfo {year} {2021})}\BibitemShut {NoStop}%
\bibitem [{\citenamefont {Foti}\ \emph {et~al.}(2021)\citenamefont {Foti},
  \citenamefont {Coppo}, \citenamefont {Barni}, \citenamefont {Cuccoli},\ and\
  \citenamefont {Verrucchi}}]{Foti:2020erm}%
  \BibitemOpen
  \bibfield  {author} {\bibinfo {author} {\bibfnamefont {C.}~\bibnamefont
  {Foti}}, \bibinfo {author} {\bibfnamefont {A.}~\bibnamefont {Coppo}},
  \bibinfo {author} {\bibfnamefont {G.}~\bibnamefont {Barni}}, \bibinfo
  {author} {\bibfnamefont {A.}~\bibnamefont {Cuccoli}}, \ and\ \bibinfo
  {author} {\bibfnamefont {P.}~\bibnamefont {Verrucchi}},\ }\href {\doibase
  10.1038/s41467-021-21782-4} {\bibfield  {journal} {\bibinfo  {journal}
  {Nature Commun.}\ }\textbf {\bibinfo {volume} {12}},\ \bibinfo {pages} {1787}
  (\bibinfo {year} {2021})},\ \Eprint {http://arxiv.org/abs/2006.12103}
  {arXiv:2006.12103 [quant-ph]} \BibitemShut {NoStop}%
\bibitem [{\citenamefont {Hausmann}\ \emph {et~al.}(2025)\citenamefont
  {Hausmann}, \citenamefont {Schmidhuber},\ and\ \citenamefont
  {Castro-Ruiz}}]{Hausmann:2023jpn}%
  \BibitemOpen
  \bibfield  {author} {\bibinfo {author} {\bibfnamefont {L.}~\bibnamefont
  {Hausmann}}, \bibinfo {author} {\bibfnamefont {A.}~\bibnamefont
  {Schmidhuber}}, \ and\ \bibinfo {author} {\bibfnamefont {E.}~\bibnamefont
  {Castro-Ruiz}},\ }\href {\doibase 10.22331/q-2025-01-30-1616} {\bibfield
  {journal} {\bibinfo  {journal} {Quantum}\ }\textbf {\bibinfo {volume} {9}},\
  \bibinfo {pages} {1616} (\bibinfo {year} {2025})}\BibitemShut {NoStop}%
\bibitem [{\citenamefont {Diaz}\ \emph {et~al.}(2023)\citenamefont {Diaz},
  \citenamefont {Braccia}, \citenamefont {Larocca}, \citenamefont {Matera},
  \citenamefont {Rossignoli},\ and\ \citenamefont {Cerezo}}]{diaz2023parallel}%
  \BibitemOpen
  \bibfield  {author} {\bibinfo {author} {\bibfnamefont {N.}~\bibnamefont
  {Diaz}}, \bibinfo {author} {\bibfnamefont {P.}~\bibnamefont {Braccia}},
  \bibinfo {author} {\bibfnamefont {M.}~\bibnamefont {Larocca}}, \bibinfo
  {author} {\bibfnamefont {J.}~\bibnamefont {Matera}}, \bibinfo {author}
  {\bibfnamefont {R.}~\bibnamefont {Rossignoli}}, \ and\ \bibinfo {author}
  {\bibfnamefont {M.}~\bibnamefont {Cerezo}},\ }\href
  {https://arxiv.org/abs/2308.12944} {\bibfield  {journal} {\bibinfo  {journal}
  {arXiv preprint arXiv:2308.12944}\ } (\bibinfo {year} {2023})}\BibitemShut
  {NoStop}%
\bibitem [{\citenamefont {Martinez}\ \emph {et~al.}(2023)\citenamefont
  {Martinez}, \citenamefont {Bojowald},\ and\ \citenamefont
  {Wendel}}]{Martinez:2023fsd}%
  \BibitemOpen
  \bibfield  {author} {\bibinfo {author} {\bibfnamefont {L.}~\bibnamefont
  {Martinez}}, \bibinfo {author} {\bibfnamefont {M.}~\bibnamefont {Bojowald}},
  \ and\ \bibinfo {author} {\bibfnamefont {G.}~\bibnamefont {Wendel}},\ }\href
  {\doibase 10.1103/PhysRevD.108.086001} {\bibfield  {journal} {\bibinfo
  {journal} {Phys. Rev. D}\ }\textbf {\bibinfo {volume} {108}},\ \bibinfo
  {pages} {086001} (\bibinfo {year} {2023})},\ \Eprint
  {http://arxiv.org/abs/2309.07825} {arXiv:2309.07825 [gr-qc]} \BibitemShut
  {NoStop}%
\bibitem [{\citenamefont {Bojowald}\ \emph {et~al.}(2022)\citenamefont
  {Bojowald}, \citenamefont {Martinez},\ and\ \citenamefont
  {Wendel}}]{Bojowald:2021uqo}%
  \BibitemOpen
  \bibfield  {author} {\bibinfo {author} {\bibfnamefont {M.}~\bibnamefont
  {Bojowald}}, \bibinfo {author} {\bibfnamefont {L.}~\bibnamefont {Martinez}},
  \ and\ \bibinfo {author} {\bibfnamefont {G.}~\bibnamefont {Wendel}},\ }\href
  {\doibase 10.1103/PhysRevD.105.106020} {\bibfield  {journal} {\bibinfo
  {journal} {Phys. Rev. D}\ }\textbf {\bibinfo {volume} {105}},\ \bibinfo
  {pages} {106020} (\bibinfo {year} {2022})},\ \Eprint
  {http://arxiv.org/abs/2110.07702} {arXiv:2110.07702 [quant-ph]} \BibitemShut
  {NoStop}%
\bibitem [{\citenamefont {Dittrich}\ and\ \citenamefont
  {Thiemann}(2009)}]{Dittrich:2007th}%
  \BibitemOpen
  \bibfield  {author} {\bibinfo {author} {\bibfnamefont {B.}~\bibnamefont
  {Dittrich}}\ and\ \bibinfo {author} {\bibfnamefont {T.}~\bibnamefont
  {Thiemann}},\ }\href {\doibase 10.1063/1.3054277} {\bibfield  {journal}
  {\bibinfo  {journal} {J. Math. Phys.}\ }\textbf {\bibinfo {volume} {50}},\
  \bibinfo {pages} {012503} (\bibinfo {year} {2009})},\ \Eprint
  {http://arxiv.org/abs/0708.1721} {arXiv:0708.1721 [gr-qc]} \BibitemShut
  {NoStop}%
\bibitem [{\citenamefont {Baumann}\ \emph {et~al.}(2022)\citenamefont
  {Baumann}, \citenamefont {Krumm}, \citenamefont {Guérin},\ and\
  \citenamefont {Brukner}}]{baumann2022noncausal}%
  \BibitemOpen
  \bibfield  {author} {\bibinfo {author} {\bibfnamefont {V.}~\bibnamefont
  {Baumann}}, \bibinfo {author} {\bibfnamefont {M.}~\bibnamefont {Krumm}},
  \bibinfo {author} {\bibfnamefont {P.~A.}\ \bibnamefont {Guérin}}, \ and\
  \bibinfo {author} {\bibfnamefont {{\v{C}}.}~\bibnamefont {Brukner}},\ }\href
  {\doibase 10.1103/physrevresearch.4.013180} {\bibfield  {journal} {\bibinfo
  {journal} {Physical Review Research}\ }\textbf {\bibinfo {volume} {4}}
  (\bibinfo {year} {2022}),\ 10.1103/physrevresearch.4.013180}\BibitemShut
  {NoStop}%
\bibitem [{\citenamefont {Feynman}(1985)}]{Feynman1985}%
  \BibitemOpen
  \bibfield  {author} {\bibinfo {author} {\bibfnamefont {R.~P.}\ \bibnamefont
  {Feynman}},\ }\href {\doibase 10.1364/on.11.2.000011} {\bibfield  {journal}
  {\bibinfo  {journal} {Optics News}\ }\textbf {\bibinfo {volume} {11}},\
  \bibinfo {pages} {11} (\bibinfo {year} {1985})}\BibitemShut {NoStop}%
\bibitem [{\citenamefont {Kitaev}\ \emph {et~al.}(2002)\citenamefont {Kitaev},
  \citenamefont {Shen},\ and\ \citenamefont {Vyalyi}}]{Kitaev2002}%
  \BibitemOpen
  \bibfield  {author} {\bibinfo {author} {\bibfnamefont {A.}~\bibnamefont
  {Kitaev}}, \bibinfo {author} {\bibfnamefont {A.}~\bibnamefont {Shen}}, \ and\
  \bibinfo {author} {\bibfnamefont {M.}~\bibnamefont {Vyalyi}},\ }\href
  {\doibase 10.1090/gsm/047} {\emph {\bibinfo {title} {Classical and Quantum
  Computation}}}\ (\bibinfo  {publisher} {American Mathematical Society},\
  \bibinfo {year} {2002})\BibitemShut {NoStop}%
\bibitem [{\citenamefont {Breuckmann}\ and\ \citenamefont
  {Terhal}(2014)}]{Breuckmann2014}%
  \BibitemOpen
  \bibfield  {author} {\bibinfo {author} {\bibfnamefont {N.~P.}\ \bibnamefont
  {Breuckmann}}\ and\ \bibinfo {author} {\bibfnamefont {B.~M.}\ \bibnamefont
  {Terhal}},\ }\href {\doibase 10.1088/1751-8113/47/19/195304} {\bibfield
  {journal} {\bibinfo  {journal} {Journal of Physics A: Mathematical and
  Theoretical}\ }\textbf {\bibinfo {volume} {47}},\ \bibinfo {pages} {195304}
  (\bibinfo {year} {2014})}\BibitemShut {NoStop}%
\bibitem [{\citenamefont {Caha}\ \emph {et~al.}(2018)\citenamefont {Caha},
  \citenamefont {Landau},\ and\ \citenamefont {Nagaj}}]{Caha2018}%
  \BibitemOpen
  \bibfield  {author} {\bibinfo {author} {\bibfnamefont {L.}~\bibnamefont
  {Caha}}, \bibinfo {author} {\bibfnamefont {Z.}~\bibnamefont {Landau}}, \ and\
  \bibinfo {author} {\bibfnamefont {D.}~\bibnamefont {Nagaj}},\ }\href
  {\doibase 10.1103/physreva.97.062306} {\bibfield  {journal} {\bibinfo
  {journal} {Physical Review A}\ }\textbf {\bibinfo {volume} {97}} (\bibinfo
  {year} {2018}),\ 10.1103/physreva.97.062306}\BibitemShut {NoStop}%
\end{thebibliography}%

\onecolumngrid
\appendix

\section{\label{app:examples}Further examples}

\begin{example}[\bf Harmonic oscillator] \label{ex:HOfloor}
Let us illustrate the validity of Eq.~\eqref{solution} for the angle variable defined in Eq.~\eqref{HOphase2}. A classical solution for the harmonic oscillator with initial conditions $(t_0,p_{t0})$ can be written as   
\begin{equation}\label{HOsol-arc}
\begin{aligned}
t(s) &= \mathrm{sgn}(t_0)\sqrt{t_0^2+\frac{p_{t0}^2}{m_t^2\omega_t^2}}\cos\left(\omega_t (s+\varphi_0)\right)\,,\\
p_t(s)&= -\mathrm{sgn}(t_0)\sqrt{m_t^2\omega_t^2t_0^2+p_{t0}^2}\sin\left(\omega_t (s+\varphi_0)\right) \,,
\end{aligned}
\end{equation}
where $\varphi_0 = \varphi(t_0,p_{t0})$, and $\varphi$ was defined in Eq.~\eqref{HOphase}. The evolution of the angle variable $\phi_C$ given in Eq.~\eqref{HOphase2} for the harmonic oscillator is obtained by evaluating $\phi_C$ on the solution~\eqref{HOsol-arc}. If
\begin{equation}\label{HOinterval-svarphi}
(2k-1)\frac{\pi}{2\omega_t} < s+\varphi_0 < (2k+1)\frac{\pi}{2\omega_t} \ ,\ k\in\mathbb{Z} \,,
\end{equation}
then
\begin{equation}
\begin{aligned}
\varphi(t(s),p_t(s)) &= \frac{1}{\omega_t}\mathrm{arctan}\left(-\frac{p_t(s)}{m_t\omega_t t(s)}\right)\\
&=s+\varphi_0-\f{k\pi}{\omega_t} \ ,
\end{aligned}
\end{equation}
since the image of $\mathrm{arctan}$ is $(-\pi/2, \pi/2)$. Due to Eq.~\eqref{HOinterval-svarphi}, we can write $s+\varphi_0 = [2(k+\epsilon)-1]\pi/(2\omega_t)$, where $\epsilon\in(0,1)$. Together with Eq.~\eqref{HOsol-arc}, this leads to
\begin{align}
t(s)&=\mathrm{sgn}(t_0)\sqrt{t_0^2+\frac{p_{t0}^2}{m_t^2\omega_t^2}}(-1)^k\sin\left(\epsilon \pi\right)\,,\\
\mathrm{sgn}(t(s)) &= \mathrm{sgn}(t_0)(-1)^k \ ,
\end{align}
 and [recall $\phi_C(t = 0,p_t)$ is undefined]
\begin{equation}
\begin{aligned}
\phi_C(s)&\equiv\phi_C(t(s),p_t(s))\\
&= \varphi(t(s),p_t(s))+\f{\pi}{\omega_t}-\f{\pi}{2\omega_t}\mathrm{sgn}(t(s))\\
&=s+\varphi_0+\f{\pi}{\omega_t}-\frac{k\pi}{\omega_t}-\f{\pi}{2\omega_t}\mathrm{sgn}(t_0)(-1)^k \ . \label{HO-phiC-evol}
\end{aligned}
\end{equation}
For $k=2n$ ($k$ even), this reduces to
\begin{equation}
\phi_C(s) = s+\phi_C(0)-n\phi_{\rm max} \ ,
\end{equation}
where $\phi_{\rm max}=2\pi/\omega_t$, and we obtain
\begin{equation}
\begin{aligned}
\left\lfloor\frac{s+\phi_C(0)}{2\pi/\omega_t}\right\rfloor &= \left\lfloor\frac{2n+\epsilon}{2}+\frac14(1-\mathrm{sgn}(t_0))\right\rfloor = n \,,
\end{aligned}
\end{equation}
as $0<\epsilon<1$. For $k=2\tilde{n}-1$ ($k$ odd), Eq.~\eqref{HO-phiC-evol} becomes
\begin{equation}
\phi_C(s) = s +\phi_C(0)-\left(\tilde{n}-\frac12-\frac12\mathrm{sgn}(t_0)\right)\phi_{\rm max} \ ,
\end{equation}
where $n = \tilde{n}-(1+\mathrm{sgn}(t_0))/2$ is an integer that satisfies
\begin{equation}
\begin{aligned}
\left\lfloor\frac{s+\phi_C(0)}{2\pi/\omega_t}\right\rfloor &= \left\lfloor\frac{2\tilde{n}-1+\epsilon}{2}+\frac14(1-\mathrm{sgn}(t_0))\right\rfloor\\
&=\left\{
\begin{matrix}
\tilde{n}-1 = n & (\mathrm{sgn}(t_0)=1)\\
\tilde{n} = n & (\mathrm{sgn}(t_0)=-1)
\end{matrix}
\right.\ .
\end{aligned}
\end{equation}
Collecting the above results, we see that Eq.~\eqref{solution} is recovered for the angle variable of the harmonic oscillator.
\end{example}

\begin{example}[\bf Two particles on a circle]\label{ex_bad}
  To emphasise the importance of analyticity of $f_S$ in the power series expansion Eq.~\eqref{relobs}, we provide an example where the latter fails. As in \cite{Dittrich:2015vfa,Dittrich:2016hvj}, consider two free particles on a circle with fixed total energy $E>0$:
\ba 
C_H=\f{p_t^2}{2m_t}+\f{p^2}{2m}-E.\nn
\ea 
The configuration space is a torus $\mathbb T^2$ and so we have $t+1\sim t$ and $q+1\sim q$. As our unravelled clock function we choose $T(s)=s+\phi_C$, where $\phi_C$ is given in Eq.~\eqref{compactclock}. Directly solving the equations of motion for $q(s)$ and replacing $s$ by $s=\tau-\phi_C$ yields the relational observable
\ba 
F_{q,T}(\tau)=\f{p}{m}(\tau-\phi_C)+q-n_q(\tau;\phi_C,q,p),\label{Fwind}
\ea 
where 
\ba 
n_q(\tau;\phi_C,q,p)\ce\Big\lfloor \f{p}{m}s+q\Big\rfloor_{s=\tau-\phi_C}\label{n2}
\ea 
is the winding number of the second particle (i.e.\ of system $S$). By contrast, the power series in Eq.~\eqref{relobs} yields
\ba 
\tilde F_{q,T}(\tau)=\f{p}{m}(\tau-\phi_C)+q\;,
\ea 
which, however, is only correct on the $n_q=0$ cycle and beyond it takes value outside $[0,1]$, in conflict with $q\in[0,1]$. The reason for this failure is clear: $f_S=q$ is not analytic from one cycle of $S$ to the next. 

Note that $F_{q,T}(\tau)$ satisfies the transient invariance property \eqref{transientinv} and if $\f{m_tp}{mp_t}\in\mathbb N$, in which case $q(s)$ is periodic by a unit fraction of $\phi_{\rm max}=\f{m_t}{p_t}$, it is invariant along the entire gauge orbit.~It follows from the discussion in \cite{Dittrich:2015vfa} that $F_{q,T}(\tau)$ is \emph{not} continuous on $\cc$. This has to do with the fact that $\cc$ contains trajectories with $\f{m_tp}{mp_t}\notin\mathbb{Q}$ which densely fill the torus. In particular, $n_q$ will not be continuous on $\cc$ in directions transversal to a dynamical orbit and so $F_{q,T}(\tau)$ will fail to be differentiable in those directions.

This example highlights why it is important to work with analytic functions $f_S$ when using the power series expansion Eq.~\eqref{relobs}.~As emphasised in the main text, we shall use the power series to quantise the relational dynamics with respect to periodic clocks.~We thus restrict ourselves to systems $S$ which feature a Poisson subalgebra $\ca_S$ of analytic functions that also separates the points in $\cp_S$ and so can be used to coordinatise $\cp_S$.~Instead, we refer the reader to \cite{Dittrich:2015vfa,Dittrich:2016hvj} for how to quantise models such as in the example we just considered.
\end{example}

\section{\label{app:proof}Proofs of lemmas and theorems in the main text}

\noindent{\bf\hyperref[lem_clobs]{Lemma \ref{lem_clobs}}.}
For an arbitrary system phase space function $f_S:\cp_S\rightarrow\mathbb{R}$, the relational observables $F_{f_S,T}(\tau)$ satisfy the \emph{transient invariance property}
\ba
\alpha_{C_H}^{s}\cdot F_{f_S,T}(\tau)=\alpha_{C_H}^{z\phi_{\rm max}-\phi_C^0}\cdot F_{f_S,T}(\tau)\,,\label{app:transientinv}
\ea
with $z\phi_{\rm max}-\phi_C^0\leq s<(z+1)\phi_{\rm max}-\phi_C^0$, for $z\in\mathbb Z$, and
\ba 
\alpha_{C_H}^{z\phi_{\rm max}-\phi_C^0}\cdot F_{f_S,T}(\tau)=F_{f_S,T}(\tau+z\phi_{\rm max}).\q\;\,
\ea

\begin{proof}
First note that, for arbitrary phase space functions $f,g:\cp_{\rm kin}\rightarrow\mathbb{R}$ and arbitrary $x\in\cp_{\rm kin}$, we have for their point-wise product
\ba 
\alpha_{C_H}^s\left[f\cdot g\right](x)&=&\left[f\cdot g\right]\left(\alpha_{C_H}^s(x)\right)\nn\\
&=&f\left(\alpha_{C_H}^s(x)\right)\cdot g\left(\alpha_{C_H}^s(x)\right)\,.\nn
\ea 
We can thus write 
\ba 
\alpha_{C_H}^{s}\cdot F_{f_S,T}(\tau)&=&\sum_{n=0}^\infty\alpha_{C_H}^{s}\left[\f{(\tau-\phi_C)^n}{n!}\right]\cdot\alpha_{C_H}^{s}\left[\{f_S,H_S\}_n\right]\,.\nn
\ea 
Now suppose $z\phi_{\rm max}-\phi_C^0\leq s<(z+1)\phi_{\rm max}-\phi_C^0$ for $z\in\mathbb Z$. Since $\phi_C$ is $\phi_{\rm max}$-periodic, Eq.~\eqref{solution} entails for this case
\ba 
\alpha_{C_H}^{s}\left[(\tau-\phi_C)^n\right]=\left(\tau-\phi_C^0+z\phi_{\rm max}-s\right)^n\,.\nn
\ea 
Hence, invoking Eq.~\eqref{alpha},
\ba 
\alpha_{C_H}^{s}\cdot F_{f_S,T}(\tau)&=&\sum_{n,m=0}^\infty\f{(\tau-\phi_C^0+z\phi_{\rm max}-s)^n\,s^m}{n!\,m!}\{f_S,H_S\}_{n+m}\,.\nn
\ea 
Using $\f{d}{d\tau}F_{f_S,T}(\tau)=\big\{F_{f_S,T}(\tau),H_S\big\}$, this gives
\ba 
\alpha_{C_H}^{s}\cdot F_{f_S,T}(\tau)&=&\sum_{n,m=0}^\infty\f{d^m}{d\tau^m}\f{(\tau-\phi_C^0+z\phi_{\rm max}-s)^n}{n!}\f{s^m}{m!}\{f_S,H_S\}_n\nn\\
&=&\sum_{m=0}^\infty\f{s^m}{m!}\f{d^m}{d\tau^m}\,F_{f_S,T}(\tau+z\phi_{\rm max}-s)\nn\\
&=&F_{f_S,T}(\tau+z\phi_{\rm max})\,.\nn
\ea 
Finally, we notice that for $s=z\,\phi_{\rm max}-\phi_C^0$,  $z\in\mathbb{Z}$, we have 
\ba 
\alpha_{C_H}^{z\phi_{\rm max}-\phi_C^0}\left[(\tau-\phi_C)^n\right]=\tau^n\,,\nn
\ea 
so that, repeating the steps as above,
\ba 
\alpha_{C_H}^{z\phi_{\rm max}-\phi_C^0}\cdot F_{f_S,T}(\tau)&=&\sum_{n,m=0}^\infty\f{\tau^n(z\phi_{\rm max}-\phi_C^0)^m}{n!\,m!}\{f_S,H_S\}_{n+m}\nn\\
&=&\sum_{n,m=0}^\infty\f{d^m}{d\tau^m}\f{(\tau+\phi_C^0-\phi_C^0)^n}{n!}\f{((z\phi_{\rm max}-\phi_C^0)^m}{m!}\{f_S,H_S\}_n\nn\\
&=&F_{f_S,T}(\tau+z\phi_{\rm max})\,.\nn
\ea 
\end{proof}

\noindent{\bf\hyperref[lem_monoPNP]{Lemma \ref{lem_monoPNP}}.}
The relational observable $F_{T,Q}(\tau)$ that encodes the value of the unravelled (monotonic) clock $T$ of a periodic system [cf.\ Eq.\ \eqref{globalclock}] relative to the value $\tau$ of the clock $Q$ of an aperiodic system [cf.\ Definition \ref{def:nonperclock}] satisfies the \emph{transient invariance property}
\ba
\alpha_{C_H}^{s}\cdot F_{T,Q}(\tau)=\alpha_{C_H}^{z\phi_{\rm max}-\phi_C^0}\cdot F_{T,Q}(\tau)\,,
\ea
with $z\phi_{\rm max}-\phi_C^0\leq s<(z+1)\phi_{\rm max}-\phi_C^0$, for $z\in\mathbb Z$, and
\ba 
\alpha_{C_H}^{z\phi_{\rm max}-\phi_C^0}\cdot F_{T,Q}(\tau)=F_{T,Q}(\tau-z\phi_{\rm max}).\q\;\,
\ea
\begin{proof}
From Definition \ref{def:nonperclock} together with Eq.~\eqref{globalclock} and the analogue of Eq.~\eqref{relobs} for the aperiodic clock $Q$, the relational observable of $T$ relative to $Q$ is found to be
\begin{equation}\label{app-relobs-TQ}
F_{T,Q}(\tau) = \tau-Q^0+\phi_C^0 \ .
\end{equation}
Using $\alpha^s_{C_H}\cdot Q^0 = s+Q^0$ together with Eq.~\eqref{solution}, we have
\begin{equation}
\begin{aligned}
\alpha^s_{C_H}\cdot F_{T,Q}(\tau) &= \tau-(s+Q^0)+s+\phi_C^0-\left\lfloor\frac{s+\phi_C^0}{\phi_{\rm max}}\right\rfloor\phi_{\rm max}\\
&=\tau-Q^0+\phi_C^0-\left\lfloor\frac{s+\phi_C^0}{\phi_{\rm max}}\right\rfloor\phi_{\rm max} \,.
\end{aligned}
\end{equation}
If $z\phi_{\rm max}\leq s+\phi_C^0<(z+1)\phi_{\rm max}$  for $z\in\mathbb{Z}$, then
\begin{equation}
\begin{aligned}
\alpha_{C_H}^s\cdot F_{T,Q}(\tau) &= \tau-Q^0+\phi_C^0-z\phi_{\rm max} =F_{T,Q}(\tau-z\phi_{\rm max}) = \alpha^{z\phi_{\rm max}-\phi_C^0}_{C_H}\cdot F_{T,Q}(\tau) \ .
\end{aligned}
\end{equation}
\end{proof}

\noindent{\bf\hyperref[lem_povm]{Lemma \ref{lem_povm}}.}
The $n^{\rm th}$-moment operators of the covariant clock POVM are not conjugate to the clock Hamiltonian for $n>0$
\ba
[\hat\phi_C^{(n)},\hat H_C] = i\,n\,\hat\phi_C^{(n-1)}-i\,(t_{\rm max})^{n-1}\,\ket{0}\!\bra{0}\,.
\ea
For $n=0$, we clearly have $[\hat\phi_C^{(0)},\hat H_C]=0$.
\begin{proof}
Using $\big[\ket{\varepsilon_i}\!\bra{\varepsilon_j},\hat H_C\big]=(\varepsilon_j-\varepsilon_i)\,\ket{\varepsilon_i}\!\bra{\varepsilon_j}$, it is easy to check that
\ba
\big[\ket{\phi}\!\bra{\phi},\hat H_C\big]=-i\,\partial_\phi\,\ket{\phi}\!\bra{\phi}.\nn
\ea
Invoking Eq.~\eqref{nthmoment} therefore gives
\ba
[\hat\phi_C^{(n)},\hat H_C] &=& \f{1}{t_{\rm max}}\int_0^{t_{\rm max}}\,d\phi\,\phi^n\,\big[\ket{\phi}\!\bra{\phi},\hat H_C\big]\nn\\
&=&-\f{i}{t_{\rm max}}\int_0^{t_{\rm max}}\,d\phi\,\phi^n\,\partial_\phi\,\ket{\phi}\!\bra{\phi}
\ea
Partial integration, taking into account the boundary terms and Eq.~\eqref{id} then yield the claim. 
\end{proof}

\noindent{\bf\hyperref[lem_quantumPhiEvol]{Lemma \ref{lem_quantumPhiEvol}}}.
Let $\ket{\psi_{1,2}}$ be states in the clock Hilbert space such that $\braket{\phi|\psi_{1,2}} = \psi_{1,2}(\phi)$ are integrable functions, and let $\hat{\phi}_C(s):=U^{\dagger}_C(s)\hat{\phi}_CU_C(s)$ with $\braket{\hat{\phi}_C(s)}_{21}:=\braket{\psi_2|\hat{\phi}_C(s)|\psi_1}$. Then, the clock operator $\hat{\phi}_C$ obeys the following evolution law:
\ba
\braket{\hat{\phi}_C(s)}_{21} = \frac{1}{t_{\rm max}}\int_0^{t_{\rm max}}d\phi \ \left(s+\phi-t_{\rm max}\left\lfloor\frac{s+\phi}{t_{\rm max}}\right\rfloor\right)\psi_2^*(\phi)\psi_1(\phi)\,,
\ea
which we take to be the quantum analogue of the classical evolution given by Eq.~\eqref{solution}, with $t_{\rm max}$ being the counterpart to the classical period $\phi_{\rm max}$.
\begin{proof}
Without loss of generality, we consider that $z t_{\rm max}\leq s < (z+1)t_{\rm max}$ for $z\in\mathbb{Z}$. Equivalently, we can write $s=(z+\epsilon)t_{\rm max}$ with $\epsilon\in[0,1)$. From Eqs.~\eqref{statecov} and \eqref{id}, we then find
\begin{equation}
U^{\dagger}_C(s)\ket{\phi}\!\bra{\phi}U_C(s) = \ket{\phi-s}\!\bra{\phi-s}=\ket{\phi-\epsilon t_{\rm max}}\!\bra{\phi-\epsilon t_{\rm max}} \,,
\end{equation}
which implies that the first moment defined from Eq.~\eqref{nthmoment} with $n=1$ satisfies
\begin{equation}\label{quantumPhiEvol0}
\begin{aligned}
\hat{\phi}_C(s)&:=U_C^{\dagger}(s)\hat{\phi}_CU_C(s)\\
&= \frac{1}{ t_{\rm max}}\int_0^{ t_{\rm max}}d\phi\ \phi \ket{\phi-\epsilon t_{\rm max}}\!\bra{\phi-\epsilon t_{\rm max}}\\
&=\frac{1}{ t_{\rm max}}\int_{-\epsilon t_{\rm max}}^{ t_{\rm max}-\epsilon t_{\rm max}}d\phi\ (\phi+\epsilon t_{\rm max}) \ket{\phi}\!\bra{\phi}\\
&=\frac{1}{ t_{\rm max}}\int_{0}^{ t_{\rm max}}d\phi\ (\phi+\epsilon t_{\rm max}) \ket{\phi}\!\bra{\phi}+\frac{1}{ t_{\rm max}}\int_{-\epsilon t_{\rm max}}^{0}d\phi\ (\phi+\epsilon t_{\rm max}) \ket{\phi}\!\bra{\phi}\\
&\ \ \ -\frac{1}{ t_{\rm max}}\int_{ t_{\rm max}-\epsilon t_{\rm max}}^{ t_{\rm max}}d\phi\ (\phi+\epsilon t_{\rm max}) \ket{\phi}\!\bra{\phi} \,.
\end{aligned}
\end{equation}
Using the periodicity property given in Eq.~\eqref{id} and adjusting integration variables and limits, we may write
\begin{equation}\label{quantumPhiEvolFloor}
\begin{aligned}
&\frac{1}{ t_{\rm max}}\int_{-\epsilon t_{\rm max}}^{0}d\phi\ (\phi+\epsilon t_{\rm max}) \psi_2^*(\phi)\psi_1(\phi)-\frac{1}{ t_{\rm max}}\int_{ t_{\rm max}-\epsilon t_{\rm max}}^{ t_{\rm max}}d\phi\ (\phi+\epsilon t_{\rm max}) \psi_2^*(\phi)\psi_1(\phi)\\
&= -\int_{ t_{\rm max}-\epsilon t_{\rm max}}^{ t_{\rm max}}d\phi\ \psi_2^*(\phi)\psi_1(\phi)\\
&=-\int_{0}^{ t_{\rm max}}d\phi\ \left\lfloor\frac{\epsilon t_{\rm max}+\phi}{ t_{\rm max}}\right\rfloor\psi_2^*(\phi)\psi_1(\phi)\,,
\end{aligned}
\end{equation}
where the last equality follows from the fact that $\epsilon\in[0,1)$. By substituting Eq.~\eqref{quantumPhiEvolFloor} into Eq.~\eqref{quantumPhiEvol0} and using the property $\lfloor z+x\rfloor = z+\lfloor x\rfloor$ for $z\in\mathbb{Z}$, we finally obtain
\begin{equation}
\begin{aligned}
\braket{\hat{\phi}_C(s)}_{21} &= \frac{1}{ t_{\rm max}}\int_0^{ t_{\rm max}}d\phi\ \left((z+\epsilon) t_{\rm max}+\phi- t_{\rm max}\left\lfloor\frac{(z+\epsilon) t_{\rm max}+\phi}{ t_{\rm max}}\right\rfloor\right)\psi_2^*(\phi)\psi_1(\phi)\\
&= \frac{1}{ t_{\rm max}}\int_0^{ t_{\rm max}}d\phi\ \left(s+\phi- t_{\rm max}\left\lfloor\frac{s+\phi}{ t_{\rm max}}\right\rfloor\right)\psi_2^*(\phi)\psi_1(\phi) \,.
\end{aligned}
\end{equation}\vspace{-0.1cm}
\end{proof}

\noindent{\bf\hyperref[lem_ruin]{Lemma \ref{lem_ruin}}.}
Suppose $\hat H_C$ has discrete, non-degenerate spectrum and $\hat\phi_C^{(n)}$ is the $n^{\rm th}$-moment operator of the covariant and periodic clock POVM in Eq.~\eqref{nthmoment}. Then
\ba
[\hat\phi_C^{(n)},\hat H_C]\,\ket{\psi_{\rm phys}} = i\,n\,\hat\phi_C^{(n-1)}\,\ket{\psi_{\rm phys}}\label{wantthat}
\ea
only holds for $\ket{\psi_{\rm phys}}\equiv0$.
\begin{proof}
Lemma~\ref{lem_povm} implies that, in order for Eq.~\eqref{wantthat} to be satisfied, we must have $\braket{\phi=0|\psi_{\rm phys}}=0$. By Eqs.~\eqref{discreteClockState} and~\eqref{crap}, this means the following expression must vanish:
\ba
\sum_{\varepsilon_k\in\spec(\hat H_C)}\,e^{-ig(\varepsilon_k)}\braket{\varepsilon_k|\psi_{\rm phys}}\underset{(\ref{crap})}{=}\sum_{-\varepsilon_k\in\sigma_{S\vert C}}\sum_{\sigma_{-\varepsilon_k}}e^{-ig(\varepsilon_k)}\psi_{\rm kin}(\varepsilon_k,-\varepsilon_k,\sigma_{-\varepsilon_k})\,\ket{-\varepsilon_k,\sigma_{-\varepsilon_k}}_S\,.\nn
\ea
Since $\ket{-\varepsilon_k,\sigma_{-\varepsilon_k}}_S$ are part of a (possibly improper) basis for $\ch_S$, this is only possible for $\psi_{\rm kin}(\varepsilon_k,-\varepsilon_k,\sigma_{-\varepsilon_k})\equiv0$, $\forall\,-\varepsilon_k\in\sigma_{S\vert C}$.
\end{proof}

\noindent{\bf\hyperref[lem_noDirac]{Theorem \ref{lem_noDirac}}.}
The commutator between the quantisation of relational observables relative to periodic clocks in Eq.~\eqref{qrelobs} and the constraint evaluates to
\ba
[\hat F_{f_S,T}(\tau),\hat C_H]=-\f{i}{t_{\rm max}}\ket{0}\!\bra{0}\otimes U_S^\dag(\tau)\left[U_S(t_{\rm max})\hat f_SU_S^\dag(t_{\rm max})-\hat f_S\right]U_S(\tau).\nn
\ea

Furthermore, $\hat F_{f_S,T}(\tau)$ is a \emph{weak} quantum Dirac observable, i.e.\ $[\hat F_{f_S,T}(\tau),\hat C_H]\approx0$, where $\approx$ is the weak equality, if and only if $\hat f_S$ is \emph{weakly} $t_{\rm max}$-periodic, i.e.\ if and only if 
\ba
{I_C\otimes U_S(t_{\rm max})\hat f_SU_S^\dag(t_{\rm max})\approx I_C\otimes \hat f_S}.
\ea
In all other cases, $\hat F_{f_S,T}(\tau)$ is neither a weak nor a strong quantum Dirac observable.
\begin{proof}
We expand the first line in Eq.~\eqref{qrelobs} in the $n^{\rm th}$-moment operators to find
\ba
[\hat F_{f_S,T}(\tau),\hat C_H]&=&\sum_{n=0}^\infty\f{i^n}{n!}\,\sum_{k=0}^n\,\binom{n}{k}(-\tau)^k\,\left[\hat\phi_C^{(n-k)}\otimes [\hat f_S,\hat H_S]_n,\hat H_C+\hat H_S\right]\nn\\
&=&\sum_{n=0}^\infty\f{i^n}{n!}\,\sum_{k=0}^n\,\binom{n}{k}(-\tau)^k\left([\hat\phi_C^{(n-k)},\hat H_C]\otimes [\hat f_S,\hat H_S]_n+\hat\phi_C^{(n-k)}\otimes [\hat f_S,\hat H_S]_{n+1}\right).\label{a1}
\ea
Let us focus on the first term. Lemma~\ref{lem_povm} implies
\ba
\sum_{k=0}^n\,\binom{n}{k}(-\tau)^k\,[\hat\phi_C^{(n-k)},\hat H_C]&=&\sum_{k=0}^{n-1}\,\binom{n}{k}(-\tau)^k\left(i(n-k)\,\hat\phi_C^{(n-1-k)}-i(t_{\rm max})^{n-1-k}\,\ket{0}\!\bra{0}\right)\nn\\
&=&i\,n\,\sum_{k=0}^{n-1}\,\binom{n-1}{k}(-\tau)^k\,\hat\phi_C^{(n-1-k)}-\f{i}{t_{\rm max}}\left((t_{\rm max}-\tau)^n-(-\tau)^n\right)\,\ket{0}\!\bra{0}.\label{a2}
\ea
It is easy to see that the first term on the r.h.s.\ of Eq.~\eqref{a2}, when reinserted into Eq.~\eqref{a1}, cancels the second term on the r.h.s.\ in Eq.~\eqref{a1} upon a relabeling of the summation index. We are then left with the term proportional to $\ket{0}\!\bra{0}$ in Eq.~\eqref{a2} coming from the correction term to the canonical commutation relations in Lemma~\ref{lem_povm}. Inserting it into Eq.~\eqref{a1} gives
\ba
[\hat F_{f_S,T}(\tau),\hat C_H]=-\f{i}{t_{\rm max}}\ket{0}\!\bra{0}\otimes\sum_{n=0}^\infty\f{i^n}{n!}\,\left((t_{\rm max}-\tau)^n-(-\tau)^n\right)\,[\hat f_S,\hat H_S]_n,\nn
\ea
which upon invoking the Baker-Cambpell-Hausdorff relation yields the first claim. 

It is clear that $[\hat F_{f_S,T}(\tau),\hat C_H]=0$, in which case $\hat F_{f_S,T}(\tau)$ is a strong Dirac observable, if and only if $\hat f_S$ is $t_{\rm max}$-periodic, i.e.\ if $U_S(t_{\rm max})\hat f_SU_S^\dag(t_{\rm max})=\hat f_S$. Furthermore, since the proof of Lemma~\ref{lem_ruin} shows that $\braket{0|\psi_{\rm phys}}=0$ only if $\ket{\psi_{\rm phys}}=0$, it is also evident that 
$
[\hat F_{f_S,T}(\tau),\hat C_H]\ket{\psi_{\rm phys}}=0
$
for arbitrary $\ket{\psi_{\rm phys}}\in\ch_{\rm phys}$ if and only if 
\ba
I_C\otimes U_S^\dag(\tau)\left[U_S(t_{\rm max})\hat f_SU_S^\dag(t_{\rm max})-\hat f_S\right]U_S(\tau) \ket{\psi_{\rm phys}}
\underset{(\ref{WDW})}{=} U_{CS}^\dag(\tau)\left( I_C\otimes U_S(t_{\rm max})\hat f_SU_S^\dag(t_{\rm max})-I_C\otimes\hat f_S\right)\ket{\psi_{\rm phys}}=0,\nn
\ea
which is equivalent to Eq.~\eqref{condobs} and to $\hat F_{f_S,T}(\tau)$ being a weak Dirac observable.
Hence, when this condition does not hold, $\hat F_{f_S,T}(\tau)$ is neither a weak nor strong quantum Dirac observable.
\end{proof}

\noindent{\bf\hyperref[lem_clobsQ]{Lemma \ref{lem_clobsQ}.}}
Let $\ket{\psi_{1}}$ be a (physical) state and $\ket{\psi_2}$ a kinematical state such that $\bra{\phi}\otimes\braket{q|\psi_{1,2}} = \psi_{1,2}(\phi,q)$ are integrable functions of $\phi$ for any choice of basis $\ket{q}$ in the system Hilbert space. Given the quantum relational observables defined in Eq.~\eqref{qrelobs}, let $\alpha^s_{C_H}\cdot\hat{F}_{f_S,T}(\tau):= U_{CS}^{\dagger}(s)\hat{F}_{f_S,T}(\tau)U_{CS}(s)$ and $\braket{\alpha^s_{C_H}\cdot\hat{F}_{f_S,T}(\tau)}_{21}:=\braket{\psi_2|\alpha^s_{C_H}\cdot\hat{F}_{f_S,T}(\tau)|\psi_1}$.~(Note that the latter expression invokes the physical inner product Eq.~\eqref{PIP}.)~Then, the quantum relational observables obey the following property:
\begin{equation}\label{clobsQ-0}
\braket{\alpha^s_{C_H}\cdot\hat{F}_{f_S,T}(\tau)}_{21}= \bra{\psi_2}\frac{1}{ t_{\rm max}}\int_0^{ t_{\rm max}}d\phi\,\ket{\phi}\!\bra{\phi}\otimes\hat{f}_S\left(\left(\tau+ t_{\rm max}\left\lfloor\frac{s+\phi}{ t_{\rm max}}\right\rfloor\right)-\phi\right)\ket{\psi_1}\,,
\end{equation}
where $\hat{f}_S(t) = U_S^\dagger(t)\hat{f}_SU_S(t)$ with $U_S(t)=\exp(-it\hat{H}_S)$.~We take Eq.~\eqref{clobsQ-0} to be the quantum version of the transient invariance property of classical relational observables established in Lemma \ref{lem_clobs}.
\begin{proof}
Using the second line of Eq.~\eqref{qrelobs}, we may write
\begin{equation}\label{clobsQ-1}
U_{CS}^{\dagger}(s)\hat{F}_{f_S,T}(\tau)U_{CS}(s)= \frac{1}{ t_{\rm max}}\int_0^{ t_{\rm max}}d\phi\,\ket{\phi-s}\!\bra{\phi-s}\otimes\hat{f}_S\left(\tau-(\phi-s)\right) \,.
\end{equation}
Without loss of generality, we consider that $z t_{\rm max}\leq s<(z+1) t_{\rm max}$ or, equivalently, $s=(z+\epsilon) t_{\rm max}$ with $\epsilon\in[0,1)$. Then, due to Eq.~\eqref{id}, Eq.~\eqref{clobsQ-1} becomes
\begin{equation}\label{clobsQ-2}
\begin{aligned}
U_{CS}^{\dagger}(s)\hat{F}_{f_S,T}(\tau)U_{CS}(s)&= \frac{1}{ t_{\rm max}}\int_0^{ t_{\rm max}}d\phi\,\ket{\phi-\epsilon t_{\rm max}}\!\bra{\phi-\epsilon t_{\rm max}}\otimes\hat{f}_S\left((\tau+z t_{\rm max})-(\phi-\epsilon t_{\rm max})\right)\\
&= \frac{1}{ t_{\rm max}}\int_{-\epsilon t_{\rm max}}^{(1-\epsilon) t_{\rm max}}d\phi\,\ket{\phi}\!\bra{\phi}\otimes\hat{f}_S\left((\tau+z t_{\rm max})-\phi\right)\\
&=\frac{1}{ t_{\rm max}}\int_{(1-\epsilon) t_{\rm max}}^{ t_{\rm max}}d\phi\,\ket{\phi}\!\bra{\phi}\otimes\hat{f}_S\left((\tau+z t_{\rm max})-(\phi- t_{\rm max})\right)\\
&\ \ \ +\frac{1}{ t_{\rm max}}\int_{0}^{(1-\epsilon) t_{\rm max}}d\phi\,\ket{\phi}\!\bra{\phi}\otimes\hat{f}_S\left((\tau+z t_{\rm max})-\phi\right) \,.
\end{aligned}
\end{equation}
As $\epsilon\in[0,1)$, we note that
\begin{equation}
\left\lfloor\frac{s+\phi}{ t_{\rm max}}\right\rfloor = \left\{
\begin{matrix}
z & \text{if}\ 0\leq\phi < (1-\epsilon) t_{\rm max}\\
z+1 & \text{if}\ (1-\epsilon) t_{\rm max}\leq\phi <  t_{\rm max}
\end{matrix}
\right.\,,
\end{equation}
which implies that Eq.~\eqref{clobsQ-2} can be concatenated into the result given in Eq.~\eqref{clobsQ-0} for a pair of wave functions that are integrable in $\phi$.
\end{proof}

\noindent{\bf\hyperref[lem_fullG]{Lemma \ref{lem_fullG}}.}
The $G$-twirl over the full group generated by the constraint $\hat{C}_H$ yields
\begin{eqnarray}
   \f{1}{|H|} \cg_G\left( \ket{\tau}\!\bra{\tau} \otimes \hat{f}_S\right)=\f{t_{\rm max}}{N_G} \,\mathcal{G}_{[0,t_{\rm max})}\left( \ket{\tau}\!\bra{\tau} \otimes \hat{f}_S^{H}\right)=\f{t_{\rm max}}{N_G} \,U_{CS}(\tau)\,\hat F_{f^{H}_S,T}(\tau)\,U_{CS}^\dag(\tau)\,,\nn
\end{eqnarray}
where
\begin{equation}
    \hat{f}_S^{H}=\cg_H\left(\hat{f}_S\right)=\f{1}{|H|}\sum_{z\in\mathbb{Z}_G}\,U_S(zt_{\rm max})\,\hat f_S\,U_S^\dag(zt_{\rm max})\nn
\end{equation}
is the averaging over the isotropy group $H=\mathbb{Z}_G$ of clock $C$ with respect to the full group $G$ and $|H|$ its cardinality. In the case that $G=\rm{U}(1)$, this assumes that an integer multiple of clock cycles fits into one cycle of $G$. (The normalisation constant is $N_G=\tilde{t}_{\rm max}$ for $G=\rm{U}(1)$, where $\tilde{t}_{\rm max}$ is the analog of $t_{\rm max}$ in Eq.~\eqref{point}, but for the constraint $\hat{C}_H$, and $N_G=2\pi$ for $G=(\mathbb{R},+)$ \cite{Hoehn:2019owq}.)

\begin{proof}
We have
\ba
\cg_G\left(\ket{\tau}\!\bra{\tau} \otimes \hat{f}_S\right) &=&\f{1}{N_G}\int_G d\phi\, U_{CS}(\phi)\left( \ket{\tau}\!\bra{\tau} \otimes \hat{f}_S\right)\,U^\dag_{CS}(\phi)\nn\\
&=&\f{1}{N_G}\sum_{z\in\mathbb{Z}_G}\int_{t_{\rm max} z}^{t_{\rm max}(z+1)} d\phi \, U_{CS}(\phi)\left( \ket{\tau}\!\bra{\tau} \otimes \hat{f}_S\right)\,U_{CS}^\dag(\phi)\nn\\
&=&\f{1}{N_G}\sum_{z\in\mathbb{Z}_G}U_{CS}( z t_{\rm max})\int_{0}^{t_{\rm max}} d\phi \, U_{CS}(\phi)\left( \ket{\tau}\!\bra{\tau} \otimes \hat{f}_S\right)\,U_{CS}^\dag(\phi) U_{CS}^\dag(z t_{\rm max})\nn\\
&=&\f{1}{N_G}\sum_{z\in\mathbb{Z}_G}\left(I_C\otimes U_S(z t_{\rm max})\right)\int_{0}^{t_{\rm max}} d\phi \, U_{CS}(\phi)\left( \ket{\tau}\!\bra{\tau} \otimes \hat{f}_S\right)\,U_{CS}^\dag(\phi)\left(I_C\otimes U^\dag_S(z t_{\rm max})\right)\nn\\
&=&\f{|\mathbb{Z}_G|}{N_G}\int_{0}^{t_{\rm max}} d\phi \, U_{CS}(\phi)\left( \ket{\tau}\!\bra{\tau} \otimes \mathcal{G}_{H}\left(\hat{f}_S\right)\right)\,U_{CS}^\dag(\phi)\nn\\
&=&|\mathbb{Z}_G|\,\f{t_{\rm max}}{N_G} \,\mathcal{G}_{[0,t_{\rm max})}\left( \ket{\tau}\!\bra{\tau} \otimes \mathcal{G}_{H}\left(\hat{f}_S\right)\right)\nn\\
&=&|\mathbb{Z}_G|\,\f{t_{\rm max}}{N_G} \,U_{CS}(\tau)\,\hat F_{\mathcal{G}_{H}(f_S),T}(\tau)\,U_{CS}^\dag(\tau),\label{fullG}
\ea
where in the last line, we made use of Eqs.~\eqref{qrelobs} and~\eqref{projrep}.
\end{proof}

\noindent{\bf\hyperref[lem_sysphysIP]{Lemma \ref{lem_sysphysIP}}.}
Let $E,E'\in\sigma_{S\vert C}$. Then 
\ba
\braket{E,\sigma_E|E',\sigma_{E'}}_{\ch_S^{\rm phys}} = \delta_{E,E'}\delta_{\sigma_E,\sigma_{E'}}.\nn
\ea
\begin{proof}
For case (a) the statement is trivial. In case (b), we have to choose an arbitrary normalizable state in $\ch_S$ that projects under $\Pi_{\sigma_{S\vert C}}$ to $\ket{E,\sigma_E}$, $E\in\sigma_{S\vert C}$. Let $I_E\subset\spec(\hat H_S)$ be an interval that contains the eigenvalue $E\in\sigma_{S\vert C}$, but not any other $E'\in\sigma_{S\vert C}$, and let $\chi(\tilde E,\sigma_{\tilde E})$ be an arbitrary square integrable function such that $\chi(\tilde E=E,\sigma_E)=1$. Then 
\ba
\ket{\chi_E}=\int_{I_E} d\tilde E\,\chi(\tilde E,\sigma_{\tilde E})\,\ket{\tilde E,\sigma_{\tilde E}}_S\nn
\ea
is such a state. Using it in the physical system inner product, Eq.~\eqref{sysPIP}, we find
\ba
\braket{E,\sigma_E|E',\sigma_{E'}}_{\ch_S^{\rm phys}} &&=\braket{\chi_E| E',\sigma_{E'}}_S\nn\\
&&=\int_{I_E} d\tilde E\,\chi^*(\tilde E,\sigma_{\tilde E})\delta(\tilde E-E')\delta_{\sigma_{\tilde E},\sigma_{E'}}
\ea
It is clear that this expression is zero if $E'\notin I_E$ and equal to $\delta_{\sigma_E,\sigma_{E'}}$ otherwise (in which case $E=E'$).
\end{proof}

\noindent{\bf\hyperref[lem_5]{Lemma \ref{lem_5}}.}
The reduction maps satisfy for all admissible unravelled clock readings $\tau$
\ba
\calr_{\mathbf S}^{-1}(\tau)\cdot\calr_{\mathbf S}(\tau)&\approx& I_{\rm phys}\,,\nn\\
\calr_{\mathbf S}(\tau)\cdot\calr_{\mathbf S}^{-1}(\tau)&\approx_S& I_S^{\rm phys},\nn
\ea
where $\approx$ denotes a weak equality, i.e.\ equality on the physical Hilbert space $\ch_{\rm phys}$, $\approx_S$ denotes the system weak equality, i.e.\ equality on $\ch_S^{\rm phys}$, and $I_{\rm phys}$ and $I_S^{\rm phys}$ are the identities on $\ch_{\rm phys}$ and $\ch_S^{\rm phys}$, respectively.

\begin{proof}
We begin with the first identity. Pick any $\ket{\psi_{\rm phys}}\in\ch_{\rm phys}$. Using Eqs.~\eqref{PWred} and~\eqref{PWinvred}, we have
\ba
\calr_{\mathbf S}^{-1}(\tau)\cdot\calr_{\mathbf S}(\tau)\,\ket{\psi_{\rm phys}}&&=\calr_{\mathbf S}^{-1}(\tau_C)\cdot\calr_{\mathbf S}(\tau_C)\,\ket{\psi_{\rm phys}}\nn\\
&&=\f{1}{t_{\rm max}}\int_0^{t_{\rm max}}d\phi\,\ket{\phi}\!\bra{\tau_C}\otimes U_S(\phi-\tau_C)\,\ket{\psi_{\rm phys}}\nn\\
&&=\f{1}{t_{\rm max}}\int_0^{t_{\rm max}}d\phi\,\ket{\phi}\!\bra{\tau_C} U_C^\dag(\phi-\tau_C)\otimes I_S\,\ket{\psi_{\rm phys}}\nn\\
&&=\f{1}{t_{\rm max}}\int_0^{t_{\rm max}}d\phi\,\ket{\phi}\!\bra{\phi} \otimes I_S\,\ket{\psi_{\rm phys}}\nn\\
&&\underset{(\ref{resolid})}{=}\ket{\psi_{\rm phys}}.\nn
\ea
In the third line, we made use of Eq.~\eqref{WDW}.

Next, we begin proving the second identity,
\ba
\calr_{\mathbf S}(\tau)\cdot\calr_{\mathbf S}^{-1}(\tau)
&&\;=\f{1}{t_{\rm max}}\int_0^{t_{\rm max}}d\phi\,\braket{\tau|\phi} \,U_S(\phi-\tau)\nn\\
&&\underset{(\ref{id3})}{=}\sum_{\varepsilon_j \in \spec(\hat H_C)} \,\,\,\,{{\intsum}_{E\in\spec(\hat H_S)}}\,\sum_{\sigma_E}\,e^{i\tau(\varepsilon_j+E)}\f{1}{t_{\rm max}}\int_0^{t_{\rm max}}d\phi\,e^{-i\phi(\varepsilon_j+E)}\ket{E,\sigma_E}\!\bra{E,\sigma_E}.\nn
\ea
Noting that the identity in Eq.~\eqref{id2} applies to the $\phi$-integral in the last line, provided we can guarantee that $E\in\sigma_{S\vert C}$, we find upon multiplying both sides from the right with $\Pi_{\sigma_{S\vert C}}$ given in Eq.~\eqref{physsproj}
\ba
\calr_{\mathbf S}(\tau)\cdot\calr_{\mathbf S}^{-1}(\tau)\,\Pi_{\sigma_{S\vert C}}\underset{(\ref{id2})}{=}\sum_{E\in\sigma_{S\vert C}}\,\sum_{\sigma_E}\ket{E,\sigma_E}\!\bra{E,\sigma_E}=\Pi_{\sigma_{S\vert C}},\nn
\ea
which proves the second claim. 
\end{proof}

\noindent{\bf\hyperref[thm_relobs]{Theorem \ref{thm_relobs}}.}
Let $\hat f_S^{\rm phys}\in\cl(\ch_S^{\rm phys})$ be a physical system observable. Its embedding coincides weakly with the quantisation of the relational observables in Eq.~\eqref{qrelobs},
\ba
\mathcal{E}_{\mathbf S}^\tau\left(\hat f_S^{\rm phys}\right)\approx\hat F_{f_S^{\rm phys},T}(\tau),
\ea
which in this case \emph{are} weak quantum Dirac observables, i.e.\ $[\hat F_{f_S^{\rm phys},T}(\tau),\hat C_H]\approx0$.

Conversely, the reduction of a relational observable associated with a physical system observable $\hat f_S^{\rm phys}$ coincides with that observable on the physical system Hilbert space $\ch_S^{\rm phys}$,
\ba
\calr_{\mathbf S}(\tau)\,\hat F_{f_S^{\rm phys},T}(\tau)\,\calr_{\mathbf S}^{-1}(\tau)\approx_S \hat f_S^{\rm phys}.
\ea
\begin{proof}
Eqs.~\eqref{PWred},~\eqref{PWinvred} and~\eqref{embed} entail
\ba
\mathcal{E}_{\mathbf S}^\tau\left(\hat f_S^{\rm phys}\right)=U_{CS}^\dag(\tau)\Big[\f{1}{t_{\rm max}}\int_0^{t_{\rm max}}d\phi\,U_{CS}(\phi)\left(\ket{\tau}\!\bra{\tau}\otimes\hat f_S^{\rm phys}\right)\Big],\nn
\ea
which, owing to $U_{CS}(\tau)\ket{\psi_{\rm phys}}=\ket{\psi_{\rm phys}}$, is weakly equivalent to the expression in Eq.~\eqref{qrelobs}. Owing to Eq.~\eqref{periodicc}, Theorem~\ref{lem_noDirac} tells us that $\hat F_{f_S^{\rm phys},T}(\tau)$ \emph{is} a weak quantum Dirac observable.

Conversely, inserting the expressions in Eqs.~\eqref{qrelobs},~\eqref{PWred} and~\eqref{PWinvred}, one finds
\ba
\calr_{\mathbf S}(\tau)\,\hat F_{f_S^{\rm phys},T}(\tau)\,\calr_{\mathbf S}^{-1}(\tau)\,\Pi_{\sigma_{S\vert C}} =\left(\f{1}{t_{\rm max}}\int_0^{t_{\rm max}}d\phi\braket{\tau|\phi}U_S(\phi-\tau)\right)\hat f_S^{\rm phys}\left(\f{1}{t_{\rm max}}\int_0^{t_{\rm max}}d\phi'\braket{\phi|\phi'}U_S(\phi'-\phi)\right)\,\Pi_{\sigma_{S\vert C}}.\nn
\ea
Invoking the second part of the proof of Lemma~\ref{lem_5}  yields the claim.
\end{proof}

\noindent{\bf\hyperref[thm_expobs]{Theorem \ref{thm_expobs}}.}
Let $\hat f_S^{\rm phys}\in\cl(\ch_S^{\rm phys})$ be a physical system observable. The expectation value of the corresponding relational observable evaluated in the physical inner product on $\ch_{\rm phys}$, given in Eq.~\eqref{PIP}, coincides with the expectation value of $\hat{f}_S^{\rm phys}$ evaluated in the inner product on $\ch_S^{\rm phys}$, given in Eq.~\eqref{sysPIP}, i.e.
\ba
\braket{\phi_{\rm phys}|\,\hat F_{f_S^{\rm phys},T}(\tau)\,|\psi_{\rm phys}}_{\rm phys}=\braket{\phi_S^{\rm phys}(\tau)|\,\hat f_S^{\rm phys}\,|\psi_S^{\rm phys}(\tau)}_{\ch_S^{\rm phys}}=\braket{\phi_S(\tau)|\,\hat f_S^{\rm phys}\,|\psi_S^{\rm phys}(\tau)}_S,\nn
\ea
where 
\begin{itemize}
\item[(i)] physical states and physical system states are related by Page-Wootters reduction, ${\ket{\psi_S^{\rm phys}(\tau)}:=\calr_{\mathbf S}(\tau)\,\ket{\psi_{\rm phys}}}$ and similarly for $\ket{\phi_S^{\rm phys}(\tau)}$, and 
\item[(ii)] $\ket{\phi_S(\tau)}:=U_S(\tau)\ket{\phi_S}$ is any kinematical system state $\ket{\phi_S}\in\ch_S$ such that $\Pi_{\sigma_{S\vert C}}\,\ket{\phi_S(\tau)} = \calr_{\mathbf S}(\tau)\ket{\phi_{\rm phys}}=\ket{\phi_S^{\rm phys}(\tau)}\in\ch_S^{\rm phys}$.
\end{itemize}

\begin{proof}
Invoking the definitions of the physical inner product in Eq.~\eqref{PIP} and of the encoding map in Eq.~\eqref{embed}, as well as Theorem~\ref{thm_relobs}, yields
\be
\braket{\phi_{\rm phys}|\,\hat F_{f_S^{\rm phys},T}(\tau)\,|\psi_{\rm phys}}_{\rm phys}=\braket{\phi_{\rm kin}|\,\calr_{\mathbf S}^{-1}(\tau)\,\hat f_S^{\rm phys}\,|\psi_S^{\rm phys}(\tau)}_S,\nn
\ee
where $\ket{\phi_{\rm kin}}$ is any state in the equivalence class of kinematical states that project under $\Pi_{\rm phys}$ to the same physical state $\ket{\phi_{\rm phys}}$.
All that remains to be shown is that $\bra{\phi_{\rm kin}}\,\calr_{\mathbf S}^{-1}(\tau)\,\Pi_{\sigma_{S\vert C}}=\bra{\phi_S^{\rm phys}(\tau)}=\bra{\phi_{\rm phys}}\calr_{\mathbf S}^\dag(\tau)$. But this is easy to check.~Recalling the expression in Eq.~\eqref{PWinvred1} for the inverse reduction map and using Eq.~\eqref{kinstate}, we compute
\ba
\bra{\phi_{\rm kin}}\,\calr_{\mathbf S}^{-1}(\tau)\,\Pi_{\sigma_{S\vert C}}&&=\f{1}{t_{\rm max}}\int_0^{t_{\rm max}}d\phi\bra{\phi_{\rm kin}}\left(\ket{\phi}\otimes\Pi_{\sigma_{S\vert C}}U_S(\phi-\tau)\right)\nn\\
&&=\sum_{\varepsilon_k\in\spec(\hat H_C)}\sum_{E\in\sigma_{S\vert C}}\sum_{\sigma_E}\phi^*_{\rm kin}(\varepsilon_k,E,\sigma_E)\bra{E,\sigma_E}e^{ig(\varepsilon_k)+iE\tau}\nn\\
&&\q\q\q\q\times \f{1}{t_{\rm max}}\int_0^{t_{\rm max}}d\phi \,e^{-i(\varepsilon_k+E)\phi}\nn\\
&&\underset{(\ref{id2})}{=} \sum_{E\in\sigma_{S\vert C}}\sum_{\sigma_E}\phi^*_{\rm kin}(-E,E,\sigma_E)\bra{E,\sigma_E}e^{ig(-E)+iE\tau}\nn\\
&&\underset{(\ref{schrodexp})}{=}\bra{\phi_S^{\rm phys}(\tau)}.\nn
\ea
Recalling the definition of the physical system inner product in Eq.~\eqref{sysPIP}, this proves the claim.
\end{proof}

\noindent{\bf\hyperref[lem_6]{Lemma \ref{lem_6}}.}
Let $G$ be the group generated by the constraint $\hat C_H$. If $G=\rm{U}(1)$, then the conditional inner product equals the physical inner product, i.e.
\ba
\braket{\phi_{\rm phys}|\ket{\tau}\!\bra{\tau}\otimes I_S |\psi_{\rm phys}}_{\rm kin}=\braket{\phi_{\rm phys}|\psi_{\rm phys}}_{\rm phys}.\nn
\ea
However, if $G=\mathbb{R}$, then the inner conditional product $\braket{\phi_{\rm phys}|\ket{\tau}\!\bra{\tau}\otimes I_S |\psi_{\rm phys}}_{\rm kin}$ diverges.

\begin{proof}
Let $\mathbb{Z}_G$ denote the set of integers counting the clock cycles which fit into one period of $G$ (hence labelling the isotropy group $H$), and recall that when $G=\mathbb{R}$, $\mathbb{Z}_G=\mathbb{Z}$. Furthermore let $N_G$ be the normalisation factor associated with the average over the group $G$ (cf.\ Lemma~\ref{lem_fullG}), and note that we can write
\ba
\Pi_{\rm phys}=\f{1}{N_G}\sum_{z\in\mathbb{Z}_G}\int_{zt_{\rm max}}^{(z+1)t_{\rm max}}d\phi\,U_{CS}(\phi).\nn
\ea
We then find
\ba
\braket{\phi_{\rm phys}|\ket{\tau}\!\bra{\tau}\otimes I_S |\psi_{\rm phys}}_{\rm kin}&&=\braket{\phi_{\rm kin}|\Pi_{\rm phys}\left(\ket{\tau}\!\bra{\tau}\otimes I_S\right) |\psi_{\rm phys}}_{\rm kin}\nn\\
&&=\f{1}{N_G}\sum_{z\in\mathbb{Z}_G}\braket{\phi_{\rm kin}|\int_{zt_{\rm max}}^{(z+1)t_{\rm max}}d\phi\,U_{CS}(\phi)\left(\ket{\tau}\!\bra{\tau}\otimes I_S\right) |\psi_{\rm phys}}_{\rm kin}\nn\\
&&=\f{t_{\rm max}}{N_G}\sum_{z\in\mathbb{Z}_G}\bra{\phi_{\rm kin}}U_{CS}(zt_{\rm max})
\f{1}{t_{\rm max}}\int_{0}^{t_{\rm max}}d\phi\,U_{CS}(\phi)\left(\ket{\tau}\!\bra{\tau}\otimes I_S\right) \ket{\psi_{\rm phys}}_{\rm kin}\nn\\
&&=\f{t_{\rm max}}{N_G}\sum_{z\in\mathbb{Z}_G}\bra{\phi_{\rm kin}}U_{CS}(zt_{\rm max})\ket{\psi_{\rm phys}}_{\rm kin}\nn\\
&&=\f{t_{\rm max}}{N_G}\sum_{z\in\mathbb{Z}_G}\braket{\phi_{\rm kin}|\psi_{\rm phys}}_{\rm kin}\nn\\
&&=|\mathbb{Z}_G|\f{t_{\rm max}}{N_G}\,\braket{\phi_{\rm phys}|\psi_{\rm phys}}_{\rm phys},\label{eProofLem10}
\ea
where in the fourth equality we have made use of the first part of the proof of Lemma~\ref{lem_5}. Now, consider the case where $G=\mathbb{R}$. Then $N_G=2\pi$ and $|\mathbb{Z}_G|$ diverges, and therefore so too does $\braket{\phi_{\rm phys}|\ket{\tau}\!\bra{\tau}\otimes I_S |\psi_{\rm phys}}_{\rm kin}$, proving the second statement of the lemma. To prove the first statement, consider the case where $G=\rm{U}(1)$, and let $\ket{-E}_C\otimes\ket{E,\sigma_E}_S$, denote a zero-eigenvector of $\hat{C}_{H}$.~This corresponds to case (a) of Sec.~\ref{ssec_diracgen}, and so this eigenvector is normalisable and further also an eigenvector of $\Pi_{\rm phys}$ (cf.\ Eq.~\eqref{impproj})
\ba
\ket{-E}_C\otimes\ket{E,\sigma_E}_S &=& \Pi_{\rm phys} \ket{-E}_C\otimes\ket{E,\sigma_E}_S \nn \\
&= & \f{1}{N_G}\sum_{z\in\mathbb{Z}_G}\int_{zt_{\rm max}}^{(z+1)t_{\rm max}}d\phi\,U_{CS}(\phi) \ket{\varepsilon_{C},-\varepsilon_{S}} \nn \\
&=& \f{1}{N_G} \left(\sum_{z\in\mathbb{Z}_G}\int_{zt_{\rm max}}^{(z+1)t_{\rm max}}d\phi \right) \ket{\varepsilon_{C},-\varepsilon_{S}} \nn \\
&=& \f{|\mathbb{Z}_G|t_{\rm max}}{N_G} \ket{-E}_C\otimes\ket{E,\sigma_E}_S, \label{eNormalisationNote}
\ea
and therefore $N_G=|\mathbb{Z}_G|t_{\rm max}$, which itself is the period of the $\rm{U}(1)$-representation generated by $\hat{C}_H$. Inserting this into Eq.~\eqref{eProofLem10}, we find that
\ba
\braket{\phi_{\rm phys}|\ket{\tau}\!\bra{\tau}\otimes I_S |\psi_{\rm phys}}_{\rm kin} = \braket{\phi_{\rm phys}|\psi_{\rm phys}}_{\rm phys},
\ea
concluding the proof.
\end{proof}

\noindent{\bf\hyperref[lem_triv1]{Lemma \ref{lem_triv1}}.}
On solutions to the constraint in Eq.~\eqref{WDW}, the inverse $\ct_C(\ch_{\rm phys})\rightarrow\ch_{\rm phys}$ of the constraint trivialisation map is given by
\ba
\ct_{C}^{(-1)}=\f{1}{t_{\rm max}}\int_0^{t_{\rm max}} d\phi\,\ket{\phi}\!\bra{\phi}\otimes e^{-i\phi(\hat H_S+\varepsilon_*I_S)},\label{trivinv}
\ea
so that
\ba
\ct_{C}^{(-1)}\cdot\ct_{C}\approx I_{\rm phys}.\nn
\ea

\begin{proof}
Invoking Eqs.~\eqref{discreteClockState} and~\eqref{id3} yields
\ba
\ct^{(-1)}_{C}\cdot\ct_C = \f{1}{t_{\rm max}^2}\sum_{\varepsilon_j,\varepsilon_k,\varepsilon_l\in\spec(\hat H_C)}\,e^{i(g(\varepsilon_k)-g(\varepsilon_l))}\ket{\varepsilon_k}_C\!\bra{\varepsilon_l}\otimes\int_0^{t_{\rm max}} d\phi d\phi'\,e^{-i\phi'(\hat H_S+\varepsilon_*-\varepsilon_j+\varepsilon_l)}\,e^{i\phi(\hat H_S+\varepsilon_*-\varepsilon_j+\varepsilon_k)}.\label{tinvt}
\ea
Next, let $\ket{\psi_{\rm phys}}$ be an arbitrary physical state. We use Eqs.~\eqref{crap} and~\eqref{id2} and $\braket{\varepsilon_k|\varepsilon_l}=\delta_{\varepsilon_k,\varepsilon_l}$ to find
\ba
&&\ket{\varepsilon_k}_C\!\bra{\varepsilon_l}\otimes\f{1}{t_{\rm max}^2}\int_0^{t_{\rm max}} d\phi d\phi'\,e^{-i\phi'(\hat H_S+\varepsilon_*-\varepsilon_j+\varepsilon_l)}\,e^{i\phi(\hat H_S+\varepsilon_*-\varepsilon_j+\varepsilon_k)}\,\ket{\psi_{\rm phys}}\nn\\
&&\q=\begin{cases}
  0    & \text{if $-\varepsilon_l\notin\sigma_{S\vert C}$}, \\
\delta_{\varepsilon_*,\varepsilon_j}\delta_{\varepsilon_k,\varepsilon_l}\,  \sum_{\sigma_{-\varepsilon_l}}\,\psi_{\rm kin}(\varepsilon_l,-\varepsilon_l,\sigma_{-\varepsilon_l})\,\ket{\varepsilon_k}_C\ket{-\varepsilon_l,\sigma_{-\varepsilon_l}}_S    & \text{if $-\varepsilon_l\in\sigma_{S\vert C}$}.
\end{cases}\nn
\ea
In conjunction with Eqs.~\eqref{tinvt} and~\eqref{crap}, this entails
\ba
\ct^{(-1)}_{C}\cdot\ct_C\,\ket{\psi_{\rm phys}}=\ket{\psi_{\rm phys}}.\nn
\ea
\end{proof}

\noindent{\bf\hyperref[lem_8]{Lemma \ref{lem_8}}.}
The map $\ct_C$ (weakly) trivialises the constraint in Eq.~\eqref{WDW} to the clock degrees of freedom on $\ct_C(\ch_{\rm phys})$, i.e.
\ba
\ct_C\,\hat C_H\,\ct_C^{(-1)}\overset{*}{\approx}\left(\hat H_C-\varepsilon_*\right)\otimes I_S.\nn
\ea
Furthermore, it transforms physical states in Eq.~\eqref{crap} into a product form (relative to the tensor factorization of $\ch_{\rm kin}$):
\ba
&&\ct_C\,\ket{\psi_{\rm phys}} =e^{ig(\varepsilon_*)} \ket{\varepsilon_*}_C\otimes\ket{\psi_S^{\rm phys}}\nn
\ea
with $\ket{\psi_S^{\rm phys}}\in\ch_S^{\rm phys}$ given by Eqs.~\eqref{physsysstate} and~\eqref{Heisstate}.

\begin{proof}
To prove the first statement, we use Lemma~\ref{lem_povm} to compute
\ba
[\ct_C,\hat H_C\otimes I_S] &=&\sum_{n=0}^\infty\,\f{i^n}{n!}\,[\hat\phi^{(n)},\hat H_C]\otimes\left(\hat H_S+\varepsilon_*I_S\right)^n\nn\\
&=&-I_C\otimes\left(\hat H_S+\varepsilon_*I_S\right)\,\ct_C-\f{i}{t_{\rm max}}\sum_{n=1}^\infty\,\f{(it_{\rm max})^n}{n!}\,\ket{0}_C\!\bra{0}\otimes \left(\hat H_S+\varepsilon_*I_S\right)^n\label{commutator}\\
&=&-I_C\otimes\left(\hat H_S+\varepsilon_*I_S\right)\,\ct_C-\f{i}{t_{\rm max}}\ket{0}_C\!\bra{0}\otimes \left(e^{it_{\rm max}(\hat H_S+\varepsilon_*I_S)}-I_S\right).\nn
\ea
Taking the form of the constraint in Eq.~\eqref{WDW} into account and noting that $I_C\otimes \hat H_S$ commutes with $\ct_C$, the commutator in Eq.~\eqref{commutator} implies
\ba
\ct_C\,\hat C_H\,\ct_C^{(-1)}=\left(\hat H_C-\varepsilon_*I_C\right)\otimes I_S\,\ct_C\cdot\ct_C^{(-1)}-\f{i}{t_{\rm max}}\ket{0}_C\!\bra{0}\otimes \left(e^{it_{\rm max}(\hat H_S+\varepsilon_*I_S)}-I_S\right)\ct_C^{(-1)}.\nn
\ea
Next, we apply this relation to trivialised physical states, i.e.\ states of the form $\ct_C\ket{\psi_{\rm phys}}$ for some $\ket{\psi_{\rm phys}}\in\ch_{\rm phys}$. First, using Lemma~\ref{lem_triv1}, as well as Eqs.~\eqref{point} and $-\varepsilon_*\in\sigma_{S|C}$, yields
\ba
\ket{0}_C\!\bra{0}\otimes \left(e^{it_{\rm max}(\hat H_S+\varepsilon_*I_S)}-I_S\right)\ct_C^{(-1)}\cdot\ct_C\,\ket{\psi_{\rm phys}}=0.\nn
\ea
Then invoking Corollary~\ref{cor_triv1} gives the first statement of the Lemma. The second statement about the form of the trivialised physical states easily follows from Eqs.~\eqref{crap} and~\eqref{trivialisation} upon invoking Eqs.~\eqref{discreteClockState} and~\eqref{id2}.
\end{proof}

\noindent{\bf\hyperref[lem_10]{Lemma \ref{lem_10}}.}
The quantum deparametrisation map weakly equals the Page-Wootters reduction map and a system time evolution,
\ba
\calr_{\mathbf H}\approx U_S^\dag(\tau)\cdot \calr_{\mathbf S}(\tau),
\ea
while their inverses satisfy the strong relation for all $\tau$
\ba
\calr^{-1}_{\mathbf H}=\calr_{\mathbf S}^{-1}(\tau)\cdot U_S(\tau).\label{qrinv2}
\ea
In particular,
\ba
\calr_{\mathbf H}^{-1}\cdot\calr_{\mathbf H}&\approx& I_{\rm phys},\nn\\
\calr_{\mathbf H}\cdot\calr_{\mathbf H}^{-1}&\approx_S& I^{\rm phys}_S.\nn
\ea

\begin{proof}
The first statement follows from Eqs.~\eqref{schrodexp}--\eqref{Heisstate} and~\eqref{Hredstate} applied to an arbitrary physical state. The second statement follows from inserting Eq.~\eqref{trivinv} into Eq.~\eqref{QRinv}, $
\braket{\phi|\varepsilon_*}_C=e^{-i(g(\varepsilon_*)-\varepsilon_*\phi)}
$
and comparing with Eq.~\eqref{PWinvred1}. The invertibility properties are then implied by Lemma~\ref{lem_5}.
\end{proof}

\noindent{\bf\hyperref[thm_relobs2]{Theorem \ref{thm_relobs2}}.}
Let $\hat f_S^{\rm phys}(\tau)\in\cl(\ch_S^{\rm phys})$ be any evolving Heisenberg observable on the physical system Hilbert space. Its embedding coincides weakly with the quantum relational Dirac observable in Eq.~\eqref{qrelobs},
\ba
\mathcal{E}_{\mathbf H}\left(\hat f_S^{\rm phys}(\tau)\right)\approx\hat F_{f_S^{\rm phys},T}(\tau).
\ea
Conversely, the quantum deparametrisation of a quantum relational Dirac observable weakly yields the corresponding relational Heisenberg observable on the physical system Hilbert space,
\ba
\calr_{\mathbf H}\,\hat F_{f_S^{\rm phys},T}(\tau)\,\calr_{\mathbf H}^{-1}\approx_S\hat f_S^{\rm phys}(\tau).
\ea

\begin{proof}
Using Lemma~\ref{lem_10}, we have
\ba
\mathcal{E}_{\mathbf H}\left(\hat f_S^{\rm phys}(\tau)\right)&\approx&\calr_{\mathbf S}^{-1}(\tau) U_S(\tau)\,\hat f_S^{\rm phys}(\tau)\,U_S^\dag(\tau')\calr_{\mathbf S}(\tau')\nn\\
&\approx&\mathcal{E}_{\mathbf S}^\tau(\hat f_S^{\rm phys}),\nn
\ea
where we have used that $U_S(\tau-\tau')\calr_{\mathbf S}(\tau')\approx\calr_{\mathbf S}(\tau)$ and that the choice of $\tau$ in Eq.~\eqref{qrinv2} is arbitrary. The first statement then follows from Theorem~\ref{thm_relobs}. Conversely, since $\hat F_{f_S^{\rm phys},T}(\tau)$ maps physical states to physical states, using Lemma~\ref{lem_10} we have that $\hat F_{f_S^{\rm phys},T}(\tau)\,\calr_{\mathbf H}^{-1}\,\ket{\psi_S}\in\ch_{\rm phys}$ for all $\ket{\psi_S}\in\ch_S^{\rm phys}$. Hence, we can use Lemma~\ref{lem_10} to find
\ba
\calr_{\mathbf H}\,\hat F_{f_S^{\rm phys},T}(\tau)\,\calr_{\mathbf H}^{-1}&&\approx_SU_S^\dag(\tau')\calr_{\mathbf S}(\tau')\hat F_{f_S^{\rm phys},T}(\tau)\calr_{\mathbf S}^{-1}(\tau) U_S(\tau)\nn\\
&&\approx_S\hat f_S^{\rm phys}(\tau)\nn
\ea
using that the expression is (weakly) independent of $\tau'$, and invoking once more Theorem~\ref{thm_relobs}.
 \end{proof}

\noindent{{\bf\hyperref[thm_notbdep]{Theorem \ref{thm_notbdep}.}}
Consider an operator on $BS$ from the perspective of $A$, denoted $\hat{O}_{BS|A}^{\rm phys}\in \cl (\ch_{BS\vert A})$. From the perspective of $B$, this operator is independent of $\tau_{B}$, so that $\hat{O}_{AS|B}^{\rm phys} (\tau_A, \! \tau_B)=\hat{O}_{AS|B}^{\rm phys} (\tau_A)\in \cl (\ch_{AS\vert B})$ if and only if $\left[I_{A}\otimes\hat{O}_{BS|A},I_{A}\otimes H_{B}\otimes I_{S}\right]\approx 0$.
\begin{proof}
As noted in the main text, we have
\begin{align}
\hat{O}_{AS|B}^{\rm phys} (\tau_A, \! \tau_B) \!&=\! \Lambda^{A \to B}_{\rm \bf{S}} \hat{O}_{BS|A}^{\rm phys} \left(\Lambda^{A \to B}_{\rm \bf{S}} \right)^\dagger \\
&= \! \mathcal{R}_{\rm \bf{S}}(\tau_B) \circ \mathcal{E}_{\rm \bf{S}}^{\tau_A} \!\left( \hat{O}_{BS|A}^{\rm phys} \right) \! \circ \mathcal{R}^{-1}_{\rm \bf{S}}(\tau_B) \nn .
\end{align}
Now, combining Eqs.~\eqref{PWred} and~\eqref{PWinvred1} in the main text with Eq.~(43) in~\cite{Hoehn:2019owq}, we can write the above as
\begin{align} \label{eq_notbdep1}
\hat{O}_{AS|B}^{\rm phys} (\tau_A, \! \tau_B) \!&=\! \bra{\tau_{B}}_{B} \delta(\hat{C}_H)\! \left( \! \ket{\tau_A} \! \bra{\tau_A}\!  \otimes \! \hat{O}_{BS|A}^{\rm phys}\!  \right) \left[ \f{1}{t_{\rm max}}\int_{0}^{t_{\rm max}} d\phi\,\ket{\phi}_{B}\otimes e^{-i(\hat{H}_{A}+\hat{H}_{S})(\phi-\tau_{B})} \right]  ,
\end{align}
where $t_{\rm max}$ is the period of clock $B$, and 
\ba
\delta(\hat{C}_H) \ce \frac{1}{2\pi} \int_\mathbb{R} dr \; e^{i (\hat{H}_{A}+\hat{H}_{B}+\hat{H}_{S})r}  \nn
\ea
is equal to the improper projector defined in Eq.~\eqref{impproj}. Now, noting that $\ket{\tau_B}=e^{-i \hat{H}_{B}\tau_B}\ket{0_{B}}$, and $[\delta(\hat{C}_H),\hat{H}_{B}]=0$, we can rewrite Eq.~\eqref{eq_notbdep1} as
\begin{align}
\hat{O}_{AS|B}^{\rm phys} (\tau_A, \! \tau_B) \!&=\! \bra{0}_{B} \delta(\hat{C}_H)\! \left( \! \ket{\tau_A} \! \bra{\tau_A}\!  \otimes \! e^{i \hat{H}_{B}\tau_B}\hat{O}_{BS|A}^{\rm phys}\!  \right) e^{i(\hat{H}_{A}+\hat{H}_{S})\tau_{B}} \left[ \f{1}{t_{\rm max}}\int_{0}^{t_{\rm max}} d\phi\,\ket{\phi}_{B}\otimes e^{-i(\hat{H}_{A}+\hat{H}_{S})\phi} \right] \nn \\ 
&= \bra{0}_{B} \delta(\hat{C}_H)\! \left( \! \ket{\tau_A} \! \bra{\tau_A}\!  \otimes \! e^{i \hat{H}_{B}\tau_B}\hat{O}_{BS|A}^{\rm phys}\, e^{-i \hat{H}_{B}\tau_B}  \right) e^{i\hat{C}_{H}\tau_{B}} \left[ \f{1}{t_{\rm max}}\int_{0}^{t_{\rm max}} d\phi\,\ket{\phi}_{B}\otimes e^{-i(\hat{H}_{A}+\hat{H}_{S})\phi} \right]. \label{eq_notbdep2}
\end{align}
We will now prove an intermediary result, namely that $\hat{C}_{H}\left[ \f{1}{t_{\rm max}}\int_{0}^{t_{\rm max}} d\phi\,\ket{\phi}_{B}\otimes e^{(\hat{H}_{A}+\hat{H}_{S})\phi} \right]=0$ on $\ch_{AS|B}$. To do this, first note that we can construct a basis for $\ch_{AS|B}$ using eigenstates of $\hat{H}_{A}+\hat{H}_{S}$ of the form $\ket{E_{A},(-E_{A}-E_{B})_S}$, where $E_{A}\in\spec(\hat{H}_{A})\cap\spec(-\hat{H}_B-\hat{H}_S)$ and $E_{B}\in\spec(\hat{H}_{B})\cap\spec(-\hat{H}_A-\hat{H}_S)$. Consider then
\ba
&&\hat{C}_{H}\left[ \f{1}{t_{\rm max}}\int_{0}^{t_{\rm max}} d\phi\,\ket{\phi}_{B}\otimes e^{-i(\hat{H}_{A}+\hat{H}_{S})\phi} \right]\ket{E_{A},(-E_{A}-E_{B})_S}\nn \\
&&\qquad\qquad\qquad\qquad=\f{1}{t_{\rm max}}\int_{0}^{t_{\rm max}} d\phi\,e^{iE_{B}\phi} \hat{C}_{H}\ket{\phi}_{B}\otimes\ket{E_{A},(-E_{A}-E_{B})_S} \nn \\
&&\qquad\qquad\qquad\qquad= \f{1}{t_{\rm max}}\int_{0}^{t_{\rm max}} d\phi\,e^{iE_{B}\phi} \hat{C}_{H} \left[ \sum_{\varepsilon_j \in \spec(\hat H_C)}  e^{ig(\varepsilon_j)} e^{-i\varepsilon_j \phi} \ket{\varepsilon_j } \right] \otimes\ket{E_{A},(-E_{A}-E_{B})_S} \nn \\
&&\qquad\qquad\qquad\qquad= \sum_{\varepsilon_j \in \spec(\hat H_C)} e^{ig(\varepsilon_j)} \left[ \f{1}{t_{\rm max}}\int_{0}^{t_{\rm max}} d\phi\, e^{i(E_{B}-\varepsilon_j)\phi} \right] \hat{C}_{H} \ket{E_{A},(\varepsilon_j)_{B},(-E_{A}-E_{B})_S} \nn \\
&&\qquad\qquad\qquad\qquad= \sum_{\varepsilon_j \in \spec(\hat H_C)} e^{ig(\varepsilon_j)} \delta_{E_{B}\varepsilon_j} \hat{C}_{H} \ket{E_{A},(\varepsilon_j)_{B},(-E_{A}-E_{B})_S} \nn \\
&&\qquad\qquad\qquad\qquad=  e^{ig(E_{B})}  \hat{C}_{H} \ket{E_{A},E_{B},(-E_{A}-E_{B})_S} \nn \\
&&\qquad\qquad\qquad\qquad=  0 , \nn
\ea
where we have used Eq.~\eqref{discreteClockState} in the third line and Eq.~\eqref{id2} in the fifth line.
Consequently,
\ba
e^{i\hat{C}_{H}\tau_{B}} \left[ \f{1}{t_{\rm max}}\int_{0}^{t_{\rm max}} d\phi\,\ket{\phi}_{B}\otimes e^{-i(\hat{H}_{A}+\hat{H}_{S})\phi} \right] = \left[ \f{1}{t_{\rm max}}\int_{0}^{t_{\rm max}} d\phi\,\ket{\phi}_{B}\otimes e^{-i(\hat{H}_{A}+\hat{H}_{S})\phi} \right] \quad \text{on } \ch_{AS|B} , \nn
\ea
and applying this to Eq.~\eqref{eq_notbdep2}, we have
\begin{align}
\hat{O}_{AS|B}^{\rm phys} (\tau_A, \! \tau_B) \!&=\! \bra{0}_{B} \delta(\hat{C}_H)\! \left( \! \ket{\tau_A} \! \bra{\tau_A}\!  \otimes \! e^{i \hat{H}_{B}\tau_B}\hat{O}_{BS|A}^{\rm phys} \, e^{-i \hat{H}_{B}\tau_B}  \right) \left[ \f{1}{t_{\rm max}}\int_{0}^{t_{\rm max}} d\phi\,\ket{\phi}_{B}\otimes e^{-i(\hat{H}_{A}+\hat{H}_{S})\phi} \right]. \label{eq_notbdep3}
\end{align}
Now, if $\left[I_{A}\otimes\hat{O}_{BS|A},I_{A}\otimes H_{B}\otimes I_{S}\right]\approx 0$, then $e^{i \hat{H}_{B}\tau_B}\hat{O}_{BS|A}^{\rm phys}\, e^{-i \hat{H}_{B}\tau_B} =\hat{O}_{BS|A}^{\rm phys}$ and from Eq.~\eqref{eq_notbdep3} we can see that $\hat{O}_{AS|B}^{\rm phys} (\tau_A, \! \tau_B)$ does not depend on $\tau_{B}$. To prove the converse statement, note that Eq.~\eqref{eq_notbdep3} implies that $\frac{d}{d\tau_{B}}\hat{O}_{AS|B}^{\rm phys} (\tau_A, \! \tau_B)=0$ if and only if $I_{A}\otimes\frac{d}{d\tau_{B}} \left( e^{i \hat{H}_{B}\tau_B}\hat{O}_{BS|A}^{\rm phys}\, e^{-i \hat{H}_{B}\tau_B}\right)\approx 0$, and that the latter is true if and only if $\left[I_{A}\otimes\hat{O}_{BS|A},I_{A}\otimes H_{B}\otimes I_{S}\right]\approx 0$, thus concluding the proof of the theorem.
\end{proof}

\noindent{\bf\hyperref[lem_monoPNPQ]{Lemma \ref{lem_monoPNPQ}}}
The quantisation of the relational observable $F_{T_B,Q_A}(\tau)$ of Lemma~\ref{lem_monoPNP}, which encodes the value of the monotonic clock $T_B$ of a periodic system relative to the value $\tau$ of the clock $Q_A$ of an aperiodic system is given by
\begin{equation}\label{monoPNPQ-m1}
\hat{F}_{T_B,Q_A}(\tau) = \tau\hat{I}-\hat{Q}_A+\hat{\phi}_{B} \ ,
\end{equation}
and it satisfies the property:
\begin{equation}\label{monoPNPQ-0-app}
\braket{\alpha^s_{C_H}\cdot\hat{F}_{T_B,Q_A}(\tau)}_{21} = \braket{\hat{F}_{T_B,Q_A}(\tau)- t_{{\rm max};B}\hat{Z}_B(s)}_{21}\,,
\end{equation}
where $\alpha^s_{C_H}\cdot\hat{F}_{T_B,Q_A}(\tau) := U_{C_H}^{\dagger}(s)\hat{F}_{T_B,Q_A}(\tau)U_{C_H}(s)$, and
\begin{equation}
\hat{Z}_B(s) := \frac{1}{ t_{{\rm max};B}}\int_0^{ t_{{\rm max};B}}d\phi_B\, \left\lfloor\frac{s+\phi_B}{ t_{{\rm max};B}}\right\rfloor\ket{\phi_B}\!\bra{\phi_B} \,,
\end{equation}
and $\braket{\hat{O}}_{21} = \braket{\psi_2|\hat{O}|\psi_1}$ for any operator $\hat{O}$, with $\ket{\psi_{1,2}}$ leading to wave functions that are integrable in $\phi$ (cf. Lemmas~\ref{lem_quantumPhiEvol} and~\ref{lem_clobsQ}). Thus, Eq.~\eqref{monoPNPQ-0-app} is a quantum version of the result of Lemma \ref{lem_monoPNP}.
\begin{proof}
Eq.~\eqref{monoPNPQ-m1} is a direct quantisation of Eq.~\eqref{app-relobs-TQ}. Using Lemma \ref{lem_quantumPhiEvol}, it is straightforward to obtain
\begin{align}
\braket{\alpha^s_{C_H}\cdot\hat{F}_{T_B,Q_A}(\tau)}_{21}&=\bra{\psi_2}\left[\tau\hat{I}-\hat{Q}_A-s\hat{I}+\frac{1}{ t_{{\rm max};B}}\int_0^{ t_{{\rm max};B}}d\phi_B\,\left(s+\phi_B- t_{{\rm max};B}\left\lfloor\frac{s+\phi_B}{ t_{{\rm max};B}}\right\rfloor\right)\ket{\phi_B}\!\bra{\phi_B}\right]\ket{\psi_1}\notag\\
&=\bra{\psi_2}\left[\tau\hat{I}-\hat{Q}_A-s\hat{I}+s\hat{I}+\hat{\phi}_{B}-\int_0^{ t_{{\rm max};B}}d\phi_B\,\left\lfloor\frac{s+\phi_B}{ t_{{\rm max};B}}\right\rfloor\ket{\phi_B}\!\bra{\phi_B}\right]\ket{\psi_1}\notag\\
&=\bra{\psi_2}\left[\left(\tau\hat{I}- t_{{\rm max};B}\hat{Z}_B(s)\right)-\hat{Q}_A+\hat{\phi}_{B}\right]\ket{\psi_1}\notag\\
&= \braket{\hat{F}_{T_B,Q_A}(\tau)- t_{{\rm max};B}\hat{Z}_B(s)}_{21}\,.
\end{align}
\end{proof}

\section{Failure of the standard conditional probabilities in the Page-Wootters formalism} \label{sFailureCondProbPW}

We recall the correct definition of the conditional probability densities in the Page-Wootters formalism (Sec.~\ref{sCorrectCondProbPW})
\ba \label{eProbDens}
P(f_S|\tau)&\ce&\f{\braket{\psi_{\rm phys}|\,\hat F_{\ket{f_S^{\rm phys}}\!\bra{f_S^{\rm phys}},T}(\tau)\,|\psi_{\rm phys}}_{\rm phys}}{\braket{\psi_{\rm phys}\,|\psi_{\rm phys}}_{\rm phys}}\nn\\
&=&\f{\braket{\psi_S^{\rm phys}(\tau)|\,\ket{f_S^{\rm phys}}\!\bra{f_S^{\rm phys}}\,|\psi_S^{\rm phys}(\tau)}_{\ch_S^{\rm phys}}}{\braket{\psi_S^{\rm phys}(\tau)|\psi_S^{\rm phys}(\tau)}_{\ch_S^{\rm phys}}},\nn
\ea
and contrast them with the following na\"ive definition with respect to the conditional inner product, as is usually the case
\ba \label{eNaivePDens}
    \tilde{P}(f_S|\tau) &\ce& \f{\braket{\psi_{\rm phys}|\, (\ket{\tau}\!\bra{\tau}\otimes \ket{f_S^{\rm phys}}\!\bra{f_S^{\rm phys}}) \,|\psi_{\rm phys}}_{\rm kin}}{\braket{\psi_{\rm phys}|\, (\ket{\tau}\!\bra{\tau}\otimes I_{S}) \,|\psi_{\rm phys}}_{\rm kin}} . \nn
\ea
We will now show how the latter definition fails when $G=\mathbb{R}$, but coincides with the correct definition for $G=\rm{U}(1)$.~First note that there exists some $\ket{\phi_{\rm phys}}$ such that $\ket{f_S^{\rm phys}}=\calr_{\mathbf S}(\tau)\ket{\phi_{\rm phys}}$.~Then
\ba
    \braket{\psi_{\rm phys}|\, (\ket{\tau}\!\bra{\tau}\otimes \ket{f_S^{\rm phys}}\!\bra{f_S^{\rm phys}}) \,|\psi_{\rm phys}}_{\rm kin} &=& | \braket{f_S^{\rm phys}|\psi^{\rm phys}_{S}(\tau)}_{\ch_{S}} |^{2} \nn\\
    &=& | \braket{\phi_{\rm phys}|\, (\ket{\tau}\!\bra{\tau}\otimes I_{S}) \,|\psi_{\rm phys}}_{\rm kin} |^{2}
\ea
and therefore
\ba \label{eNaiveCondProb2}
    \tilde{P}(f_S|\tau) &\ce& \f{|\braket{\phi_{\rm phys}|\, (\ket{\tau}\!\bra{\tau}\otimes I_{S}) \,|\psi_{\rm phys}}_{\rm kin} |^{2}
    }{\braket{\psi_{\rm phys}|\, (\ket{\tau}\!\bra{\tau}\otimes I_{S}) \,|\psi_{\rm phys}}_{\rm kin}} .
\ea
Now, in the case where $G=\rm{U}(1)$, for the numerator we have 
\ba
    \braket{\phi_{\rm phys}|\, (\ket{\tau}\!\bra{\tau}\otimes I_{S}) \,|\psi_{\rm phys}}_{\rm kin} &=& \braket{\phi_{\rm phys}|\psi_{\rm phys}}_{\rm phys} \nn\\
    &=& \braket{f_S^{\rm phys}|\psi_{S}^{\rm phys}(\tau)}_{\ch_{S}^{\rm phys}}
\ea
where we have used Lemma~\ref{lem_6} in the first equality and Corollary~\ref{corolInProdsPW} in the second equality, and for the denominator we have similarly
\ba
    \braket{\psi_{\rm phys}|\, (\ket{\tau}\!\bra{\tau}\otimes I_{S}) \,|\psi_{\rm phys}}_{\rm kin} &=& \braket{\psi_{\rm phys}|\psi_{\rm phys}}_{\rm phys} \nn\\
     &=& \braket{\psi^{\rm phys}_{S}(\tau)|\psi^{\rm phys}_{S}(\tau)}_{\ch_{S}^{\rm phys}}\,.
\ea
Thus, for $G=\rm{U}(1)$, the na\"ive conditional probability density is the correct one:
\ba
    \tilde{P}(f_S|\tau) &=& \f{\braket{\psi^{\rm phys}_{S}(\tau)|f_S^{\rm phys}}_{\ch_{S}^{\rm phys}} \braket{f_S^{\rm phys}|\psi^{\rm phys}_{S}(\tau)}_{\ch_{S}^{\rm phys}}}{\braket{\psi^{\rm phys}_{S}(\tau)|\psi^{\rm phys}_{S}(\tau)}_{\ch_{S}^{\rm phys}}} . \nn\\
    &=& P(f_S|\tau).
\ea
Considering instead the case where $G=\mathbb{R}$, we can apply Eq.~\eqref{eProofLem10} to Eq.~\eqref{eNaiveCondProb2} to obtain
\ba
    \tilde{P}(f_S|\tau) &=& |\mathbb{Z}_G|\f{t_{\rm max}}{N_G}\,\frac{|\braket{\phi_{\rm phys}|\psi_{\rm phys}}_{\rm phys}|^{2}}{\braket{\psi_{\rm phys}|\psi_{\rm phys}}_{\rm phys}}.
\ea
Recalling that for non-compact clocks, $N_G=2\pi$ and $|\mathbb{Z}_G|=|\mathbb{Z}|$ diverges, we see that $\tilde{P}(f_S|\tau)$ likewise diverges in this case.
This highlights the importance of the correct definition of the conditional probabilities in the Page-Wootters formalism when using a periodic clock.

\end{document}